\newtheorem{Def}{Definition}[section]
\newtheorem{Thm}{Theorem}[section]
\newtheorem{Lem}{Lemma}[section]
\newtheorem{Prop}{Proposition}[section]
\newtheorem{Ass}{Assumption}[section]
\newcommand{\beq}{\begin{align}}
\newcommand{\eeq}{\end{align}}
\newcommand{\supp}{\mathrm{supp\,}}
\newcommand{\Expect}[1]{\left\langle{#1}\right\rangle}
\newcommand{\bvec}[1]{\boldsymbol{#1}}
\newcommand{\no}{\nonumber}
\newcommand{\Natural}{\mathbb{N}}
\newcommand{\Real}{\mathbb{R}}
\newcommand{\ad}{\mathrm{ad}}
\newcommand{\pdfrac}[2]{\frac{\partial #1}{\partial #2}}
\newcommand{\inprod}[2]{\langle #1| #2\rangle}
\newcommand{\norm}[1]{\left\| #1 \right\|} 
\newcommand{\ip}[2]{\left\langle#1,#2\right\rangle}
\newcommand{\q}{\quad}
\renewcommand{\H}{\mathcal{H}}
\newcommand{\K}{\mathcal{K}}
\newcommand{\V}{\mathcal{V}}
\newcommand{\F}{\mathcal{F}}
\DeclareMathOperator*{\slim}{s-lim}
\DeclareMathOperator*{\op}{\oplus}
\DeclareMathOperator*{\hop}{\hat{\oplus}}
\DeclareMathOperator*{\ot}{\otimes}
\DeclareMathOperator*{\hot}{\hat{\otimes}}
\DeclareMathOperator*{\wg}{\wedge}
\newcommand{\R}{\mathbb{R}}
\newcommand{\C}{\mathbb{C}}
\newcommand{\N}{\mathbb{N}}
\newcommand{\Z}{\mathbb{Z}}
\newcommand{\kk}{{\boldsymbol{k}}}
\newcommand{\pp}{{\boldsymbol{p}}}
\newcommand{\xx}{{\boldsymbol{x}}}
\newcommand{\XX}{{\boldsymbol{X}}}
\newcommand{\yy}{{\boldsymbol{y}}}
\newcommand{\bb}{\mathrm{b}}
\newcommand{\ee}{\mathbf{e}}
\renewcommand{\S}{\mathscr{S}}
\renewcommand{\rm}{\mathrm}
\renewcommand{\hat}{\widehat}
\newcommand{\CC}{\mathrm{(C)}}
\newcommand{\fin}{\mathrm{fin}}
\newcommand{\chiph}{\chi_\mathrm{ph}}
\newcommand{\pa}{\partial}
\newcommand{\Vp}{V_\mathrm{phys}}
\title{Eliminating unphysical photon components from Dirac-Maxwell Hamiltonian
quantized in the Lorenz gauge}
\author{Shinichiro Futakuchi\footnote{futakuchi@math.sci.hokudai.ac.jp} and Kouta Usui\footnote{kouta@math.sci.hokudai.ac.jp}\\
\small{Department of Mathematics, Hokkaido University, Sapporo, 060-0810, Japan.}}
\date{}
\begin{document}
\maketitle
\abstract{We study the Dirac-Maxwell model quantized in the Lorenz gauge. 
In this gauge, the space of quantum mechanical state vectors inevitably be an indefinite 
metric vector space so that the canonical commutation relation (CCR) is realized
in a Lorentz covariant manner. 
In order to obtain a physical subspace in which no negative norm state exists, the method first proposed by Gupta and Bleuler
is applied with mathematical rigor.
It is proved that a suitably defined
 physical subspace has a positive semi-definit metric, and naturally induces a physical Hilbert space with a positive definite metric.
 The original Dirac-Maxwell Hamiltonian naturally defines an induced Hamiltonian on the physical Hilbert space which is essentially self-adjoint.
}
%%%%%%%%%%%%%%%%%%%%%%%%%%%%%%%%%%%
\section{Introduction}
We consider a quantum system of $N$ Dirac particles under an external potential $V$ interacting with a quantized gauge field (so called Dirac-Maxwell model). %in the Lorenz gauge. 
If we apply the informal perturbation theory to this model, quantitative predictions are obtained
such as the Klein-Nishina formula for the cross section of the Compton scattering of an electron and a photon \cite{Nishijima1973a}, which
agrees with the experimental results very well. 
Hence, the Dirac-Maxwell model is expected to describe a certain class of realistic quantum phenomena 
and thus is worth the analysis with mathematical rigor, even though it may suffer from 
so called ``negative energy problem". The mathematically rigorous study of this model was initiated by Arai in Ref. \cite{MR1765584},
and several mathematical aspects of the model was analyzed so far (see, e.g.,
Refs. \cite{MR1981623}, \cite{MR2260374}, \cite{MR2810826}, \cite{MR2178588}, and \cite{MR2377946}). %All of these studies are discussed in the Coulomb gauge.

 The motivation of the present study is to treat this model in the Lorentz gauge in which the Lorenz covariance is manifest.
 In analyzing gauge theories such as quantum electrodynamics (QED) in a Lorentz covariant gauge, a difficulty always arises, since one  inevitably 
adopt ``an indefinite metric Hilbert space" as a space of all the state vectors (for instance, see Ref. \cite{MR509200}).
In such cases, we have to identify a positive definite subspace as a physical 
state vector space by eliminating unphysical photon modes with negative norms. 
The most general and elegant method to identify the physical subspace for non-abelian gauge theories quantized in a covariant gauge
was given by the celebrated work by Kugo and Ojima \cite{Kugo:1977zq,MR557151},
which is based on the BRST symmetry, the remnant gauge symmetry of the Lagrangian density after imposing some gauge fixing condition. 
The Kugo-Ojima formulation reduces to 
Nakanishi and Lautrup's $B$-field theory \cite{Lautrup:1967,Nakanishi:1973fu,Nakanishi:1974pr,Nakishi:1966di} in the case where the gauge field is abelinan and after integrating out the auxiliary Nakanishi-Lautrup's $B$-field, it is reduced to the condition first proposed by Gupta and Bleuler \cite{MR0038883,MR0036166}.

The Gupta and Bleuler condition says that a state vactor belongs to the phyisical subspace if and only if it has the   
vanishing expectation value of the operator $\partial_\mu A^{\mu}$, the four component divergence of the gauge field:
\begin{align}\label{GB-cond}
\inprod{\Psi}{\partial_\mu A^\mu (t,\xx) \Psi} = 0,
\end{align}
at every spacetime point $(t,\xx)\in\R^4$.
From the equations of motion $\Box A_\mu =- j_\mu$ (with ``mostly plus metric") and the current conservation equation $\partial_\mu j^\mu=0$, 
we heuristically find that $\partial_\mu A^\mu (t,\xx)$ satisfies the Klein-Gordon equation so that \eqref{GB-cond}
is written as
\begin{align}
[\partial_\mu A^\mu]^{+} (t,\xx) \Psi=0,
\end{align}
where $[\partial_\mu A^\mu]^{+} (t,\xx)$ denotes the positive frequency part of the free field $\partial_\mu A^\mu$.
However, in order to rigorously perform this procedure, one has to answer the following questions.
Firstly, how to identify $A(t,\xx)$ at a time $t\in \R$? Since the present state vector space is not an ordinary Hilbert space with a
positive definite metric, the Hamiltonian can not be defined as a self-adjoint operator in the ordinary sense.
Thus, it is far from trivial if there is a solution of quantum Heisenberg equations of motion.
Secondly, is it possible to identify the positive frequency part of the operator satisfying Klein-Gordon 
equation even in an indefinite metric space? The first problem is solved by the general construction method of
time evolution operator generated by a non-self-adjoint operator given by the authors \cite{FutakuchiUsui2013}, and as to the second 
one, the general definition of ``positive frequency part" of a quantum field satisfying Klein-Gordon equation is given in 
Ref. \cite{MR2533876}. 

Mathematically rigorous study of concrete models of QED in the Lorentz covariant gauge (see, for instance, Refs \cite{MR574710,MR2412280,MR2533876,IR-GB})
was only given for a solvable models as far as we know. However, the model treated here, the Dirac-Maxwell model, is \textit{not} solvable
in the sense that an explicit expression of the time-dependent gauge field is not easily found.
Thus, the problem has to be abstractly considered, not relying on the explicit expression of the time dependent gauge field
but only on the abstract existence theorem. In this paper, we establish the existence of a time evolution operator and give a definition of the 
``positive frequency part" of a free field safisfying Klein-Gordon equation in an abstract setup. 
Our definition of ``positive frequency part" given here is different from that given in Ref \cite{MR2533876}, but results in the same
consequence when applied to the concrete models. 
We then apply to the abstract theory to the concrete Dirac-Maxwell model and identify the physical Hilbert space.
We also prove that the original Dirac-Maxwell Hamiltonian naturally defines a self-adjoint ``physical Hamiltoninan" on the Hilbert space,
which is essentially equivalent to the Dirac-Maxwell Hamiltonian in the Coulomb gauge discussed in Ref. \cite{MR1765584}. 

The mathematical tools developed here would have some interests in its own right. The time evolution operator 
generated by \textit{unbounded, non-self-adjoint} Hamiltonians has been constructed in Ref. \cite{FutakuchiUsui2013}. 
In this paper, we further develop the theory in several aspects.
Firstly, we define a general class of operators, which we will call $\mathcal{C}_n$-class operators, and prove that
a $\mathcal{C}_n$-class operator $B$ has a time evolution $B(t)$ for $t\in\R$ which solves the Heisenberg equation,
and $B(t)\xi$ is $n$-times strongly differentiable in $t$ for $\xi$ belonging to a dense subset $D'$.  
Moreover, the $n$-th derivative of $B(t)$ enjoys the natural expression in terms of a weak commutator defined in a suitable sense.
Secondly, we define a more restricted class of operators, which will be called $\mathcal{C}_\omega$-class operators,
and prove that $\mathcal{C}_\omega$-class operators has a time evolution which is analytic in $t\in\R$.
Furthermore, it would be interesting to see that the following Taylor expansion formula \eqref{taylor} remains valid 
even for an unbounded, non-unitary time evolution.
Thirdly, the Klein-Gordon equation is analyzed in this abstract setup.
We generally define a ``free field"  as a solution of the generalized Klein-Gordon equation,
and then we prove that generalized Klein-Gordon equation is explicitly solved under some suitable conditions, even for an unbounded, non-unitary time evolution,
even in a vector space with an indefinite metric. 
From this explicit solution, positive
and negative frequency parts are defined unambiguously. 
Fourthly, abstract study with mathematical rigor makes it clear how and why the Gupta-Bleuler method works well and 
what is essentially responsible for the possibility of the method
is the electro-magnetic current conservation. 

%In fact, although we can not find the explicit expression of the time dependent gauge field,
%we can explicitly write down the positive frequency part of $\partial_\mu A^\mu$,
%because it satisfies the Klein-Gordon equation due to the current conservation law.
%We also emphasize that the mathematical generality enables us to define the physical Hamiltonian of the Dirac-Maxwell model somewhat abstractly 
%in terms of the time evolution operator.
%The present study also differs from previous ones in that the electrons are treated as
%relativistic particles for which a covariant formulation would be more desirable when doing some concrete
%computations. 
%

The paper is organized as follows. In Section \ref{abstract}, after the brief 
review of the results obtained in \cite{FutakuchiUsui2013}, we discuss the abstract theory
of time evolution generated by non-self-adjoint Hamiltonians and give 
a definition of general free field to which the positive frequency part is defined.
In Section \ref{model}, we intoduce the Difinition of the model and how to mathematically deal with an 
indefinite metric space. 
In Section \ref{time-evolution}, we show that the Dirac-Maxwell Hamiltonian is essentially self-adjoint
with respect to the indefinite metric. We also show that the general theory developed so far is applicable to the Dirac-Maxwell Hamiltonian
and to study the time evolution of quantum fields. We derive the current conservation equation 
and the equation of motion.
In Section \ref{Gupta-Bleuler}, we identify the positive frequency part of the operator $\partial_\mu A^\mu(t,\xx)$
by using the abstract theory of generalized free field developed in Section \ref{time-evolution}.
Then, we define the physical subspace and the physical Hilbert space.
We also show that the original Hamiltonian naturally defines the physical Hamiltonian
which is essentially self-adjoint on the physical Hilbert space. The relation to the Coulomb gauge Hamiltonian 
is discussed there. Finally, we show that if the ultraviolet cutoff function of the gauge field is infrared singular, 
then the physical subspace has to become trivial.

%%%%%%%%%%%%%%%%%%%%%%%%%%%%%%%%%%%%
\section{Time-evolution Generated by Non-self-adjoint Hamiltoninans}\label{abstract}
One of the main obstacles to mathematical analysis in the Lorenz gauge is
that the existence of the time-evolution operator is
not trivial at all because we can not use the
usual functional calculus to define the time-evolution operator $e^{-itH}$ if the generator $H$ is not self-adjoint.
The existence of a time-evolution is established by the general theory which the authors developed 
in Ref. \cite{FutakuchiUsui2013}. We summarize and further extend the 
results obtained in Ref. \cite{FutakuchiUsui2013} in an abstract setup. 

\subsection{Existence of time-evolution}
Let $\mathcal{H}$ be a Complex Hilbert space and $\ip {\cdot} {\cdot}$ its inner product, and
$\norm{\cdot}$ its norm. The inner product is linear in the second variable. 
For a linear operator $ T $ in $ \mathcal{H} $, we denote its domain (resp. range) by $ D(T) $ (resp. $ R(T) $). We also denote the adjoint of $T$ by $ T^* $ and the closure by $ \overline{T} $ if these exist. 
For a self-adjoint operator $ T $,  $ E_T (\cdot ) $ denotes the spectral measure of $ T $.
The symbol $T\upharpoonright D$ denotes the restriction of a linear operator $T$ 
to the subspace $D$.

Let $H_0$ be a self-adjoint operator on $\mathcal{H}$. Suppose that there is a nonnegative self-adjoint operator $A$ 
which is strongly commuting with $H_0$. We use the notations
\begin{align}
V_L&:=E_A([0,L]),\quad L\ge 0, \\
D&:=\bigcup_{L\ge 0} V_L,\\
D'&=D\cap D(H_0).
\end{align}
Let us define a family of linear operators $\mathcal{C}_0$ as follows:
\begin{Def}
We say that a linear operator $B$ is in $\mathcal{C}_0$-class if $B$ satisfies
\begin{enumerate}[(i)]
\item $B$ is densely defined and closed.
\item $B$ and $B^*$ are $A^{1/2}$- bounded.
\item There is a constant $b>0$ such that $\xi\in V_L$ implies $B\xi$ and $B^*\xi$ belong to $V_{L+b}$.
\end{enumerate}
\end{Def}
The set of all $\mathcal{C}_0$-class operators is also denoted by the same symbol $\mathcal{C}_0$.
We remark that if $B$ is in $\mathcal{C}_0$, then so $B^*$ is.
We consider an operator 
\begin{align}
H=H_0+H_1
\end{align}
with $H_1\in\mathcal{C}_0$. The following Propositions summarize the results obtained in Ref. \cite{FutakuchiUsui2013}.
\begin{Prop}\label{main-thm1} For each $ t,t' \in \R , \, \xi \in D $, the series:
\begin{align}
U(t,t') \xi := \xi + (-i) \int _{t'} ^t d\tau _1 \, H_1(\tau _1) \xi + (-i)^2 \int _{t'} ^t d\tau _1 \int _{t'} ^{\tau _1} d\tau _2 \, H_1 (\tau _1) H_1 (\tau _2) \xi + \cdots 
\end{align}
converges absolutely, where each of integrals are strong integrals. 
Furthermore, $D\subset D(U(t,t') ^*)$.
%%% 
\if0
 and the following (i) and (ii) hold.
\begin{enumerate}[(i)]
\item  For fixed $ t' \in \Real  $ and $ \xi \in D $, the $\mathcal{H}$-valued functions $ \R\ni t\mapsto U(t,t') \xi $ 
and $\R\ni t\mapsto U(t,t')^*\xi $ is strongly continuously differentiable, 
and $ U(t,t') \xi \in D(H_1 (t)) $. Moreover, $ U(t,t') \xi $ satisfies
\begin{align}\label{DE1}
\frac{\partial }{\partial t} U(t,t') \xi &= -i H_1 (t) U(t,t') \xi ,\\
 \frac{\partial }{\partial t} U(t,t') ^* \xi &= i U(t,t')^* H_1 (t) ^* \xi . 
\end{align}
\item  For fixed $ t \in \Real  $ and $ \xi \in D $, the $\mathcal{H}$-valued function $\R \ni t'\mapsto  U(t,t') \xi $
and $\R \ni t'\mapsto  U(t,t')^* \xi $  is strongly continuously differentiable,
 and satisfies
\begin{align}\label{DE2}
\frac{\partial }{\partial t'} U(t,t') \xi &= i U(t,t') H_1 (t') \xi ,\\
\frac{\partial }{\partial t'} U(t,t') ^* \xi &= -i H_1 (t') ^* U(t,t') ^* \xi .
\end{align}
\end{enumerate}
\fi
%%%
 \end{Prop}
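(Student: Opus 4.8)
The plan is to establish the two assertions separately: first the absolute convergence of the Dyson series, and then the inclusion $D\subset D(U(t,t')^*)$, which will come essentially for free once the convergence estimate is in hand.

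For convergence I would first record the two structural consequences of $H_1\in\mathcal{C}_0$ that drive everything. Writing $H_1(\tau):=\e^{i\tau H_0}H_1\e^{-i\tau H_0}$ for the free-evolved interaction, the fact that $A$ strongly commutes with $H_0$ makes each $V_L$ invariant under $\e^{-i\tau H_0}$, so $H_1(\tau)$ inherits property (iii): it maps $V_L$ into $V_{L+b}$ for every $\tau$. Combining the $A^{1/2}$-bound from (ii), $\norm{H_1\xi}\le a\norm{A^{1/2}\xi}+c\norm{\xi}$, with the elementary estimate $\norm{A^{1/2}\xi}\le L^{1/2}\norm{\xi}$ valid on $V_L$, and using unitarity of $\e^{\pm i\tau H_0}$, I obtain the uniform-in-$\tau$ bound $\norm{H_1(\tau)\xi}\le(aL^{1/2}+c)\norm{\xi}$ for $\xi\in V_L$.

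The core estimate then follows by iterating these two facts along a product $H_1(\tau_1)\cdots H_1(\tau_n)$: starting from $\xi\in V_L$, each successive factor raises the cutoff by $b$ and contributes a norm factor $a(L+kb)^{1/2}+c$, whence
\beq
\norm{H_1(\tau_1)\cdots H_1(\tau_n)\xi}\le\Bigl(\prod_{k=0}^{n-1}\bigl(a(L+kb)^{1/2}+c\bigr)\Bigr)\norm{\xi}\le C_L^{\,n}\,(n!)^{1/2}\,\norm{\xi},
\eeq
for a constant $C_L$ depending only on $a,b,c,L$ (using $L+kb\le(L+b)(1+k)$ and $\prod_{k=0}^{n-1}(1+k)=n!$). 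Integrating over the time-ordered simplex, whose volume is $|t-t'|^n/n!$, bounds the $n$-th term by $(C_L|t-t'|)^n/(n!)^{1/2}\,\norm{\xi}$, and $\sum_n x^n/(n!)^{1/2}$ converges for every $x$ by the ratio test. Strong continuity of the integrands, needed to make sense of the strong integrals, follows from strong continuity of $\tau\mapsto H_1(\tau)\zeta$ on each $V_L$, itself a consequence of the strong continuity of $\e^{-i\tau H_0}$ and the boundedness of $H_1$ on $V_L$.

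For the adjoint inclusion, fix $\eta\in V_L$ and $\xi\in D$. Because every partial product keeps vectors inside finite spectral subspaces $V_{L+kb}$ on which $H_1$ and $H_1^*$ act boundedly, I may legitimately move each adjoint across the inner product, using $H_1(\tau)^*=\e^{i\tau H_0}H_1^*\e^{-i\tau H_0}$ and $\overline{(-i)^n}=i^n$, to get
\beq
\ip{\eta}{(-i)^n H_1(\tau_1)\cdots H_1(\tau_n)\xi}=\ip{\,i^{\,n}H_1(\tau_n)^*\cdots H_1(\tau_1)^*\eta\,}{\xi}.
\eeq
Since $H_1^*\in\mathcal{C}_0$ as well, the very same factorial estimate shows that
\beq
\zeta:=\sum_{n\ge0}i^{\,n}\int_{t'}^t d\tau_1\int_{t'}^{\tau_1}d\tau_2\cdots\int_{t'}^{\tau_{n-1}}d\tau_n\,H_1(\tau_n)^*\cdots H_1(\tau_1)^*\eta
\eeq
converges absolutely in $\H$; absolute convergence justifies interchanging the sum with the inner product, giving $\ip{\eta}{U(t,t')\xi}=\ip{\zeta}{\xi}$ for every $\xi\in D=D(U(t,t'))$, so that $\eta\in D(U(t,t')^*)$ with $U(t,t')^*\eta=\zeta$. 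I expect the main obstacle to be the factorial bookkeeping in the core estimate — in particular, checking that the per-factor spectral growth $L\mapsto L+b$ together with the \emph{half}-power from the $A^{1/2}$-bound yields exactly $(n!)^{1/2}$, which is then dominated by the $1/n!$ from the simplex, so that convergence holds for all $t,t'\in\R$ rather than only for small $|t-t'|$.
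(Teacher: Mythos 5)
Your proof is correct and takes essentially the same route as the paper's: the paper defers this Proposition to Ref.\ \cite{FutakuchiUsui2013}, whose key ingredient is precisely the bound you derive --- quoted later in Section 2 as the estimate
\begin{align*}
\norm{U_n(t,t')\xi} \le \frac{C^n}{n!}\,|t-t'|^n\,(L+(n-1)b+1)^{1/2}\cdots(L+1)^{1/2}\,\norm{\xi},
\end{align*}
where the product of half-powers grows like $C_L^n (n!)^{1/2}$, giving exactly your $(n!)^{-1/2}$ decay against the simplex volume $|t-t'|^n/n!$. Your treatment of the adjoint inclusion, transposing term by term onto $H_1^*\in\mathcal{C}_0$ and summing the absolutely convergent adjoint series, is likewise the standard argument used there.
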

 %%%
\if0
The time evolution operator $U(t,t')$ has the following properties.
\begin{Prop}\label{main-thm2} The following (i) and (ii) hold.
\begin{enumerate}[(i)]
\item For all $ \xi \in D , \, t,t',t'' \in \R $, $ U(t,t) \xi = \xi $ and the operator equality
\begin{align}\label{associative}
\overline{U(t,t')} U(t' , t'') = U(t,t'').
\end{align}
holds.
\item For any $ s, t,t' \in \Real $, the operator equality
\begin{align}\label{translation}
e^{isH_0} \overline{U(t,t')} e^{-isH_0} = \overline{U(t+s, t'+s)}
\end{align}
holds.
\end{enumerate}
\end{Prop}
\fi
%%%
\if0 %%%
If we assume in addition that $ H_1 $ is symmetric, the stronger results follow:
\begin{Prop}\label{uniqueness} 
Let $H_1$ be a closed symmetric operator. Then, $ \overline{U(t,t')} $ is unitary and the following properties hold.
\begin{enumerate}[(i)] 
\item The operator $U(t,t')$ satisfies the following operator equalities:
\begin{align} 
\overline{U(t,t)} =I , \q \overline{U(t,t')} \; \overline{U(t',t'')} = \overline{U(t,t'')} ,
\end{align}
where $ I $ denotes the identity operator.
\item $U(t,t')$ is unique in the following sense. 
If there exist a dense subspace $\widetilde{D}$ in $\mathcal{H}$ and an operator valued function $ V (t,t') \, (t,t' \in \Real ) $ such that 
$\widetilde{D}\subset D(V(t,t'))$ for all $t,t'\in\R$ and for $ \xi \in \widetilde{D} $, $ V(t,t')\xi $ is strongly differentiable with respect to $ t $, and $ V(t,t') \xi \in D(H_1 (t)) $, which satisfies
\begin{align}
V (t,t)\xi =\xi , \q \frac{\partial }{\partial t} V(t,t') \xi = -i H_1 (t) V(t,t') \xi , \q \xi \in \widetilde{D} , \q t,t' \in \R ,
\end{align}
then $V(t,t')\upharpoonright \widetilde{D}$ is closable and $ \overline{V (t,t')\upharpoonright \widetilde{D}} = \overline{U(t,t')} $. In particular, if $ D(V(t,t')) =\H $ and $ V(t,t') $ is bounded for all $ t,t' \in \R $, then $ V(t,t') = \overline{U(t,t')} $. 
\end{enumerate}
\end{Prop}
\fi %%%

We discuss the existence of the dynamics generated by $H$.
Let 
\begin{align}
W(t) := e^{-itH_0} \overline{U(t,0)} , \q t\in \R .
\end{align}
Then, we have
\begin{Prop}\label{sch-existence} For each $ \xi \in  D' $, the vector valued functions $ t\mapsto \xi (t) := W(t )\xi $ and $ t\mapsto \xi^*(t) := W(t )^* \xi $ 
 are strongly differentiable in $t\in \R$. Moreover, the followings hold:
\begin{align}\label{sch30}
\frac{d}{dt} \xi (t) &= -iH \xi (t) = -iHW(t)\xi = -i W(t)H\xi , \\
\frac{d}{dt} \xi^* (t) & = -iH^* \xi^* (t)= -iH^*W(t)^*\xi = -i W(t)^*H^*\xi .
\end{align}
Further, $W(t)\xi$ and $W(t)^*\xi$ belong to $D(A^{k/2})\cap D(H_0)$ for all $k=0,1,2,\dots$.
\end{Prop}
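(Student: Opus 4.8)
The plan is to pass to the interaction picture and differentiate a product. Since $D'\subset D$ and $\overline{U(t,0)}$ agrees with the Dyson series $U(t,0)$ on $D$, for $\xi\in D'$ I would write $\xi(t)=W(t)\xi=e^{-itH_0}U(t,0)\xi$ and, dually, $\xi^*(t)=W(t)^*\xi=U(t,0)^*e^{itH_0}\xi$; note $e^{\pm itH_0}\xi\in V_L\subset D\subset D(U(t,0)^*)$, so both are well defined. The group $e^{-itH_0}$ is strongly differentiable on $D(H_0)$ and leaves each $V_L$, $D(H_0)$ and $D(A^{k/2})$ invariant (the last because $A$ and $H_0$ strongly commute), while $U(t,0)\xi$ is strongly differentiable with $\tfrac{d}{dt}U(t,0)\xi=-iH_1(t)U(t,0)\xi$, $H_1(t):=e^{itH_0}H_1e^{-itH_0}$, by the differentiation property of the interaction propagator from \cite{FutakuchiUsui2013}. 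Granting $U(t,0)\xi\in D(H_0)$ (the regularity step below), the product rule and $e^{-itH_0}H_1(t)=H_1e^{-itH_0}$ give
\begin{align}
\frac{d}{dt}W(t)\xi=-iH_0e^{-itH_0}U(t,0)\xi-ie^{-itH_0}H_1(t)U(t,0)\xi=-i\bigl(H_0+H_1\bigr)W(t)\xi=-iHW(t)\xi,
\end{align}
so in particular $W(t)\xi\in D(H)$.

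The regularity step is the crux. The $A$-scale is easy: for $\eta\in V_L$ each $m$-th Dyson term lies in $V_{L+mb}$ by property (iii) and is bounded, via the $A^{1/2}$-boundedness of $H_1$ and $\|A^{1/2}\!\upharpoonright V_{L+jb}\|\le(L+jb)^{1/2}$, by $\tfrac{|t|^m}{m!}\prod_{j<m}\bigl(a(L+jb)^{1/2}+b'\bigr)\|\eta\|$; multiplying by $\|A^{k/2}\!\upharpoonright V_{L+mb}\|\le(L+mb)^{k/2}$ and summing still converges since the factorial dominates the polynomial growth. Hence $U(t,0)$ maps each $V_L$ boundedly into $\bigcap_k D(A^{k/2})$, and since $e^{\pm itH_0}$ preserves every $D(A^{k/2})$ this already yields $W(t)\xi,W(t)^*\xi\in\bigcap_k D(A^{k/2})$. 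The $D(H_0)$ membership is more delicate because $\mathcal{C}_0$ controls the $A^{1/2}$-scale but imposes no bound on $[H_0,H_1]$. I would extract it from the translation covariance $e^{isH_0}\overline{U(t,0)}e^{-isH_0}=\overline{U(t+s,s)}$ of \cite{FutakuchiUsui2013}: applied to $\xi$ the right-hand side $F(s):=U(t+s,s)\xi$ is strongly differentiable at $s=0$ by the $t$- and $t'$-differentiation formulas, while
\begin{align}
\frac{F(s)-F(0)}{s}=e^{isH_0}U(t,0)\,\frac{e^{-isH_0}-1}{s}\xi+\frac{e^{isH_0}-1}{s}\,U(t,0)\xi
\end{align}
has a first summand converging to $-iU(t,0)H_0\xi$ (here $H_0\xi\in V_L\subset D$ and $U(t,0)\!\upharpoonright V_L$ is bounded, so $U(t,0)$ is continuous along the $V_L$-valued difference quotients). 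Therefore the second summand converges, which is exactly $U(t,0)\xi\in D(H_0)$, and comparing limits yields the identity $(H_0+H_1(t))U(t,0)\xi=U(t,0)H\xi$.

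Multiplying this identity by $e^{-itH_0}$ and using $e^{-itH_0}H_1(t)=H_1e^{-itH_0}$ gives $HW(t)\xi=W(t)H\xi$ (note $H\xi\in D$, since $H_0\xi=E_A([0,L])H_0\xi\in V_L$ and $H_1\xi\in V_{L+b}$), which with the displayed ODE proves $\tfrac{d}{dt}\xi(t)=-iH\xi(t)=-iHW(t)\xi=-iW(t)H\xi$ and, together with the $A$-scale bound, the asserted $W(t)\xi\in D(A^{k/2})\cap D(H_0)$. The adjoint statements follow from the mirror computation for $W(t)^*\xi=U(t,0)^*e^{itH_0}\xi$, using the remark that $\mathcal{C}_0$ is closed under adjoints (so $H^*=H_0+H_1^*$ is again of this form), the adjoint differentiation formula $\tfrac{d}{dt}U(t,0)^*\eta=iU(t,0)^*H_1(t)^*\eta$, and $H_1(t)^*e^{itH_0}=e^{itH_0}H_1^*$, with covariance supplying the corresponding $D(H_0)$-regularity and commutation with $H^*$, hence the companion equation for $\xi^*(t)$. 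The main obstacle is precisely this $H_0$-regularity and commutation: without a commutator hypothesis one cannot push $H_0$ through the Dyson series directly, and the argument must instead exploit that $H_0$ generates the time translation shifting both arguments of $U(t,t')$, which is exactly what the covariance identity encodes.
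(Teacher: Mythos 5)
The paper itself contains no proof of this Proposition --- it is imported verbatim, as stated, from Ref.~\cite{FutakuchiUsui2013} --- so your reconstruction can only be judged on its own merits and against what that reference must do. For the unstarred statements your argument is correct, and the route is essentially forced: the factorization $W(t)\xi=e^{-itH_0}U(t,0)\xi$ plus the product rule, the Dyson-term estimates propagated through the $A^{k/2}$-scale via property (iii) of the $\mathcal{C}_0$-class, and the covariance identity $e^{isH_0}\overline{U(t,0)}e^{-isH_0}=\overline{U(t+s,s)}$ used to extract both $U(t,0)\xi\in D(H_0)$ and the intertwining $(H_0+H_1(t))U(t,0)\xi=U(t,0)H\xi$. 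You are also right that the $D(H_0)$-regularity is the crux, since no hypothesis controls $[H_0,H_1]$ and $H_0$ cannot be pushed through the series; the only small rigor debt is that differentiability of $s\mapsto U(t+s,s)\xi$ at $s=0$ should be justified from the integral (fundamental-theorem) representation of the two partial-derivative formulas rather than merely cited.

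The genuine gap is in the adjoint half, which you dispose of as a ``mirror computation.'' Carried out honestly with $W(t)^*$ the true adjoint, one has $W(t)^*\xi=U(t,0)^*e^{itH_0}\xi$, the unitary factor now carries the opposite sign in its exponent, and the adjoint propagator satisfies $\frac{d}{dt}U(t,0)^*\eta=+\,iU(t,0)^*H_1(t)^*\eta$; the mirror argument therefore yields $\frac{d}{dt}W(t)^*\xi=+\,iH^*W(t)^*\xi=+\,iW(t)^*H^*\xi$, \emph{not} the displayed $-iH^*W(t)^*\xi$. The discrepancy is not repairable: test $H_1=0$, where $W(t)^*=e^{itH_0}$ and the derivative is $+iH_0e^{itH_0}\xi$. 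So either the Proposition as printed carries a sign error, or $W(t)^*$ is to be read as the evolution constructed from $H^*=H_0+H_1^*$ (legitimate, since $\mathcal{C}_0$ is $*$-closed, and then the second display is just the first applied to $H^*$) rather than as the adjoint of $W(t)$; your proof should have detected and stated this instead of asserting that the companion equation follows as written. Note that everything the rest of the paper actually uses --- the commutation $H^*W(t)^*\xi=W(t)^*H^*\xi$ and the memberships $W(t)^*\xi\in D(A^{k/2})\cap D(H_0)$ --- does come out of your mirror argument correctly, because the adjoint Dyson terms obey the same estimates and the adjointed covariance identity supplies the $D(H_0)$-regularity; what your approach cannot produce is the stated sign.
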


The existence of a solution of the Heisenberg equation 
is ensured if $B\in \mathcal{C}_0$ in a weak sense.
\begin{Prop}\label{w-Hei} 
Let $B\in\mathcal{C}_0$. Then,
\begin{enumerate}[(i)]
\item $ D \subset D(W(-t) B W(t)) $.
\if0
\item $ D \subset D(W(-t)^* B^* W(t)^*) $ and 
\begin{align}
B(t)^*\xi = W(-t)^* B^* W(t)^*\xi
\end{align}
for all $\xi \in D$.
\item For all $\xi\in D'$, the function $t\mapsto BW(t)\xi$ is strongly differentiable in $t\in\Real$ with
\begin{align}
\frac{d}{dt}BW(t)\xi = -iBW(t)H\xi.
\end{align}
\fi
\item The operator-valued function $ B(t) $ defined as 
\begin{align}
D(B(t)) := D , \q B(t) \xi & := W(-t) B W(t) \xi , \q \xi \in D, \q t \in \R ,
\end{align}
is a solution of the weak Heisenberg equation:
\begin{align}\label{wHei2}
\frac{d}{dt} \left\langle \eta , B(t) \xi \right\rangle = \left\langle (iH)^* \eta , B(t) \xi \right\rangle - \left\langle B(t) ^* \eta , iH \xi \right\rangle , \q  \xi , \eta \in  D' .
\end{align}
\end{enumerate}
\end{Prop}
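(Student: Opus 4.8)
The plan is to treat $B(t)=W(-t)BW(t)$ as a three-fold composition and to control it on the scale of the $A$-spectrum, using two quantitative facts read off from the Dyson series of Proposition \ref{main-thm1}. First, iterating the $A^{1/2}$-bound of $H_1$ together with the level-shift property (iii) along the time-ordered simplex shows that $\overline{U(t,t')}$ restricts to a \emph{bounded} operator on each $V_L$, with $\norm{\overline{U(t,t')}\upharpoonright V_L}\le f(L,|t-t'|)$, where $f(L,s)$ grows at most exponentially in $L$ for each fixed $s$. Second, since the same estimate carries a factor $1/\sqrt{n!}$ from the volume of the simplex, the series survives multiplication by $e^{\sigma A}$: for $\xi\in D$ one gets $\overline{U(t,0)}\xi\in D(e^{\sigma A})$ for \emph{every} $\sigma\ge 0$, i.e. $W(t)\xi$ is an analytic vector for $A$ with faster-than-exponential spectral decay. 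Finally I record the ``band-limited'' nature of any $B\in\mathcal{C}_0$: property (iii) for $B$ and for $B^*$ read $(1-V_{L+b})BV_L=0$ and $V_LB(1-V_{L+b})=0$, so $B$ shifts the $A$-spectral support by at most $b$ in either direction; a finite-overlap estimate then makes $B$ map $\bigcap_\sigma D(e^{\sigma A})$ into itself.

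For (i), fix $\xi\in D$. By Proposition \ref{sch-existence}, $W(t)\xi\in D(A^{1/2})\subset D(B)$, so $\beta(t):=BW(t)\xi$ is defined, and by the preceding paragraph $\beta(t)\in\bigcap_\sigma D(e^{\sigma A})$. It remains to place $\beta(t)$ in $D(W(-t))=D(\overline{U(-t,0)})$, the factor $e^{itH_0}$ being bounded. Here the \textbf{main obstacle} appears: $\beta(t)$ is not in $D$, and the naive cutoff $V_M\beta(t)\to\beta(t)$ need not converge after applying $\overline{U(-t,0)}$, since that operator amplifies the high-$A$ part exponentially, $f(M,|t|)$ growing like $e^{cM}$. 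The resolution is quantitative: decompose $\beta(t)=\sum_k E_A([k,k+1))\beta(t)$, note that each summand lies in $V_{k+1}$, so $\overline{U(-t,0)}E_A([k,k+1))\beta(t)$ is defined with norm at most $f(k+1,|t|)\,\norm{E_A([k,k+1))\beta(t)}$, and use the analytic-vector bound $\norm{E_A([k,k+1))\beta(t)}\le e^{-\sigma k}\norm{e^{\sigma A}\beta(t)}$, valid for every $\sigma$. Choosing $\sigma$ larger than the exponential rate of $f(\cdot,|t|)$ makes $\sum_k f(k+1,|t|)\norm{E_A([k,k+1))\beta(t)}<\infty$, so the series converges absolutely and, $\overline{U(-t,0)}$ being closed, $\beta(t)\in D(\overline{U(-t,0)})$. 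Hence $\xi\in D(W(-t)BW(t))$, i.e. $D\subset D(B(t))$.

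For (ii), take $\xi,\eta\in D'$. Since $e^{-itH_0}\eta\in D\subset D(U(-t,0)^*)$ we have $\eta\in D(W(-t)^*)$, so with (i) the pairing rearranges to $\ip{\eta}{B(t)\xi}=\ip{W(-t)^*\eta}{BW(t)\xi}=:\ip{\alpha(t)}{\beta(t)}$. Both factors are strongly $C^1$: $\alpha(t)=W(-t)^*\eta$ inherits from Proposition \ref{sch-existence} and the chain rule the equation $\tfrac{d}{dt}\alpha(t)=(iH)^*\alpha(t)$, together with the intertwining $W(-t)^*H^*\eta=H^*W(-t)^*\eta$, while $\beta(t)=BW(t)\xi$ is differentiable with $\tfrac{d}{dt}\beta(t)=-iBW(t)H\xi$, the operator $B$ being pulled through the strong derivative because it is closed and $A^{1/2}$-bounded and $t\mapsto W(t)\xi$ is strongly $C^1$ in the graph norm of $A^{1/2}$. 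Strong continuity of both factors justifies the product rule for $\tfrac{d}{dt}\ip{\alpha(t)}{\beta(t)}$. In the two resulting terms I carry $W(-t)^{(*)}$ back across the inner product and use $H\xi\in D$ (so that $H_0\xi\in V_L$ and $H_1\xi\in V_{L+b}$) as well as $\eta\in D(B(t)^*)$, the latter supplied by the $B^*$-analogue of (i) applied to the adjoint chain $W(t)^*B^*W(-t)^*$. The first term becomes $\ip{(iH)^*\eta}{B(t)\xi}$ and the second becomes $-\ip{B(t)^*\eta}{iH\xi}$, which is precisely \eqref{wHei2}.
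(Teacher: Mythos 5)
Your overall strategy is sound, but note first that the paper itself contains no proof of this Proposition: it is imported verbatim from Ref.~\cite{FutakuchiUsui2013}, and the only trace of the intended argument visible here is the estimate \eqref{basic-estimate} together with the absolutely convergent double expansion of $W(-t)BW(t)\xi$ into $\sum_{n,m}e^{itH_0}U_n(-t,0)Be^{-itH_0}U_m(t,0)\xi$ used in the proof of Lemma \ref{s-conti-c_0}; that route gives (i) by term-by-term domain checks, absolute convergence of the double series, and closedness. Your route to (i) is genuinely different in organization, and it is correct: you trade the double series for three decoupled quantitative facts, namely the exponential-in-$L$ bound $f(L,|t|)$ for $\overline{U(-t,0)}$ on $V_L$, the super-exponential $A$-spectral decay $W(t)\xi\in\bigcap_{\sigma}D(e^{\sigma A})$ for \emph{every} $\xi\in D$ (the $1/n!$ of the simplex beating the $\sqrt{n!}$-type growth of the level factors), and the band-limitedness of any $\mathcal{C}_0$-operator, all three of which do follow from \eqref{basic-estimate} and property (iii) as you claim. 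The balancing act $\sum_k f(k+1,|t|)\,e^{-\sigma k}\norm{e^{\sigma A}\beta(t)}<\infty$, with $\sigma$ chosen \emph{after} the exponential rate of $f(\cdot,|t|)$ is known, is exactly right, and closedness of $\overline{U(-t,0)}$ finishes it. (One cosmetic slip: you invoke Proposition \ref{sch-existence} for $\xi\in D$, though it is stated only on $D'$; your own decay argument supplies what is needed on all of $D$.) This modular packaging is a real advantage over the double series: it isolates the mechanism and is immediately reusable, e.g.\ for the adjoint chain below.

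Part (ii) has the correct skeleton but three steps are cited rather than proved, and one of them hides a sign trap. First, you assert $\frac{d}{dt}W(-t)^*\eta=(iH)^*W(-t)^*\eta$ ``from Proposition \ref{sch-existence} and the chain rule''; but the chain rule applied to the formula as printed there, $\frac{d}{dt}W(t)^*\xi=-iH^*W(t)^*\xi$, yields $\frac{d}{dt}W(-t)^*\eta=+iH^*W(-t)^*\eta=-(iH)^*W(-t)^*\eta$, which would flip the sign of the first term of \eqref{wHei2}. Your sign is in fact the correct one --- the printed formula cannot be right, since for symmetric $H_1$ one has $W(t)^*=e^{it\overline{H}}$, whose derivative is $+iH^*W(t)^*\xi$ --- but you cannot obtain it by citation; you must derive it, e.g.\ from $\frac{d}{dt}U(t,0)^*\xi=iU(t,0)^*H_1(t)^*\xi$ and $W(t)^*=U(t,0)^*e^{itH_0}$. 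Second, pulling $B$ through the derivative via ``$t\mapsto W(t)\xi$ is $C^1$ in the graph norm of $A^{1/2}$'' is an unproven lemma of essentially the same difficulty as what it is used for; either prove it by term-by-term differentiation of $(A+1)^{1/2}\sum_nU_n(t,0)\xi$, or avoid it altogether by splitting the difference quotient as
$\ip{\tfrac{1}{h}(\alpha(t+h)-\alpha(t))}{BW(t+h)\xi}+\ip{B^*\alpha(t)}{\tfrac{1}{h}\left(W(t+h)\xi-W(t)\xi\right)}$,
which needs only strong continuity of $t\mapsto BW(t)\xi$ and $\alpha(t)\in D(B^*)$. Third, ``$\eta\in D(B(t)^*)$ by the $B^*$-analogue of (i)'' requires $D(U(t,0)^*)$ to contain the decay class $\bigcap_\sigma D(e^{\sigma A})$, whereas Proposition \ref{main-thm1} gives only $D\subset D(U(t,0)^*)$; this does hold, but only after rerunning your part-(i) argument on the adjoint Dyson series (legitimate, since $B^*$ and $H_1^*$ obey the same $\mathcal{C}_0$ bounds). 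With these three repairs the proof is complete.
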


In Ref. \cite{FutakuchiUsuiSelf-ad}, a new criterion to prove the essential self-adjointness is proposed
as an application of these results. We refer two results obtained there for later use.

\begin{Prop}\label{a-or-b}
 Let $ T $ be a symmetric operator in $ \H $. If there exists a dense subspace $ V $ such that for any $ \xi \in V $ the initial value problem 
\begin{align}
\frac{d}{dt} \xi (t) = -i T \xi (t) , \q \xi (0 ) = \xi ,
\end{align}
has a strong solution $ \R \ni t \mapsto \xi (t ) \in D(T) $, then, 
exactly one of the following (a) or (b) holds.
\begin{enumerate}[(a)] 
\item $ T $ has no self-adjoint extension.

\item $ T $ is essentially self-adjoint.
\end{enumerate}
\end{Prop}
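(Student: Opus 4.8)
The plan is to recast the statement in terms of von Neumann's deficiency indices $n_\pm := \dim\ker(T^* \mp i)$ and then extract a single scalar ODE from the assumed solutions. Recall that $T$ is essentially self-adjoint iff $n_+ = n_- = 0$, that $T$ admits a self-adjoint extension iff $n_+ = n_-$, and that it admits none iff $n_+ \neq n_-$. Alternatives (a) and (b) are mutually exclusive, for if $T$ were essentially self-adjoint then $\overline T$ would be a self-adjoint extension, contradicting (a). Hence it remains only to show that at least one of them holds, which is equivalent to excluding the intermediate case $n_+ = n_- \geq 1$; for this it suffices to prove that one of the deficiency indices vanishes. I will show $n_- = 0$.

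First I would observe that symmetry alone forces every assumed solution to have constant norm. Fix $\xi \in V$ and let $t \mapsto \xi(t)$ be the strong solution with $\xi(0) = \xi$. Since $\xi(t) \in D(T)$ and $T$ is symmetric, $\inprod{T\xi(t)}{\xi(t)} = \inprod{\xi(t)}{T\xi(t)} \in \R$, so differentiating $\norm{\xi(t)}^2 = \inprod{\xi(t)}{\xi(t)}$ and inserting $\xi'(t) = -iT\xi(t)$ gives $\frac{d}{dt}\norm{\xi(t)}^2 = 2\,\re\inprod{\xi'(t)}{\xi(t)} = 2\,\re\big(i\inprod{T\xi(t)}{\xi(t)}\big) = 0$; thus $\norm{\xi(t)} = \norm{\xi}$ for all $t$. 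Next, pick any $\psi \in \ker(T^* + i)$ and set $f(t) := \inprod{\psi}{\xi(t)}$. Using $\xi'(t) = -iT\xi(t)$, the adjoint relation $\inprod{\psi}{T\xi(t)} = \inprod{T^*\psi}{\xi(t)}$ (legitimate since $\psi \in D(T^*)$ and $\xi(t) \in D(T)$), and $T^*\psi = -i\psi$, one computes $f'(t) = -i\inprod{T^*\psi}{\xi(t)} = -i\inprod{-i\psi}{\xi(t)} = -i \cdot \overline{(-i)}\, f(t) = -i \cdot i\, f(t) = f(t)$, so $f(t) = \inprod{\psi}{\xi}\, e^{t}$. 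The norm bound gives $|f(t)| \leq \norm{\psi}\,\norm{\xi(t)} = \norm{\psi}\,\norm{\xi}$, bounded on $[0,\infty)$; letting $t \to +\infty$ therefore forces $\inprod{\psi}{\xi} = 0$.

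Since $\xi \in V$ was arbitrary and $V$ is dense, $\inprod{\psi}{\xi} = 0$ for all $\xi \in V$ yields $\psi = 0$; hence $\ker(T^* + i) = \{0\}$ and $n_- = 0$. This rules out $n_+ = n_- \geq 1$, so either $n_+ = 0$ as well, in which case $\overline T$ is self-adjoint and (b) holds, or $n_+ \geq 1 \neq n_-$, in which case $T$ has no self-adjoint extension and (a) holds; the two being mutually exclusive, exactly one occurs. I expect the only delicate point to be the rigor of the two differentiations: one must use that $t \mapsto \xi(t)$ is genuinely strongly differentiable with values in $D(T)$, so that the product rule for $\inprod{\cdot}{\cdot}$ applies and $T$ may be transferred onto $\psi$ through $T^*$. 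The conceptual crux, easy to state but worth isolating, is that symmetry pins the norm of the solution, which is precisely what converts the exponential growth of $f$ into the rigidity $\psi = 0$.
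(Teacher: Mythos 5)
Your proof is correct, and it takes a genuinely different route from the one the paper relies on. Note first that the paper does not actually prove Proposition \ref{a-or-b}; it imports it from Ref.~\cite{FutakuchiUsuiSelf-ad}. The argument there runs in the opposite logical direction: one assumes that (a) fails, so that $T$ has some self-adjoint extension $S$; since $T\subset S$, a uniqueness argument (differentiate $s\mapsto e^{-i(t-s)S}\xi(s)$) identifies each hypothesized solution with $e^{-itS}\xi$, so the span of $\{e^{-itS}\xi \,:\, \xi\in V,\ t\in\R\}$ is a dense subspace of $D(T)$ invariant under the unitary group of $S$; an invariant-core theorem then shows that $T$ restricted to this span is essentially self-adjoint with closure $S$, forcing $\overline{T}=S$, i.e.\ (b). Your route instead attacks a deficiency space directly: norm conservation of the solutions plus the scalar ODE $f'=f$ for $f(t)=\ip{\psi}{\xi(t)}$, $\psi\in\ker(T^*+i)$, forces $\psi\perp V$, hence $\ker(T^*+i)=\{0\}$, and von Neumann's theory converts this into the dichotomy. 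Your computations are consistent with the paper's convention (inner product linear in the second variable), and the two analytic points you flag (product rule via strong differentiability, transferring $T$ onto $\psi$ through $T^*$) are handled correctly. Your approach buys two things: it is more elementary (no uniqueness-of-dynamics lemma, no invariant-core theorem, only von Neumann's classification), and it needs the solutions only on $[0,\infty)$. It also makes visible something the dichotomy formulation hides: since the hypothesis as stated provides solutions for all $t\in\R$, the identical computation for $\psi\in\ker(T^*-i)$ gives $f(t)=f(0)e^{-t}$, and letting $t\to-\infty$ annihilates that deficiency space too, so under the literal hypothesis alternative (b) in fact always holds; the dichotomy is the sharp statement only when solutions are assumed merely forward in time, which is exactly the regime your argument covers.
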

By using this Proposition, one readily finds 
\begin{Prop}\label{s.a.}
Let $H_1$ be in $\mathcal{C}_0$ and symmetric. Then, $H$ is also symmetric and for the symmetric operator $H$, exactly one of the following (a) and (b) holds:
\begin{enumerate}[(a)]
\item $H$ has no self-adjoint extension.
\item $H$ is essentially self-adjoint.
\end{enumerate} 
\end{Prop}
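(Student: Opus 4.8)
The plan is to recognize this as a direct application of Proposition~\ref{a-or-b} with $T = H$ and the test subspace $V = D'$, so the work reduces to checking the three hypotheses of that proposition: that $H$ is symmetric, that $D'$ is dense, and that for each $\xi \in D'$ the initial value problem $\frac{d}{dt}\xi(t) = -iH\xi(t)$, $\xi(0)=\xi$ admits a strong solution taking values in $D(H)$.

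First I would fix the domain $D(H) := D(H_0)\cap D(H_1)$ and verify symmetry. Since every $\xi \in V_L = E_A([0,L])\H$ satisfies $\norm{A\xi}^2 = \int_{[0,L]}\lambda^2\, d\norm{E_A(\lambda)\xi}^2 \le L^2\norm{\xi}^2 < \infty$, we have $D \subset D(A) \subset D(A^{1/2})$, and the $A^{1/2}$-boundedness of $H_1$ gives $D(A^{1/2}) \subset D(H_1)$; hence
\begin{align}
D' = D \cap D(H_0) \subset D(H_0)\cap D(H_1) = D(H).
\end{align}
Density of $D'$ follows from the strong commutativity of $A$ and $H_0$: the joint spectral projections $E_A([0,L])E_{H_0}([-M,M])$ have ranges contained in $D'$ and converge strongly to the identity as $L,M\to\infty$. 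Consequently $H$ is densely defined, and for $\xi,\eta \in D(H)$ the self-adjointness of $H_0$ together with the symmetry of $H_1$ gives $\ip{H\xi}{\eta} = \ip{\xi}{H\eta}$, so $H$ is symmetric.

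Next I would produce the required strong solution from Proposition~\ref{sch-existence}. For $\xi \in D'$ set $\xi(t) := W(t)\xi$. The series defining $U(0,0)$ reduces to its first term, so $W(0) = I$ and $\xi(0) = \xi$. By Proposition~\ref{sch-existence} the map $t \mapsto \xi(t)$ is strongly differentiable with
\begin{align}
\frac{d}{dt}\xi(t) = -iH\,\xi(t),
\end{align}
and moreover $\xi(t) \in D(A^{k/2})\cap D(H_0)$ for every $k=0,1,2,\dots$. Taking $k=1$ and again using $D(A^{1/2}) \subset D(H_1)$, we obtain $\xi(t) \in D(H_0)\cap D(H_1) = D(H)$ for all $t\in\R$. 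Thus $t \mapsto \xi(t)$ is a strong solution with values in $D(H)$, exactly as demanded by the hypothesis of Proposition~\ref{a-or-b}.

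With symmetry, density of $V = D'$, and existence of the $D(H)$-valued strong solution all verified, Proposition~\ref{a-or-b} applies verbatim and yields that exactly one of (a) or (b) holds. The only genuinely non-routine point is the step asserting $\xi(t)\in D(H)$: it is not enough that the solution lies in $D(H_0)$, and one must combine the extra regularity $\xi(t)\in D(A^{1/2})$ furnished by Proposition~\ref{sch-existence} with the relative boundedness built into the $\mathcal{C}_0$-class definition in order to place $\xi(t)$ in $D(H_1)$ as well. Everything else is bookkeeping on domains.
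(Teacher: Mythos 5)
Your proof is correct and takes exactly the route the paper intends: the paper obtains Proposition \ref{s.a.} as an immediate application of Proposition \ref{a-or-b}, with the required strong solution supplied by $W(t)$ via Proposition \ref{sch-existence}. The domain bookkeeping you supply (symmetry of $H$ on $D(H_0)\cap D(H_1)$, density of $D'$, and $W(t)\xi\in D(A^{1/2})\cap D(H_0)\subset D(H)$ via the $A^{1/2}$-boundedness of $H_1$) is precisely the ``one readily finds'' step the paper leaves to the reader.
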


\subsection{$N$-th derivatives and Taylor expansion}
In this section, we develop a general theory concerning $n$-times differentiability of the operator $B(t)=W(-t)BW(-t)$
on a suitable subspace. Here, we take a slightly different formulation from that of Ref. \cite{FutakuchiUsui2013}
so that the generalization to $n$-th differentiability is easier.

To begin with, we prove a simple property of $B(t)$, which is not explicitly stated in Ref. \cite{FutakuchiUsui2013}.  
\begin{Lem}\label{s-conti-c_0}
Let $B\in\mathcal{C}_0$. Then the mapping
\begin{align}
t\mapsto W(-t)BW(t)\xi 
\end{align}
is strongly continuous in $t\in\Real$ for all $\xi \in D$.
\end{Lem}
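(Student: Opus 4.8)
The plan is to reduce everything to the Dyson series of Proposition \ref{main-thm1} and to exploit two structural features of $H_1\in\mathcal{C}_0$: each application raises the $A$-spectral support by at most a fixed amount $b$, while $H_1$ is \emph{bounded} on each $V_M$. Fix $\xi\in V_L$. Since $e^{-i\tau H_0}$ commutes strongly with $A$ and $H_1$ is $A^{1/2}$-bounded, for $\eta\in V_M$ the operator $H_1(\tau)=e^{i\tau H_0}H_1e^{-i\tau H_0}$ is bounded, maps $V_M$ into $V_{M+b}$, and satisfies $\|H_1(\tau)\eta\|\le M_M\|\eta\|$ with $M_M:=a\sqrt{M}+c$ uniformly in $\tau$, where $a,c$ are the relative-bound constants. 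Writing the $m$-th term of the series as $u_m(t)$, one gets $u_m(t)\in V_{L+mb}$, with $t\mapsto u_m(t)$ strongly continuous, and the shell estimate $\|u_m(t)\|\le \tfrac{1}{m!}|t|^m\prod_{i=0}^{m-1}M_{L+ib}\,\|\xi\|$; since $M_{L+ib}=O(\sqrt{i})$, the product is $O(C_L^m\sqrt{m!})$ and hence $\|u_m(t)\|\le (C_L|t|)^m/\sqrt{m!}\,\|\xi\|$. This bound is uniform on compact $t$-intervals, so $W(t)\xi=e^{-itH_0}\sum_m u_m(t)=\sum_m w_m(t)$, with $w_m(t):=e^{-itH_0}u_m(t)\in V_{L+mb}$, is a uniformly convergent series of continuous terms; whence $t\mapsto W(t)\xi$ is strongly continuous.

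First I would record the shell structure of $\psi(t):=BW(t)\xi$. Since $B\in\mathcal{C}_0$ maps $V_{M}$ boundedly into $V_{M+b}$ with $\|B\upharpoonright V_M\|\le M_M$, applying the closedness of $B$ to the partial sums $\sum_{m\le N}w_m(t)\in D$ shows $W(t)\xi\in D(B)$ and
\begin{align}
\psi(t)=BW(t)\xi=\sum_{m=0}^\infty \psi_m(t),\qquad \psi_m(t):=Bw_m(t)\in V_{L+(m+1)b},
\end{align}
with $t\mapsto\psi_m(t)$ strongly continuous and $\|\psi_m(t)\|\le M_{L+mb}(C_L|t|)^m/\sqrt{m!}\,\|\xi\|$. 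The series again converges uniformly on compact $t$-intervals, so $\psi$ is strongly continuous and $\psi(t)$ has super-exponentially decaying spectral shells.

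The crux is to apply the unbounded operator $W(-t)=e^{itH_0}\overline{U(-t,0)}$ to the $t$-dependent, infinite-spectral-support vector $\psi(t)$ and to prove continuity of the result; this is where the two occurrences of the time variable are entangled. I would handle it by spectral truncation. Put $P_N:=E_A([0,N])$. For $\zeta\in V_M$ the Dyson series for $U(-t,0)$ gives $\|\overline{U(-t,0)}\zeta\|\le F(|t|,M)\|\zeta\|$ with $F(s,M):=\sum_n \tfrac{s^n}{n!}\prod_{i=0}^{n-1}M_{M+ib}<\infty$, so $W(-t)P_N$ is bounded with norm $\le F(|t|,N)$, uniformly for $|t|\le T$. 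Because $P_N\psi(t)\in V_N\subset D$, the closedness of $W(-t)$ together with $P_N\psi(t)\to\psi(t)$ and the convergence of $W(-t)P_N\psi(t)$ (established next) identifies $\lim_N W(-t)P_N\psi(t)=W(-t)\psi(t)=(W(-t)BW(t))\xi$. For fixed $N$, the decomposition
\begin{align}
W(-t)P_N\psi(t)-W(-s)P_N\psi(s)=W(-t)P_N\bigl(\psi(t)-\psi(s)\bigr)+\bigl(W(-t)-W(-s)\bigr)P_N\psi(s)
\end{align}
shows $t\mapsto W(-t)P_N\psi(t)$ is continuous: the first term is bounded by $F(T,N)\|\psi(t)-\psi(s)\|\to0$, and the second by strong continuity of $W(-\cdot)$ on the fixed vector $P_N\psi(s)\in D$ (the same Dyson argument as for $W(t)\xi$).

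Finally I would establish the uniform tail estimate that lets the limit in $N$ commute with the limit in $t$. Since $I-P_N$ preserves each $V_M$ and $\psi_m(t)\in V_{L+(m+1)b}$,
\begin{align}
\bigl\|W(-t)(I-P_N)\psi(t)\bigr\|\le\sum_{m:\,L+(m+1)b>N} F(|t|,L+(m+1)b)\,\|\psi_m(t)\|,
\end{align}
and, using $F(T,L+(m+1)b)=O(e^{c'\sqrt m})$ against the $1/\sqrt{m!}$ decay of $\|\psi_m(t)\|$, the right-hand side tends to $0$ as $N\to\infty$, uniformly for $|t|\le T$. Hence $t\mapsto W(-t)P_N\psi(t)$ converges uniformly on compacts to $t\mapsto W(-t)BW(t)\xi$, and a uniform limit of continuous maps is continuous, giving the claim for $\xi\in V_L$ and thus for all $\xi\in D$. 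The main obstacle throughout is precisely the unbounded, $t$-dependent operator $W(-t)$ acting on a vector of infinite $A$-support; everything hinges on the $1/\sqrt{m!}$ decay of the Dyson shells, which dominates the at most $e^{c'\sqrt m}$ growth of the norms $F(\cdot,\cdot)$ and thereby secures uniform convergence on compact time-intervals.
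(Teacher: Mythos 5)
Your proof is correct and follows essentially the same route as the paper: expand the time evolution in its Dyson series, use the $\mathcal{C}_0$ shell structure ($V_L\to V_{L+b}$) together with the basic estimate $\norm{U_n(t,0)\xi}\le \frac{C^n}{n!}|t|^n(L+(n-1)b+1)^{1/2}\cdots(L+1)^{1/2}\norm{\xi}$ to obtain absolute, locally-uniform-in-$t$ convergence of strongly continuous terms, and conclude continuity of the limit. The only difference is bookkeeping: where the paper writes $W(-t)BW(t)\xi$ directly as the double series $\sum_{n,m}e^{itH_0}U_n(-t,0)Be^{-itH_0}U_m(t,0)\xi$, you re-sum the outer expansion into the shell bounds $F(|t|,\cdot)$ and control $W(-t)$ through the spectral truncations $P_N$ and closedness of $W(-t)$ --- a longer but equally valid justification of the same term-by-term expansion.
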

\begin{proof}
Define an operator $U_n(t,t')$ on $D$, following Ref. \cite{FutakuchiUsui2013}, as
\begin{align}
U_n(t,t')\xi := (-i)^n\int_{t'}^t ds_1 \dots \int_{t'}^{s_{n-1}}ds_n \,H_1(s_1)\dots H_n(s_n)\xi.
\end{align}
Then the estimate 
\begin{align}\label{basic-estimate}
\norm{U_n(t,t')\xi} \le \frac{C^n}{n!}  |t-t'|^n (L+(n-1)b+1)^{1/2}\dots (L+1)^{1/2}\norm{\xi},
\end{align}
holds for some $C>0$ and $b>0$ (\cite{FutakuchiUsui2013} Lemma 3.4). From this estimate and the assumption that $B$ is in $\mathcal{C}_0$,
it is straightforward to check the mapping
\begin{align}
t\mapsto e^{itH_0}U_n(-t,0)Be^{-itH_0}U_m(t,0)\xi,\q \xi\in D
\end{align}
is strongly continuous. Since $W(-t)BW(t)\xi$ can be expanded in
a series converging absolutely and locally uniformly in $t\in\R$:
\begin{align}
BW(t)\xi = \sum_{n,m=0}^{\infty} e^{itH_0}U_n(-t,0)Be^{-itH_0}U_m(t,0)\xi,
\end{align}
the limit function
\begin{align}
t\mapsto W(-t)BW(t)\xi,\q \xi\in D
\end{align}
is also strongly continuous.
\end{proof}

\begin{Def}\label{C_1}
We say an operator $B$ is in $\mathcal{C}_1$-class if it satisfies 
\begin{enumerate}[(i)]
\item $B$ is in $\mathcal{C}_0$-class.
\item There is an operator $C\in\mathcal{C}_0$ such that
\begin{align}\label{w-comm}
\ip{(iH)^*\xi}{B\eta} - \ip{B^*\xi}{iH\eta}=\ip{\xi}{C\eta}
\end{align}
for all $\xi,\eta\in D'$.
\end{enumerate}
\end{Def}
We remark that the above operator $C\in\mathcal{C}_0$ is not unique in general. But one finds
\begin{Lem}
Let $B\in\mathcal{C}_1$ and $C$ be an operator mentioned in Definition \ref{C_1}. Then the operator
\begin{align}
\ad(B):=\overline{C\upharpoonright D(A^{1/2})}
\end{align}
does not depend on the choice of $C$ and determined only by $B$.
\end{Lem}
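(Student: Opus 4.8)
The plan is to take two operators $C_1, C_2 \in \mathcal{C}_0$ that both satisfy the weak commutator identity \eqref{w-comm} for a fixed $B \in \mathcal{C}_1$, and to show that $C_1 \upharpoonright D(A^{1/2})$ and $C_2 \upharpoonright D(A^{1/2})$ have the same closure. Subtracting the two identities immediately kills the $B$-dependent left-hand side and leaves
\begin{align}
\ip{\xi}{(C_1 - C_2)\eta} = 0, \quad \xi, \eta \in D'.
\end{align}
If I can establish that $D'$ is dense in $\H$, then fixing $\eta \in D'$ and letting $\xi$ range over $D'$ forces $(C_1 - C_2)\eta = 0$, so $C_1$ and $C_2$ agree on $D'$. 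The lemma then reduces to showing that $D'$ is a core for each restriction $C_i \upharpoonright D(A^{1/2})$, since this yields $\overline{C_i \upharpoonright D(A^{1/2})} = \overline{C_i \upharpoonright D'}$, and the right-hand side no longer refers to the choice of $C_i$.

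The key technical input is the strong commutativity of $A$ and $H_0$, which I would exploit through a simultaneous spectral cutoff. For $\eta \in D(A^{1/2})$ I set
\begin{align}
\eta_n := E_A([0,n]) E_{H_0}([-n,n])\eta, \quad n \in \N.
\end{align}
Because the bounded spectral projections of the strongly commuting pair $A, H_0$ commute with each other and with every function of $A$, the vector $\eta_n$ lies in $V_n$ and in $D(H_0)$, hence $\eta_n \in D'$. One has $\eta_n \to \eta$ as $n \to \infty$, and since $E_A([0,n]) E_{H_0}([-n,n])$ commutes with $A^{1/2}$ one also gets $A^{1/2}\eta_n = E_A([0,n]) E_{H_0}([-n,n]) A^{1/2}\eta \to A^{1/2}\eta$. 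Letting $\eta$ range over all of $\H$ gives density of $D'$; taking $\eta \in D(A^{1/2})$ shows that $D'$ is dense in $D(A^{1/2})$ in the $A^{1/2}$-graph norm, i.e. a core for $A^{1/2}$.

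Finally I would transfer this core property to $C_i$ using the $A^{1/2}$-boundedness built into the $\mathcal{C}_0$ definition: there exist constants $\alpha, \beta > 0$ with $\norm{C_i\zeta} \le \alpha\norm{A^{1/2}\zeta} + \beta\norm{\zeta}$ for all $\zeta \in D(A^{1/2})$. Applying this to $\zeta = \eta_n - \eta$ and using the two convergences above gives $C_i \eta_n \to C_i \eta$, so $\eta_n \to \eta$ in the $C_i$-graph norm; as $C_i$ is closed, its restriction to $D(A^{1/2})$ is closable and $D'$ is a core for it. Combining this with $C_1 = C_2$ on $D'$ yields
\begin{align}
\overline{C_1 \upharpoonright D(A^{1/2})} = \overline{C_1 \upharpoonright D'} = \overline{C_2 \upharpoonright D'} = \overline{C_2 \upharpoonright D(A^{1/2})},
\end{align}
which is exactly the assertion that $\ad(B)$ is well defined. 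I expect the main obstacle to be the middle step: one must verify that intersecting $D$ with $D(H_0)$ does not destroy the core property for $A^{1/2}$, and it is precisely the strong commutativity that lets the double cutoff approximate in the $A^{1/2}$-graph norm — without it the $H_0$-cutoff need not preserve $D(A^{1/2})$-convergence, and the argument would collapse.
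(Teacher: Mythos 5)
Your proposal is correct and follows essentially the same route as the paper: both arguments rest on the same double spectral cutoff $\eta_n = E_A([0,n])E_{H_0}([-n,n])\eta$, the strong commutativity of $A$ and $H_0$, and the $A^{1/2}$-boundedness of $C_i$ to pass from $C_1=C_2$ on $D'$ to all of $D(A^{1/2})$. The only (cosmetic) difference is that the paper concludes pointwise equality $C_1\eta=C_2\eta$ on $D(A^{1/2})$ via the bounded operators $C_i(A+1)^{-1/2}$ and then takes closures, whereas you package the same limiting argument as the statement that $D'$ is a common core for the restrictions.
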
 
\begin{proof}
Suppose that two operators $C_1$ and $C_2$ fulfills the condition. Then for all $\xi\in D'$,
we have
\begin{align}
C_1\xi=C_2\xi.
\end{align}
Take arbitrary $\eta\in D(A^{1/2})$ and put $\eta_n=E_A([0,n])E_{H_0}([-n,n])\eta$.
Then clearly $\eta_n\in D'$ and 
\begin{align}
\eta_n\to\eta,\q (A+1)^{1/2}\eta_n\to(A+1)^{1/2}\eta,
\end{align}
as $n$ tends to infinity. Therefore, we have
\begin{align}
C_1\eta &= \lim_{n\to\infty} C_1(A+1)^{-1/2}(A+1)^{1/2}\eta_n \no\\
	&= \lim_{n\to\infty} C_2(A+1)^{-1/2}(A+1)^{1/2}\eta_n \no\\
	&=C_2 \eta,
\end{align}
since both $C_1(A+1)^{-1/2}$ and $C_2(A+1)^{-1/2}$ are bounded. Thus one obtains
\begin{align}
C_1\upharpoonright D(A^{1/2})=C_2\upharpoonright D(A^{1/2}).
\end{align}
Taking the closure of both sides proves the assertion.
\end{proof}
\begin{Lem}\label{extended-w-comm}
Let $B\in\mathcal{C}_1$. Then the relation 
\eqref{w-comm} remains valid for all $\xi,\eta\in D(H_0)\cap D(A^{1/2})$. 
\end{Lem}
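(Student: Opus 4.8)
The plan is to extend \eqref{w-comm} from $D'$ to $D(H_0)\cap D(A^{1/2})$ by a strong-continuity argument, regularizing the arguments by spectral cut-offs that already lie in $D'$. Observe first that $D'=D\cap D(H_0)\subset D(H_0)\cap D(A^{1/2})$, because every $\xi\in D$ lies in some $V_L$ and hence in $D(A^{k/2})$ for all $k$; so the claim genuinely widens the domain of validity. For $\xi,\eta\in D(H_0)\cap D(A^{1/2})$ I would set
\begin{align}
\xi_n := E_A([0,n])\,E_{H_0}([-n,n])\,\xi , \qquad \eta_n := E_A([0,n])\,E_{H_0}([-n,n])\,\eta .
\end{align}
Since $A$ strongly commutes with $H_0$ the two spectral projections commute, so $\xi_n\in V_n\cap D(H_0)\subset D'$, and similarly $\eta_n\in D'$; therefore \eqref{w-comm} holds with $\xi,\eta$ replaced by $\xi_n,\eta_n$ for every $n$.

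Next I would record the convergences needed to take $n\to\infty$. Because the cut-off projections commute with $H_0$ and with $A^{1/2}$ and tend strongly to the identity on the relevant domains, one has $\xi_n\to\xi$, $H_0\xi_n=E_A([0,n])E_{H_0}([-n,n])H_0\xi\to H_0\xi$ and $A^{1/2}\xi_n\to A^{1/2}\xi$, and likewise for $\eta_n$. The one point demanding care is the meaning of $(iH)^*\xi$ on the enlarged domain. A direct sesquilinear-form computation gives $H_0+H_1^*\subset H^*$, and since $H_1^*$ is $A^{1/2}$-bounded (property (ii) of $\mathcal{C}_0$) we have $D(A^{1/2})\subset D(H_1^*)$; hence $D(H_0)\cap D(A^{1/2})\subset D(H^*)$ and $(iH)^*\xi=-i(H_0+H_1^*)\xi$. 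Dually, $D(A^{1/2})\subset D(H_1)$ gives $\eta\in D(H)$ with $iH\eta=iH_0\eta+iH_1\eta$.

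With these identifications the limit passage is routine. The $A^{1/2}$-boundedness of $H_1^*$ turns $\xi_n\to\xi$ and $A^{1/2}\xi_n\to A^{1/2}\xi$ into $H_1^*\xi_n\to H_1^*\xi$, so that $(iH)^*\xi_n\to(iH)^*\xi$; symmetrically $iH\eta_n\to iH\eta$. Since $B,B^*,C\in\mathcal{C}_0$ are all $A^{1/2}$-bounded, the same estimates give $B\eta_n\to B\eta$, $B^*\xi_n\to B^*\xi$ and $C\eta_n\to C\eta$. Continuity of the inner product then yields
\begin{align}
\ip{(iH)^*\xi_n}{B\eta_n}\to\ip{(iH)^*\xi}{B\eta},\quad \ip{B^*\xi_n}{iH\eta_n}\to\ip{B^*\xi}{iH\eta},\quad \ip{\xi_n}{C\eta_n}\to\ip{\xi}{C\eta},
\end{align}
and letting $n\to\infty$ in the identity valid on $D'$ proves \eqref{w-comm} for all $\xi,\eta\in D(H_0)\cap D(A^{1/2})$.

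I expect the only genuinely substantive step to be the identification of $(iH)^*$ on $D(H_0)\cap D(A^{1/2})$, i.e.\ verifying $H_0+H_1^*\subset H^*$ so that the adjoint acts as $-i(H_0+H_1^*)$ there; once this is in hand, all remaining convergences are immediate consequences of the $A^{1/2}$-bounds built into the $\mathcal{C}_0$-class and of the strong commutativity of $A$ and $H_0$.
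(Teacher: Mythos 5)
Your proof is correct and follows essentially the same route as the paper's: regularize $\xi,\eta$ by spectral cut-offs landing in $D'$, then pass to the limit in \eqref{w-comm} using the $A^{1/2}$-bounds built into the $\mathcal{C}_0$-class. The only differences are cosmetic: the extra factor $E_{H_0}([-n,n])$ is unnecessary (since $\xi\in D(H_0)$ already, $E_A([0,n])\xi$ lies in $D'=D\cap D(H_0)$ because $E_A([0,n])$ commutes with $H_0$), and your explicit identification $(iH)^*\xi=-i(H_0+H_1^*)\xi$ on $D(H_0)\cap D(A^{1/2})$ makes precise a step that the paper's terse ``limiting argument'' leaves implicit.
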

\begin{proof}
Let $\xi,\eta\in D(H_0)\cap D(A^{1/2})$. Put 
\begin{align}
\xi_n:=E_{A}([0,n])\xi, \q \eta_n:=E_{A}([0,n])\eta.
\end{align}
Then, clearly $\xi_n,\eta_n\in D'$ and 
\begin{align}
\xi_n\to\xi, \q H_0\xi_n\to H_0\xi,\q B\xi_n\to B\xi
\end{align}
as $n$ tends to infinity for all $B\in\mathcal{C}_0$, and the same is true for $\eta$.
The relation \eqref{w-comm} holds for $\xi_n $ and $\eta_n$. Thus, by the limiting argument,
the assertion follows.
\end{proof}
The strong Heisenberg equation is satisfied if $B$ is in $\mathcal{C}_1$-class.
\begin{Thm}\label{C_1-equation}
For $B\in \mathcal{C}_1$ and $\xi \in D'$ the mapping
\begin{align}
\R\ni t\mapsto B(t)\xi=W(-t)BW(t)\xi\in\mathcal{H}
\end{align}
is strongly continuously differentiable in $t\in\R$ and satisfies the Heisenberg equation of motion
\begin{align}
\frac{d}{dt} B(t)\xi = W(-t)\ad (B) W(t) \xi.
\end{align}
\end{Thm}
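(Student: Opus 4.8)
The plan is to upgrade the weak Heisenberg equation of Proposition \ref{w-Hei} to a strong identity by exhibiting $B(t)\xi$ as the primitive of a strongly continuous function. Fix $\xi\in D'$ and put $g(t):=W(-t)\,\ad(B)\,W(t)\xi$. Since $\ad(B)=\overline{C\upharpoonright D(A^{1/2})}$ coincides with the witness $C\in\mathcal{C}_0$ on $D(A^{1/2})$ and $W(t)\xi\in D(A^{1/2})$ by Proposition \ref{sch-existence}, we have $g(t)=W(-t)CW(t)\xi$; hence $g$ is well defined on $D$ and, by Lemma \ref{s-conti-c_0} applied to $C$, the map $t\mapsto g(t)$ is strongly continuous. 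It therefore suffices to prove the integral identity
\begin{align}
B(t)\xi = B\xi + \int_0^t g(s)\,ds, \no
\end{align}
for then the right-hand side is strongly continuously differentiable with derivative $g(t)=W(-t)\ad(B)W(t)\xi$, which is the assertion (note $B(0)\xi=B\xi$ since $W(0)=I$).

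To establish the integral identity it suffices to verify, for each $\eta$ in the dense set $D'$, the scalar relation
\begin{align}
\frac{d}{dt}\ip{\eta}{B(t)\xi} = \ip{\eta}{g(t)}, \no
\end{align}
and then to integrate it using the strong continuity of $g$: this gives $\ip{\eta}{B(t)\xi-B\xi-\int_0^t g(s)\,ds}=0$ for every $\eta\in D'$, whence the bracketed vector, being orthogonal to a dense set, vanishes. The left-hand side above is furnished by the weak Heisenberg equation \eqref{wHei2}, so the entire problem collapses to identifying its right-hand side, $\ip{(iH)^*\eta}{B(t)\xi}-\ip{B(t)^*\eta}{iH\xi}$, with $\ip{\eta}{W(-t)CW(t)\xi}$.

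This identification is the heart of the matter: it amounts to transporting the time-zero weak commutator \eqref{w-comm} to time $t$. By Proposition \ref{sch-existence} both $\eta':=W(-t)^*\eta$ and $\xi':=W(t)\xi$ lie in $D(H_0)\cap D(A^{1/2})$, which is precisely the domain on which Lemma \ref{extended-w-comm} guarantees \eqref{w-comm}; applying it to the pair $(\eta',\xi')$ gives
\begin{align}
\ip{\eta}{W(-t)CW(t)\xi} = \ip{\eta'}{C\xi'} = \ip{(iH)^*\eta'}{B\xi'} - \ip{B^*\eta'}{iH\xi'}. \no
\end{align}
I would then match the two summands with those of \eqref{wHei2} by means of the intertwining relations that $W$ carries. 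The two equivalent forms of the derivative in Proposition \ref{sch-existence} yield $H^*W(-t)^*\eta=W(-t)^*H^*\eta$ and $HW(t)\xi=W(t)H\xi$ on $D'$; together with the domain inclusion $D\subset D(W(-t)^*)$ (a consequence of Proposition \ref{main-thm1}), which legitimizes moving the unbounded $W(\pm t)$ across the inner product, these give $\ip{(iH)^*\eta'}{B\xi'}=\ip{(iH)^*\eta}{B(t)\xi}$ for the first term and $\ip{B^*\eta'}{iH\xi'}=\ip{B(t)^*\eta}{iH\xi}$ for the second, the latter using the explicit action of $B(t)^*$ on $D'$. Combining, the right-hand side of \eqref{wHei2} equals $\ip{\eta}{W(-t)CW(t)\xi}$, as required.

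The main obstacle is exactly this transport step, and its difficulty is structural rather than computational. The weak commutator \eqref{w-comm} is postulated only on $D'$, whereas $W(\pm t)$ does not preserve $D'$: applying $W(t)$ to a vector of bounded $A$-support typically destroys that bound. It is thus indispensable that Lemma \ref{extended-w-comm} first widen the validity of \eqref{w-comm} to all of $D(H_0)\cap D(A^{1/2})$, which by Proposition \ref{sch-existence} is exactly where $W(t)\xi$ and $W(-t)^*\eta$ reside. The residual labour is domain bookkeeping, namely checking that every vector to which $W(\pm t)$ or its adjoint is applied lies in the requisite domain so that the adjoint manipulations and intertwining relations are justified. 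Once the weak derivative identity holds, the passage to the strong continuously differentiable statement is routine, resting only on the strong continuity of $g$ (Lemma \ref{s-conti-c_0}) and the fundamental theorem of calculus.
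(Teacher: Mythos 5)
Your proof is correct and follows essentially the same route as the paper's: both derive the scalar identity $\frac{d}{dt}\ip{\eta}{B(t)\xi}=\ip{\eta}{W(-t)\ad(B)W(t)\xi}$ for $\eta\in D'$ by combining the weak Heisenberg equation (Proposition \ref{w-Hei}), the intertwining and domain facts of Proposition \ref{sch-existence}, and Lemma \ref{extended-w-comm}, then integrate, pull the integral inside the inner product via the strong continuity of Lemma \ref{s-conti-c_0}, and invoke density of $D'$ together with the fundamental theorem of calculus. Your version is, if anything, slightly more careful on the domain bookkeeping, and your constant of integration $B\xi$ is the correct one (the paper's displayed integral formula contains a typo, writing $\ad(B)\xi$ where $B\xi$ is meant).
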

\begin{proof}
By Propositions \ref{sch-existence} and
\ref{w-Hei}, and Lemma \ref{extended-w-comm}, we have 
\begin{align}
\frac{d}{dt} \left\langle \eta , B(t) \xi \right\rangle &= \left\langle (iH)^* \eta , B(t) \xi \right\rangle - \left\langle B(t) ^* \eta , iH \xi \right\rangle , \no\\
&=\left\langle (iH)^* W(-t)^*\eta , BW(t) \xi \right\rangle - \left\langle B ^*W(-t)^* \eta , iH W(t) \xi \right\rangle \no\\
&=\ip{W(-t)^*\eta}{\ad(B)W(t)\xi}.
\end{align}
Since $\ad(B)$ is in $\mathcal{C}_0$, $\ad(B)W(t)\xi\in D(W(-t))$ and one gets
\begin{align}
\frac{d}{dt} \ip{\eta} {B(t)\xi} = \ip{\eta}{W(-t)\ad(B)W(t)\xi}.
\end{align}
Hence we have 
\begin{align}
\ip{\eta} {B(t)\xi} &= \ip{\eta}{\ad(B)\xi} + \int_0^t ds\, \ip{\eta}{W(-s)\ad(B)W(s)\xi} \no\\
	&=\ip{\eta}{\ad(B)\xi} +  \ip{\eta}{\int_0^t ds\,W(-s)\ad(B)W(s)\xi},
\end{align}
by Lemma \ref{s-conti-c_0}. Thus we obtain
\begin{align}
B(t)\xi=\ad(B)\xi + \int_0^t ds\,W(-s)\ad(B)W(s)\xi.
\end{align}
This equation shows that $B(t)\xi$ is continuously strongly differentiable and
\begin{align}
\frac{d}{dt}B(t)\xi = W(-t)\ad(B)W(t)\xi, \q \xi\in D'.
\end{align}
This completes the proof.
\end{proof}
One of the merits of the present formulation of the strong Heisenberg equation is that 
it is easy to extend for $n$-th differentiability. 
\begin{Def}
We define $\mathcal{C}_n$-class and $\ad^n(B)$ for $n=0,1,\dots$ inductively. That is, we say that an operator $B$ is in $\mathcal{C}_n$-class if $B$ is in $\mathcal{C}_{n-1}$-class and
$\ad(B)$ is in $\mathcal{C}_{n-1}$-class. For $B\in\mathcal{C}_n$, we write
\begin{align}
\ad^n(B):=\ad(\ad^{n-1}(B)),\q n=1,2,\dots.
\end{align}
It is clear that $\mathcal{C}_n\subset\mathcal{C}_{n-1}$ for $n=1,2,\dots$ and that if $B\in\mathcal{C}_n$, then $\ad(B)\in\mathcal{C}_{n-1}$, $\ad^2(B)\in\mathcal{C}_{n-2}$, \dots
$\ad^n(B)\in\mathcal{C}_0$. We define $\ad^0(B):=B$. An operator $B$ is said to be in $\mathcal{C}_\infty$-class if $B$ is in $\mathcal{C}_n$ for all $n\in\Natural$. Namely,
\begin{align}
\mathcal{C}_\infty := \bigcap_{n=0}^\infty \mathcal{C}_n.
\end{align}
\end{Def}

The following Theorem is important and useful but the proof is almost trivial by induction.
\begin{Thm}\label{n-th-diff}
Let $B$ is in $\mathcal{C}_n$-class. Then, for all $\xi\in D'$, $B(t)\xi$ is $n$-times strongly continuously differentiable 
in $t\in\R$ and 
\begin{align}
\frac{d^k}{dt^k}B(t)\xi = W(-t)\ad^k(B)W(t)\xi,\q k=0,1,2,\dots,n.
\end{align} 
In particular, if $B\in\mathcal{C}_\infty$, then
\begin{align}
\frac{d^k}{dt^k}B(t)\xi = W(-t)\ad^k(B)W(t)\xi,\q k=0,1,2,\dots.
\end{align} 
\end{Thm}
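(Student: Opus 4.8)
The plan is to proceed by induction on $n$, using the first-order result Theorem \ref{C_1-equation} as the driving mechanism and the two structural facts $\mathcal{C}_n\subset\mathcal{C}_{n-1}$ and $\ad^{k+1}(B)=\ad^k(\ad(B))$ for bookkeeping. The base cases are immediate: for $n=0$ the claim is just the strong continuity of $t\mapsto W(-t)BW(t)\xi$ provided by Lemma \ref{s-conti-c_0}, and for $n=1$ it is precisely Theorem \ref{C_1-equation}.

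For the inductive step I would fix $n\ge 2$, assume the theorem for $n-1$, and take $B\in\mathcal{C}_n$. By definition of the class, $B\in\mathcal{C}_{n-1}\subset\mathcal{C}_1$ and $\ad(B)\in\mathcal{C}_{n-1}$. Since $B\in\mathcal{C}_1$, Theorem \ref{C_1-equation} shows that for $\xi\in D'$ the map $t\mapsto B(t)\xi$ is strongly continuously differentiable with
\[
\frac{d}{dt}B(t)\xi=W(-t)\ad(B)W(t)\xi.
\]
The crucial observation is that the right-hand side is exactly the Heisenberg evolution of the operator $\ad(B)$ applied to $\xi$, i.e.\ it is the object to which the induction hypothesis applies. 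Because $\ad(B)\in\mathcal{C}_{n-1}$, that hypothesis tells us $t\mapsto W(-t)\ad(B)W(t)\xi$ is $(n-1)$-times strongly continuously differentiable and
\[
\frac{d^k}{dt^k}\big(W(-t)\ad(B)W(t)\xi\big)=W(-t)\ad^k(\ad(B))W(t)\xi=W(-t)\ad^{k+1}(B)W(t)\xi
\]
for $k=0,\dots,n-1$.

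Combining these, $\tfrac{d}{dt}B(t)\xi$ is itself $(n-1)$-times strongly continuously differentiable, hence $B(t)\xi$ is $n$-times strongly continuously differentiable, and differentiating the first-order identity a further $k-1$ times yields $\tfrac{d^k}{dt^k}B(t)\xi=W(-t)\ad^k(B)W(t)\xi$ for $k=1,\dots,n$; the $k=0$ case is trivial from the convention $\ad^0(B)=B$. This closes the induction, and the $\mathcal{C}_\infty$ statement follows at once, since $B\in\mathcal{C}_\infty$ lies in every $\mathcal{C}_n$, so the formula is valid for all $k$.

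As the author already signals, there is no serious obstacle here: the entire content is packaged in Theorem \ref{C_1-equation}, and the induction merely unwinds it. The only points demanding a little care are the clean reading of $\tfrac{d}{dt}B(t)\xi$ as the Heisenberg-evolved $\ad(B)$—which is what permits the reduction to the strictly smaller class $\mathcal{C}_{n-1}$—and the routine verification that $\ad^k(B)W(t)\xi$ lies in $D(W(-t))$ at each stage, which holds because $\ad^k(B)\in\mathcal{C}_0$ maps $W(t)\xi\in D$ back into $D\subset D(W(-t))$, exactly as in the proof of Theorem \ref{C_1-equation}.
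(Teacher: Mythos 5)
Your proof is correct and follows exactly the route the paper intends: the paper's own "proof" of Theorem \ref{n-th-diff} is simply the remark that it follows by induction, and your argument—induction on $n$ with Theorem \ref{C_1-equation} as the engine, reading $\tfrac{d}{dt}B(t)\xi$ as the Heisenberg evolution of $\ad(B)\in\mathcal{C}_{n-1}$ and using $\ad^k(\ad(B))=\ad^{k+1}(B)$—is precisely the unwinding the authors had in mind. Nothing further is needed.
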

From Theorem \ref{n-th-diff}, we immediately have
\begin{Thm}
Let $B\in\mathcal{C}_n$ and $\xi\in D'$. Then, there is a $\theta\in (0,1)$ such that 
\begin{align}\label{finite-Taylor}
B(t)\xi = \sum_{k=0}^{n-1} \frac{t^k}{k!} \ad^k(B) \xi + \frac{t^n}{n!}W(-\theta t)\ad^n(B)W(\theta t)\xi. 
\end{align}
\end{Thm}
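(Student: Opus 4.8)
The plan is to read \eqref{finite-Taylor} as nothing more than Taylor's theorem applied to the single $\mathcal{H}$-valued function $f(t):=B(t)\xi=W(-t)BW(t)\xi$, all of whose derivatives have already been computed for us. First I would fix $B\in\mathcal{C}_n$ and $\xi\in D'$ and invoke Theorem \ref{n-th-diff}: the function $f$ is $n$-times strongly continuously differentiable on $\R$ with
\[
f^{(k)}(t)=W(-t)\ad^k(B)W(t)\xi,\qquad k=0,1,\dots,n.
\]
The one algebraic observation needed is that $W(0)=e^{0}\,\overline{U(0,0)}=I$, since the defining series for $U(t,t')$ collapses to the identity when $t=t'$; hence $f^{(k)}(0)=\ad^k(B)\xi$. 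These are precisely the coefficients appearing in the finite sum of \eqref{finite-Taylor}, and the remainder there is exactly $\tfrac{t^n}{n!}f^{(n)}(\theta t)$, so the whole statement is the Lagrange form of Taylor's theorem transcribed through this dictionary.

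The rigorous backbone I would write down first is the integral form of the remainder,
\[
f(t)=\sum_{k=0}^{n-1}\frac{t^k}{k!}f^{(k)}(0)+\frac{1}{(n-1)!}\int_0^t (t-s)^{n-1}f^{(n)}(s)\,ds,
\]
which is valid for any strongly $C^n$ function and follows from the fundamental theorem of calculus iterated $n$ times (integration by parts). The only input this requires is the strong continuity of $s\mapsto f^{(n)}(s)=W(-s)\ad^n(B)W(s)\xi$, and that is guaranteed by Lemma \ref{s-conti-c_0} because $\ad^n(B)\in\mathcal{C}_0$.

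The main obstacle is the passage from this integral remainder to the single-$\theta$ Lagrange form. Writing $R_n=\tfrac{1}{(n-1)!}\int_0^t(t-s)^{n-1}f^{(n)}(s)\,ds$ and using $\tfrac{1}{(n-1)!}\int_0^t(t-s)^{n-1}\,ds=\tfrac{t^n}{n!}$, the vector $\tfrac{n!}{t^n}R_n$ is a weight-normalized average of the continuous curve $s\mapsto f^{(n)}(s)$, and one wants it to coincide with $f^{(n)}(\theta t)$ for a single $\theta\in(0,1)$. For scalar integrands this is immediate from the intermediate value theorem, but for an $\mathcal{H}$-valued curve the average only lies in the closed convex hull of $\{f^{(n)}(s):s\in[0,t]\}$ and need not return to the curve; the mean value theorem genuinely fails for vector-valued functions. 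This is the delicate point I expect to have to argue most carefully. The cleanest way to secure a statement of exactly the stated shape is to apply the scalar Lagrange theorem to $s\mapsto\ip{\eta}{f^{(n)}(s)}$ for each test vector $\eta\in\mathcal{H}$ and then control the dependence $\theta=\theta(\eta)$; absent extra structure forcing uniformity in $\eta$, I would expect the honest conclusion to be the integral remainder above, with the single-$\theta$ form following whenever one only tests \eqref{finite-Taylor} against a fixed $\eta$.
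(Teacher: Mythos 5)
You have not missed anything: the paper itself contains no proof of this theorem beyond the sentence ``From Theorem \ref{n-th-diff}, we immediately have'', i.e.\ it transcribes the scalar Lagrange form of Taylor's theorem verbatim to the $\mathcal{H}$-valued function $f(t)=B(t)\xi$. The obstacle you identify is a genuine defect in the statement, not a gap in your argument: the single-$\theta$ (Lagrange) remainder is false for vector-valued functions, and it already fails inside this framework. Take $\mathcal{H}=\C^2$, $A=0$ (so that $V_L=\mathcal{H}$ for every $L$, $D=D'=\mathcal{H}$, and every bounded operator lies in $\mathcal{C}_\infty$, all weak commutators being the ordinary ones), $H_1=0$, $H_0=\mathrm{diag}(0,1)$, $B=\left(\begin{smallmatrix}0&1\\1&0\end{smallmatrix}\right)$, $\xi=(1,0)^{\mathrm{T}}$. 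Then $W(t)=e^{-itH_0}$ and $B(t)\xi=(0,e^{it})^{\mathrm{T}}$, so for $n=1$ equation \eqref{finite-Taylor} would require $e^{it}-1=it\,e^{i\theta t}$ for some $\theta\in(0,1)$, which is impossible for $t\neq 0$ because $|e^{it}-1|=2|\sin(t/2)|<|t|$. Your integral-remainder version,
\begin{align*}
B(t)\xi=\sum_{k=0}^{n-1}\frac{t^k}{k!}\ad^k(B)\xi+\frac{1}{(n-1)!}\int_0^t(t-s)^{n-1}\,W(-s)\ad^n(B)W(s)\xi\,ds,
\end{align*}
is the correct statement, and your derivation of it (iterated integration by parts, using that $s\mapsto W(-s)\ad^n(B)W(s)\xi$ is strongly continuous, which is part of Theorem \ref{n-th-diff}) is sound.

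Two refinements. First, your proposed fixed-$\eta$ salvage needs one further step: $s\mapsto\ip{\eta}{f^{(n)}(s)}$ is complex-valued, and the mean value theorem fails over $\C$ as well (the same $e^{it}$ example), so one must apply the scalar theorem to $\re\ip{\eta}{f^{(n)}(s)}$ and to the imaginary part separately, obtaining in general two different values of $\theta$. Second --- and this is why the error is harmless for the paper --- the only place \eqref{finite-Taylor} is used is the proof of Theorem \ref{Taylor}, where the remainder is only ever estimated in norm, by a bound that depends on $\theta$ solely through $|\theta t|\le|t|$; since
\begin{align*}
\norm{\frac{1}{(n-1)!}\int_0^t(t-s)^{n-1}W(-s)\ad^n(B)W(s)\xi\,ds}\le\frac{|t|^n}{n!}\,\sup_{0\le\theta\le1}\norm{W(-\theta t)\ad^n(B)W(\theta t)\xi},
\end{align*}
the identical estimate applies to the integral remainder, and the proof of Theorem \ref{Taylor} goes through verbatim once \eqref{finite-Taylor} is restated in this form (or as the corresponding norm inequality).
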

To obtain a Taylor series expansion for $B(t)\xi$ for $\xi\in D'$, we need one more concept.
\begin{Def}\label{c-omega}
We say that an operator $B$ is in $\mathcal{C}_\omega$-class if 
\begin{enumerate}[(i)]
\item $B\in\mathcal{C}_\infty$,
\item The operator norm 
\begin{align}
a_n:=\norm{\ad^n(B)(A+1)^{-1/2}}
\end{align} 
satisfies 
\begin{align}
\lim_{n\to\infty}\frac{t^na_n}{n!}=0, \q t>0.
\end{align}
\item There exists some constant $b>0$ such that
for all $n\ge 0$, $\xi\in V_L$ implies that $\ad^n(B)\xi$ belongs to $V_{L+b}$.
\end{enumerate}
\end{Def}
We then arrive at the following simple result.

\begin{Thm}\label{Taylor}
Suppose that $B\in\mathcal{C}_\omega$. Then, for each $\xi\in D'$, $B(t)\xi$ has the
norm-converging power series expansion formula
\begin{align}\label{taylor}
B(t)\xi = \sum_{n=0}^\infty \frac{t^n}{n!}\ad^n(B) \xi, \q t\in\R.
\end{align}
\end{Thm}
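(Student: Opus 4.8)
The plan is to read off the power series from the finite Taylor formula \eqref{finite-Taylor} by showing its remainder vanishes in norm. Since $B\in\mathcal{C}_\omega\subset\mathcal{C}_\infty\subset\mathcal{C}_n$ for every $n$, formula \eqref{finite-Taylor} applies for all $n$, so for fixed $t\in\R$ and $\xi\in D'$ there is $\theta_n\in(0,1)$ with
\begin{align}
B(t)\xi-\sum_{k=0}^{n-1}\frac{t^k}{k!}\ad^k(B)\xi=\frac{t^n}{n!}\,W(-\theta_n t)\,\ad^n(B)\,W(\theta_n t)\xi=:R_n .
\end{align}
Hence it suffices to prove $\norm{R_n}\to0$. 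Writing $C_n:=\ad^n(B)\in\mathcal{C}_0$ and $s:=\theta_n t$, the vector $W(-s)C_nW(s)\xi$ is the Heisenberg operator $C_n(s)\xi$ of Proposition \ref{w-Hei}, and the whole task reduces to bounding $\norm{C_n(s)\xi}$ uniformly for $|s|\le|t|$, with all of the $n$-dependence concentrated in a single factor that condition (ii) of Definition \ref{c-omega} forces to decay faster than $n!/|t|^n$.

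To obtain such a bound I would expand $C_n(s)\xi$ into the absolutely and locally uniformly convergent double series used in the proof of Lemma \ref{s-conti-c_0},
\begin{align}
W(-s)C_nW(s)\xi=\sum_{k,m=0}^{\infty}\e^{isH_0}U_k(-s,0)\,C_n\,\e^{-isH_0}U_m(s,0)\xi ,
\end{align}
and estimate each summand. Fix $L$ with $\xi\in V_L$, and let $b_0$ be the $\mathcal{C}_0$-shift of $H_1$. Since $\e^{\pm i\tau H_0}$ commutes with every $E_A(\cdot)$ (because $A$ strongly commutes with $H_0$), each factor $H_1(\tau)$ raises the $V$-level by $b_0$, so $\e^{-isH_0}U_m(s,0)\xi\in V_{L+mb_0}$; condition (iii) of Definition \ref{c-omega} then places $C_n\,\e^{-isH_0}U_m(s,0)\xi$ in $V_{L+mb_0+b}$ \emph{with $b$ independent of $n$}. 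Using $\norm{C_n\zeta}\le a_n\norm{(A+1)^{1/2}\zeta}\le a_n(L+mb_0+1)^{1/2}\norm{\zeta}$ for $\zeta\in V_{L+mb_0}$, together with the basic estimate \eqref{basic-estimate} applied to $U_m(s,0)$ and to $U_k(-s,0)$, each summand factorizes as $a_n$ times a quantity $G_{k,m}(s)$ in which $C_n$ no longer appears. The decisive point is that, because the level shift $b$ in condition (iii) is \emph{uniform in $n$}, the arguments of all the $(L+\cdots+1)^{1/2}$ factors, and hence $G_{k,m}(s)$, are independent of $n$; moreover $\sum_{k,m}G_{k,m}(s)$ converges by exactly the estimate that makes the Dyson series of Lemma \ref{s-conti-c_0} converge, so $K(t):=\sup_{|s|\le|t|}\sum_{k,m}G_{k,m}(s)<\infty$.

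Combining these gives $\norm{C_n(s)\xi}\le a_n K(t)$ uniformly for $|s|\le|t|$, whence
\begin{align}
\norm{R_n}\le\frac{|t|^n}{n!}\,a_n\,K(t)=\frac{|t|^n a_n}{n!}\,K(t).
\end{align}
By condition (ii) of Definition \ref{c-omega}, applied with $|t|$ in place of $t$ (legitimate since $a_n\ge0$), the right-hand side tends to $0$; this proves \eqref{taylor}, the case $t<0$ following identically since only $|s|$ and $|t|$ enter. The main obstacle is precisely the uniform-in-$n$ control of this double series: without the uniform level shift of condition (iii) the constants $G_{k,m}$ would depend on $n$ through ever-growing $V$-levels and could swamp the decay supplied by condition (ii), so it is the interplay of (ii) and (iii) that makes the remainder vanish.
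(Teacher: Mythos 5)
Your proposal is correct and follows essentially the same route as the paper's proof: reduce to the remainder in \eqref{finite-Taylor}, expand $W(-\theta t)\ad^n(B)W(\theta t)\xi$ in the Dyson-type double series, use the uniform level shift of condition (iii) together with $a_n=\norm{\ad^n(B)(A+1)^{-1/2}}$ to factor each summand as $a_n$ times an $n$-independent convergent quantity, and conclude by condition (ii). Your explicit tracking of the $n$-dependence of $\theta$ (via the supremum over $|s|\le|t|$) is a minor refinement of a point the paper leaves implicit, but the argument is the same.
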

\begin{proof}
By Theorem \ref{n-th-diff}, all we have to show is that the norm of the reminder term in \eqref{finite-Taylor} vanishes:
\begin{align}
\lim_{n\to\infty} \norm{\frac{t^n}{n!}W(-\theta t)\ad^n(B)W(\theta t)\xi} = 0.
\end{align}
Since $B$ is in $\mathcal{C}_\omega$-class, there is a constant $b'>0$, which is independent of $n$, such that
$\xi\in V_{L}$ implies $\ad^n(B)\xi \in V_{B+b'}$. Choose $b>0$ such that 
$\xi\in V_{L}$ implies $H_1\xi \in V_{B+b}$ and $H_1^*\xi \in V_{B+b}$.
Put 
\begin{align}
C:=\norm{H_1(A+1)^{-1/2}}
\end{align}
which is finite since $B\in\mathcal{C}_\omega$, and let $\xi\in V_L$ for some $L\ge 0$.
Note that $W(-\theta t)\ad^n(B)W(\theta t)\xi$ is expanded in the norm-converging series and is estimated as
\if0
\begin{align}
\sup_{n}\norm{W(-\theta t) \ad^n(B) W(\theta t)\xi} &\le \sup_{n}\sum_{k,l=0}^\infty \norm{U_k(0,\theta t) e^{itH_0}\ad^n(B)e^{-itH_0}U_l(\theta t, 0)\xi} \no\\
&\le \sum_{k,l=0}^\infty \frac{C^{k+l}a_n\,|\theta t|^{k+l}}{k!\,l!} (L+(k+l-1)b+b'+1)^{1/2}\dots \times \no\\
&\q\q \times(L+(l-1)b+b'+1)^{1/2}(L+(l-1)b+1)^{1/2}\dots (L+1)^{1/2}\norm\xi \no\\
&<\infty.
\end{align}
Therefore we conclude for sufficiently large $n\in\N$ that
\fi 
\begin{align}
\norm{\frac{t^n}{n!}W(-\theta t)\ad^n(B)W(\theta t)\xi} 
&\le \frac{|t|^n}{n!}\sum_{k,l=0}^\infty \norm{U_k(0,\theta t) e^{i\theta tH_0}\ad^n(B)e^{-i\theta tH_0}U_l(\theta t, 0)\xi}\no\\
&\le \frac{|t|^na_n}{n!}\sum_{k,l=0}^\infty \frac{C^{k+l}\,|\theta t|^{k+l}}{k!\,l!} (L+(k+l-2)b+b'+1)^{1/2}\dots\times\no\\
&\q\q \times(L+(l-1)b+b'+1)^{1/2}(L+(l-1)b+1)^{1/2}\dots (L+1)^{1/2}\norm\xi\no\\
&\to 0,
\end{align}
as $n$ tends to infinity.
This completes the proof.
\end{proof}

\subsection{Generalized Klein-Gordon equation and free fields}\label{GFF}
In this section, we investigate the solution of Klein-Gordon equation in mathematically
general formulation which is suitable to the present context.
Let $\mathfrak{h}$ be a complex Hilbert space and $T$ be a nonnegative self-adjoint operator
on $\mathfrak{h}$. 
\if0
Define for nonnegative integers $k,n$,
\begin{align}
\mathfrak{d}_k^n := D(T^{-1/2+k+n}),\no\\
\mathfrak{d}_k^\infty:=\bigcap_{n=0}^\infty D(T^{-1/2+k+n}).
\end{align}
It is clear by definition that for $k\ge 1$
\begin{align}
\mathfrak{d}_k^0\supset\mathfrak{d}_k^1\supset\cdots \supset \mathfrak{d}_k^\infty,
\end{align}
for $k=0$,
\begin{align}
\mathfrak{d}_0^1\supset\mathfrak{d}_0^2\supset\cdots \supset \mathfrak{d}_0^\infty,
\end{align}
and $\mathfrak{d}_k^\infty$ is dense in $\mathfrak{h}$.
\fi 
\begin{Def}
A mapping $\phi:\mathfrak{h}\to\mathcal{C}_0$ is said to be $T$-free field if and only if
for $f\in D(T^2)$, $\phi(f)$ belongs to $\mathcal{C}_2$-class, and $\phi(t,f):=W(-t)\phi(f)W(t)$ satisfies the differential equation:
\begin{align}\label{KG}
\frac{d^2}{dt^2}\phi(t,f)\xi-\phi(t,-T^2 f)\xi=0,\quad \xi\in D',
\end{align}
where the differentiation is the strong one.
\end{Def}

We call this equation \textit{generalized Klein-Gordon equation}, since in the case where $\mathfrak{h}=L^2(\Real^3)$ and
$T=\sqrt{-\Delta}$, the equation \eqref{KG} gives the ordinary Klein-Gordon equation for a quantum field $\phi$.
As in the ordinary case, generalized Klein-Gordon equation can be explicitly solved,
in spite of the fact that, in the present case, the time evolution is not generated by a self-adjoint Hamiltonian.

We denote 
\begin{align}
C^\infty(T):=\bigcap_{n=1}^\infty D(T^n).
\end{align}

\begin{Lem}\label{even-odd}
Let $\phi$ be a $T$-free field. Then, for all $f\in C^\infty(T)$, $\phi(f)\in\mathcal{C}_\infty$ and
\begin{align}
\ad^{2n}[\phi(f)]\xi &= \phi((-1)^nT^{2n} f) \xi, \label{even}\\
\ad^{2n+1}[\phi(f)]\xi &= \ad[\phi((-1)^nT^{2n} f)] \xi, \label{odd}
\end{align}
for all $\xi\in D'$ and $n\ge 0$.
\end{Lem}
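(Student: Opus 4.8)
The plan is to run an induction on $n$ driven by a single second–order identity obtained from the generalized Klein–Gordon equation. First I would record that identity. For any $g\in D(T^2)$ the defining property of a $T$-free field gives $\phi(g)\in\mathcal C_2$, so Theorem \ref{n-th-diff} applies with $k=2$ to yield $\frac{d^2}{dt^2}\phi(t,g)\xi=W(-t)\ad^2[\phi(g)]W(t)\xi$ for $\xi\in D'$. Comparing this with the generalized Klein–Gordon equation \eqref{KG}, which reads $\frac{d^2}{dt^2}\phi(t,g)\xi=\phi(t,-T^2g)\xi=W(-t)\phi(-T^2g)W(t)\xi$, and evaluating at $t=0$ where $W(0)=I$, I obtain
\begin{align}
\ad^2[\phi(g)]\xi=\phi(-T^2g)\xi,\qquad \xi\in D'.
\end{align}
Setting $g_n:=(-1)^nT^{2n}f$, I note that $g_n\in C^\infty(T)\subset D(T^2)$, so every $\phi(g_n)\in\mathcal C_2$ and $-T^2g_n=g_{n+1}$.

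The second ingredient is a transport principle for $\ad$ along operator inclusions: if $X,X'\in\mathcal C_0$ with $X\subset X'$ and $X'\in\mathcal C_1$, then $X\in\mathcal C_1$ and $\ad(X)=\ad(X')$. Indeed $X\subset X'$ forces $X\eta=X'\eta$ for $\eta\in D(A^{1/2})\subset D(X)$, while taking adjoints gives $X'^*\subset X^*$; since $X'\in\mathcal C_0$ implies $X'^*\in\mathcal C_0$ is $A^{1/2}$-bounded, one has $D'\subset D(A^{1/2})\subset D(X'^*)$ and hence $X^*\xi=X'^*\xi$ for $\xi\in D'$. Thus $X$ and $X'$ carry identical weak commutators \eqref{w-comm} on $D'$, so any $C\in\mathcal C_0$ witnessing \eqref{w-comm} for $X'$ witnesses it for $X$; by Definition \ref{C_1} together with the well-definedness of $\ad$, this gives $X\in\mathcal C_1$ and $\ad(X)=\overline{C\upharpoonright D(A^{1/2})}=\ad(X')$. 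The same reasoning applied to the displayed identity shows that $\phi(-T^2g)$ agrees on $D'$ with the weak–commutator witness of $\ad[\phi(g)]$, hence is itself such a witness, so that $\ad^2[\phi(g)]=\overline{\phi(-T^2g)\upharpoonright D(A^{1/2})}\subset\phi(-T^2g)$.

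With these in hand the induction is immediate. I take as hypothesis $P(n)$: $\phi(f)\in\mathcal C_{2n}$ and $\ad^{2n}[\phi(f)]\subset\phi(g_n)$; the base case $n=0$ is $\ad^0[\phi(f)]=\phi(f)=\phi(g_0)\in\mathcal C_2$. Assuming $P(n)$, since $\phi(g_n)\in\mathcal C_2\subset\mathcal C_1$ the transport principle gives $\ad^{2n}[\phi(f)]\in\mathcal C_1$ and
\begin{align}
\ad^{2n+1}[\phi(f)]=\ad\big(\ad^{2n}[\phi(f)]\big)=\ad[\phi(g_n)],
\end{align}
which is the odd formula \eqref{odd}. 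Applying $\ad$ once more and using $\ad^2[\phi(g_n)]\subset\phi(-T^2g_n)=\phi(g_{n+1})$ yields $\ad^{2n+2}[\phi(f)]=\ad^2[\phi(g_n)]\subset\phi(g_{n+1})$, which is the even formula \eqref{even} at level $n+1$; moreover $\ad^{2n+1}[\phi(f)]=\ad[\phi(g_n)]\in\mathcal C_1$ upgrades $\ad^{2n}[\phi(f)]$ to $\mathcal C_2$, so $\phi(f)\in\mathcal C_{2n+2}$. This closes the induction, and since $\phi(f)\in\mathcal C_{2n}$ for every $n$ gives $\phi(f)\in\bigcap_k\mathcal C_k=\mathcal C_\infty$, both assertions follow.

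I expect the only real subtlety to be the domain bookkeeping hidden in the definition $\ad(B)=\overline{C\upharpoonright D(A^{1/2})}$: because $\ad$ closes the restriction to $D(A^{1/2})$, the iterated commutators $\ad^{k}[\phi(f)]$ are a priori only restrictions of the operators $\phi(g_n)$ and $\ad[\phi(g_n)]$, and naive approximation by spectral truncations $E_A([0,m])\xi$ need not converge in graph norm. The transport principle is exactly what neutralizes this, since equalities on $D'$ are all that \eqref{even}–\eqref{odd} demand and the adjoint–domain argument shows they propagate through each application of $\ad$. If one prefers genuine operator identities, it suffices to know in addition that $D(A^{1/2})$ is a core for every $\mathcal C_0$-operator, in which case each restriction-closure collapses to the operator itself; but this stronger fact is not needed for the statement as given.
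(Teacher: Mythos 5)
Your proof is correct and takes essentially the same route as the paper's: both extract the identity $\ad^2[\phi(g)]=\phi(-T^2g)$ on $D'$ from the generalized Klein--Gordon equation via Theorem \ref{n-th-diff} evaluated at $t=0$, and then bootstrap this identity inductively to all orders, using at each step that $(-1)^nT^{2n}f\in C^\infty(T)\subset D(T^2)$ keeps the right-hand side in $\mathcal{C}_2$. The only difference is that your ``transport principle'' makes explicit the domain bookkeeping --- why an operator in $\mathcal{C}_0$ that is a restriction of a $\mathcal{C}_1$-operator is itself in $\mathcal{C}_1$ with the same $\ad$, and why $\phi(-T^2g)$ is itself a legitimate weak-commutator witness so that $\ad^2[\phi(g)]\subset\phi(-T^2g)$ --- which the paper compresses into the bare assertion that ``the right hand side belongs to $\mathcal{C}_2$, which implies that $\phi(f)\in\mathcal{C}_4$''; this is a clarification of the same argument, not a different one.
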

\begin{proof}
By Theorem \ref{n-th-diff}, the generalized Klein-Gordon equation \eqref{KG} is equivalent to 
\begin{align}
W(-t)\ad^2[\phi(f)]W(t)= W(-t)\phi(-T^2f)W(t)
\end{align}
on $D'$. At $t=0$ we have in particular
\begin{align}
\ad^2[\phi(f)]=\phi(-T^2f)
\end{align}
on $D'$. Since $f\in C^\infty(T)$, the right hand side belongs to $\mathcal{C}_2$,
which implies that $\phi(f)\in\mathcal{C}_4$ and 
\begin{align}
\ad^3[\phi(f)]&=\ad[\phi(-T^2f)], \\
\ad^4[\phi(f)]&=\ad^2[\phi(-T^2f)]=\phi((-T^2)^2f) \label{2nd-step}
\end{align}
on $D'$. But again the right hand side of \eqref{2nd-step} belongs to $\mathcal{C}_2$,
we obtain $\phi(f)\in\mathcal{C}_6$ and 
\begin{align}
\ad^5[\phi(f)]&=\ad[\phi((-T^2)^2f)], \\
\ad^6[\phi(f)]&=\ad^2[\phi((-T^2)^2f)]=\phi((-T^2)^3f). \label{3rd-step}
\end{align}
By repeating this argument, one finds that $\phi(f)\in\mathcal{C}_\infty$ and
\eqref{even}, \eqref{odd} hold.
\end{proof}

To solve the generalized Klein-Gordon equation, we introduce a well-behaved $T$-free field:
\begin{Def}\label{analytic-free}
A $T$-free field $\phi(\cdot)$ is said to be analytic if 
\begin{enumerate}[(i)]
\item For all $f$ which belongs to the subspace
\begin{align}
\bigcup_{N\in\N} E_T\left(\left[\frac{1}{N},N\right]\right),
\end{align}
$\phi(f)$ is in $\mathcal{C}_\omega$ class.
\item For $f\in D(T)$, $\phi(f)\in\mathcal{C}_1$.
\item $f_n\to f$ implies  
\begin{align}
\phi(f_n)\xi\to\phi(f)\xi ,\q \xi\in D',
\end{align}
and $Tf_n\to Tf$ implies 
 \begin{align}
 \ad[\phi(f_n)]\xi \to \ad[\phi(f)]\xi,\q \xi\in D'.
\end{align}
\end{enumerate}
\end{Def}
\begin{Thm}\label{KG-sol}
Let $\phi$ be an analytic $T$-free field. Then, for all $f\in D(T)$, we find
\begin{align}\label{sol-of-KG}
\phi(t,f)=\phi((\cos tT)f)+\ad\left[\phi\left(\left(\frac{\sin tT}{T}\right)f\right)\right]
\end{align} 
on $D'$.
\end{Thm}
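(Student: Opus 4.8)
The plan is to prove \eqref{sol-of-KG} first on the dense subspace $\bigcup_{N\in\N}E_T([1/N,N])\mathfrak{h}$, on which $\phi(f)\in\mathcal{C}_\omega$ by Definition \ref{analytic-free}(i), and then to pass to general $f\in D(T)$ by approximation. Fix such an $f$ and any $\xi\in D'$. Since $\phi(f)\in\mathcal{C}_\omega$, Theorem \ref{Taylor} supplies the norm-convergent expansion $\phi(t,f)\xi=\sum_{n=0}^\infty\frac{t^n}{n!}\ad^n[\phi(f)]\xi$. As $E_T([1/N,N])\mathfrak{h}\subset C^\infty(T)$, Lemma \ref{even-odd} identifies the summands: the even terms ($n=2k$) are $\frac{t^{2k}}{(2k)!}\phi((-1)^kT^{2k}f)\xi$ and the odd terms ($n=2k+1$) are $\frac{t^{2k+1}}{(2k+1)!}\ad[\phi((-1)^kT^{2k}f)]\xi$. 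I would split the absolutely convergent series into these two subseries and sum each separately.

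For the even subseries, linearity of the field $\phi$ turns the partial sums into $\phi\big(\sum_{k=0}^K\frac{(-1)^kt^{2k}}{(2k)!}T^{2k}f\big)\xi$; on the spectral interval $[1/N,N]$ the scalar series $\sum_k\frac{(-1)^k(t\lambda)^{2k}}{(2k)!}$ converges uniformly to $\cos(t\lambda)$, so by the functional calculus the argument converges in $\mathfrak{h}$ to $(\cos tT)f$, and Definition \ref{analytic-free}(iii) yields the limit $\phi((\cos tT)f)\xi$. For the odd subseries I pull $\ad[\phi(\cdot)]$ outside by linearity to get $\ad[\phi(g_K)]\xi$ with $g_K=\sum_{k=0}^K\frac{(-1)^kt^{2k+1}}{(2k+1)!}T^{2k}f$; here $g_K\to(\sin tT/T)f$ and $Tg_K=\sum_{k=0}^K\frac{(-1)^kt^{2k+1}}{(2k+1)!}T^{2k+1}f\to(\sin tT)f$ in $\mathfrak{h}$, again by uniform convergence on $[1/N,N]$, so the second continuity statement in Definition \ref{analytic-free}(iii) gives $\ad[\phi((\sin tT/T)f)]\xi$. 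Adding the two limits proves \eqref{sol-of-KG} for $f$ in the nice subspace.

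The delicate step is the passage to arbitrary $f\in D(T)$. Choose $f_n\in\bigcup_N E_T([1/N,N])\mathfrak{h}$ with $f_n\to f$ and $Tf_n\to Tf$ (this subspace is a core for $T$). On the right-hand side there is no trouble: $(\cos tT)f_n\to(\cos tT)f$, while $(\sin tT/T)f_n\to(\sin tT/T)f$ together with $T(\sin tT/T)f_n=(\sin tT)f_n\to(\sin tT)f$, so Definition \ref{analytic-free}(iii) delivers convergence of both right-hand terms at $\xi\in D'$. The obstacle lies in the left-hand side $\phi(t,f_n)\xi=W(-t)\phi(f_n)W(t)\xi$: the continuity hypothesis is postulated only on $D'$, whereas by Proposition \ref{sch-existence} the time-evolved vector $W(t)\xi$ belongs to $\bigcap_k D(A^{k/2})\cap D(H_0)$ but \emph{not} to $D=\bigcup_L V_L$, so Definition \ref{analytic-free}(iii) cannot be invoked directly at $W(t)\xi$. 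I expect this to be the main difficulty.

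To resolve it I would upgrade the pointwise continuity on $D'$ to a uniform operator bound. The linear maps $g\mapsto\phi(g)(A+1)^{-1/2}$ on $\mathfrak{h}$ and $g\mapsto\ad[\phi(g)](A+1)^{-1/2}$ on $D(T)$ with its graph norm are everywhere defined and, for each $g$, bounded (because $\mathcal{C}_0$-operators are $A^{1/2}$-bounded). Using that $(A+1)^{1/2}$ maps $D'$ bijectively onto the dense set $D'$, so that Definition \ref{analytic-free}(iii) amounts to strong convergence of these maps on a dense set, a closed-graph argument shows the maps are bounded, i.e. $\|\phi(g)(A+1)^{-1/2}\|\le c\|g\|$ and $\|\ad[\phi(g)](A+1)^{-1/2}\|\le c'(\|g\|+\|Tg\|)$. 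Since $W(t)\xi\in D(A^{1/2})$, these bounds give $\phi(f_n)W(t)\xi\to\phi(f)W(t)\xi$. Finally, as $W(-t)=e^{itH_0}\overline{U(-t,0)}$ is closed and the images $W(-t)\phi(f_n)W(t)\xi=\phi(t,f_n)\xi$ already converge to the right-hand side of \eqref{sol-of-KG} by the first two paragraphs, closedness forces $\phi(f)W(t)\xi\in D(W(-t))$ with $\phi(t,f)\xi$ equal to that limit. This establishes \eqref{sol-of-KG} for all $f\in D(T)$.
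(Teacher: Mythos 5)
Your proposal is correct, and its first two paragraphs are exactly the paper's argument: the Taylor expansion of Theorem \ref{Taylor} on the spectral subspaces $R(E_T([1/N,N]))$, identification of the even and odd terms by Lemma \ref{even-odd}, and resummation to $\phi((\cos tT)f)\xi+\ad[\phi((\sin tT/T)f)]\xi$ using linearity of $\phi$ and Definition \ref{analytic-free}(iii). Where you genuinely add something is the final approximation step. The paper sets $f_N=E_T([1/N,N])f$, observes that $\cos tT$ and $T^{-1}\sin tT$ are bounded, and concludes ``by the limiting argument''; it never addresses the point you isolate, namely that Definition \ref{analytic-free}(iii) controls $g\mapsto\phi(g)\xi$ only at vectors $\xi\in D'$, while the left-hand side $\phi(t,f_N)\xi=W(-t)\phi(f_N)W(t)\xi$ needs continuity at $W(t)\xi$, which by Proposition \ref{sch-existence} lies in $\bigcap_k D(A^{k/2})\cap D(H_0)$ but not in $D$. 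Your repair is sound: since $(A+1)^{1/2}$ maps $D'$ bijectively onto $D'$, the everywhere-defined linear maps $g\mapsto\phi(g)(A+1)^{-1/2}$ and $g\mapsto\ad[\phi(g)](A+1)^{-1/2}$ have closed graphs (strong convergence on the dense set $D'$ identifies any operator-norm limit), so the closed graph theorem yields $\|\phi(g)(A+1)^{-1/2}\|\le c\|g\|_{\mathfrak h}$ and $\|\ad[\phi(g)](A+1)^{-1/2}\|\le c'\bigl(\|g\|+\|Tg\|\bigr)$; these bounds give $\phi(f_N)W(t)\xi\to\phi(f)W(t)\xi$ because $W(t)\xi\in D(A^{1/2})$, and closedness of $W(-t)=e^{itH_0}\overline{U(-t,0)}$ (a unitary composed with a closed operator) finishes the argument. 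So your route is the paper's route plus a rigorous justification of precisely the step the paper leaves implicit. Two caveats are shared with the paper rather than introduced by you: both arguments use, without stating it, that $g\mapsto\phi(g)$ and $g\mapsto\ad[\phi(g)]$ are linear; and both need $E_T(\{0\})f=0$ in order that the spectral truncations converge to $f$ (equivalently, your claim that $\bigcup_N R(E_T([1/N,N]))$ is a core for $T$ requires $\ker T=\{0\}$), which does hold in the intended application $T=\sqrt{-\Delta}$.
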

\begin{proof}
Take $f\in R(E_T([1/N,N]))$ for some $N\in\N$. Then, $\phi(f)\in\mathcal{C}_\omega$. 
Thus we have
\begin{align}\label{exp-1}
\phi(t,f)=\sum_{n=0}^\infty \frac{t^n}{n!}\ad^n (\phi(f))
\end{align}
on $D'$ by Theorem \ref{Taylor}.
By Lemma \ref{even-odd}, one has for all $n\ge 0$,
\begin{align}
\ad^{2n}[\phi(f)]&=\phi((-1)^n T^{2n} f), \\
\ad^{2n+1}[\phi(f)]&=\ad[\phi((-1)^n T^{2n} f)],
\end{align}
on $D'$.
Then the Taylor expansion \eqref{exp-1} becomes
\begin{align}
\phi(t,f)=\sum_{n=0}^{\infty} \frac{t^{2n}}{(2n)!}\phi((-1)^n T^{2n} f) +\sum_{n=0}^{\infty} \frac{t^{2n+1}}{(2n+1)!}\ad(\phi((-1)^n T^{2n+1}T^{-1} f)) 
\end{align}
on $D'$. Note that as $N$ tends to infinity
\begin{align}
\sum_{n=0}^{N} \frac{(-1)^n t^{2n}}{(2n)!}T^{2n} f  \to (\cos tT) f,\q T\left(\sum_{n=0}^{N} \frac{(-1)^nt^{2n+1}}{(2n+1)!}T^{2n+1} T^{-1} f \right) \to T\left(\frac{\sin tT}{T}f\right).
\end{align}
Thus, by the assumption (iii) in Definition \ref{analytic-free}, 
we conclude that 
\begin{align}
\phi(t,f)=\phi((\cos tT)f )+\ad\left[\phi\left(\left(\frac{\sin tT}{T}\right)f\right)\right]
\end{align}
on $D'$ for all $f\in\cup_{N\in\N}R(E_T([1/N,N]))$. Here, the operators
$\cos tT$ and $\sin tT /T$ are defined through functional calculus.

Take arbitrary $f\in D(T)$ and put 
\begin{align}
f_N=E_T\left(\left[\frac{1}{N},N\right]\right)f.
\end{align}
One sees that the operators $\cos tT$ and $T^{-1}\sin t T $ are both bounded and
that $f_N$ converges to $f$ as $N$ tends to infinity. 
Hence, by the limiting argument, we find that $\eqref{sol-of-KG}$ remains valid for all $f\in D(T)$.
\end{proof}

This Theorem \ref{KG-sol} enables us to define positive and negative frequency parts of $\phi$:
\begin{Def}
Let $\phi$ be an analytic $T$-free field. We define for $f\in D(T^{-1})$, on $D'$
\begin{align}
\phi^{+}(t,f)&:=\phi\left(\frac{e^{-itT}}{2}f\right)-\ad\left[\phi\left(\frac{e^{-itT}}{2iT}f\right)\right],\label{positive-fp}\\
\phi^{-}(t,f)&:=\phi\left(\frac{e^{itT}}{2}f\right)+\ad\left[\phi\left(\frac{e^{itT}}{2iT}f\right)\right]
\end{align} 
and call $\phi^{+}$ (resp. $\phi^{-}$) positive (resp. negative) frequency part of $\phi$.
\end{Def}
The operators in the parentheses are defined though the 
functional calculus of self-adjoint operator $T$. 
It is clear by definition that an analytic $T$-free field $\phi$ can be written as a sum of
its positive and negative frequency parts,
\begin{align}
\phi(t,f)=\phi^{+}(t,f)+\phi^-(t,f),\q t\in\R, \q f\in D(T^{-1}),
\end{align}
on the subspace $D'$.

\section{Definition of the Dirac-Maxwell Hamiltonian in the Lorenz Gauge}\label{model}
In this section, we introduce the Dirac-Maxwell Hamiltonian $ H_\rm{DM}(V,N) $ quantized in the Lorenz gauge.
This Hamiltonian describes 
a quantum system consisting of a Dirac particle under a potential $V$ and a gauge field minimally interacting
with each other.
We use the unit system in which the speed of light and $ \hbar $, the Planck constant
devided by $2\pi$, are set to be unity.

\if0 In the present paper, we employ the following notations. 
For a linear operator $ T $ in $ \H $, we denote, in general, its domain or range by $ D(T) $ or $ R(T) $, respectively. We also denote the adjoint of $T$ by $ T^* $, the closure by $ \overline{T} $, if these exist. For a self-adjoint operator $ T $,  $ E_T (\cdot ) $ denotes the one dimensional spectral measure of $ T $.
\fi

\subsection{Dirac particle sector}
Let us denote the mass and the charge of the Dirac particle by $ M>0 $ and $ q \in \Real $, respectively.   The Hilbert space of state vectors for the Dirac particle is taken to be 
\begin{align}
\H _\mathrm{D} := L^2 (\Real_{\xx} ^3; \C ^4),
\end{align}
the square integrable functions on $\Real_{\xx}^3 = \{ \xx = (x^1 , x^2 , x^3) \, | \, x^j \in \R , \, j=1,2,3 \} $
 into $\C^4$. The vector space $\R^3_\xx$ here represents
the position space of the Dirac particle. We sometimes omit the subscript $\xx$ and just write $\R^3$ instead of $\R^3_\xx$
when no confusion may occur.
The target space $\C^4$ realizes a representation of the four dimensional 
Clifford algebra accompanied by the four dimensional Minkowski vector space. The generators 
$\{\gamma^\mu\}_{\mu=0,1,2,3}$ 
satisfy the anti-commutation relations
\begin{align}
\{\gamma^\mu, \gamma^\nu\}:=\gamma^\mu\gamma^\nu+\gamma^\nu\gamma^\mu=-2\eta^{\mu\nu},\q \mu,\nu=0,1,2,3,
\end{align}
where the Minkowski metric tensor $\eta = (\eta_{\mu\nu} )$ is given by
\begin{align}
\eta= \left(\begin{matrix}
-1 & 0& 0& 0 \\
0 & 1 & 0& 0 \\
0& 0& 1 &0 \\
0&0&0& 1
\end{matrix}\right) .
\end{align}
We set $ \eta ^{-1} = (\eta ^{\mu \nu}) $, the inverse matrix of $ \eta $, Then we have $ \eta ^{\mu \nu} = \eta _{\mu \nu} , \; \mu , \nu =0,1,2,3 $.
We assume $\gamma^0$ to be Hermitian and $\gamma^j$'s ($j=1,2,3$) be anti-Hermitian.
We use the notations following Dirac:
\begin{align}
\beta:=\gamma^0 , \q \alpha^j := \gamma^0 \gamma^j,\q j=1,2,3.
\end{align}
Then, $\alpha^j$s and $\beta$ satisfy the anti-commutation relations
\begin{align}
& \{ \alpha ^i , \alpha ^j \} = 2 \delta ^{ij} , \q i,j =1,2,3, \\
& \{ \alpha ^j , \beta \} =0 , \q \beta ^2 =1, \q j=1,2,3,
\end{align}
where $ \delta _{ij} $ is the Kronecker delta.
The momentum operator of the Dirac particle is given by
\begin{align}
\mathbf{p} := (p_1 , p_2 , p_3) := (-i D_1 , -iD_2 , -iD_3)
\end{align}
with $ D_j $ being the generalized partial differential operator on $ L^2 (\Real ^3; \C ^4) $ 
with respect to the variable $ x^j $, the $ j $-th component of $ \bvec{x} = (x^1 , x^2 , x^3) \in \R ^3 $. 
We write in short
\[  { \boldsymbol \alpha } \cdot \bvec{p} := \sum _{j=1} ^3 \alpha ^j p_j .\]
The potential is represented by a $ 4\times 4 $ Hermitian matrix-valued function $ V $ on $ \R ^3_\xx$ with each matrix components 
being Borel measurable. Note that the function $V$ naturally defines a linear operator acting in $\H_\rm{D}$ and
we denote it by the same symbol $V$.
The Hamiltonian of the Dirac particle under the influence of this external potential $V$ is then given by the Dirac operator
\begin{align}
H_\mathrm{D} (V) := { \boldsymbol \alpha } \cdot \bvec{p} +M \beta +V 
\end{align}
acting in $\H_{\mathrm{D}}$,
with the domain $ D(H_\mathrm{D} (V)) := H^1 (\Real ^3 ; \C ^4) \cap D(V) $, where $ H^1 (\Real ^3; \C ^4 ) $
denotes the $ \C ^4 $-valued Sobolev space of order one. 
%We will assume that the potential satisfies the following assumption \ref{ass-V}.
Let $C$ be the conjugation operator in $\mathcal{H}_\rm{D}$ defined by
\[ (Cf)(\xx)=f(\xx)^* ,\q f\in \H_\rm{D},\q x\in \R^3,\]
where $*$ means the usual complex conjugation.
By Pauli's lemma \cite{MR1219537}, there is a $4\times 4$ unitary matrix $U$ satisfying
\begin{align}
U^2&=1, \q UC=CU, \label{Pauli-matrix1}\\
U^{-1}\alpha^j U &= \overline{\alpha^j}, \q j=1,2,3, \q U^{-1}\beta U = -\beta, \label{Pauli-matrix2}
\end{align}
where for a matrix $A$, $\overline{A}$ denotes its complex-conjugated matrix
and $1$ the identity matrix.
We assume that the potential $V$ satisfies the following conditions :
\begin{Ass}\label{ass-V}
\begin{enumerate}[(I)]
\item Each matrix component of $V$ belongs to
\[ L^2_\rm{loc}(\R^3):=\left\{f:\R^3\to \C\, \Bigg|\, \text{Borel measurable and } \int_{|\xx|\le R}|f(\xx)|^2<\infty \,\text{ for all $R>0$.} \right\}.\]
\item $V$ is Charge-Parity (CP) invariant in the following sense:
\begin{align}
U^{-1}V(\xx)U = V (-\xx)^*, \q \rm{a.e.} \,x\in\R^3.
\end{align}
\item $H_\rm{D}(V)$ is essentially self-adjoint.
\end{enumerate}
\end{Ass}
\noindent Hereafter, we denote the closure of $H_\rm{D}(V)$, which is self-adjoint by Assumption
\ref{ass-V}, by the same symbol.
The important remark is that the Coulomb type potential 
\begin{align}\label{Cou}
V(\xx)=-\frac{Zq^2}{|\xx|} 
\end{align}
satisfies Assumption \ref{ass-V} provided that $Zq^2 <1/2$, or more concretely,
$Z\le 68$ if we put $q=e$, the elementary charge \cite{MR1219537}.

Suppose that there are $N$ Dirac particles in the external potential $V$. In this case, the Hilbert space should be
\begin{align}
\H_\rm{D}(N):=\wedge^N \H_\rm{D} := \ot_\rm{as}^N L^2(\Real^3;\C^4)= L^2_{\rm{as}}((\R^3\times\{1,2,3,4\})^{N}),
\end{align}
where $\ot_\rm{as}^N$ denotes the $N$-fold anti-symmetric tensor product.
The $a$-th component of 
\[ \XX = (\xx^1,l^1;\dots;\xx^N,l^N) \in (\R^{3}\times\{1,2,3,4\})^N \]
represents the position and the spinor of the $a$-th Dirac particle.
For notational simplicity, we denote the position-spinor space of one electron by
$\mathcal{X}=\R^3\times\{1,2,3,4\}$ in what follows. We regard $\mathcal{X}$ as
a topological space with the product topology of the ordinary one on $\R^3$ and 
the discrete one on $\{1,2,3,4\}$.
The $N$ particle Hamiltonian is then given by
\begin{align}
H_D(V,N)=\sum_{a=1}^N \big(1\otimes  \dots \otimes \stackrel{a\text{-th}}{H_D(V)}\otimes\dots\otimes 1\big),
\end{align}
which is written as 
\begin{align}
H_D(V,N)=\sum_{a=1}^N\big(\alpha^a\cdot \bvec p^a + \beta^a M^a + V^a \big).
\end{align}
with $\pp^a = (-iD^a_{1},-iD^a_{2},-iD^a_{3}) $ denoting the generalized differential operator with respect to the $a$-th coordinate, $\alpha^a$ and $\beta^a$ denoting the operators $ 1\otimes \dots \otimes \stackrel{a\text{-th}}{\alpha}\otimes\dots\otimes 1 $ and $ 1\otimes \dots \otimes \stackrel{a\text{-th}}{\beta}\otimes\dots\otimes 1 $ acting in $\ot^N \C^4$, respectively, 
$M^a$ being the mass of the $a$-th Dirac particle, 
and $V^a$ the matrix-valued multiplication operator in $\wedge^N \H_\rm{D}$
 by the matrix-valued function
$\mathcal{X}^N \ni (\xx^1,l^1;\dots;\xx^N,l^N)\mapsto V(\xx^a) $, acting as
\begin{align}
(V^a\Psi)(\xx^1,l^1;\dots,\xx^N,l^N)=\sum_{k^a}V_{l^ak^a}(\xx^a)\Psi(\xx^1,l^1;\dots; \xx^a,k^a;\dots ;\xx^N,l^N).
\end{align}

%%%%%%%%%%%%%%%%%%%%%%%%%%%%%%%
\subsection{Free Hamiltonian of gauge field} 
Next, we introduce the free gauge field Hamiltonian in the Lorenz gauge.
We adopt as the one-photon Hilbert space
\begin{align}
\H _\mathrm{ph} := L^2 (\R ^3 _\kk ; \C ^4) .
\end{align}
The above $ \R ^3 _\kk := \{ \mathbf{k} = (k^1,k^2,k^3) \, | \, k^j \in \R , \, j=1,2,3 \} $ represents the momentum space of photons, and the the target space $\C^4$ represents the
degrees of freedom coming from the polarization of photons. As is well known, there are four
polarizations, two of which are physical (or transverse) ones while the other two 
are unphysical, scalar and longitudinal ones. 
We often omit the subscript $ \kk $ in $ \R ^3 _\kk $, and just denote it by $\R^3$, when there is no danger of confusion.
The Hilbert space for the quantized gauge field in the Lorentz gauge is given by 
\begin{align}
\F _\mathrm{ph} :=  \op _{n=0} ^\infty \ot _\rm{s} ^n \H _\rm{ph} = \Big\{ \Psi = \{ \Psi ^{(n)} \} _{n=0} ^\infty \, \Big| \, \Psi ^{(n)} \in \ot _\rm{s} ^n \H _\rm{ph} , \, \, ||\Psi ||^2:=\sum _{n=0} ^\infty \| \Psi ^{(n)} \| ^2 <\infty \Big\} ,
\end{align}
the Boson Fock space over $ \H _\mathrm{ph} $, where $ \ot _\rm{s} ^n $ denotes the $ n $-fold symmetric tensor product with the convention $ \ot _\rm{s} ^0 \H _\rm{ph} := \C $.
Let $ \omega (\kk ) := |\kk | , \, \kk \in \R ^3 $, the energy of a photon with momentum $\kk\in
\Real^3$. The multiplication operator by the $4\times 4$ matrix-valued function 
\begin{align}
\kk\mapsto
\begin{pmatrix}
 \omega (\kk) & 0 & 0& 0\\
 0 &\omega(\kk) &0&0 \\
 0&0&\omega(\kk) &0\\
 0&0&0&\omega(\kk)
\end{pmatrix}
 \end{align}
acting in $\H_\rm{ph}$ is self-adjoint, and we also denote it by the same symbol $ \omega $. This operator 
$\omega$ is a one-photon Hamiltonian
in $\mathcal{H}_{\rm{ph}}$, and 
the free Hamiltonian (kinetic term) of the quantum gauge field is given by its second quantization
\begin{align}
H_\mathrm{ph} &:= \mathrm{d}\Gamma _\bb (\omega ) :=\op_{n=0}^\infty \overline{ \Big( \sum _{j=1} ^n 1 \otimes \dots \otimes  1 \otimes \stackrel{j\text{-th}}{\omega} \otimes 1 \otimes \dots \otimes 1 \Big) \upharpoonright \hot ^n D(\omega ) } ,
\end{align}
where $\hot$ denotes the algebraic tensor product.
The operator $ H_\mathrm{ph} $ is self-adjoint.

\subsection{Canonical commutation relations and indefinite metric}
The main obstacle for the mathematical treatment of the Lorentz covariant gauge is 
that in order to realize the canonical commutation relations in a Lorentz covariant manner,
we have to employ an indefinite metric vector space as a space of state vectors.
In the above Hilbert space $\H _\mathrm{ph}$ for photon fields, the ordinary positive definite inner
product 
\begin{align}
\langle F, G\rangle:=\sum_{\mu=0}^3 \int_{\Real^3} F^\mu(\kk)^*G^\mu(\kk) \,d\xx
\end{align}
for $F(\kk)=(F^0(\kk),\dots,F^3(\kk))$ and $G(\kk)=(G^0(\kk),\dots,G^3(\kk))$ in $\F _\mathrm{ph}$, \textit{plays
no physical role} and just define the Hilbert space topology on the vector space $\F _\mathrm{ph}$. 
 For instance, the probability amplitude that a state $\Psi\in\F _\mathrm{ph}$
is observed in a state $\Phi\in\F _\mathrm{ph}$ is not given by $\langle\Psi,\Phi\rangle$. 
The ``physical" inner product is given by the mapping
\begin{align}
\langle F| G\rangle:= \int_{\Real^3} \eta_{\mu\nu} F^\mu(\kk)^*G^\nu(\kk) \,d\xx,
\end{align} 
where the summation over $\mu,\nu= 0,1,2,3$ is understood (In what follows, we omit the summation
symbol whenever the summation is taken with respect to one upper and one lower Lorentz indicies).
The indefinite metric naturally induced on $\F_\rm{ph}$ from this indefinite metric on $\H_\rm{ph}$ is a physical one in the sense that $\langle\Psi | \Phi\rangle$ gives the probability amplitude that a state $\Psi\in\F _\mathrm{ph}$
is observed in a state $\Phi\in\F _\mathrm{ph}$.
Note that $\langle\Psi | \Psi\rangle$ may become negative and thus $\F_\rm{ph}$ contains a lot of unphysical state vectors with
``negative probability", which should be eliminated.
Note also that the self-adjointness which is to be required for the Hamiltonian has to be that with respect to the indefinite metric $\langle\cdot|\cdot\rangle$,
in stead of the ordinary one with respect to the unphysical inner product $\Expect{\cdot,\cdot}$.  

We make precise the statement that a linear operator 
is self-adjoint with respect to the indefinite metric $\langle\cdot|\cdot\rangle$.
The metric tensor $\eta=(\eta_{\mu\nu})$ is considered to be a linear operator 
on $\H_\rm{ph}$ by
\[ (\eta F)^\nu (\kk)=\eta_{\mu\nu} F^\mu(\kk),\q F=(F^\mu)_{\mu=0}^3=(F^0,\dots,F^3) \]
and we also denote by $\eta$ the second quantization of $\eta$,
\[ \eta :=\mathrm{\Gamma}_\bb(\eta):= \op_{n=0}^\infty  \Big(\eta \otimes \dots \otimes  \eta  \Big), \]
whenever no confusion may occur.
Then $ \eta $ is unitary and satisfies $ \eta ^* = \eta , \, \eta ^2 =I $.
We remark that the physical inner product can be written as
\[ \langle\Psi|\Phi\rangle = \langle\Psi,\eta\Phi\rangle. \]
\begin{Def}
For a densely defined linear operator $T$, we denote 
\begin{align}
T^\dagger = \eta T^* \eta.
\end{align}
and call it physical adjoint or $\eta$-adjoint of $T$, where $T^*$ denotes the ordinary adjoint with respect to the positive 
definite inner product $\langle\cdot,\cdot\rangle$. 
\end{Def}
Clearly, it follows that
\begin{align}
\inprod \Psi  {T\Phi } = \inprod{T^\dagger \Psi} \Phi  , \q \Psi \in D(T^\dagger ) , \q \Phi \in D(T) .
\end{align}
The notions on self-adjointness with respect to $\inprod\cdot\cdot$ are defined as follows :
\begin{Def}\label{eta-sa}
\begin{enumerate}[(i)] 
\item A densely defined linear operator $ T $ is symmetric with respect to
$\inprod\cdot\cdot$ or $ \eta $-symmetric if $ T \subset T^\dagger $.

\item A densely defined linear operator $ T $ is self-adjoint with respect to
$\inprod\cdot\cdot$ or $ \eta $-self-adjoint if $ T^\dagger =T $.

\item A densely defined linear operator $ T $ is essentially self-adjoint with respect to
$\inprod\cdot\cdot$ or essentially $ \eta $-self-adjoint if $ \overline{T} $ is $ \eta $-self-adjoint.

%\item A densely defined linear operator $ T $ is unitary with respect to
%$\inprod\cdot\cdot$ or $ \eta $-unitary if $ T $ is injective and $ T ^\dagger = T^{-1} $.
\end{enumerate}
\end{Def}

\begin{Lem}\label{eta-lem} 
\begin{enumerate}[(i)] 
\item $ T $ is $ \eta $-symmetric if and only if $ \eta T $ is symmetric.

\item $ T $ is $ \eta $-self-adjoint if and only if $ \eta T $ is self-adjoint. 

\item $ T $ is essentially $ \eta $-self-adjoint if and only if $ \eta T $ is essentially self-adjoint.

\item If $ T $ is $ \eta $-symmetric then $ T $ is closable.

\item Let $ T $ be $ \eta $-self-adjoint and $ \eta T $ is essentially self-adjoint on a subspace $ D $, Then $ D $ is a core of $ T $.
\end{enumerate}
\end{Lem}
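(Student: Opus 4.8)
The plan is to reduce every $\eta$-notion to the corresponding ordinary notion by conjugating with $\eta$, exploiting that $\eta$ is an everywhere-defined, bounded, unitary involution with $\eta^*=\eta$ and $\eta^2=I$. Two elementary structural facts drive all five parts. First, since $\eta$ is bounded and everywhere defined, the composition rule for adjoints gives $(\eta T)^*=T^*\eta^*=T^*\eta$, with $D((\eta T)^*)=\{\psi:\eta\psi\in D(T^*)\}$; comparing this with the definition $T^\dagger=\eta T^*\eta$, whose domain is also $\{\psi:\eta\psi\in D(T^*)\}$, I obtain the key identity $\eta T^\dagger=T^*\eta=(\eta T)^*$. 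Second, because $\eta$ is a bounded bijection, left multiplication by $\eta$ is a homeomorphism of graphs and hence commutes with closure, $\overline{\eta T}=\eta\,\overline{T}$, while preserving inclusions and equalities: $\eta T\subset\eta S\iff T\subset S$ and $\eta T=\eta S\iff T=S$.

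For parts (i) and (ii) I would simply left-multiply inclusions and equalities by $\eta$ and use $\eta^2=I$. The relation $T\subset T^\dagger$ is equivalent to $\eta T\subset\eta T^\dagger=(\eta T)^*$, which is precisely the symmetry of $\eta T$, giving (i). Likewise $T=T^\dagger$ is equivalent to $\eta T=\eta T^\dagger=(\eta T)^*$, i.e. self-adjointness of $\eta T$, giving (ii).

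For (iii) I combine the closure identity with (ii): $T$ is essentially $\eta$-self-adjoint means $\overline{T}$ is $\eta$-self-adjoint, which by (ii) applied to $\overline{T}$ means $\eta\overline{T}$ is self-adjoint; since $\eta\overline{T}=\overline{\eta T}$, this is exactly essential self-adjointness of $\eta T$. For (iv), part (i) makes $\eta T$ symmetric, so its adjoint is densely defined and $\eta T$ is therefore closable; closability is preserved under left multiplication by the bounded invertible $\eta$, whence $T=\eta(\eta T)$ is closable (equivalently, $D(T^\dagger)=\eta D(T^*)\supseteq D(T)$ is dense, forcing $D(T^*)$ dense). For (v), I first note that an $\eta$-self-adjoint $T$ is closed, since $T=\eta T^*\eta$ and $T^*$ is closed, and that $\eta T$ is self-adjoint by (ii). The hypothesis that $\eta T$ is essentially self-adjoint on $D$ means $\overline{\eta T\upharpoonright D}$ is self-adjoint and contained in the self-adjoint operator $\eta T$, hence equal to it by maximality of self-adjoint operators, so $D$ is a core for $\eta T$. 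Applying the closure identity once more, $\eta\,\overline{T\upharpoonright D}=\overline{\eta T\upharpoonright D}=\eta T$, and cancelling $\eta$ yields $\overline{T\upharpoonright D}=T$, i.e. $D$ is a core for $T$.

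The main thing to get right is not any single deep step but the careful verification of the two structural facts above — the adjoint composition formula $(\eta T)^*=T^*\eta$ together with its exact domain, and the commutation $\overline{\eta T}=\eta\,\overline{T}$ — since once these are in hand all five assertions are immediate translations. The only mild subtlety lies in (v), where one must invoke that a self-adjoint operator admits no proper symmetric extension in order to upgrade the inclusion $\overline{\eta T\upharpoonright D}\subset\eta T$ to the equality $\overline{\eta T\upharpoonright D}=\eta T$.
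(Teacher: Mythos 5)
Your proposal is correct and complete; note, however, that the paper itself offers no argument for this lemma at all --- its ``proof'' is the single line ``See \cite{MR2533876}'', deferring entirely to the literature --- so there is no in-paper proof to compare against, and your self-contained derivation is exactly what such a reference would contain. The two structural identities you isolate are the right ones and are verified correctly: since $\eta$ is bounded, everywhere defined, self-adjoint and involutive, the composition rule $(\eta T)^{*}=T^{*}\eta^{*}=T^{*}\eta$ holds with coincidence of domains, giving $\eta T^{\dagger}=(\eta T)^{*}$; and left multiplication by $\eta$ is a graph homeomorphism, giving $\overline{\eta T}=\eta\,\overline{T}$ together with preservation of inclusions. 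Parts (i)--(iii) then follow by conjugation as you say, part (iv) follows because a densely defined operator with densely defined adjoint is closable (your parenthetical domain argument $D(T^{\dagger})=\eta D(T^{*})$ is also fine), and in part (v) you correctly supply the one nontrivial ingredient, maximality of self-adjoint operators, to upgrade $\overline{\eta T\upharpoonright D}\subset\eta T$ to equality before cancelling $\eta$. The only implicit point worth making explicit is in the ``if'' direction of (iii): essential self-adjointness of $\eta T$ guarantees closability of $\eta T$, hence of $T$, so that $\overline{T}$ exists and the identity $\eta\overline{T}=\overline{\eta T}$ can be invoked; your argument accommodates this, but stating it would make the proof airtight.
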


\begin{proof} See \cite{MR2533876}.
\end{proof}
Note that the free Hamiltonian $ H_\rm{ph} $ is self-adjoint and $ \eta $-self-adjoint.

The creation operator $ c^\dagger(F) $ with $ F \in \H _\mathrm{ph} $ is a densely defined closed linear operator on $ \F _\mathrm{ph} $ given by
\begin{align}
(c^\dagger(F)\Psi ) ^{(0)} = 0 , \q (c^\dagger (F) \Psi ) ^{(n)} = \sqrt{n} S_n (F \otimes \Psi ^{(n-1)}) , \q n \ge 1 , \q \Psi \in D(c^\dagger(F)), 
\end{align}
where $ S_n $ denotes the symmetrization operator on $ \otimes ^n \H _\rm{ph} $, i.e. $ S_n (\otimes ^n \H _\rm{ph}) = \otimes _\rm{s} ^n \H _\rm{ph} $. We note that $ c^\dagger (F) $ linear in $ F $.
For $f\in L^2(\R)$, we introduce the components of $c^\dagger(\cdot)$ with \textit{lower} indices by
\begin{align}
& c^\dagger_{0} (f) := c^\dagger(f,0,0,0) , \q c^\dagger_{1} (f) := c^\dagger (0,f,0,0) , \\
& c^\dagger_{2} (f) := c^\dagger(0,0,f,0) , \q c^\dagger_{3} (f) := c^\dagger(0,0,0,f).
\end{align}
Then, for $F=(F^\mu)\in\mathcal{H}_\mathrm{ph}$, we have
\begin{align}\label{cov-1}
c^\dagger (F) = c^\dagger_\mu(F^\mu).
\end{align}

Next, we introduce the annihilation operators. 
The annihilation operator is labeled by the element of $\mathcal{H}_\mathrm{ph}^*$, the dual space of
$\mathcal{H}_\mathrm{ph}$. In the ordinary case, the dual space of 
a Hilbert space is naturally identified with the original Hilbert space
via the anti-linear mapping
\begin{align}
F\mapsto \Expect{F,\cdot}.
\end{align}
But, in the present case, the identification mapping we should employ is
\begin{align}
F\mapsto \inprod{F}{\cdot}.
\end{align}
That is, $\phi\in\mathcal{H}_\mathrm{ph}^*$ is identified with the vector 
 $\overline{F}\in\mathcal{H}_\mathrm{ph}$ by the relation
\begin{align}
	\phi(G)=\inprod{\overline{F}}{G}.
\end{align}
Note that there is another vector $F\in\mathcal{H}_\mathrm{ph}$ which is also identified with $\phi\in\mathcal{H}_\mathrm{ph}^*$
via
\begin{align}
	\phi(G)=\Expect{F,G},
\end{align}	
and they are related by
\begin{align}\label{eta-cord}
 \eta F = \overline{F}. 
\end{align}
We employ the notations in which $\mu$-th component of $\overline{F}$ in the direct sum decomposition
\begin{align}
\mathcal{H}_\mathrm{ph} = \op_{\mu=0}^3 L^2(\R^3)
\end{align}
is written as $\overline{F}=(\overline{F}_\mu)$ by \textit{lower} indices and \eqref{eta-cord} can be read as
\begin{align}
\eta_{\mu\nu}F^\nu = \overline{F}_\mu.
\end{align}
Now, the annihilation operator which annihilates one photon with the state $F\in\mathcal{H}_\mathrm{ph}$
is given by $c(\overline{F})=c(\eta F)$, where $c(\cdot)$ is the usual annihilation operator 
of the photon Fock space $\mathcal{F}_\mathrm{ph}$ (for $F\in\mathcal{H}_\mathrm{ph}$, 
$c(F)=c^\dagger(F)^*$). One sees that
\begin{align}
(c^\dagger (F))^\dagger = c (\overline{F}).
\end{align}
It is natural to introduce the components of $c(\cdot)$ with \textit{upper} indices by
\begin{align}
& c^{0} (f) := c(f,0,0,0) , \q c^{1} (f) := c (0,f,0,0) , \\
& c^{2} (f) := c(0,0,f,0) , \q c^{3} (f) := c(0,0,0,f),
\end{align}
for $f\in L^2(\R^3)$. Then, it follows that 
\begin{align}\label{cov-2}
c(\overline{F})=c^\mu(\overline{F}_\mu)=c^\mu(\eta_{\mu\nu}F^\nu).
\end{align}
Form \eqref{cov-1} and \eqref{cov-2}, we have the natural identity
\begin{align}
c^\dagger_\mu(f)=(\eta_{\mu\nu} c^\nu(f))^\dagger,
\end{align}
for all $f\in L^2(\R^3)$.
 
 As is well known, the creation and annihilation operators leave the finite particle subspace 
\begin{align}
\F _{\bb , 0} (\H _\rm{ph}) := \Big\{ \{ \Psi ^{(n)} \} _{n=0} ^\infty \in \F _\rm{ph} \, \Big| \,  \Psi ^{(n)} =0 \, \text{for all $ n\ge n_0 $ with some $n_0$} \Big\} 
\end{align}
invariant and satisfy the canonical commutation relations, which leads in the present case
\begin{align}
[c(\overline{F}) , c^\dagger(F')  ] = \inprod{F}{F'}, \q [c(\overline{F}) , c(\overline{F'} ) ] = [c^\dagger(F)  , c^\dagger(F' ) ]=0,
\end{align}
on $ \F _{\bb, 0} (\H _\rm{ph}) $. 

Let us introduce photon polarization vectors $\{e_\lambda \}_{\lambda=0,1,2,3}$.
Photon polarization vectors are 
$ \R_\kk ^4 $-valued measurable functions defined on $\Real^3$, $ e_\lambda (\cdot)$ ($\lambda=0,1,2,3$),  
such that, for all $ \kk \in M_0  := \R ^3 \backslash \{ (0,0, k^3) \, | \, k^3 \in \R \} $ , 
\begin{align}\label{pol-1}
e _{\lambda} (\kk ) \cdot e _{\sigma} (\kk ) = \eta_{\lambda \sigma} , \q e _{\lambda} (\kk )\cdot k=0 , \q \lambda =1,2,
\end{align}
where the above $\cdot$ means the Minkowski inner product defined by
\[ e_\lambda (\kk)\cdot e _{\sigma} (\kk ) = \eta_{\mu\nu}e^\mu_{\;\;{\lambda}}(\kk ) e^\nu_{\;\;\sigma} (\kk),\q e _{\lambda} (\kk ) \cdot k = \eta_{\mu\nu}e^\mu_{\;\;\lambda} (\kk )  k^\nu,
\q k = (|\kk|, \kk)\in\Real^4,\]
with $e^\mu_{\;\;\lambda}$ being the $\mu$-th component of $e_\lambda$ with respect to the standard basis in $\R^4$.
 Note that such vector valued functions can be chosen so that they are continuous
 on $M_0$, for instance, we may choose
\begin{align}\label{pol-2}
e_0(\kk)=(1,\bvec{0}),\q e_1(\kk)=(0,\ee_1(\kk)),\q e_2(\kk)=(0,\ee_2(\kk)),\q e_3(\kk)=(0,\kk/|\kk|) 
 \end{align}
by using $\{\ee_r (\kk)\}_{r=1,2}$ satisfying the relations
\[\mathbf{e}_r(\kk ) \cdot \mathbf{e}_{r'} (\kk ) = \delta _{rr'} , \q \mathbf{e}_r (\kk ) \cdot \kk =0, \q r,r' =1,2. \]
In this paper, we assume the photon polarization vectors are chosen in this way.

For each $ f\in L^2 (\R _\kk ^3) $ and $ \mu =0,1,2,3 $, we define
\begin{align}
 a^\mu (f) &:= c\big( f e^\mu_{\;\;0} ,\,  f e^\mu_{\;\;1}  ,\, f e^\mu_{\;\;2}  ,\, f e^\mu_{\;\;3}  \big) \no\\
	&= c^\nu(e^\mu_{\;\;\nu} f),
\end{align}
and 
\begin{align}
a_\mu(f):=\eta_{\mu\nu} a^\nu(f).
\end{align}
Then the adjoints are given by
\begin{align}
& a^{\dagger\mu}(f):=(a^\mu(f))^\dagger=\eta^{\sigma\lambda}c^\dagger_\sigma (e^\mu_{\;\:\lambda}f), \\
& a^\dagger_\mu(f):=(a_\mu(f))^\dagger=\eta^{\sigma\lambda}c^\dagger_\sigma (e_{\mu\lambda}f), 
\end{align}
where we have defined 
\begin{align}
e_{\mu\lambda}:=\eta_{\mu\nu}e^{\nu}_{\;\;\lambda}.
\end{align}
The operators $ a_\mu (f) $ and $ a_\mu ^\dagger (f) $ are closed, and satisfy the
Lorentz covariant canonical commutation relations:
\begin{align*}
[a_\mu (f) , a_\nu ^\dagger (g) ] & = \eta_{\mu\nu } \left\langle f,g \right\rangle _{L^2 (\R ^3)} , \\
[a_\mu (f) , a_\nu (g)] & = [a_\mu ^\dagger (f) , a_\nu ^\dagger (g) ] =0 ,
\end{align*}
on $ \F _{\bb , 0} (\H _\rm{ph}) $.

\subsection{Gauge field operator}
For all $ f \in L^2 (\R ^3 _\xx ) $ satisfying $ \hat{f}/\sqrt{\omega} \in L^2 (\R ^3 _\kk ) $, we set 
\begin{align}
A_\mu (f) := 
\frac{1}{\sqrt{2}}\left(a_\mu \left( \frac{\hat{f^*}}{\sqrt{\omega }} \right) + a_\mu ^\dagger \left( \frac{\hat{f}}{\sqrt{\omega}} \right)\right) ,
\end{align}
where $ \hat{f} $ denotes the Fourier transform of $ f $, and $ f^* $ denotes the complex conjugate of $ f $. The functional $ \mathscr{S} (\R ^3 _\xx ) \ni f \mapsto A_\mu (f) $ gives an operator-valued distribution (Cf. \cite{MR2412280} Definition 2.5)
 acting on $ (\F _\rm{ph} , \F _{\bb , 0} (\H _\rm{ph})) $ and it is called the quantized gauge field at time $ t=0 $.
Now, fix $ \chi _\rm{ph} \in L^2 (\R ^3_\xx ) $ which is real and satisfies $\chi_\rm{ph}(\bvec x)=\chi_\rm{ph}(-\xx)$ and $ \hat{\chi _\rm{ph}} / \sqrt{\omega} \in L^2 (\R ^3 _\kk ) $. We set
\begin{align}
& A_\mu (\xx ) := A_\mu (\chi _\rm{ph} ^\xx ) , \\
& \chi _\rm{ph} ^\xx (\yy ) := \chi _\rm{ph} (\yy - \xx ) , \q \yy \in \R ^3 .
\end{align}
$ A_\mu (\xx ) $ is called the point-like quantized gauge field with a momentum cutoff $ \hat{\chi _\mathrm{ph}} $.
Note that $\hat{\chi _\mathrm{ph}}$ is real-valued since $\chi_\rm{ph}(\bvec x)=\chi_\rm{ph}(-\xx)$.
As will be seen later, for real-valued $ f $, the closures of $ A_\mu (f) , \, \mu =0,1,2,3, $ are $ \eta $-self-adjoint but not even normal in the positive definite inner product $\Expect{\cdot,\cdot}$. 
%%%%%%%%%%%
\subsection{Interaction between electrons and gauge field and total Hamiltonian}
Next, we introduce the interaction Hamiltonian and total Hamiltonian 
in the Hilbert space of state vectors 
for the coupled system, which is
taken to be 
\begin{align}
\F _\mathrm{DM}(N) := \H_\rm{D}(N)\ot \F _\mathrm{ph} .
\end{align}
We remark that this Hilbert space can be naturally identified with
\begin{align}
\F _\mathrm{DM} (N)= L^2_\rm{as} (\mathcal{X}^{N} ; \F _\mathrm{ph}) = A_N\int ^{\oplus} _{\mathcal{X}^{N}} d\XX \, \F _\mathrm{ph},
\end{align}
the Hilbert space of $ \F _\mathrm{ph}$-valued functions on $ \mathcal{X}^{N}=\mathcal{X}\times\dots\times\mathcal{X}$
which are square integrable
with respect to the Borel measure (the product measure of the Lebesgue measure on $\R^3$ and the counting measure on $\{1,2,3,4\}$)
 and which are anti-symmetric in the arguments,
that is, the exchange of the $a$-th electron with the $b$-th electron
\begin{align}
(\xx^1,l^1;\dots;\xx^a,l^a;\dots;\xx^b,l^b;\dots;\xx^N,l^N)\mapsto (\xx^1,l^1;\dots;\xx^b,l^b;\dots;\xx^a,l^a\dots;\xx^N,l^N)
\end{align}
gives a minus sign.
The last expression is the constant fibre direct integral with the base space $ (\mathcal{X}^{N} , d\XX ) $ and fibre $\F _\mathrm{ph}$. We freely use this identification. 
 
The mappings $ \XX \mapsto \chi_\rm{ph}^{\xx^a} $ ($a=1,2,\dots,N$) from $ \mathcal{X}^{N} $ to $ \H _\mathrm{ph} $ is strongly continuous, 
and thus we can define a decomposable self-adjoint operator $A_\mu$ by
\begin{align}
A_\mu^a := \int ^\oplus _{\mathcal{X}^{N}} d\XX \, A_\mu(\xx^a ) ,\q \mu=0,1,2,3,\q a=1,2,\dots,N,
\end{align}
acting in $ \int ^\oplus _{\mathcal{X}^{N}} d\XX \, \F _\mathrm{ph}$.
Now we are ready to define the minimal interaction
Hamiltonian $H_1$, between the Dirac particle and the quantized gauge field with the UV cutoff $ \chi_\rm{ph} $. It is given by
\begin{align}
H_1 := q \sum_{a=1}^N \alpha^a \cdot A^a=q\sum_{a=1}^N \alpha^{a\mu} A_\mu^a, \q \alpha^{a\mu}=(1,\bvec{\alpha})=(1,\alpha^{a1},\alpha^{a2},\alpha^{a3}).
\end{align}
The total Hamiltonian of the coupled system is then given by
\begin{align}
H_\mathrm{DM} (V,N) &:= H_0  + H_1, \\
 H_0& :=\overline{H_\mathrm{D} (V,N) + H_\mathrm{ph}}.
\end{align} 
This is the $N$-particle \textit{Dirac-Maxwell Hamiltonian} in the Lorenz gauge. 
We remark that there seems to be no term for the Coulomb interaction between Dirac particles in the Hamiltonian
at a first glance. 
In the Lorenz gauge, this contribution is included in the photon kinetic term $\rm{d\Gamma}_\rm{b}(\omega)$. In fact,
as will be seen later, the Coulomb interaction of Dirac particles emerges from the photon kinetic term via
gauge transformation.

\section{Self-adjointness, Current Conservation and Equations of Motion}\label{time-evolution}
In the present section, we consider the self-adjointness of the Dirac-Maxwell Hamiltonian and the 
time evolution generated by it. The abstract theory developed in Section \ref{abstract}
is applied with $A=N_\bb:=1\otimes\rm{d}\Gamma_\bb(1)$ (See Lemma \ref{C_0-lemma}).
The subspace $V_L$ then becomes
\begin{align}
V_L= R(E_{N_\bb}([0,L]))=\H_\rm{D}(N)\ot \left( \op_{n=0}^L \ot_{s}^n \H_\rm{ph} \right), \q L\in\N,
\end{align}
and $D=\cup_{L}V_L$ is
\begin{align}
D= \hot_{n=0}^\infty \left(\H_\rm{D}(N)\ot \left(\ot_{s}^n \H_\rm{ph} \right)\right).
\end{align}
In what follows, by the word commutator of the linear operator $A$ and $B$, we \textit{always} mean the weak commuatator 
which is denoted by $[A,B]$.
For instance, if we say $[A,B]=C$ on a subspace $\mathcal{D}$, it means that
$\mathcal{D}\subset D(A)\cap D(A^*)\cap D(B)\cap D(B^*)\cap D(C)$ and
\begin{align}
\ip{A^*\psi}{B\phi} - \ip{B^*\psi}{A\phi}= \ip{\psi}{C\phi}, \q \psi,\phi\in \mathcal{D}.
\end{align} 
\subsection{self-adjointness}
\begin{Thm}\label{ess-s.a.}
Under Assumption \ref{ass-V}, $H_{\rm{DM}}(V,N)$ is essentially $\eta$-self-adjoint.
\end{Thm}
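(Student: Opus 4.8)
The plan is to strip off the indefinite metric by the standard device of Lemma \ref{eta-lem}. By part (iii) of that lemma, $H_{\rm DM}(V,N)$ is essentially $\eta$-self-adjoint if and only if $\eta H_{\rm DM}(V,N)$ is essentially self-adjoint in the ordinary sense, so I would work with the latter operator. Writing $\eta H_{\rm DM}(V,N)=\eta H_0+\eta H_1$, I would first fit this decomposition into the abstract framework of Section \ref{abstract} with $A=N_\bb$. Since $H_\rm{ph}$ is $\eta$-self-adjoint, $\eta$ commutes with $H_\rm{ph}$, and $\eta$ acts trivially on the Dirac factor, so $\eta$ strongly commutes with $H_0$; being bounded, self-adjoint and satisfying $\eta^2=I$, it makes $\eta H_0$ self-adjoint on $D(H_0)$. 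Moreover $\eta=\mathrm{\Gamma}_\bb(\eta)$ preserves each $n$-photon sector, hence commutes with $N_\bb$, so $N_\bb$ strongly commutes with $\eta H_0$. Thus $\eta H_0$ and $N_\bb$ take the roles of the free Hamiltonian and of $A$.

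Next I would check the two properties required of the perturbation. First, $\eta H_1\in\mathcal{C}_0$: granting $H_1\in\mathcal{C}_0$ (Lemma \ref{C_0-lemma}), the facts that $\eta$ is unitary and preserves each $V_L$ (it commutes with $N_\bb$) show at once that $\eta H_1$ again satisfies (i)--(iii) of the $\mathcal{C}_0$ definition. Second, $\eta H_1$ is symmetric: by Lemma \ref{eta-lem}(i) this is equivalent to $H_1$ being $\eta$-symmetric, which follows from the $\eta$-self-adjointness of the field components $A_\mu^a$ together with the Hermiticity of the matrices $\alpha^{a\mu}$ and the fact that they act on the disjoint Dirac factor and hence commute with $\eta$. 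With these two facts, Proposition \ref{s.a.} applies verbatim to $\eta H_{\rm DM}(V,N)$ and yields the dichotomy: either it has no self-adjoint extension, or it is essentially self-adjoint.

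It then remains to exclude the first alternative, and this is where Assumption \ref{ass-V}(II) enters. I would construct a conjugation (an antiunitary involution) $J$ on $\F_\rm{DM}(N)$ commuting with $\eta H_{\rm DM}(V,N)$; by von Neumann's theorem the deficiency indices of the symmetric operator are then equal, a self-adjoint extension exists, and the first alternative is ruled out. On the Dirac factor I would take the charge-conjugation--parity map $(J_\rm{D}f)(\xx)=U\,\overline{f(-\xx)}$ built from Pauli's matrix $U$ of \eqref{Pauli-matrix1}--\eqref{Pauli-matrix2}; the relations $U^2=1$, $UC=CU$ make $J_\rm{D}$ an involution, the relations $U^{-1}\alpha^jU=\overline{\alpha^j}$, $U^{-1}\beta U=-\beta$ make it commute with the kinetic and mass terms, and precisely the CP-invariance $U^{-1}V(\xx)U=V(-\xx)^*$ makes it commute with the potential, hence with $H_\rm{D}(V,N)$. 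On the photon Fock space I would take the second quantization of the antiunitary combining complex conjugation with the momentum reflection $\kk\mapsto-\kk$; this commutes with $\omega$ and with the matrix $\eta$, hence with $H_\rm{ph}$ and with $\eta$, while the evenness and reality of $\hat{\chi_\rm{ph}}$ arrange that $J:=J_\rm{D}\otimes J_\rm{ph}$ commutes with the interaction $\eta H_1$ as well.

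The main obstacle is this last commutation check for $\eta H_1=q\eta\sum_a\alpha^{a\mu}A_\mu^a$: one must track how $J_\rm{D}$ transforms each matrix $\alpha^{a\mu}$ (including the scalar component $\alpha^{a0}=1$) against the simultaneous action of $J_\rm{ph}$ and $\eta$ on the corresponding field component $A_\mu^a$, and verify that the polarization and parity signs introduced by $\kk\mapsto-\kk$ cancel consistently across all four Lorentz components inside the indefinite metric. Once this bookkeeping is complete, $J$ commutes with both $\eta H_0$ and $\eta H_1$, hence with $\eta H_{\rm DM}(V,N)$, the deficiency indices coincide, case (a) of Proposition \ref{s.a.} is excluded, and therefore $\eta H_{\rm DM}(V,N)$---equivalently $H_{\rm DM}(V,N)$ in the $\eta$ sense---is essentially self-adjoint.
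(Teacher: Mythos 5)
Your overall strategy coincides with the paper's: reduce to $\eta H_\rm{DM}(V,N)$ via Lemma \ref{eta-lem}, check that $\eta H_0$ is self-adjoint and $\eta H_1$ is a symmetric $\mathcal{C}_0$-class perturbation so that Proposition \ref{s.a.} gives the dichotomy, and then exclude alternative (a) by exhibiting a conjugation that intertwines the operator with itself, so that von Neumann's theorem produces a self-adjoint extension. That reduction is fine. However, two concrete steps of your construction fail. The first is the mass term: from $U^{-1}\beta U=-\beta$ your Dirac-side conjugation $(J_\rm{D}f)(\xx)=U\overline{f(-\xx)}$ \emph{anti}commutes with $M\beta$ rather than commuting with it (the kinetic term commutes because the parity flip and the conjugation of $-i$ compensate, but for $\beta$ no such compensation occurs), so no conjugation built from Pauli's $U$ can commute with the full $\eta H_\rm{DM}(V,N)$, contrary to your assertion that the relations ``make it commute with the kinetic and mass terms.'' The paper's proof is structured around exactly this obstruction: it proves the intertwining relation only for $H_\rm{DM}(V,N)-\sum_a\beta^aM^a$, applies von Neumann's theorem to $\eta\bigl(H_\rm{DM}(V,N)-\sum_a\beta^aM^a\bigr)$, and then adds back the bounded self-adjoint operator $\eta\sum_a\beta^aM^a$ to obtain a self-adjoint extension of $\eta H_\rm{DM}(V,N)$. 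Without this subtraction-and-restoration step your argument stops.

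The second gap is your choice of the photon-side antiunitary. Complex conjugation combined with the momentum reflection $\kk\mapsto-\kk$ is the same map as plain complex conjugation in \emph{position} space; since $\hat{\chi_\rm{ph}}$ is real and even, it fixes every cutoff $\hat{\chi^{\yy}_\rm{ph}}/\sqrt{\omega}$ and hence does not move the localization point of the field, while your Dirac factor reflects $\xx^a\mapsto-\xx^a$. The identity one needs, $J_\rm{ph}A_\mu(-\xx^a)J_\rm{ph}^{-1}=A_\mu(\xx^a)$, therefore fails; worse, the reflection $\kk\mapsto-\kk$ acts on the polarization vectors, with $e_3(-\kk)=-e_3(\kk)$ and $\ee_1,\ee_2$ having no definite parity, so the signs you defer as ``bookkeeping'' genuinely do not cancel. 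The correct choice --- the paper's --- is plain componentwise conjugation in momentum space on each photon factor, $J_\rm{cp}=\bigl(\ot_{a=1}^N PUC\bigr)\ot\bigl(\op_{n=0}^\infty\ot_n C\bigr)$; in position space this is $f\mapsto\overline{f(-\cdot)}$, which leaves the real polarization vectors untouched and sends $\hat{\chi^{-\xx^a}_\rm{ph}}\mapsto\hat{\chi^{\xx^a}_\rm{ph}}$, exactly compensating the Dirac-side parity. With that choice (and the mass-term subtraction above) the commutation with $\eta H_1$ and with $\eta H_0-\eta\sum_a\beta^aM^a$ goes through and the proof closes as in the paper.
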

To prove Theorem \ref{ess-s.a.}, we prepare some Lemmas.

\begin{Lem}\label{field-lemma}
For all real-valued $f\in L^2(\R)$ with $\hat{f}/\sqrt{\omega}\in L^2(\R)$, the estimate
\begin{align}\label{A_mu-bound}
\norm{A_\mu(f) \Psi} \le 4\sqrt{2} \norm{\frac{\hat{f}}{\sqrt{\omega}}}\norm{(\rm{d}\Gamma_\bb(1)+1)^{1/2}\Psi}
\end{align}
holds for $\Psi\in D(\rm{d}\Gamma(1)^{1/2})$.
\end{Lem}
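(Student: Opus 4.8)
The plan is to reduce the bound to the elementary number-operator estimates for the Fock-space annihilation and creation operators, and then to absorb the polarization data into an explicit constant. Throughout I write $N_\bb:=\rm{d}\Gamma_\bb(1)$ for the photon number operator on $\F_\rm{ph}$, and I will use the standard bounds
\begin{align}
\norm{c(F)\Psi}\le\norm{F}\,\norm{N_\bb^{1/2}\Psi},\quad \norm{c^\dagger(F)\Psi}\le\norm{F}\,\norm{(N_\bb+1)^{1/2}\Psi},\quad F\in\H_\rm{ph},\quad \Psi\in D(N_\bb^{1/2}),
\end{align}
in which the relevant norm of a one-component argument $h\in L^2(\R^3)$ entering $c^\sigma(h)$ or $c^\dagger_\sigma(h)$ is simply $\norm{h}_{L^2(\R^3)}$. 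These estimates I would first verify on the finite-particle subspace $\F_{\bb,0}(\H_\rm{ph})$, where the computation is purely algebraic, and then extend to the whole form domain $D(N_\bb^{1/2})$ by closedness together with the approximation $\Psi\mapsto E_{N_\bb}([0,L])\Psi$.

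Next I would expand $A_\mu(f)$ into components. Setting $g:=\hat f/\sqrt\omega$ and $g_*:=\hat{f^*}/\sqrt\omega$, the definitions of $a_\mu$ and $a_\mu^\dagger$ give
\begin{align}
A_\mu(f)=\frac{1}{\sqrt2}\big(a_\mu(g_*)+a_\mu^\dagger(g)\big),\quad a_\mu(g_*)=\eta_{\mu\mu}\sum_{\sigma=0}^3 c^\sigma(e^\mu_{\;\;\sigma}\,g_*),\quad a_\mu^\dagger(g)=\sum_{\sigma=0}^3\eta^{\sigma\sigma}c^\dagger_\sigma(e_{\mu\sigma}\,g),
\end{align}
so each of $a_\mu(g_*)$ and $a_\mu^\dagger(g)$ is a sum of four one-component operators. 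Applying the triangle inequality over the four values of $\sigma$, the bounds above, and the pointwise estimate $|e^\mu_{\;\;\sigma}(\kk)|\le1$ (immediate from the explicit choice \eqref{pol-2}, since each $e_\lambda(\kk)$ is either a coordinate unit vector or has Euclidean unit spatial part, and $|e_{\mu\sigma}|=|e^\mu_{\;\;\sigma}|$), I would obtain
\begin{align}
\norm{a_\mu(g_*)\Psi}\le 4\,\norm{g_*}\,\norm{N_\bb^{1/2}\Psi},\qquad \norm{a_\mu^\dagger(g)\Psi}\le 4\,\norm{g}\,\norm{(N_\bb+1)^{1/2}\Psi}.
\end{align}
The factor $4$ is exactly the number of polarization components summed over; no cancellation is exploited.

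Finally I would dispose of the complex conjugation. Because $f$ is real-valued, $\hat{f^*}(\kk)=\overline{\hat f(-\kk)}$, and since $\omega(\kk)=\omega(-\kk)$ the substitution $\kk\mapsto-\kk$ shows $\norm{g_*}=\norm{\hat{f^*}/\sqrt\omega}=\norm{\hat f/\sqrt\omega}=\norm{g}$. Combining the two estimates above with $\norm{N_\bb^{1/2}\Psi}\le\norm{(N_\bb+1)^{1/2}\Psi}$ then yields
\begin{align}
\norm{A_\mu(f)\Psi}\le\frac{1}{\sqrt2}(4+4)\,\norm{\tfrac{\hat f}{\sqrt\omega}}\,\norm{(N_\bb+1)^{1/2}\Psi}=4\sqrt2\,\norm{\tfrac{\hat f}{\sqrt\omega}}\,\norm{(N_\bb+1)^{1/2}\Psi},
\end{align}
which is the asserted inequality. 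The computation is essentially routine, so there is no deep obstacle. The two points I would take care over are the extension of the elementary creation/annihilation bounds from the finite-particle subspace to the full form domain $D(N_\bb^{1/2})$, so that the inequality is genuinely asserted on the stated domain, and the verification that the chosen polarization vectors have all components bounded by $1$. Of these, keeping the domain bookkeeping honest via the closedness of $A_\mu(f)$ and a limiting argument is the step most likely to demand explicit care.
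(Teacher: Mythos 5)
Your proposal is correct and follows essentially the same route as the paper's proof: expand $a_\mu$ and $a_\mu^\dagger$ into their four polarization components, apply the standard number-operator bounds for $c^\sigma$ and $c^\dagger_\sigma$ together with $|e^\mu_{\;\;\sigma}(\kk)|\le 1$ to get the factor $4$, and combine via the $1/\sqrt{2}$ prefactor to obtain $4\sqrt{2}$. The only cosmetic difference is that you argue $\norm{\hat{f^*}/\sqrt{\omega}}=\norm{\hat{f}/\sqrt{\omega}}$ through the reflection $\kk\mapsto-\kk$, whereas for real-valued $f$ one simply has $f^*=f$, hence $\hat{f^*}=\hat{f}$; this changes nothing in the estimate.
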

\begin{proof}
Let $\Psi\in D(\rm{d}\Gamma(1)^{1/2})$. By the well-know inequalities, we have for $\mu=0,1,2,3$,
\begin{align}
\norm{c^\mu(f)\Psi}&\le \norm{f}\norm{\rm{d}\Gamma_\bb(1)^{1/2}\Psi}, \\
\norm{c_\mu^\dagger(f) \Psi} &\le \norm{f}\norm{(\rm{d}\Gamma_\bb(1)+1)^{1/2}\Psi}.
\end{align}
Form these inequalities, we find for $\mu=0,1,2,3$,
\begin{align}\label{annihilation-bound}
\norm{a_\mu\left(\frac{\hat{f}}{\sqrt{\omega}}\right)\Psi}&\le \sum_{\lambda=0}^3\norm{e_{\mu\lambda}\frac{\hat{f}}{\sqrt{\omega}}}\norm{\rm{d}\Gamma_\bb(1)^{1/2}\Psi}\no\\
&\le 4 \norm{\frac{\hat{f}}{\sqrt{\omega}}} \norm{\rm{d}\Gamma_\bb(1)^{1/2}\Psi},
\end{align}
since $|e^\mu_{\;\lambda}(\bvec k)| \le 1$ for almost every $\bvec k\in\R^3$. Similarly, we find
\begin{align}\label{creation-bound}
\norm{a^\dagger_\mu\left(\frac{\hat{f}}{\sqrt{\omega}}\right)\Psi}\le 4 \norm{\frac{\hat{f}}{\sqrt{\omega}}}\norm{(\rm{d}\Gamma_\bb(1)+1)^{1/2}\Psi}
\end{align}
for $\mu=0,1,2,3$. Inequalities \eqref{annihilation-bound} and \eqref{creation-bound} imply \eqref{A_mu-bound}.
\end{proof}

\begin{Lem}\label{C_0-lemma}
The interaction term of the Dirac-Maxwell Hamiltonian $H_1$ satisfies
\begin{enumerate}[(i)]
\item $H_1$ is densely defined, closed.
\item $H_1$ and $H_1^*$ are $N_\bb^{1/2}$-bounded.
\item $\Psi\in V_L:=R(E_{N_\bb}([0,L]))$ implies both $H_1\Psi$ and $H_1^*\Psi$ belong to the subspace $V_{L+1}$.
\end{enumerate}
Therefore, $H_1$ belongs to $\mathcal{C}_0$ class.
\end{Lem}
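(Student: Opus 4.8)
The plan is to verify the three defining conditions of the $\mathcal{C}_0$-class directly, specializing $A=N_\bb$, so that $A^{1/2}=N_\bb^{1/2}$ and $V_L=R(E_{N_\bb}([0,L]))$. The decisive input is Lemma~\ref{field-lemma}, which already controls a single smeared field operator by $(N_\bb+1)^{1/2}$; the whole argument amounts to propagating that single estimate through the finite sum defining $H_1$ and through the fibre direct integral, and to handling the adjoint by exploiting the $\eta$-symmetry of $H_1$. I expect condition~(ii) to be the genuinely computational step, condition~(iii) to be a bookkeeping of photon numbers, and the closedness half of condition~(i) to be the only structurally delicate point.

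For condition~(ii), I would first record that $A_\mu^a=\int^\oplus_{\mathcal{X}^N}d\XX\,A_\mu(\chi_\mathrm{ph}^{\xx^a})$ acts fibrewise, and that translation only multiplies the cutoff by a phase, $|\hat{\chi_\mathrm{ph}^{\xx^a}}|=|\hat{\chi_\mathrm{ph}}|$, so that $\norm{\hat{\chi_\mathrm{ph}^{\xx^a}}/\sqrt{\omega}}=\norm{\hat{\chi_\mathrm{ph}}/\sqrt{\omega}}$ is independent of the fibre variable. Hence Lemma~\ref{field-lemma} gives the uniform fibre bound $\norm{A_\mu^a\Psi}\le 4\sqrt2\,\norm{\hat{\chi_\mathrm{ph}}/\sqrt{\omega}}\,\norm{(N_\bb+1)^{1/2}\Psi}$ for $\Psi\in D(N_\bb^{1/2})$. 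Since each spinor matrix is an involution (or the identity), $\norm{\alpha^{a\mu}}\le 1$, and the triangle inequality over the finite index set $a=1,\dots,N$, $\mu=0,\dots,3$ yields a bound of the form $\norm{H_1\Psi}\le 16\sqrt2\,|q|\,N\,\norm{\hat{\chi_\mathrm{ph}}/\sqrt{\omega}}\,\norm{(N_\bb+1)^{1/2}\Psi}$, which is the asserted $N_\bb^{1/2}$-bound since $\norm{(N_\bb+1)^{1/2}\Psi}\le\norm{N_\bb^{1/2}\Psi}+\norm{\Psi}$. For the adjoint I would avoid recomputing: $H_1$ is $\eta$-symmetric, because each $\overline{A_\mu(\chi_\mathrm{ph}^{\xx^a})}$ is $\eta$-self-adjoint (the cutoff is real and even) while each $\alpha^{a\mu}$ is Hermitian and commutes with $\eta$, whence $\eta H_1\eta\subseteq H_1^*$. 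As $\eta$ is unitary and commutes with $N_\bb$, the operator $\eta H_1\eta$ satisfies the same bound as $H_1$ on $D(N_\bb^{1/2})$, and therefore so does $H_1^*$ on that domain.

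Condition~(iii) is then immediate with $b=1$: writing $V_L=\H_\mathrm{D}(N)\ot(\op_{n=0}^L\ot_\mathrm{s}^n\H_\mathrm{ph})$, the operator $H_1$ is a sum of one creation operator, which raises the photon number by one, and one annihilation operator, which lowers it by one, dressed by bounded matrices acting only on the Dirac indices; hence $\Psi\in V_L$ gives $H_1\Psi\in V_{L+1}$, and the same holds for $H_1^*$ because $H_1^*\Psi=\eta H_1\eta\Psi$ and $\eta$ preserves the photon number. Finally, for condition~(i), dense definedness is clear, since the subspace $D=\cup_L V_L$ of finite photon number lies in $D(N_\bb^{1/2})$ and hence in the domain of $H_1$. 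The remaining point, closedness, is the main obstacle: a finite sum of field operators is in general \emph{not} closed on the intersection of the individual domains, so one cannot merely add the closed pieces $\alpha^{a\mu}\overline{A_\mu^a}$. I would resolve this by invoking Lemma~\ref{eta-lem}(iv), namely that $\eta$-symmetry forces $H_1$ to be closable, and I would take $H_1$ to be its closure. Conditions~(ii) and~(iii) survive this passage unchanged, because the relative bound and the photon-number shift have been established on $D(N_\bb^{1/2})$ (respectively on the vectors $V_L$), which remain in the domain of the closure and are mapped exactly as before. This yields $H_1\in\mathcal{C}_0$.
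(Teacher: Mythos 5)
Your treatment of (ii) and (iii) is essentially the paper's own argument: the paper likewise applies Lemma~\ref{field-lemma} fibrewise in the direct integral, integrates over $\mathcal{X}^N$ to obtain $\norm{A^a_\mu\Psi}\le C\norm{(N_\bb+1)^{1/2}\Psi}$, sums over the finitely many indices $a,\mu$ using boundedness of the matrices $\alpha^{a\mu}$, and handles the adjoint exactly as you do, via $H_1^*\Psi=\eta H_1\eta\Psi$ together with the fact that $\eta$ is unitary and commutes with $N_\bb$; for (iii) the paper simply observes that $H_1$ and $H_1^*$ create at most one photon, which is your bookkeeping argument. The one place where you genuinely depart from the printed proof is (i): the paper dismisses it as obvious, whereas you correctly flag that closedness is not automatic, since a sum of creation and annihilation operators on the intersection of their natural domains is in general not closed (already for a single mode, $a+a^\dagger$ restricted to $D(N^{1/2})$ has as its closure the self-adjoint position operator, whose domain is strictly larger). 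Your repair — $H_1$ is $\eta$-symmetric, hence closable by Lemma~\ref{eta-lem}\,(iv), so one may take $H_1$ to be its closure, with (ii) and (iii) surviving because $D(N_\bb^{1/2})$ and the subspaces $V_L$ lie inside the original domain and $\overline{H_1}{}^{\,*}=H_1^*$ — is sound, and it is harmless for the rest of the paper, since the abstract theory only ever uses $H=H_0+H_1$ on $D'$ and the closure does not change the operator there. In short, your proposal is correct, coincides with the paper's proof on the substantive estimates, and is more careful than the paper on the closedness point.
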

\begin{proof}
The statement (i) is obvious.

We prove (ii). 
Let $\Psi\in D(N_\bb^{1/2})$. Then, for all $\bvec X\in\mathcal{X}^{N}$,
$\Psi(\bvec X)\in D(\rm{d}\Gamma_\bb(1)^{1/2})$ and
\begin{align}
\int_{\mathcal{X}^{N}}\norm{\rm{d}\Gamma_\bb(1)^{1/2}\Psi(\bvec X)}^2 d\bvec X < \infty.
\end{align}
Thus, one finds $\Psi(\bvec X)\in D(A_\mu(\bvec x^a))$ ($a=1,2,\dots ,N$) and by Lemma \ref{field-lemma} 
\begin{align}
\int_{\mathcal{X}^{N}}\norm{A_\mu(\bvec x^a)\Psi(\bvec X)}^2 d\bvec X&\le 32\norm{\frac{\hat{\chi}_\rm{ph}}{\sqrt{\omega}}}\int_{\mathcal{X}^{N}}
\norm{(\rm{d}\Gamma_\bb(1)+1)^{1/2}\Psi(\bvec X)}^2 d\bvec X \no\\
&= 32\norm{\frac{\hat{\chi}_\rm{ph}}{\sqrt{\omega}}}
\norm{(N_\bb+1)^{1/2}\Psi}^2  \no\\
&<\infty.
\end{align}
This means $\Psi\in D (A^a_\mu)$ ($\mu=0,1,2,3$, $a=1,2,\dots, N$) and 
\begin{align}
\norm{A^a_\mu\Psi} \le 32  \norm{\frac{\hat{\chi}_\rm{ph}}{\sqrt{\omega}}} \norm{(N_\bb+1)^{1/2}\Psi},
\end{align}
and therefore $\Psi\in D(H_1)$ with
\begin{align}
\norm{H_1\Psi}&\le 32|q|\sum_{a=1}^N\norm{\alpha^a{}^\mu} \norm{\frac{\hat{\chi}_\rm{ph}}{\sqrt{\omega}}}\norm{(N_\bb+1)^{1/2}\Psi}\no\\
&\le 32N|q|\norm{\frac{\hat{\chi}_\rm{ph}}{\sqrt{\omega}}}\norm{(N_\bb+1)^{1/2}\Psi}.
\end{align}
Since $H_1^*\Psi=\eta H_1 \eta \Psi$ and $\eta$ commutes with $N_\bb$, $H_1^*\Psi$ satisfies the same estimate.
This proves (ii).

We prove (iii). Let $\Psi\in V_L$ for some $L\ge 0$. It is clear by definition that both $H_1$ and $H_1^*$ create at most one photon.
Thus, $H_1\Psi$ and $H_1^*\Psi$ belong to $V_{L+1}$. 
\end{proof}
\begin{proof}[Proof of Theorem \ref{ess-s.a.}]
One easily sees that $H_0$ is $\eta$-self-adjoint and $H_{\rm{DM}}(V,M)$ is $\eta$-symmetric. 
By utilizing Proposition \ref{s.a.}, Lemma \ref{eta-lem}, and Lemma \ref{C_0-lemma},
we find that it suffices to prove $\eta H_\rm{DM}(V,M)$ has at least one self-adjoint extension.
Following Arai's proof given in Ref. \cite{MR1765584}, we will prove this by applying von Neumann's Theorem (\cite{MR0493420}, Theorem
X.3), which asserts that if there is a conjugation $J$
with $J\eta H_\rm{DM}=\eta H_{\rm{DM}}J$ then $\eta H_\rm{DM}$ has a self-adjoint extension.   

Let $J_\rm{cp}$ be an anti-linear operator in $\mathcal{F}_\rm{DM}(N)$ defined by
\begin{align}
(J_\rm{cp}\Psi)(\bvec x^1,l^1;\dots;\xx^N,l^N):= \sum_{m^1,\dots,m^N} U_{l^1 m^1}\dots U_{l^N m^N}\Psi(-\xx^1,m^1;\dots;-\xx^N,m^N)^*,\label{CP}
\end{align}
where $U$ is a unitary $4\times 4$ matrix satisfying
\eqref{Pauli-matrix1} and \eqref{Pauli-matrix2}. Note that $J_\rm{cp}$ can also be written as
\begin{align}
J_\rm{cp}=\left(\ot_{a=1}^N PUC\right)\otimes \left(\op_{n=0}^\infty \ot_n C \right),
\end{align}
where $P$ is the parity transformation on $L^2(\R^3;\C^4)$ defined by
\[ (Pf)(\bvec x) = f(-\xx) ,\q f\in L^2(\R^3;\C^4) \]
 and $C$ is the complex conjugation.
 Cleary, $J_\rm{cp}$ is anti-linear, norm preserving and satisfies $J_\rm{cp}^2=1$. Thus $J_\rm{cp}$ is a conjugation on $\mathcal{F}_\rm{DM}(N)$. 
 
 We claim 
 \begin{align}\label{claim-1}
 J_\rm{cp} \eta (H_\rm{DM}(V,N)-\sum_{a}\beta^aM^a) \subset \eta( H_\rm{DM}(V,N)-\sum_a\beta^aM^a) J_\rm{cp}.
 \end{align}
 Note that \eqref{claim-1} implies the assertion. In fact, since $J_\rm{cp}$ is a conjugation, \eqref{claim-1} implies
 the operator equality (Ref. \cite{MR1765584}, Lemma 3.1):
 \begin{align}
  J_\rm{cp} \eta (H_\rm{DM}(V,N)-\sum_{a}\beta^aM^a) = \eta( H_\rm{DM}(V,N)-\sum_a\beta^aM^a) J_\rm{cp}.
 \end{align}
 Thus, by von Neumann's Theorem, $\eta(H_\rm{DM}(V,N)-\sum_{a}\beta^aM^a)$ has a self-adjoint extension, say $\eta \tilde{H}$.
 But, since $ \sum_{a}\beta^aM^a $ is a bounded operator, it is obvious that $\eta (\tilde{H}+\sum_{a}\beta^aM^a)$ is a
 self-adjoint extension of $\eta H_\rm{DM}(V,N)$. 
 Note also that $\eta$ and $J_\rm{cp}$ are commuting with each other, and thus \eqref{claim-1} is equivalent to
  \begin{align}\label{claim-2}
 J_\rm{cp} (H_\rm{DM}(V,N)-\sum_{a}\beta^aM^a) \subset ( H_\rm{DM}(V,N)-\sum_a\beta^aM^a) J_\rm{cp}.
 \end{align}
We prove \eqref{claim-2}.
 Take $\Psi\in D(H_0)$. Since the algebraic tensor product
 \begin{align}
 D_0:=\hot_{\rm{as}}^N C^\infty_0(\R^3;\C^4)\hot D(H_\rm{ph})
 \end{align}
 is a core of $H_0$ (Ref. \cite{AraiFock}, Corollary 2-27), there is $\Phi_n\in D_0$
 such that
 \begin{align}
 \Phi_n\to \Psi,\q (H_0-\sum_{a}\beta^aM^a)\Phi_n \to (H_0-\sum_{a}\beta^aM^a)\Psi
 \end{align}
 as $n$ tends to infinity. It is straightforward to verify that $J_\rm{cp}\Phi_n\in D(H_0)$ and 
 \begin{align}\label{app-1}
 J_\rm{cp} (H_0-\sum_{a}\beta^aM^a)\Phi_n  = ( H_0-\sum_a\beta^aM^a) J_\rm{cp}\Phi_n,
 \end{align} 
 if the potential $V$ satisfies Assumption \ref{ass-V} (II). Since $J_\rm{cp}$ is continuous,
 By taking the limit $n\to \infty$ on the both sides of \eqref{app-1}, we find
 $J_\rm{cp}\Psi$ belongs to the domain of $H_0$ and
 \begin{align}
J_\rm{cp} (H_0-\sum_a \beta^a M^a)\Psi =  (H_0-\sum_a \beta^a M^a)J_\rm{cp}\Psi,
 \end{align}
 That is,
\begin{align}\label{H_0-conj}
J_\rm{cp} (H_0-\sum_a \beta^a M^a) \subset  (H_0-\sum_a \beta^a M^a)J_\rm{cp}.
 \end{align}
 Next, take $\Psi \in D(H_1)$. Then, by \eqref{CP}, $J_\rm{cp}\Psi\in D(H_1)$ and
 \begin{align}
 &(J_\rm{cp} H_1 \Psi)(\xx^1,l^1;\dots;\xx^N,l^N)\no\\
 &=\sum_{m^1,\dots,m^N}U_{l^1m^1}\dots U_{l^Nm^N}\sum_{a=1}^N(q\alpha^{a\mu} A_\mu^a\Psi)^*(-\xx^1,m^1;\dots;-\xx^N,m^N) \no\\
 &=q\sum_{a=1}^N\sum_{m^1,\dots,m^N}U_{l^1m^1}\dots (U\alpha^*)_{l^am^a}U_{l^Nm^N}\left\{A_\mu(-\xx^a)\Psi(-\xx^1,m^1;\dots;-\xx^N,m^N)\right\}^* \no\\
 &=q\sum_{a=1}^N\sum_{m^1,\dots,m^N}U_{l^1m^1}\dots (\alpha U)_{l^am^a}U_{l^Nm^N} A_\mu([\chi_\rm{ph}^{-\xx^a}]^*)\Psi(-\xx^1,m^1;\dots;-\xx^N,m^N)^* \no\\
  &=q\sum_{a=1}^N\sum_{m^1,\dots,m^N}U_{l^1m^1}\dots (\alpha U)_{l^am^a}U_{l^Nm^N} A_\mu(\chi_\rm{ph}^{\xx^a})\Psi(-\xx^1,m^1;\dots;-\xx^N,m^N)^* \no\\
    &=q\sum_{a=1}^N\alpha^{a\mu} A_\mu(\xx^a)(J_\rm{cp}\Psi)(\xx^1,m^1;\dots;\xx^N,m^N) \no\\
&=( H_1J_\rm{cp}\Psi)(\xx^1,l^1;\dots;\xx^N,l^N).
 \end{align}
 Hence,
 \begin{align}\label{H_1-conj}
J_\rm{cp} (H_1-\sum_a \beta^a M^a) \subset  (H_1-\sum_a \beta^a M^a)J_\rm{cp}.
 \end{align}
Combining \eqref{H_0-conj} and \eqref{H_1-conj}, we obtain \eqref{claim-2}, which completes the proof.
\end{proof}

\subsection{Time evolution operator  and current conservation}
By Lemma \ref{C_0-lemma}, we can construct the time evolution operator $\{W(t)\}_{t\in\R}$ from the
non-self-adjoint Hamiltonian $H_\rm{DM}(V,N)$. Here we consider the time evolution of the electro-magnetic current density operator 
 $j^\mu(t,f)$ which is an operator-valued distribution defined by
\begin{align}\label{current}
j^\mu(f) := q\sum_{a=1}^N \alpha^{a\mu}\int_{\mathcal{X}^N}^\oplus f(\xx^a) \,d\bvec X, \q f\in\S(\R^3).
\end{align}
As is easily checked, \eqref{current} certainly defines an operator valued tempered distribution (Ref. \cite{MR2412280} Definition 2.5) .
\if0
 \begin{Def}
 A $d$-dimensional operator-valued tempered distribution $\phi$, acting on a dense subspace $\mathcal{D}\subset\F_\rm{DM}$, 
 is a mapping which maps a rapidly decreasing function $f\in\S(\R^d)$
 onto a densely defined linear operator $\phi(f)$ on $\F_\rm{DM}$, satisfying
 \begin{enumerate}[(i)]
 \item For all $f\in\S(\R^d)$, $\mathcal{D}\subset D(\phi(f))\cap D(\phi(f)^*)$.
 \item For all $\Psi\in\mathcal{D}$, the mapping $f\mapsto \phi(f)\Psi$ is complex linear.
 \item For all $\Psi,\Phi\in\mathcal{D}$, the $\C$-valued function $f\mapsto \inprod{\Psi}{\phi(f)\Phi}$ is 
 continuous.
 \end{enumerate}
 \end{Def}
As is easily checked, the current density $j^\mu(\cdot)$ defines an operator-valued tempered distribution.
\begin{Prop}
The mapping 
\begin{align}
\S(\R^3)\ni f\mapsto j^\mu(f)\in \mathcal{B}(\F_\rm{DM})
\end{align}
is complex linear and continuous as a mapping from $\S(\R^3)$ into $\mathcal{B}(\F_\rm{DM})$,
the set of all bounded operators in $\F_\rm{DM}$, with the operator norm topology.
In particular, this defines an operator valued tempered distribution which acts on $\F_\rm{DM}$. 
\end{Prop}
\fi
\if0
 \begin{proof}
 The linearity is obvious. The other assertions follows immediately from the inequality
 \begin{align}
 \norm{j^\mu(f)} \le |q|N \norm{f}_\infty.
 \end{align}
 \end{proof}
 \fi
 We remark that $\alpha^{a\mu}$ is the $\mu$-component of the four-component-velocity of the $a$-th electron (See Appendix), and thus, the informal definition of the current density operator
 should be 
 \begin{align}
 j^\mu (\xx)=q\sum_{a=1}^N \alpha^{a\mu}\delta(\xx -\xx^a).
 \end{align}
 If the informal point-like current density $j^{\mu}(\xx)$ is smeared with the smooth function $f$, it gives $j^\mu(f)$ in \eqref{current}.
 
% Let $\mathfrak{h}$ be a vector space consisting of functions on $\R^3$ and let $\phi$ be a mapping which maps
% $f\in\mathfrak{h}$ into a (not necessarily bounded) linear operator $\phi(f)$ acting on $\F_\rm{DM}(V,N)$.
Let $\phi$ be an operator-valued tempered distribution on $\R^3$. We define its derivative in $x^k$ ($k=1,2,3$) by
\begin{align}
\partial_k\phi(f):= \pdfrac{\phi(f)}{x^k} := - \phi\left(\pdfrac{f}{x^k}\right) , \q k=1,2,3.
\end{align} 
If $\phi$ also has a strongly differentiable time-dependence, namely, $\phi$ maps 
$(t,f)\in \R\times \mathscr{S}(\R^3)$ into $\phi(t,f)$, a linear operator in $\F_\rm{DM}$, and   
the mapping $\R\ni t\mapsto \phi(t,f)\Psi\in\F_\rm{DM}(V,N)$, is strongly differentiable in $t\in\R$ for fixed
$f\in\mathscr{S}(\R^3)$ and for $\Psi\in D(\phi(f))$, the strong derivative in $t$ is denoted by
\begin{align}
\partial_0\phi(t,f)\Psi:=\pdfrac{\phi(t,f)}{t}\Psi:=\frac{d}{dt}\phi(t,f)\Psi.
\end{align}   
 
 The most important property of the current density is that it fulfills the 
 conservation equation:
 \begin{Thm}\label{current-conservation-thm}
 The current density $j^\mu(f)$ is in $\mathcal{C}_0$-class for all $f\in \mathscr{S}(\R^3)$ and thus
 the time-dependent current density $j^\mu(t,f):=W(-t)j^\mu(f)W(t)$ exists. The zero-th component 
 $j^0(f)$ is in $\mathcal{C}_1$ and thus satisfies the strong Heisenberg equation of motion. 
 Furthermore, the current density 
 satisfies the conservation equation
 \begin{align}\label{current-conservation}
 \partial_\mu j^{\mu}(t,f):=\pdfrac{j^0(t,f)}{t} + \sum_{k=1,2,3}\pdfrac{j^k(t,f)}{x^k} =0,
 \end{align}
 on $D'$.
 \end{Thm}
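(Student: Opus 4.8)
The plan is to reduce every assertion to the abstract machinery of Section \ref{abstract} and to a single weak-commutator computation. First I would check that $j^\mu(f)\in\mathcal{C}_0$. Since $f\in\mathscr{S}(\R^3)$ is bounded, $j^\mu(f)=q\sum_a\alpha^{a\mu}f(\xx^a)$ is a bounded, everywhere-defined operator (with $\norm{j^\mu(f)}\le|q|N\norm{f}_\infty$, using $\norm{\alpha^{ak}}=1$), hence densely defined and closed, so condition (i) of the $\mathcal{C}_0$ definition holds and (ii) is automatic because boundedness makes both $j^\mu(f)$ and $j^\mu(f)^*$ trivially $A^{1/2}$-bounded. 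For (iii) I would observe that $j^\mu(f)$ acts only on the Dirac sector, as a decomposable multiplication operator in $\int^\oplus_{\mathcal{X}^N}d\XX\,\F_\rm{ph}$, and therefore strongly commutes with $N_\bb=1\otimes\rm{d}\Gamma_\bb(1)$; in particular it leaves every $V_L$ invariant, so (iii) holds with any $b>0$. The existence of $j^\mu(t,f)=W(-t)j^\mu(f)W(t)$ on $D$ is then immediate from Proposition \ref{w-Hei}(i).

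Next I would show $j^0(f)\in\mathcal{C}_1$ by exhibiting the operator $C$ of Definition \ref{C_1}, i.e. by computing the weak commutator $[iH,j^0(f)]$ on $D'$. Because $\alpha^{a0}=1$, the operator $j^0(f)$ is scalar multiplication by $q\sum_a f(\xx^a)$. The interaction $H_1=q\sum_b\alpha^{b\mu}A_\mu^b$ commutes with it: the matrices $\alpha^{b\mu}$ act on spinor indices and commute with the scalar $f(\xx^a)$, while each $A_\mu^b=\int^\oplus A_\mu(\xx^b)\,d\XX$ is decomposable and hence commutes with the scalar multiplication $f(\xx^a)$ on the base. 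Thus $[iH,j^0(f)]=[iH_0,j^0(f)]$ weakly. Within $H_0$ the photon Hamiltonian $H_\rm{ph}$, the mass terms $\beta^aM^a$ and the potential $V^a$ all commute with $j^0(f)$, so only the Dirac kinetic term contributes, through $[p_k^a,f(\xx^a)]=-i(\partial_k f)(\xx^a)$. Carrying this out I expect
\begin{align}
[iH,j^0(f)]=q\sum_{a=1}^N\sum_{k=1}^3\alpha^{ak}(\partial_k f)(\xx^a)=\sum_{k=1}^3 j^k(\partial_k f)=:C
\end{align}
on $D'$. Since $\partial_k f\in\mathscr{S}(\R^3)$, the operator $C$ is again bounded and $N_\bb$-preserving, hence $C\in\mathcal{C}_0$ by the same reasoning as above; therefore $j^0(f)\in\mathcal{C}_1$ with $\ad[j^0(f)]=\sum_k j^k(\partial_k f)$.

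Finally I would invoke Theorem \ref{C_1-equation}, which gives, for $\xi\in D'$,
\begin{align}
\partial_0 j^0(t,f)\xi=\frac{d}{dt}j^0(t,f)\xi=W(-t)\ad[j^0(f)]W(t)\xi=\sum_{k=1}^3 j^k(t,\partial_k f)\xi.
\end{align}
On the other hand, the definition of the spatial derivative of an operator-valued distribution yields $\partial_k j^k(t,f)=-j^k(t,\partial_k f)$, so summing over $k$ and adding gives $\partial_\mu j^\mu(t,f)=\partial_0 j^0(t,f)+\sum_k\partial_k j^k(t,f)=0$ on $D'$, which is the claimed conservation law.

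The main obstacle is the rigorous justification of the weak-commutator identity with the unbounded generator $H_0$. I would carry it out first on the core $D_0=\hot_{\rm{as}}^N C_0^\infty(\R^3;\C^4)\hot D(H_\rm{ph})$, where the Leibniz rule $[p_k^a,f(\xx^a)]=-i(\partial_k f)(\xx^a)$ and the commutativity of $H_1$ with $j^0(f)$ can be verified by direct, domain-safe manipulation, and then extend to all $\xi,\eta\in D'$ by the $N_\bb^{1/2}$-boundedness of $H_1$ (Lemma \ref{C_0-lemma}) together with the limiting argument already used in Lemma \ref{extended-w-comm}. The verification that $A_\mu^b$ genuinely commutes with $f(\xx^a)$ as decomposable operators, and the bookkeeping of the $a=b$ versus $a\neq b$ contributions in the kinetic term, are the places where care is needed; everything else is a routine application of the $\mathcal{C}_0$/$\mathcal{C}_1$ formalism.
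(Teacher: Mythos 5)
Your proposal is correct and follows essentially the same route as the paper: verify $j^\mu(f)\in\mathcal{C}_0$ from boundedness and invariance of $V_L$, compute the weak commutator $[iH,j^0(f)]=\sum_k j^k(\partial_k f)$ on $D'$ (only the Dirac kinetic term contributes), conclude $j^0(f)\in\mathcal{C}_1$ with $\ad(j^0(f))=\sum_k j^k(\partial_k f)$, and then apply Theorem \ref{C_1-equation} together with the sign convention $\partial_k j^k(t,f)=-j^k(t,\partial_k f)$ to obtain \eqref{current-conservation}. Your extra care about justifying the commutator on a core and extending via Lemma \ref{extended-w-comm} is a presentational refinement of the same argument, not a different method.
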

 \begin{proof}
 First, we have $j^\mu(f)\in\mathcal{C}_0$, since it is bounded and leave the subspace $V_L$
 invariant.
 To prove $j^0(f)\in\mathcal{C}_1$, we compute the commutator with $iH$ on $D'$:
 \begin{align}\label{commutator-1}
 [iH,j^0(f)]&=\sum_{a}i[\bvec \alpha^a\cdot \pp^a + \beta^aM^a+V^a, j^0(f)] + [i\rm{d}\Gamma_\bb(\omega), j^0(f)]+[iH_1,j^0(f)]\no\\
 &=iq\sum_a\sum_b \alpha^{ak} \left[p^{a}_k, \int_{\mathcal{X}^N}^\oplus f(\xx^b)\,d\XX \right]\no\\
 &=q\sum_a \alpha^{ak} \int_{\mathcal{X}^N}^\oplus  \partial_k f(\xx^a)\,d\XX \no\\
 &=j^{k}(\partial_k f).
 \end{align}
 Since $j^{k}(\partial_k f)\in\mathcal{C}_0$ by the above observation, it follows that $j^0(f)\in\mathcal{C}_1$ with
 $\ad(j^0(f))=j^{k}(\partial_k f)$.
 By Theorem \ref{C_1-equation}, \eqref{commutator-1} also implies that 
 \begin{align}
 \partial_0 j^0(t,f)+\sum_{k=1,2,3}\partial_kj^{k}(t,f)=0
 \end{align}
 on $D'$.
 \end{proof}
 
 \subsection{Time evolution of the gauge field and equation of motion}
The time dependent gauge field $A_\mu(t,f)$ 
 generated by $W(t)$ fulfills the following equation of motion:
 \begin{Thm}
 The gauge field $A_\mu(f)$ is in $\mathcal{C}_2$ class for all $f\in \mathscr{S}(\R^3)$ and the 
 time-dependent field $A_\mu(t,f):=W(-t)A_\mu(f)W(t)$ satisfies the equation of motion
 \begin{align}\label{eq-of-mot}
 \Box A_\mu(t,f):=\partial_\nu\partial^\nu A_\mu(t,f)=j_\mu(t,\chi_\rm{ph}*f), \q t\in\R,\q f\in\mathscr{S}(\R^3).
 \end{align}
 \end{Thm}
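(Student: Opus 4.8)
The plan is to show $A_\mu(f)\in\mathcal{C}_2$ for every $f\in\mathscr{S}(\R^3)$ by computing the two iterated weak commutators $\ad(A_\mu(f))$ and $\ad^2(A_\mu(f))$ explicitly on $D'$, then to invoke Theorem \ref{n-th-diff} with $n=2$ to differentiate $A_\mu(t,f)$ twice in time, and finally to combine this with the spatial derivatives (which act only on the test function) to assemble the d'Alembertian. The starting point is $A_\mu(f)\in\mathcal{C}_0$: for $f\in\mathscr{S}(\R^3)$ one has $\hat f/\sqrt\omega\in L^2(\R^3)$, and then exactly as in Lemma \ref{C_0-lemma}, using the estimate of Lemma \ref{field-lemma}, $A_\mu(f)$ and its adjoint are $N_\bb^{1/2}$-bounded, closed, densely defined, and raise the boson number by at most one, hence map $V_L$ into $V_{L+1}$.

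The first key step is the weak commutator $\ad(A_\mu(f))=[iH,A_\mu(f)]$ on $D'$, split along $H=H_0+H_1$. The Dirac part $H_\mathrm{D}(V,N)$ acts on the fermionic factor and commutes with the photon field $A_\mu(f)=1\otimes A_\mu(f)$; the interaction $H_1=q\sum_a\alpha^{a\nu}A_\nu^a$ contributes only through the $c$-number equal-time commutator $[A_\nu^a,A_\mu(f)]$, which one checks vanishes (the photon CCR turns it into a symmetric $L^2$ integral that cancels, using that $\chiph$ is real and even and $f\in\mathscr{S}$). Thus only $H_\mathrm{ph}=\mathrm{d}\Gamma_\bb(\omega)$ survives, and a direct computation with $[\mathrm{d}\Gamma_\bb(\omega),a_\mu^\dagger(g)]=a_\mu^\dagger(\omega g)$, $[\mathrm{d}\Gamma_\bb(\omega),a_\mu(g)]=-a_\mu(\omega g)$ gives the conjugate field
\[
\ad(A_\mu(f))=\dot A_\mu(f):=\frac{i}{\sqrt2}\left(-a_\mu(\sqrt\omega\,\hat{f^*})+a_\mu^\dagger(\sqrt\omega\,\hat f)\right),
\]
which is again $N_\bb^{1/2}$-bounded and number-raising, hence in $\mathcal{C}_0$; therefore $A_\mu(f)\in\mathcal{C}_1$.

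Next I compute $\ad^2(A_\mu(f))=[iH,\dot A_\mu(f)]$, where the two sectors now exchange roles. The free part reproduces $[iH_\mathrm{ph},\dot A_\mu(f)]=A_\mu(\Delta f)$ (were $H_1$ absent this is exactly the statement that $A_\mu$ is a $\sqrt{-\Delta}$-free field, cf. Lemma \ref{even-odd} with $-T^2f=\Delta f$), while the interaction now contributes through the nonvanishing field/conjugate-momentum commutator $[A_\nu^a,\dot A_\mu(f)]=i\eta_{\mu\nu}(\chiph*f)(\xx^a)$, the evenness of $\chiph$ converting $\int\chiph^{\xx^a}f$ into the convolution. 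Multiplying by $\alpha^{a\nu}$ and summing yields $[iH_1,\dot A_\mu(f)]=-j_\mu(\chiph*f)$, so
\[
\ad^2(A_\mu(f))=A_\mu(\Delta f)-j_\mu(\chiph*f).
\]
Both terms are in $\mathcal{C}_0$ ($A_\mu(\Delta f)$ is a field with $\widehat{\Delta f}/\sqrt\omega=-|\kk|^{3/2}\hat f\in L^2$, and $j_\mu(\chiph*f)$ is bounded and number-preserving), so $\dot A_\mu(f)\in\mathcal{C}_1$ and hence $A_\mu(f)\in\mathcal{C}_2$. Theorem \ref{n-th-diff} with $n=2$ then gives, for $\xi\in D'$, that $A_\mu(t,f)\xi$ is twice strongly continuously differentiable with $\partial_0^2A_\mu(t,f)\xi=W(-t)\ad^2(A_\mu(f))W(t)\xi=A_\mu(t,\Delta f)\xi-j_\mu(t,\chiph*f)\xi$. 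Since $W(\pm t)$ is independent of $\xx$ and spatial differentiation passes onto the test function, $\partial_kA_\mu(t,f)=-A_\mu(t,\partial_kf)$, whence $\sum_k\partial_k^2A_\mu(t,f)=A_\mu(t,\Delta f)$. Assembling with the mostly-plus metric, $\Box=\partial_\nu\partial^\nu=-\partial_0^2+\sum_k\partial_k^2$, the two $A_\mu(t,\Delta f)$ terms cancel and one obtains $\Box A_\mu(t,f)=j_\mu(t,\chiph*f)$ on $D'$.

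The main obstacle is the pair of weak-commutator computations, and especially getting $\ad^2(A_\mu(f))$ right: the entire source term arises from the contrast between the \emph{vanishing} equal-time field commutator $[A_\nu^a,A_\mu(f)]=0$ and the \emph{nonvanishing} $[A_\nu^a,\dot A_\mu(f)]=i\eta_{\mu\nu}(\chiph*f)(\xx^a)$. One must carry the polarization/CCR bookkeeping and the Fourier--convolution identities carefully, and at each stage verify that the operators produced (the conjugate field $\dot A_\mu(f)$, the field $A_\mu(\Delta f)$, and the smeared current $j_\mu(\chiph*f)$) genuinely lie in $\mathcal{C}_0$ and that each weak-commutator identity holds on the dense domain $D'$, so that the inductive passage $A_\mu(f)\in\mathcal{C}_0\to\mathcal{C}_1\to\mathcal{C}_2$ and the application of Theorem \ref{n-th-diff} are fully justified.
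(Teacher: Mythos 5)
Your proposal is correct and follows essentially the same route as the paper's proof: compute $\ad(A_\mu(f))=[iH,A_\mu(f)]=\Pi_\mu(f)$ (only $\mathrm{d}\Gamma_\bb(\omega)$ contributes), then $\ad^2(A_\mu(f))=[iH,\Pi_\mu(f)]=A_\mu(\Delta f)-j_\mu(\chi_\mathrm{ph}*f)$ (the source term coming from $[iH_1,\Pi_\mu(f)]$), verify $\mathcal{C}_0$ membership at each stage, and apply Theorem \ref{n-th-diff} with $n=2$. Your explicit verification that $[A_\nu^a,A_\mu(f)]=0$ (which the paper uses tacitly when it drops $H_1$ from the first commutator) and your final assembly of the d'Alembertian are just slightly more detailed renderings of the same argument.
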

 \begin{proof}
 We compute the commutator $\Pi_\mu(f):=[iH,A_\mu(f)]$ on $D'$:
 \begin{align}\label{pi-0}
 \Pi_\mu(f)&:=[iH,A_\mu(f)]=[i\rm{d}\Gamma_\bb(\omega),A_\mu(f)] \no\\
 &=\frac{1}{\sqrt{2}}\left(a_\mu\left((i\omega)\frac{\hat{f^*}}{\sqrt{\omega}}\right)+a^\dagger_\mu\left((i\omega)\frac{\hat{f}}{\sqrt{\omega}}\right)\right).
 \end{align}
 This shows that $\Pi_\mu(f)$ belongs to $\mathcal{C}_0$, and thus $A_\mu(f)\in\mathcal{C}_1$ with $\ad (A_\mu(f))=\Pi_\mu(f)$.
 Again, we have
 \begin{align}
 [iH,\Pi_\mu(f)]&=[i\rm{d}\Gamma_\bb(\omega),\Pi_\mu(f)]+[iH_1,\Pi_\mu(f)]\no\\
 &=\frac{1}{\sqrt{2}}\left(a_\mu\left((i\omega)^2\frac{\hat{f^*}}{\sqrt{\omega}}\right)+a^\dagger_\mu\left((i\omega)^2\frac{\hat{f}}{\sqrt{\omega}}\right)\right)
 +iq\sum_{a}\alpha^{a\nu}\int^\oplus_{\mathcal{X}^N}[A_\nu(\xx^a),\Pi_\mu(f)]\,d\XX \no\\
 &=A_\mu(\Delta f)+\frac{iq}{2}\sum_a\alpha_\mu^a\int_{\mathcal{X}^N}^\oplus \left\{ \ip{\hat{\chi_\rm{ph}^{\xx^a}}}{i\hat{f}}-\ip{i\hat{f^*}}{\hat{\chi_\rm{ph}^{\xx^a}}}\right\} \,d\XX \no \\
 &=A_\mu(\Delta f)-q\sum_a\alpha_\mu^a\int_{\mathcal{X}^N}^\oplus \chi_\rm{ph}*f(\xx^a)\,d\XX \no\\
 &=A_\mu(\Delta f)-j_\mu(\chi_\rm{ph}*f),
\end{align}
which shows that $\Pi_\mu(f)\in\mathcal{C}_1$ with 
\begin{align}\label{pi-1}
 \ad(\Pi_\mu(f))=\ad^2(A_\mu(f))=A_\mu(\Delta f)-j_\mu(\chi_\rm{ph}*f).
 \end{align}
 Therefore, by Theorem \ref{n-th-diff}, we conclude that $A_\mu(t,f)=W(-t)A_\mu(f)W(t)$ is twice differentiable in $t$ on $D'$ and
 \begin{align}
 \frac{d^2}{dt^2}A_\mu(t,f)=A_\mu(t,\Delta f)-j_\mu(t,\chi_\rm{ph}*f),
 \end{align}
 which is equivalent to \eqref{eq-of-mot}.
 \end{proof}
 
 It is clear by Theorem \ref{current-conservation-thm}, \eqref{pi-0} and \eqref{pi-1} that $A_0(t,f)$ ($f\in \mathscr{S}(\R^3)$) is three times differentiable in $t$ and
 \begin{align}
 \partial^\mu\Box A_\mu(t,f)\Psi = 0, \q \Psi\in D'.
 \end{align}
 This implies that the vector-valued function
 \begin{align}
 \partial^\mu A_\mu(t,f)\Psi,\q \Psi\in D'
 \end{align}
 satisfies the Klein-Gordon equation
 \begin{align}
 \Box  \partial^\mu A_\mu(t,f)\Psi = 0,\q \Psi\in D'.
 \end{align}
 It is straightforward to obtain for $f\in\mathscr{S}(\R^3)$
 \begin{align}\label{dA}
 \partial^\mu A_\mu(t,f)\Big|_{t=0}\Psi = -\frac{1}{\sqrt{2}}\left[a_\mu\left(ik^\mu\frac{\hat{f^*}}{\sqrt{\omega}}\right)+
 a^\dagger_\mu\left(ik^\mu\frac{\hat{f}}{\sqrt{\omega}}\right)\right]\Psi,\q \Psi\in D',
 \end{align}
 where $k^0:=\omega(\kk)$.
 
 \section{Gupta-Bleuler's condition and physical subspace}\label{Gupta-Bleuler}
 In the present section, we prove that the operator $\partial_\mu A^\mu(t,f)$ is 
 a generalized free field in the sense of Section \ref{GFF}, and by utilizing this fact, we
 identify the physical subspace following the Gupta-Bleuler's method. Furthermore, 
 the physical subspace naturally induces the Hilbert space with a positive definite metric,
 which we regard as the Hilbert space consisting of all the physical state vectors for 
 the quantum system under consideration.

Let $\mathfrak{h}=D((-\Delta)^{1/4})$ regarded as a Hilbert space by the inner product given by
\begin{align}
\ip{f}{g}_\mathfrak{h}:=\ip{f}{g}_{L^2(\R^3)}+\ip{(-\Delta)^{1/4}f}{(-\Delta)^{1/4}g}_{L^2(\R^3)}.
\end{align} 
A linear operator $T$ in $L^2(\R^3)$ can be 
considered to be an operator $T_\mathfrak{h}$ in $\mathfrak{h}$  with
\begin{align}
D(T_\mathfrak{h})&:=\{ f\in \mathfrak{h}\cap D(T) \,|\, Tf\in\mathfrak{h} \}, \\
T_\mathfrak{h}f&:=Tf,\quad f\in D(T_\mathfrak{h}).
\end{align}
For instance, if we put $T:=(-\Delta)^{1/2}$, then $D(T_\mathfrak{h})=D((-\Delta)^{3/4})$.

In view of \eqref{dA}, it is natural to define a mapping $\Omega:\mathfrak{h}\to\mathcal{C}_0$
 \begin{align}\label{def-omega}
\Omega(f):= -\frac{1}{\sqrt{2}}\left[a_\mu\left(ik^\mu\frac{\hat{f^*}}{\sqrt{\omega}}\right)+
 a^\dagger_\mu\left(ik^\mu\frac{\hat{f}}{\sqrt{\omega}}\right)\right]
 \end{align}
for each $f\in \mathfrak{h}$. 

\subsection{Physical Subspace, Hilbert space and Hamiltonian}
In this subsection, we see that the physical subspace determined by the Gupta-Bleuler method has a positive semi-definite metric and 
naturally induces the positive definite Hilbert space as the completion of the quotient space by
the subspace spanned by the null vectors.
The resulting Hilbert space should be regarded as the physical Hilbert space of the quantum system under consideration.
We see that the original Hamiltonian $H_\rm{DM}(V,N)$ induces the Hamiltonian on the Hilbert space which is essentially self-adjoint. 
\begin{Thm}\label{KG-omega}
Let $T=(-\Delta)^{1/2}$. The mapping $f\mapsto \Omega(f)$ defines an analytic $T_\mathfrak{h}$-free field with
\begin{align}\label{ad-omega}
\ad[\Omega(f)]&= -\frac{1}{\sqrt{2}}\left[a_\mu\left(ik^\mu(i\omega)\frac{\hat{f^*}}{\sqrt{\omega}}\right)+
 a^\dagger_\mu\left(ik^\mu(i\omega)\frac{\hat{f}}{\sqrt{\omega}}\right)\right] \no \\
&\hskip 4mm -\frac{iq}{2}\sum_a \alpha^{a\mu}\int^\oplus_{\mathcal{X}^N}d\bvec X \left[\ip{\frac{\hat{\chi_{\rm{ph}}^{{\bvec x}^a}}}{\sqrt{\omega}}}{\frac{ik_\mu\hat{f}}{\sqrt{\omega}}}
 -\ip{\frac{ik_\mu\hat{f^*}}{\sqrt{\omega}}}{\frac{\hat{\chi_{\rm{ph}}^{{\bvec x}^a}}}{\sqrt{\omega}}}\right]
\end{align}
for $f\in D((-\Delta)^{3/4})$. In particular, 
\begin{align}
\Omega(t,f)=\Omega((\cos t\sqrt{-\Delta})f)+\ad\left[\Omega\left(\frac{\sin t \sqrt{-\Delta}}{\sqrt{-\Delta}}f\right)\right]
\end{align}
on $D'$.
\end{Thm}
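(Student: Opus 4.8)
The plan is to verify, in turn, the three structural requirements — membership in the successive classes $\mathcal{C}_0,\mathcal{C}_1,\mathcal{C}_2$, the generalized Klein-Gordon equation, and the analyticity conditions of Definition \ref{analytic-free} — and then to read the solution formula off Theorem \ref{KG-sol}. First I would check that $\Omega(f)\in\mathcal{C}_0$ for every $f\in\mathfrak{h}$: since $k=(\omega,\kk)$ obeys $|k^\mu|\le\omega$, one has $|ik^\mu/\sqrt\omega|\le\omega^{1/2}=|\kk|^{1/2}$, so the mode functions $ik^\mu\hat f/\sqrt\omega$ lie in $L^2(\R^3_\kk)$ exactly when $f\in\mathfrak{h}=D((-\Delta)^{1/4})$; the remaining $\mathcal{C}_0$ properties (closedness, $N_\bb^{1/2}$-boundedness, raising the photon number by at most one) then follow as in Lemmas \ref{field-lemma} and \ref{C_0-lemma}.

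Next I would compute the weak commutator $\ad[\Omega(f)]=[iH,\Omega(f)]$ for $f\in D(T_\mathfrak{h})=D((-\Delta)^{3/4})$. Because $\Omega(f)$ acts only on the photon sector, the Dirac kinetic, mass and potential terms drop out; $[i\rm{d}\Gamma_\bb(\omega),\Omega(f)]$ inserts a factor $(i\omega)$ in each mode function, producing the first line of \eqref{ad-omega}, while $[iH_1,\Omega(f)]=iq\sum_a\alpha^{a\nu}\int^\oplus[A_\nu(\xx^a),\Omega(f)]\,d\XX$ is a c-number commutator which the canonical commutation relations evaluate, the contraction $\eta_{\nu\mu}ik^\mu=ik_\nu$ giving the second line. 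The new domain requirement $f\in D((-\Delta)^{3/4})$ is precisely what keeps $ik^\mu(i\omega)\hat f/\sqrt\omega\in L^2$; both the resulting field term and the bounded multiplication term are again $\mathcal{C}_0$, so $\ad[\Omega(f)]\in\mathcal{C}_0$ and $\Omega(f)\in\mathcal{C}_1$, which is condition (ii) of Definition \ref{analytic-free}.

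The crux is the second commutator, where I would prove $\ad^2[\Omega(f)]=\Omega(\Delta f)=\Omega(-T_\mathfrak{h}^2 f)$ for $f\in D(T_\mathfrak{h}^2)=D((-\Delta)^{5/4})$. The cleanest route exploits the identification $\Omega(f)=\partial^\mu A_\mu(0,f)$ of \eqref{dA}, written as $\Omega(f)=-\Pi_0(f)-\sum_j A_j(\partial_j f)=-\ad[A_0(f)]-\sum_j A_j(\partial_j f)$, and then applies the equation of motion $\ad^2[A_\mu(g)]=A_\mu(\Delta g)-j_\mu(\chi_\rm{ph}*g)$ together with the conservation law $\ad[j^0(g)]=j^k(\partial_k g)$ of Theorem \ref{current-conservation-thm}. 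Expanding $\ad^3[A_0(f)]=\Pi_0(\Delta f)+\sum_k j_k(\chi_\rm{ph}*\partial_k f)$ and $\ad^2[A_j(\partial_j f)]=A_j(\Delta\partial_j f)-j_j(\chi_\rm{ph}*\partial_j f)$, the two families of current terms cancel identically, leaving $-\Pi_0(\Delta f)-\sum_j A_j(\Delta\partial_j f)=\Omega(\Delta f)$. This cancellation \emph{is} current conservation and is the heart of the free-field property; the delicate part is the Lorentz-index and metric bookkeeping ($j_0=-j^0$, $j_k=j^k$), not any analytic difficulty. Combined with $\Omega(\Delta f)\in\mathcal{C}_0$ (valid since $\Delta f\in\mathfrak{h}$ for $f\in D(T_\mathfrak{h}^2)$), this gives $\ad[\Omega(f)]\in\mathcal{C}_1$, hence $\Omega(f)\in\mathcal{C}_2$, and via Theorem \ref{n-th-diff} shows that $\Omega(t,f)$ solves \eqref{KG}, so $\Omega$ is a $T_\mathfrak{h}$-free field.

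Finally I would check the analyticity conditions. By Lemma \ref{even-odd}, $\ad^{2n}[\Omega(f)]=\Omega(\Delta^n f)$ and $\ad^{2n+1}[\Omega(f)]=\ad[\Omega(\Delta^n f)]$; for $f\in R(E_{T_\mathfrak{h}}([1/N,N]))$ the spectral support confines $|\kk|$ to $[1/N,N]$, whence $a_n=\norm{\ad^n[\Omega(f)](N_\bb+1)^{-1/2}}\le C_f N^{n+1/2}$ and $t^n a_n/n!\to 0$, giving condition (i), i.e. $\Omega(f)\in\mathcal{C}_\omega$ (the uniform photon-number bound being clear since each $\ad$ raises the photon number by at most one). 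Condition (iii) follows from the $\mathcal{C}_0$ estimate $\norm{\Omega(g)\xi}\le C\norm{g}_\mathfrak{h}\norm{(N_\bb+1)^{1/2}\xi}$ and from the explicit, $T_\mathfrak{h}f$-continuous dependence displayed in \eqref{ad-omega}. With every hypothesis of Definition \ref{analytic-free} verified, Theorem \ref{KG-sol} applies verbatim and delivers the asserted formula on $D'$, where $T_\mathfrak{h}=\sqrt{-\Delta}$. I expect the $\ad^2$ computation with its index bookkeeping to be the principal obstacle, and the $\mathcal{C}_\omega$ estimates a close second.
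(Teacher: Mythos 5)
Your proposal is correct, and its overall scaffolding coincides with the paper's: verify the free-field property through the second weak commutator, check conditions (i)--(iii) of Definition \ref{analytic-free} --- your $\mathcal{C}_\omega$ estimate is exactly the paper's, combining Lemma \ref{even-odd} with the bounds \eqref{annihilation-bound}--\eqref{creation-bound} and the spectral localization of $\hat{f}$ --- and then invoke Theorem \ref{KG-sol}. Where you genuinely differ is the central identity $\ad^2[\Omega(f)]=\Omega(\Delta f)$. The paper obtains it ``by a direct computation'', i.e.\ by expanding $[iH,[iH,\Omega(f)]]$ against the canonical commutation relations; you instead derive it structurally from the decomposition $\Omega(f)=-\Pi_0(f)-\sum_j A_j(\partial_j f)$, the equation of motion \eqref{pi-1}, and the conservation law of Theorem \ref{current-conservation-thm}, so that the two families of current terms cancel. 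Your metric bookkeeping ($j_0=-j^0$, $j_k=j^k$) is right and the cancellation is genuine; this route is more conceptual, since it exhibits inside the proof the claim made in the paper's introduction that electromagnetic current conservation is what makes the Gupta--Bleuler construction work, and your domain accounting (requiring $f\in D(T_\mathfrak{h}^2)=D((-\Delta)^{5/4})$ for the second commutator) is if anything more careful than the paper's. The cost is a small caveat you should make explicit: Theorem \ref{current-conservation-thm} and the equation-of-motion theorem are stated for $f\in\mathscr{S}(\R^3)$, whereas you apply them to $f\in D((-\Delta)^{5/4})$ and to $\partial_j f$ for such $f$. The commutator computations behind those theorems extend verbatim to this class (the relevant mode functions $ik^\mu(i\omega)^n\hat{f}/\sqrt{\omega}$ remain in $L^2$, and $\chi_\rm{ph}*f$, $\chi_\rm{ph}*\partial_j f$ remain bounded, so the current terms are still $\mathcal{C}_0$), but you must either say so or approximate by Schwartz functions and pass to the limit in the weak commutator; the paper's direct computation carries no such dependency.
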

\begin{proof}
First, note that $D(T_\mathfrak{h})=D((-\Delta)^{3/4})$ and $D(T_\mathfrak{h}^2)=D((-\Delta)^{5/4})$.
By a direct computation, we obtain
\begin{align}
[iH,[iH,\Omega(f)]]=\Omega(\Delta f)
\end{align}
on $D'$, and thus $\Omega(f)$ belongs to $\mathcal{C}_2$ for $f\in D(T_\mathfrak{h})=D((-\Delta)^{3/4})$ with $\ad^2(\Omega(f))=\Omega(\Delta f)$.
Hence, we see that $\Omega(\cdot)$ is a $T_\mathfrak{h}$-free field by Theorem \ref{n-th-diff}. We prove it is analytic in the sense of Definition \ref{analytic-free}.

For $f\in D(T_\mathfrak{h})$, \eqref{ad-omega} follows from direct computation, which proves (ii) in Definition \ref{analytic-free}.
It is clear by \eqref{def-omega} and \eqref{ad-omega} that the statement (iii) in Definition \ref{analytic-free} is valid.

To prove (i) in Definition \ref{analytic-free}, let $f$ belong to the range of $E_T([1/N,N])$ for some $N\in\Natural$. By Lemma \ref{even-odd},
we have for $n\in\N$
\begin{align}
\ad^{2n}[\Omega(f)]=\Omega((-1)^nT^{2n}f),\label{even-o} \\
\ad^{2n+1}[\Omega(f)]=\ad[\Omega(-1)^n T^{2n}f],\label{odd-o}
\end{align}
on $D'$. By these equations, \eqref{annihilation-bound} and \eqref{creation-bound}, we find that
there is a constant $C>0$ such that
\begin{align}
\norm{\ad^n[\Omega(f)](N_b+1)^{-1/2}}\le C\norm{\frac{ik_\mu(i\omega)^n \hat{f}}{\sqrt{\omega}}}.
\end{align}
But since $\hat{f}$ belongs to the range of $E_\omega([1/N,N])$, one obtains
\begin{align}\label{esti-omega}
\norm{\ad^n[\Omega(f)](N_b+1)^{-1/2}}\le CR^n
\end{align}
for some $R>0$. The equations \eqref{even-o}, \eqref{odd-o}, and \eqref{esti-omega} implies that
$\Omega(f)$ is in $\mathcal{C}_\omega$ class. 

The last assertion immediately follows from Theorem \ref{KG-sol}.
\end{proof}

We have now arrived at the stage to discuss Gupta-Bleuler's definition of physical subspace.
From Theorem \ref{KG-omega} and \eqref{positive-fp}, we can define the positive frequency part of $\Omega(t,f)$ by
\begin{align}\label{omega+}
\Omega^+(t,f)&:=\Omega\left(\frac{e^{-i\sqrt{-\Delta}t }}{2}f\right)-\ad\left[\Omega\left(\frac{e^{-i\sqrt{-\Delta}t }}{2i\sqrt{-\Delta}}f\right)\right]\no\\
&=-\frac{1}{\sqrt{2}}a_\mu\left(\frac{ik^\mu e^{i\omega t}\hat{f^*}}{\sqrt{\omega}}\right)+\frac{iq}{2}\sum_a \int^\oplus_{\mathcal{X}^N} d\bvec X
\ip{\frac{\hat{\chi_\rm{ph}^{{\bvec x}^a}}}{\sqrt{\omega}}}{\frac{e^{-i\omega t}\hat{f}}{\sqrt{\omega}}},
\end{align} 
for $f\in D(T_\mathfrak{h}^{-1})$ on $D'$. (Note that the domain of the operator $T_\mathfrak{h}^{-1}$ is equal to the range of
$T_\mathfrak{h}$ which is $\{(-\Delta)^{1/2}f\,|\, f\in D((-\Delta)^{3/4})\}$). 
Once the positive frequency part of $\Omega(t,f)$ is known, the physical is defined following
Gupta and Bleuler by the relation
\begin{align}
V_\rm{phys}:=\overline{\{\Psi\in D' \,|\, \Omega^+(t,f)\Psi = 0, \;\text{for all}\; t\in\R ,\, f\in D(T_\mathfrak{h}^{-1}) \}}.
\end{align}

To find an explicit expression of the physical subspace $V_\rm{phys}$, we define two unitary operators following Refs. \cite{MR2533876,MR2412280}.
The first one is:
\begin{align}
W:=\op_{n=0}^\infty\ot^n \overline{w}, 
\end{align}
with
\begin{align}
\overline{w}=(w_\mu^{\;\;\nu})=\begin{pmatrix}
1/\sqrt{2} & 0 & 0& -1/\sqrt{2} \\
 0&1&0&0 \\
 0&0&1&0\\
 1/\sqrt{2}&0&0&1/\sqrt{2}
\end{pmatrix} : \mathcal{H}_\rm{ph}^*\to \mathcal{H}_\rm{ph}^*.
\end{align}
To define the other unitary operator we assume that $\hat{\chi_\rm{ph}}\in D(\omega^{-3/2})$. Under this assumption,
we put 
\begin{align}
g^{{\bvec x}^a}=(g^{{\bvec x}^a\mu})=\begin{pmatrix}
 0\\0\\0\\ i\hat{\chi_\rm{ph}^{{\bvec x}^a}}/\omega^{3/2}
\end{pmatrix}\in \mathcal{H}_\rm{ph}
\end{align} and consider the unitary operator $e^{iG}$ with the self-adjoint operator $G$ defined by
\begin{align}\label{G}
G&:=-q\sum_a \int^\oplus_{\mathcal{X}^N} d\bvec X \frac{1}{\sqrt{2}}\overline{\left[c^\mu \left(g^{{\bvec x}^a}_\mu\right)+c^{\dagger}_\mu\left(
g^{{\bvec x}^a\mu}\right) \right]} \no\\
&=-q\sum_a \int^\oplus_{\mathcal{X}^N} d\bvec X \frac{1}{\sqrt{2}}\overline{\left[c^3\left(\frac{i\hat{\chi_\rm{ph}^{{\bvec x}^a}}}{\omega^{3/2}}\right)+c^{\dagger}_3\left(
\frac{i\hat{\chi_\rm{ph}^{{\bvec x}^a}}}{\omega^{3/2}}\right) \right]}.
\end{align}
By these two unitary transformations $W$ and $e^{iG}$, $\Omega^+(t,f)$ is fairly simplified:

\begin{Lem}\label{g-trans}
Let $\hat{\chi_\rm{ph}}\in D(\omega^{3/2})$, $f\in D(T_\mathfrak{h}^{-1})$ and $h\in \mathcal{H}_\rm{ph}$ be
\begin{align}
h:=(h^\mu)=\begin{pmatrix}
i\sqrt{\omega} e^{i\omega t}\hat{f^*} \\ 0 \\ 0 \\ 0
\end{pmatrix}.
\end{align}
Then, we have
\begin{align}
We^{iG}\Omega^+(t,f)e^{-iG}W=c^\mu\left(h_\mu \right)
\end{align}
on $We^{iG}D'$.
\end{Lem}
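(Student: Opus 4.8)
The plan is to simplify the operator $We^{iG}\Omega^+(t,f)e^{-iG}W$ by moving the two unitaries through the creation and annihilation operators one factor at a time, so that $\Omega^+(t,f)$ — which by \eqref{omega+} consists only of annihilation operators plus a c-number — is turned into a single annihilation operator. The inner conjugation by $e^{iG}$ will remove the c-number, and the outer $W$ will rotate the scalar--longitudinal pair onto one mode; all manipulations are carried out on the dense, $D$-invariant subspace, which is what pins the resulting identity to the domain $We^{iG}D'$.

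First I would rewrite the operator part of \eqref{omega+} in terms of the components $c^0,\dots,c^3$. Combining $a_\mu=\eta_{\mu\nu}a^\nu$ with $a^\mu(g)=c^\nu(e^\mu_{\;\;\nu}g)$ gives $a_\mu(g)=c^\nu(e_{\mu\nu}g)$, and the contraction $\sum_\mu k^\mu e_{\mu\nu}=k\cdot e_\nu$ together with the polarization relations \eqref{pol-1}--\eqref{pol-2} yields $k\cdot e_1=k\cdot e_2=0$, $k\cdot e_0=-\omega$ and $k\cdot e_3=\omega$, so only the scalar and longitudinal modes survive. With $h^0=i\sqrt\omega\,e^{i\omega t}\hat{f^*}$ this brings $\Omega^+(t,f)$ to the form $\tfrac1{\sqrt2}\bigl(c^0(h^0)-c^3(h^0)\bigr)+Z$, where $Z$ denotes the fibrewise multiplication operator (the c-number) coming from the matter current in \eqref{omega+}.

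Next I would conjugate by $e^{iG}$. Since $G$ in \eqref{G} is linear in $c^3$ and $c^\dagger_3$, it generates a displacement, so $e^{iG}c^\mu(\,\cdot\,)e^{-iG}=c^\mu(\,\cdot\,)+[\,iG,c^\mu(\,\cdot\,)\,]$ with the commutator a scalar fixed by the canonical relation $[c^\mu(f),c^\dagger_\nu(g)]=\eta_{\mu\nu}\ip fg$. Because $\eta_{03}=0$ the term $c^0(h^0)$ is untouched and only $c^3(h^0)$ is shifted, by $[\,iG,c^3(h^0)\,]\propto\sum_a\int^\oplus \ip{h^0}{i\hat{\chi_\rm{ph}^{{\bvec x}^a}}/\omega^{3/2}}$. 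The crucial step is to check that this scalar cancels $Z$ exactly: using the reality and evenness of $\chi_\rm{ph}$ and the relation $\overline{\hat{f^*}}(\kk)=\hat f(-\kk)$, the substitution $\kk\mapsto-\kk$ identifies the two $\omega^{-1}$-weighted integrals, the matching $\omega^{-3/2}$ weight in $g^{{\bvec x}^a}$ being precisely what makes them agree. Hence $e^{iG}\Omega^+(t,f)e^{-iG}=\tfrac1{\sqrt2}\bigl(c^0(h^0)-c^3(h^0)\bigr)$.

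Finally I would conjugate by $W=\op_{n}\ot^n\overline w$. As the second quantization of the one-particle rotation $\overline w$, it intertwines the annihilation operators through $\overline w$ on the polarization label, and the decisive algebraic fact is that $\overline w$ sends the scalar--longitudinal combination $\tfrac1{\sqrt2}(1,0,0,-1)^{\mathrm T}$ to the scalar direction $(1,0,0,0)^{\mathrm T}$; this collapses $\tfrac1{\sqrt2}\bigl(c^0(h^0)-c^3(h^0)\bigr)$ onto a single scalar annihilation operator and produces $c^\mu(h_\mu)$. I expect the main obstacle to be the indefinite-metric bookkeeping in this last step: one must follow carefully how the $\eta$-identification of $\mathcal H_\rm{ph}$ with its dual $\mathcal H_\rm{ph}^*$ enters both the definition of the annihilation operators and the intertwining law for $W$ (and hence how the second factor $W$ acts on $We^{iG}D'$), so that the upper-index function $h^0$ emerging from the rotation is correctly reassembled into the covariant, lower-index object $h_\mu$ — in particular the factor $\eta_{00}$ relating $h^0$ and $h_0$ must be produced by this identification. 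The remaining points — absolute convergence of the displacement series, invariance of $D$, and the precise domain $We^{iG}D'$ — are routine, since $G$, $\Omega(f)$ and each $c^\mu$ lie in $\mathcal C_0$ and map each $V_L$ into some $V_{L+b}$.
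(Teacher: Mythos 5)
Your proposal is correct and follows essentially the same route as the paper's own proof: the paper likewise first removes the matter-current c-number by the $e^{\pm iG}$ conjugation (using that $[iG,a_\mu(ik^\mu e^{i\omega t}\hat{f^*}/\sqrt{\omega})]$ is a scalar which matches the current term in \eqref{omega+}), and then applies $W$, carrying out your ``collapse the scalar--longitudinal pair onto one mode'' step in covariant form via the rotated polarization vectors $\overline{e}^{\mu}_{\;\;\nu}=e^\mu_{\;\;\rho}w_\nu^{\;\;\rho}$ and the identity $\overline{e}^\mu_{\;\;\nu}k_\mu=\overline{\kappa}_\nu$ with $(\overline{\kappa}^\mu)=(-\sqrt{2}|\kk|,0,0,0)$, which is exactly the fact that $\overline{w}$ sends $\tfrac{1}{\sqrt{2}}(1,0,0,-1)^{\mathrm{T}}$ to the scalar direction. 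One slip to correct: in the paper's conventions the component CCR is $[c^\mu(f),c^\dagger_\nu(g)]=\delta^\mu_{\;\;\nu}\ip{f}{g}$, not $\eta_{\mu\nu}\ip{f}{g}$ (the $\eta_{\mu\nu}$ belongs to the CCR of $a_\mu,a^\dagger_\nu$); this happens to be harmless here because the only entries you invoke, $(\mu,\nu)=(0,3)$ and $(3,3)$, agree for $\delta$ and $\eta$, but with a $(0,0)$ commutator it would flip a sign.
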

\begin{proof}
On the subspace $D'$, we have
\begin{align}
\left[iG, \frac{1}{\sqrt{2}}a_\mu\left(\frac{ik^\mu e^{i\omega t}}{\sqrt{\omega}}\hat{f^*}\right)\right]&= \frac{iq}{2}\sum_a \int^\oplus_{\mathcal{X}^N} d\bvec X \left[a_\mu\left(\frac{ik^\mu }{\sqrt{\omega}}\hat{f^*}\right), c^\dagger_\nu(g^{{\bvec x}^a\nu})\right]\no\\
&=\frac{iq}{2}\sum_a \int^\oplus_{\mathcal{X}^N} d\bvec X\ip{\frac{\hat{\chi_\rm{ph}^{{\bvec x}^a}}}{\sqrt{\omega}}}{e^{-it\omega}\hat{f}}.
\end{align}
Thus, it follows that 
\begin{align}
-\frac{1}{\sqrt{2}}e^{-iG}a_\mu\left(\frac{ik^\mu e^{i\omega t}}{\sqrt{\omega}}\hat{f^*}\right)e^{iG}&=-\frac{1}{\sqrt{2}}a_\mu\left(\frac{ik^\mu e^{i\omega t}}{\sqrt{\omega}}\hat{f^*}\right)
+\frac{1}{\sqrt{2}}\left[iG,a_\mu\left(\frac{ik^\mu e^{i\omega t}}{\sqrt{\omega}}\hat{f^*}\right)\right] \no\\
&=-\frac{1}{\sqrt{2}}a_\mu\left(\frac{ik^\mu e^{i\omega t}}{\sqrt{\omega}}\hat{f^*}\right)+\frac{iq}{2}\sum_a \int^\oplus_{\mathcal{X}^N} d\bvec X\ip{\frac{\hat{\chi_\rm{ph}^{{\bvec x}^a}}}{\sqrt{\omega}}}{e^{-it\omega}\hat{f}}\no\\
&=\Omega^+(t,f)
\end{align}
on the subspace $D'$, which implies for $\Psi \in e^{iG}D'$,
\begin{align}\label{part-1}
e^{iG}\Omega^+(t,f)e^{-iG}\Psi=\frac{1}{\sqrt{2}}a_\mu\left(\frac{ik^\mu e^{i\omega t}}{\sqrt{\omega}}\hat{f^*}\right)\Psi.
\end{align}

Introduce four-component vectors $\overline{e}_{\nu}(\bvec k)$ ($\nu=0,1,2,3$ and $\kk\in\Real^3$) by
\begin{align}
\overline{e}^{\mu}_{\;\;\nu}(\kk):= e^\mu_{\;\;\rho}(\kk) w_{\nu}^{\;\;\rho}. 
\end{align}
Then, we have the relations from \eqref{pol-1}, \eqref{pol-2},
\begin{align}
\overline{e}^\mu_{\;\;\nu}(\kk)\overline{e}_{\mu\lambda}(\kk)&=\overline{\eta}_{\nu\lambda}, \no\\
\overline{e}^\mu_{\;\;\nu}(\kk)k_\mu&= \overline{\kappa}_\nu(\kk),
\end{align}
with
\begin{align}
(\overline{\eta}_{\nu\lambda}):=\begin{pmatrix}
0 &0&0& -1 \\
0&1&0&0\\
0&0&1&0\\
-1&0&0&0
\end{pmatrix},\q
(\overline{\kappa}^\mu(\kk))=\begin{pmatrix}
-\sqrt{2}|\kk| \\
0\\
0\\
0
\end{pmatrix}.
\end{align}
Then, one finds
\begin{align}
Wa_\mu\left(\frac{ik^\mu e^{i\omega t}}{\sqrt{\omega}}\hat{f^*}\right)W&=a_\mu\left(w_\rho^{\;\;\mu}\frac{ik^\rho e^{i\omega t}}{\sqrt{\omega}}\hat{f^*}\right)\no\\
&=c^\nu\left(\overline{\kappa}_\nu \frac{ie^{i\omega t}}{\sqrt{\omega}}\hat{f^*}\right),
\end{align}
which, combined with \eqref{part-1} implies
\begin{align}
We^{iG}\Omega^+(t,f)e^{-iG}W\Psi = \frac{1}{\sqrt{2}}c^\mu\left(\overline{\kappa}_\mu \frac{ie^{i\omega t}}{\sqrt{\omega}}\hat{f^*}\right)\Psi=c^\mu(h_\mu)\Psi
\end{align}
for all $\Psi\in We^{iG}D'$.
\end{proof}

Let $\mathcal{F}_{\rm{TL}}$ be the closed subspace of $\F_\rm{DM}(N)$
\begin{align}
\mathcal{F}_{\rm{TL}}:=\H_\rm{D}(N)\otimes\left(\C\Omega\otimes\mathcal{F}_\bb(L^2(\R^3))\otimes\mathcal{F}_\bb(L^2(\R^3))\otimes\mathcal{F}_\bb(L^2(\R^3))\right).
\end{align}
As is well known, for a dense subspace $V\subset L^2(\R^3)$, $\F_\rm{TL}$ is characterized as
\begin{align}\label{vac-character}
\mathcal{F}_{\rm{TL}}=\left.\left\{\Psi\in \bigcap_{f\in V}D(c^0(f)) \,\right|\, c^0(f)\Psi=0, f\in V \right\} = \bigcap_{f\in V}\ker c^0(f).
\end{align}
From Lemma \ref{g-trans}, we can identify the physical subspace $V_\rm{phys}$:
\begin{Thm}
\begin{align}
V_\rm{phys}=e^{-iG}W\mathcal{F}_\rm{TL}.
\end{align}
\end{Thm}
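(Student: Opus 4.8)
The plan is to trivialize the Gupta--Bleuler condition by the unitary transformation of Lemma \ref{g-trans} and then to read off the kernel by means of the vacuum characterization \eqref{vac-character}. Observe first that $e^{-iG}W$ is unitary, so $e^{-iG}W\mathcal{F}_{\rm{TL}}$ is a \emph{closed} subspace, whereas $V_\rm{phys}$ is by definition a closure; hence it suffices to understand the un-closed physical set $\mathcal{P}:=\{\Psi\in D'\mid \Omega^+(t,f)\Psi=0,\ \forall\,t\in\R,\ f\in D(T_\mathfrak{h}^{-1})\}$ and then to take closures. The key structural point is that in Lemma \ref{g-trans} only the $0$-component of $h=(h^\mu)$ is nonzero, so $c^\mu(h_\mu)=c^0(h_0)$ with $h_0=-i\sqrt{\omega}\,e^{i\omega t}\hat{f^*}$; thus conjugation by $We^{iG}$ turns $\Omega^+(t,f)$ into a pure annihilation operator in the $0$-polarization.

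First I would carry out the reduction. For $\Psi\in D'$ set $\Phi:=We^{iG}\Psi\in We^{iG}D'$, which is precisely the domain on which Lemma \ref{g-trans} is valid. By that lemma, $\Omega^+(t,f)\Psi=0$ for all $t,f$ if and only if $c^0(h_0^{(t,f)})\Phi=0$ for all $t,f$. Next I would check that, as $(t,f)$ ranges over $\R\times D(T_\mathfrak{h}^{-1})$, the symbols $h_0^{(t,f)}=-i\sqrt{\omega}\,e^{i\omega t}\hat{f^*}$ span a dense subspace $V\subset L^2(\R^3)$: since $f\mapsto\hat{f^*}$ has dense range and $\sqrt{\omega}>0$ almost everywhere, density already follows at $t=0$ by a routine orthogonality argument. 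As $f\mapsto c^0(f)$ is continuous and (conjugate-)linear, one has $\bigcap_{t,f}\ker c^0(h_0^{(t,f)})=\bigcap_{g\in V}\ker c^0(g)$, which equals $\mathcal{F}_{\rm{TL}}$ by \eqref{vac-character}. Therefore $\Psi\in\mathcal{P}$ if and only if $\Psi\in D'$ and $We^{iG}\Psi\in\mathcal{F}_{\rm{TL}}$; equivalently $\mathcal{P}=e^{-iG}W\mathcal{F}_{\rm{TL}}\cap D'$.

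The inclusion $V_\rm{phys}\subseteq e^{-iG}W\mathcal{F}_{\rm{TL}}$ is then immediate: every vector of $\mathcal{P}$ lies in the closed subspace $e^{-iG}W\mathcal{F}_{\rm{TL}}$, which therefore contains $\overline{\mathcal{P}}=V_\rm{phys}$. The reverse inclusion is the heart of the matter and amounts to the density statement $\overline{e^{-iG}W\mathcal{F}_{\rm{TL}}\cap D'}=e^{-iG}W\mathcal{F}_{\rm{TL}}$. A useful preliminary simplification is that $W=\Gamma_{\bb}(\overline{w})$ commutes with $N_\bb$ and, because the constant matrix $\overline{w}$ commutes with the scalar multiplication operator $\omega$, also with $H_0$; hence $WD'=D'$, and the problem reduces to controlling $e^{iG}$ on $D'$. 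Here the standing assumption $\hat{\chi_\rm{ph}}\in D(\omega^{-3/2})$ is essential, since it is exactly what makes $G$, and hence the Weyl (displacement) operator $e^{iG}$, a well-defined unitary, with $\sup_{\XX}\|g^{\xx^a}\|<\infty$.

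I expect this density/closure step to be the main obstacle. The delicate point is that $e^{iG}$ acts fiberwise as a displacement, so it carries finite-photon vectors into states of infinite photon number; consequently the naive attempt to fill $e^{-iG}W\mathcal{F}_{\rm{TL}}$ with images $e^{-iG}W\Phi$ of finite-photon $\Phi$ lands outside $D$, and one must instead exhibit a dense family in $\mathcal{F}_{\rm{TL}}$ whose $e^{-iG}W$-images genuinely lie in $D'$. The plan is to realize such approximants by controlled truncation of the coherent amplitudes, using the uniform bound on $\|g^{\xx^a}\|$ together with the estimates $\|N_\bb^{k/2}e^{iG}\Psi\|<\infty$ to keep the approximants in $D(H_0)\cap D(N_\bb^{k/2})$, while preserving membership in $\mathcal{F}_{\rm{TL}}$ (equivalently, the exact Gupta--Bleuler condition $c^0(g)\Phi=0$); once such a family is produced, the unitarity of $e^{-iG}W$ upgrades $\mathcal{F}_{\rm{TL}}$-density to density in $e^{-iG}W\mathcal{F}_{\rm{TL}}$, giving $e^{-iG}W\mathcal{F}_{\rm{TL}}\subseteq\overline{\mathcal{P}}=V_\rm{phys}$ and completing the proof. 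Reconciling the truncation with simultaneous membership in $\mathcal{F}_{\rm{TL}}$ and $D'$ is the one point on which the whole reverse inclusion rests, and it is where I would spend the bulk of the effort.
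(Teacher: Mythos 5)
Your reduction via Lemma \ref{g-trans}, the density of the symbols $h_0^{(t,f)}$, and the characterization \eqref{vac-character}, together with the resulting inclusion $V_\rm{phys}\subseteq e^{-iG}W\F_\rm{TL}$, reproduce the paper's own argument for that direction. The gap is in the reverse inclusion, and it is worse than an unfinished step: the density statement your plan rests on, $\overline{e^{-iG}W\F_\rm{TL}\cap D'}=e^{-iG}W\F_\rm{TL}$, is false, because (for $q\neq 0$ and $\chi_\rm{ph}\neq 0$) the set $\mathcal{P}:=\{\Psi\in D'\mid \Omega^+(t,f)\Psi=0,\ \forall t,f\}$ is exactly $\{0\}$, and by your own reduction $e^{-iG}W\F_\rm{TL}\cap D'=\mathcal{P}$. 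To see $\mathcal{P}=\{0\}$, note that \eqref{omega+} exhibits $\Omega^+(t,f)$ as a pure annihilation operator plus multiplication by the function
\begin{align}
s_{t,f}(\XX)=\frac{iq}{2}\sum_{a=1}^N\ip{\frac{\hat{\chi_\rm{ph}^{\xx^a}}}{\sqrt{\omega}}}{\frac{e^{-i\omega t}\hat{f}}{\sqrt{\omega}}}.
\end{align}
A vector $\Psi\in D'$ has bounded photon number; if $\Psi\neq 0$, let $n_0$ be the largest $n$ with $P_n\Psi\neq 0$, where $P_n$ is the projection onto the $n$-photon sector (it commutes with multiplication by functions of $\XX$). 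Since annihilation lowers the photon number and $P_{n_0+1}\Psi=0$, the $n_0$-photon component of the condition $\Omega^+(t,f)\Psi=0$ reads $s_{t,f}(\XX)\,(P_{n_0}\Psi)(\XX)=0$ a.e.\ $\XX$, for all $t,f$. But since $\hat{\chi_\rm{ph}}$ is real, $s_{0,f}(\XX)=\frac{iq}{2}\int d\kk\,\hat{\chi_\rm{ph}}(\kk)\bigl(\sum_a e^{i\kk\cdot\xx^a}\bigr)\hat{f}(\kk)/\omega(\kk)$, and its vanishing for all $f$ in a countable dense family forces $\hat{\chi_\rm{ph}}(\kk)\sum_a e^{i\kk\cdot\xx^a}=0$ for a.e.\ $\kk$, which is impossible: the exponential sum is analytic in $\kk$ and equals $N$ at $\kk=0$, while $\hat{\chi_\rm{ph}}\neq 0$ on a set of positive measure. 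Hence $P_{n_0}\Psi=0$, a contradiction. So the two requirements you want your approximants to satisfy --- membership in $\F_\rm{TL}$ after dressing, and finite photon number --- are compatible only for the zero vector, and no truncation of coherent amplitudes can produce the family you postulate.

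What this shows is that the obstruction you correctly sensed cannot be beaten inside $D'$: read literally, the definition of $V_\rm{phys}$ (Gupta--Bleuler condition imposed on $\Psi\in D'$) makes the set under the closure sign trivial, so the theorem could not hold as written; this is a defect of the formulation, not of your instinct. The intended reading, under which both the paper's proof and a corrected version of yours go through, is that $\Omega^+(t,f)$ is given by the explicit formula \eqref{omega+} on its natural domain, e.g.\ $D(N_\bb^{1/2})$, and the physicality condition is imposed there. The physical vectors are then the dressed vectors $e^{-iG}W\Phi$ with $\Phi\in\F_\rm{TL}$ of finite photon number: they have infinite photon number, hence lie outside $D'$, but they belong to $D(N_\bb^{1/2})$ (Weyl operators preserve $D(N_\bb^{k/2})$, uniformly in $\XX$ because $\|\hat{\chi_\rm{ph}^{\xx^a}}/\omega^{3/2}\|$ does not depend on $\XX$), and running the computation of Lemma \ref{g-trans} backwards on this larger domain gives $\Omega^+(t,f)\,e^{-iG}W\Phi=0$. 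Such vectors are dense in $e^{-iG}W\F_\rm{TL}$, which is the sense in which the paper's ``obvious'' reverse inclusion is meant; this extended use of $\Omega^+$ is visible in the paper itself, e.g.\ in Lemma \ref{well-def}, where $\Omega^+(s,f)$ is applied to vectors $e^{-iG}\Psi\notin D'$. The correct repair of your proof is therefore to extend the conjugation identity of Lemma \ref{g-trans} from $D'$ to the natural domain of $\Omega^+(t,f)$ (a CCR/displacement computation plus a closedness argument) and impose the Gupta--Bleuler condition there, rather than to hunt for approximants inside $D'$.
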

\begin{proof}
It is obvious that the right hand side is included by the left.
To prove the converse, take $\Psi\in D'$ for which $\Omega^+(t,f)\Psi$
vanishes for all $t\in\R$ and $f\in D(T_\mathfrak{h}^{-1})$.
Put $\Phi=We^{iG}\Psi$. Then, 
\begin{align}
c^\mu(h_\mu)\Phi=We^{iG}\Omega^+(t,f)\Psi=0
\end{align}
for all $t\in\R$ and $f\in D(T_\mathfrak{h}^{-1})$. By \eqref{vac-character}, such a $\Phi$ must be an element of $\mathcal{F}_\rm{TL}$. Hence,
$\Psi \in e^{-iG}W\mathcal{F}_\rm{TL}$.
\end{proof}

We investigate the properties of $\Vp$ in detail in order to reveal how the unphysical photon modes are eliminated by
the Gupta-Bleuler formulation. Here, we remark the basic fact 
\begin{align}\label{ip-wick}
&\inprod{c^\dagger_{\mu_1}(F_1^{\mu_1})\dots c^\dagger_{\mu_n}(F_n^{\mu_n})\Omega }{c^\dagger_{\nu_1}(G_1^{\nu_1})\dots c^\dagger_{\nu_m}(G_n^{\nu_m})\Omega}\no\\
&\hskip 4mm =\delta_{nm}\sum_{\sigma\in \mathfrak{S}_n} \inprod{F_1}{G_{\sigma(1)}}\dots\inprod{F_n}{G_{\sigma(n)}} \no\\
&\hskip 4mm=\delta_{nm}\sum_{\sigma\in \mathfrak{S}_n}\eta_{\mu_1\nu_{\sigma(1)}}\ip{F_1^{\mu_1}}{G_{\sigma(1)}^{\nu_{\sigma(1)}}}\dots \eta_{\mu_n\nu_{\sigma(n)}}\ip{F_n^{\mu_n}}{G_{\sigma(n)}^{\nu_{\sigma(n)}}}.
\end{align}
For a closed subspace $\V\subset \H_\rm{ph}$, we define $\mathcal{F}_\bb(\V)$ by
\begin{align}
\mathcal{F}_\bb(\V)&:=\mathcal{\overline{L}}(\{c^\dagger(F_1)\dots c^\dagger(F_n)\Omega, \Omega\,|\, F_1,\dots,F_n\in \V, n\in\N\}), \\
\mathcal{F}^+_\bb(\V)&:=\mathcal{\overline{L}}(\{c^\dagger(F_1)\dots c^\dagger(F_n)\Omega |\, F_1,\dots,F_n\in \V, n\in\N\}),
\end{align}
where $\mathcal{\overline{L}}(\{S\})$ denotes a closed subspace spanned by the vectors in a subset $S\subset\H_\rm{ph}$.
We also define for a direct decomposed subspace $\V=\V_1\op\V_2\subset\H_\rm{ph}$
\begin{align}
\mathcal{F}^+_\bb(\V;\V_1)&:=\mathcal{\overline{L}}(\{c^\dagger(F_1)\dots c^\dagger(F_n)\Omega |\, F_1,\dots,F_n\in \V, n\in\N,\text{ at least one $F_j$ belongs to } \V_1\}) \\
&=\mathcal{F}_\bb(\V)\cap \mathcal{F}_\bb(\V_2)^\perp.
\end{align}
Note that $\F_\bb^+(\V;\V)$ is naturally identified with the tensor product space
\begin{align}
\mathcal{F}^+_\bb(\V_1)\otimes \mathcal{F}_\bb(\V_2).
\end{align}
Let $w$ be a matrix $\eta \overline{w} \eta$ regarded as a bounded operator on $\H_\rm{ph}$:
\begin{align}
w&=(w^\mu_{\;\;\nu})\no\\
&=(\eta^{\mu\nu}w_\nu^{\;\;\rho}\eta_{\rho\nu})=\begin{pmatrix}
-1/\sqrt{2} & 0 & 0& 1/\sqrt{2} \\
 0&1&0&0 \\
 0&0&1&0\\
 1/\sqrt{2}&0&0&1/\sqrt{2}
\end{pmatrix} : \mathcal{H}_\rm{ph}\to \mathcal{H}_\rm{ph}.
\end{align}
Then, $W\F_\rm{TL}$ can be written as
\begin{align}\label{TL}
W\F_\rm{TL}=\H_\rm{D}(N)\ot\F_\bb(w\H_\rm{TL}),
\end{align}
where 
\begin{align}
\H_\rm{TL}:=\{F\in\H_\rm{ph}\,|\, F^0=0\}, 
\end{align}
and
\begin{align}
w\H_\rm{TL}:=\{wF\,|\, F\in\H_{\rm{TL}}\}=\{F\in\H_\rm{ph}\,|\,F^0=F^3\}.
\end{align}
We also define
\begin{align}
\H_\rm{L}&:=\{F\in\H_\rm{TL}\,|\, F^1=F^2=0\}, \\
\H_\rm{T}&:=\{F\in\H_\rm{TL}\,|\, F^3=0\}.
\end{align}
Then, $w\H_\rm{LT}$ is furnished with the direct sum decomposition
\begin{align}\label{direct-sum}
w\H_\rm{TL}=w\H_\rm{L}\op w\H_\rm{T},
\end{align}
with
\begin{align}
w\H_\rm{L}&=\{F\in\H_\rm{ph}\,|\,F^0=F^3, F^1=F^2=0\}, \\
w\H_\rm{T}&=\{F\in\H_\rm{ph}\,|\,F^0=F^3=0\}.
\end{align}
We remark here that if we write the direct sum decomposition of $F,G\in w\H_\rm{TL}$ with respect to \eqref{direct-sum}
by $F=F_T+ F_L$ and $G=G_T+ G_L$, then we find
\begin{align}\label{oth-1}
\inprod{F}{G}=\inprod{F_T}{G_T}=\ip{F_T}{G_T},
\end{align}
since for arbitrary $F_T\in w\H_\rm{T}$ and $G_L\in w\H_\rm{L}$,
\begin{align}\label{oth-2}
\inprod{F_T}{G_L}=0.
\end{align}
\begin{Lem}\label{positive-lemma} The following statements hold:
\begin{enumerate}[(i)]
\item The physical subspace $\Vp$ is non-negative. That is, for all $\Psi\in\Vp$, $\inprod{\Psi}{\Psi}\ge 0$.
\item For $\Psi\in\Vp$, $\inprod{\Psi}{\Psi}=0$ if and only if $\Psi$ belongs to
the closed subspace 
\begin{align}
\mathcal{N} := e^{-iG}(\H_\rm{D}(N)\ot\mathcal{F}^+_\bb(w\H_\rm{TL};w\H_\rm{L})).
\end{align}
\end{enumerate}
\end{Lem}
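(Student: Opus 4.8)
The plan is to exploit the two unitary conjugations already in hand and reduce both assertions to a computation of the indefinite metric $\inprod{\cdot}{\cdot}$ on the explicitly described space $W\F_\rm{TL}=\H_\rm{D}(N)\ot\F_\bb(w\H_\rm{TL})$ of \eqref{TL}. First I would record that $e^{-iG}$ preserves the indefinite metric: since $G$ in \eqref{G} is built symmetrically out of a creation operator and the matching annihilation operator, it is $\eta$-self-adjoint as well as self-adjoint, so that $e^{iG}\eta e^{-iG}=\eta$ and hence $\inprod{e^{-iG}\Phi}{e^{-iG}\Phi}=\inprod{\Phi}{\Phi}$ for all $\Phi\in W\F_\rm{TL}$. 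Because $\Vp=e^{-iG}W\F_\rm{TL}$, every statement about $\Vp$ then follows from the corresponding statement about $W\F_\rm{TL}$, the null subspace being carried by $e^{-iG}$ onto exactly $\mathcal{N}$. Moreover $\eta$ acts as the identity on the Dirac factor $\H_\rm{D}(N)$, so the sign of the metric is governed entirely by the photon Fock factor $\F_\bb(w\H_\rm{TL})$, and I may set $\H_\rm{D}(N)$ aside until tensoring it back at the very end.

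Next I would analyze the one-particle space. Using the decomposition \eqref{direct-sum}, $w\H_\rm{TL}=w\H_\rm{L}\op w\H_\rm{T}$, together with \eqref{oth-1} and \eqref{oth-2}, one sees that $w\H_\rm{T}$ carries a positive definite metric coinciding with the ordinary inner product (for $F\in w\H_\rm{T}$ one has $F^0=F^3=0$, hence $\inprod{F}{F}=|F^1|^2+|F^2|^2$), whereas $w\H_\rm{L}$ is the radical of the form: for $F\in w\H_\rm{L}$ and any $G\in w\H_\rm{TL}$, relation \eqref{oth-1} gives $\inprod{F}{G}=\ip{F_T}{G_T}=0$ because $F_T=0$. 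Thus every longitudinal one-photon mode is $\inprod{\cdot}{\cdot}$-orthogonal to all of $w\H_\rm{TL}$.

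I would then lift this to the Fock space by the Wick formula \eqref{ip-wick}. A generating vector of $\F^+_\bb(w\H_\rm{TL};w\H_\rm{L})$ carries at least one creation operator $c^\dagger(F)$ with $F\in w\H_\rm{L}$; in every summand of \eqref{ip-wick} this distinguished mode is contracted against some one-particle vector through $\inprod{F}{\cdot}$, which vanishes by the radical property just established. Hence all inner products having such a vector on either side are zero, so $\F^+_\bb(w\H_\rm{TL};w\H_\rm{L})$ is $\inprod{\cdot}{\cdot}$-null and $\inprod{\cdot}{\cdot}$-orthogonal to its ordinary-orthogonal complement $\F_\bb(w\H_\rm{T})$, on which, by the previous paragraph and \eqref{ip-wick}, the metric is positive definite and equals the ordinary one. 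Since $\inprod{\cdot}{\cdot}=\ip{\cdot}{\eta\,\cdot}$ is continuous ($\eta$ is bounded), these statements pass to the closed spans. Writing a general $\Phi\in\F_\bb(w\H_\rm{TL})$ as $\Phi=\Phi_T+\Phi_N$ with $\Phi_T\in\F_\bb(w\H_\rm{T})$ and $\Phi_N\in\F^+_\bb(w\H_\rm{TL};w\H_\rm{L})$, the two cross terms and the $\Phi_N$-diagonal term drop out, so $\inprod{\Phi}{\Phi}=\ip{\Phi_T}{\Phi_T}=\norm{\Phi_T}^2\ge 0$, with equality iff $\Phi_T=0$, i.e. iff $\Phi\in\F^+_\bb(w\H_\rm{TL};w\H_\rm{L})$. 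Tensoring with $\H_\rm{D}(N)$ and transporting by the $\eta$-unitary $e^{-iG}$ then yields (i) and identifies the null vectors of $\Vp$ precisely with $\mathcal{N}$, which is (ii).

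The main obstacle I anticipate is the bookkeeping in the Fock-space step: one must argue cleanly that \emph{every} vector of $\F^+_\bb(w\H_\rm{TL};w\H_\rm{L})$, not merely a single product state, is null. I would handle this by expanding each creation argument into its $w\H_\rm{L}$ and $w\H_\rm{T}$ parts by multilinearity, invoking the radical property for the surviving longitudinal factor, and then using continuity of $\inprod{\cdot}{\cdot}$ to pass from the algebraic span to its closure. A secondary point needing care is the verification that $e^{-iG}$ is genuinely $\eta$-unitary, i.e. that $G$ is $\eta$-self-adjoint; this is where the symmetric ``$c+c^\dagger$'' structure of \eqref{G} and the reality of $\hat{\chi_\rm{ph}}$ enter.
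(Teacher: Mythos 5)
Your proof is correct and takes essentially the same route as the paper's: both rest on the $\eta$-unitarity of $e^{iG}$ (via $[G,\eta]=0$), the Wick formula \eqref{ip-wick}, and the orthogonality relations \eqref{oth-1}--\eqref{oth-2} to show that the indefinite metric on $W\F_\rm{TL}$ collapses to the ordinary norm of the transverse component, with the longitudinal modes lying in the radical. The only difference is presentational: the paper expands an arbitrary vector in explicit orthonormal bases of $w\H_\rm{L}$ and $w\H_\rm{T}$ and tracks the coefficients $\alpha_{j,k}$, whereas you organize the identical computation basis-free through the radical property of $w\H_\rm{L}$ and the decomposition $\F_\bb(w\H_\rm{TL})=\F_\bb(w\H_\rm{T})\oplus\F^+_\bb(w\H_\rm{TL};w\H_\rm{L})$.
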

\begin{proof}
First of all, we remark that $e^{iG}$ is $\eta$-unitary so that 
\begin{align}\label{inprod-inv}
\inprod{e^{-iG}\Psi}{e^{-iG}\Psi}=\inprod{\Psi}{\Psi},
\end{align} 
since it commutes with $\eta$.
Let $\{e_n\}_{n\in\N}$ and $\{f_m\}_{m\in\N}$ be complete orthonormal systems on $w\H_\rm{L}$ and 
$w\H_\rm{T}$, respectively. 
Take arbitrary $\Psi\in \F_\bb(w\H_\rm{TL})$. Then, $\Psi$ can be written as
\begin{align}
\Psi = \sum_{j=0}^\infty\sum_{k=0}^\infty\sum_{n_1,\dots,n_j\in\N}\sum_{m_1,\dots,m_k\in\N}\alpha_{j,k}(n_1,\dots,n_j ; m_1,\dots, m_k) c^\dagger(e_{n_1})\dots c^\dagger(e_{n_j})
c^\dagger(f_{m_1})\dots c^\dagger(f_{m_k})\Omega,
\end{align}
with coefficient mappings $\alpha_{j,k}$ 
\begin{align}
\alpha_{j,k}: \N^j\times \N^k \to \C,\q j,k \ge 0
\end{align}
satisfying 
\begin{align}
\ip{\Psi}{\Psi}=\sum_{j,k=0}^\infty\sum_{n_1,\dots,n_j}\sum_{m_1,\dots,m_k}C_{j,k}(n_1,\dots,n_j;m_1,\dots,m_k)|\alpha_{j,k}(n_1,\dots,n_j ; m_1,\dots, m_k)|^2 < \infty.
\end{align}
Here, 
\begin{align}
C_{j,k}:\N^j\times \N^k \to \C,\q j,k \ge 0
\end{align}
 is defined as follows. Let $\{n_1,\dots,n_j\}=\{\nu_1,\dots,\nu_p\}$ and $\{m_1,\dots,m_k\}=\{\mu_1,\dots,\mu_q\}$
with $\nu$'s and $\mu$'s are mutually different. We denote by $N_l$ ($l=1,2,\dots,p$) the number of $\nu_l$'s in $\{n_1,\dots, n_p\}$
and denote by $M_{l'}$ ($l'=1,2,\dots,q$) the number of $\mu_{l'}$'s in $\{m_1,\dots, m_p\}$, that is,
\begin{align}
N_l&:=\sharp\{ i\,|\, \nu_l=n_i \} \\
M_{l'}&:=\sharp\{ i\,|\, \mu_{l'}=m_i \}.
\end{align}
Then, 
\begin{align}
C_{j,k}(n_1,\dots,n_j;m_1,\dots,m_k):=N_1!\dots N_p! M_1!\dots M_q!.
\end{align}
By \eqref{ip-wick} and \eqref{oth-2} one obtains
\begin{align}\label{inprod}
\inprod{\Psi}{\Psi}=\sum_{k=0}^\infty\sum_{m_1,\dots,m_k}C_{0,k}(m_1,\dots,m_k)|\alpha_{0,k}( m_1,\dots, m_k)|^2.
\end{align}

From \eqref{inprod-inv} and \eqref{inprod}, the assertion (i) is obvious. 
The assertion (ii) also immediately follows from the fact that $\Psi\in \F_\bb^+(w \H_\rm{TL}; w\H_\rm{L})$ if and only if
$\alpha_{0,k}=0$ for all $k\ge 0$.
 \if0
Next, for the direct sum decomposition 
\begin{align}
w\H_\rm{TL}=w \H_\rm{L}\op w\H_\rm{T},
\end{align}
we denote $F=F_T\op F_L$. Then, $\inprod{F}{G}=\inprod{F_T}{G_T}$ and thus
it is clear that, for $F\in w\H_\rm{TL}$ and $G\in w\H_\rm{L}$, $\inprod{F}{G}=0$.

Therefore, the statement (i) is obvious from \eqref{ip-wick} and \eqref{TL} and the fact that 
$w\H_\rm{TL}$ is non-negative, that is, for all $F\in w\H_\rm{TL}$, $\inprod{F}{F}=\eta_{\mu\nu}\ip{F^\mu}{F^\nu}\ge0$.

We prove (ii). For the direct sum decomposition 
\begin{align}
w\H_\rm{TL}=w \H_\rm{L}\op w\H_\rm{T},
\end{align}
we denote $F=F_T\op F_L$. Then, $\inprod{F}{G}=\inprod{F_T}{G_T}$ and thus
it is clear that, for $F\in w\H_\rm{TL}$ and $G\in w\H_\rm{L}$, $\inprod{F}{G}=0$.
From this observation and \eqref{ip-wick}, the ``if" part is obvious.

 In order to prove the ``only if" part, take arbitrary $\Psi \in \Vp$. There is $\Phi\in\F_\bb(w\H_\rm{TL})$ with $\Psi=e^{-iG}\Phi$.
 By the remark given at the beginning of the proof, it suffices to show $\inprod{\Phi}{\Phi}=0$ implies $\Phi\in \mathcal{F}^+_\bb(w\H_\rm{TL};w\H_\rm{L})$.
 The inner product $\inprod{\Phi}{\Phi}$ is given by an infinite summation of non-negative terms of the form
 \begin{align}\label{non-zero}
 \inprod{c^\dagger(F_1)\dots c^\dagger(F_n)\Omega}{c^\dagger(F_1)\dots c^\dagger(F_n)\Omega}, \q F_1,\dots F_n\in w\H_\rm{TL}.	
 \end{align}
 Therefore, from the formula \eqref{ip-wick}, one finds that for each $n\ge1$,
 \begin{align}\label{vanish}
\inprod{F_1}{F_{\sigma(1)}}\dots\inprod{F_n}{F_{\sigma(n)}}=0,
 \end{align}
 for all $\sigma\in\mathfrak{S}_n$ and $F_1,\dots F_n\in w\H_\rm{TL}$.
  Consider the case where $\sigma$ is identity. The condition \eqref{vanish} means
 $\inprod{F_j}{F_j}=\inprod{F_{jT}}{F_{jT}}=0$ for some $j$. But $\inprod{\cdot}{\cdot}$ is
 certainly positive definite if restricted to the subspace $w\H_\rm{T}$, because for all $G\in w\H_\rm{T}$,
 $\eta_{\mu\nu}\inprod{G^\mu}{G^\nu} = (G^1)^2+(G^2)^2$. Hence, $F_{jT}=0$, and $F_j\in w\H_\rm{L}$ for some $j$.
This implies $\Phi\in \mathcal{F}^+_\bb(w\H_\rm{TL};w\H_\rm{L})$ completing the proof.
\fi
 \end{proof}
 
By Lemma \ref{positive-lemma}, the quotient vector space $\Vp/\mathcal{N}$ becomes a pre-Hilbert space with 
respect to the naturally induced metric from $\eta$, and its completion
 \begin{align}
 \H_\rm{phys}:=\overline{\Vp/\mathcal{N}}
 \end{align}
is a Hilbert space with the induced metric, which we also denote by $\inprod{\cdot}{\cdot}$. This Hilbert space $\H_\rm{phys}$,
which we call physical Hilbert space, consists of all 
the physical state vectors $\Psi$ satisfying $\inprod{\Psi}{\Psi}\ge 0$.  
Let $\V$ be an orthogonal complement
to $\mathcal{N}$ in $\V_\rm{phys}$:
\begin{align}
\V_\rm{phys}=\V\oplus \mathcal{N},
\end{align}
and $P$ be an orthgonal projection onto the subspace $\V$.
Then, via unitary transformation $U:\V\to \H_\rm{phys}$,
\begin{align}\label{U}
U:\Psi \mapsto [\Psi],\q U^{-1}:[\Psi]\mapsto P\Psi, 
\end{align}
$(\V,\ip{\cdot}{\cdot})$ is identified with $(\H_\rm{phys},\inprod{\cdot}{\cdot})$ ($[\Psi]$ denotes the equivalent class to which $\Psi$ belongs.). 
We remark that $\V$ is equal to $e^{-iG}\H_\rm{D}(N)\ot\F_\bb(wH_\rm{T})$ by Lemma \ref{positive-lemma} and
that $\eta$ is identity on $\V$, namely, $\inprod{\Psi}{\Phi}=\ip{\Psi}{\Phi}$ for all $\Psi,\Phi\in\V$.

For a bounded operator $A$ in $\F_\rm{DM}$ satisfying $AV_\rm{phys}\subset V_\rm{phys}$
and $A\mathcal{N}\subset\mathcal{N}$, the bounded operator $\tilde{A}$ in $\H_\rm{phys}$ is 
defined by the relation
\begin{align}
\tilde{A}[\Psi]:=[A\Psi],\q\Psi \in V_\rm{phys}.
\end{align}
In this case, the operator $\tilde{A}$ is identified with $PAP$ by the unitary transformation $U$ given by \eqref{U}.
On the other hand, if $A$ is unbounded, the situation may become complicated since we have to be careful to deal with the operator domain. 
It may happen, in general, that even if $D(A)$ is dense in $\F_\rm{DM}$, $PD(A)=\{0\}$. 
Thus, it is not a trivial problem to define the physical Hamiltonian in a suitable manner.
We define the physical Hamiltonian by the generator of the time evolution which is naturally induced by $W(t)$.

\begin{Lem}\label{group-prop}
\begin{enumerate}[(i)]
Let $\psi\in D$ and $s,t,\lambda\in\R$, and $B\in\mathcal{C}_0$. Then,
\item $e^{i\lambda G}\psi\in D(W(s)W(t))$ and
\begin{align}
W(s)W(t)e^{i\lambda G}\psi=W(s+t)e^{i\lambda G}\psi.
\end{align}
\item $e^{i\lambda G}\psi\in D(W(-t)B(s)W(t))$ and
\begin{align}
W(-t)B(s)W(t)e^{i\lambda G}\psi = B(s+t)e^{i\lambda G}\psi.
\end{align}
\end{enumerate}
\end{Lem}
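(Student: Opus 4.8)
The plan is to reduce both assertions to a single statement, the cocycle (group) law $W(s)W(t)\chi=W(s+t)\chi$, holding on a class of vectors large enough to contain $e^{i\lambda G}\psi$ and stable enough to iterate. For (ii) I would expand the composition, $W(-t)B(s)W(t)=W(-t)W(-s)\,B\,W(s)W(t)$, use the group law on $\phi:=e^{i\lambda G}\psi$ to collapse $W(s)W(t)\phi=W(s+t)\phi$, and then invoke it once more, now with times $-t,-s$, on the vector $\Xi:=BW(s+t)\phi$, giving $W(-t)W(-s)\Xi=W(-(s+t))\Xi=B(s+t)\phi$. Since $B\in\mathcal{C}_0$ is $N_\bb^{1/2}$-bounded and raises the photon number by at most $b$, the vector $\Xi$ stays in the same class as $W(s+t)\phi$, so once the group law is available on that class both parts follow. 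Thus the entire difficulty is concentrated in (i).

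For (i) the group law and the intertwining relation $e^{isH_0}\overline{U(t,t')}e^{-isH_0}=\overline{U(t+s,t'+s)}$, together with the cocycle identity $\overline{U(s+t,t)}\,U(t,0)=U(s+t,0)$, are already available on the dense domain $D$ from the construction in \cite{FutakuchiUsui2013}; combining them yields $W(s)W(t)\xi=W(s+t)\xi$ for every $\xi\in D$. The new input is that $\phi=e^{i\lambda G}\psi\notin D$, since $G$ is a field operator and $e^{i\lambda G}$ spreads $\psi$ over every photon sector. The decisive observation is that the form factor $i\hat{\chi_\rm{ph}}/\omega^{3/2}$ of $G$ is square integrable --- precisely the hypothesis under which $G$ was defined --- so that $e^{i\lambda G}$ is a Weyl operator and $\phi$ is a coherent-type vector whose photon-number components decay faster than any exponential, $\norm{P_L\phi}\le C\rho^{L}/\sqrt{L!}$ for suitable $C,\rho>0$, where $P_L$ projects onto the $L$-photon sector.

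With this decay I would argue as follows. Inserting the sectorwise Dyson estimate \eqref{basic-estimate} and using $\prod_{j=0}^{n-1}(L+jb+1)^{1/2}\le (L+nb+1)^{n/2}$ gives
\begin{align}
\sum_{n,L}\norm{U_n(t,0)P_L\phi}\le \sum_{n,L}\frac{C^n}{n!}|t|^n\,(L+nb+1)^{n/2}\,\norm{P_L\phi},
\end{align}
and the factorial $n!$ dominates $(L+nb+1)^{n/2}$, so the sum over $n$ converges and grows at most exponentially in $L$, which is more than compensated by the $1/\sqrt{L!}$ decay of $\norm{P_L\phi}$. Therefore the double series converges, the truncations $\phi_M:=E_{N_\bb}([0,M])\phi\in V_M\subset D$ obey $U(t,0)\phi_M\to\sum_{n,L}U_n(t,0)P_L\phi$, and $\phi$ lies in $D(\overline{U(t,0)})$ with $\phi_M\to\phi$ in the graph norm of $\overline{U(t,0)}$ (and likewise at time $s+t$), so that $W(t)\phi_M\to W(t)\phi$ and $W(s+t)\phi_M\to W(s+t)\phi$. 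Each $\phi_M\in D$ satisfies $W(s)W(t)\phi_M=W(s+t)\phi_M$; since $W(s)$ is closed, letting $M\to\infty$ gives $W(t)\phi\in D(W(s))$ and $W(s)W(t)\phi=W(s+t)\phi$. The same estimate, applied to $\Xi=BW(s+t)\phi$ (whose components inherit the rapid decay because $W(s+t)$ and $B\in\mathcal{C}_0$ move the photon number only by bounded amounts), extends the group law to $\Xi$ and closes part (ii).

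The main obstacle is exactly this convergence step. The estimate \eqref{basic-estimate} is designed for a vector sitting in one subspace $V_L$, and for a generic element of $\bigcap_k D(N_\bb^{k/2})$ the summation over sectors diverges; what rescues the argument is the Gaussian-type decay $\norm{P_L\phi}\sim 1/\sqrt{L!}$ forced by the Weyl structure of $e^{i\lambda G}$. This is the only place where square integrability of $i\hat{\chi_\rm{ph}}/\omega^{3/2}$ is genuinely used, and it is what singles out $e^{i\lambda G}\psi$ --- rather than an arbitrary smooth vector --- as the natural class on which the cocycle law survives. Everything else (the reduction of (ii) to (i), the graph-norm limit through the closed operators, and the domain bookkeeping for $B\in\mathcal{C}_0$) is routine once this estimate is in hand.
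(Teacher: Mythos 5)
Your part (i) is correct, and its quantitative core is the same as the paper's: the $1/n!$ factors beat the square-root-factorial growth $(L+(n-1)b+1)^{1/2}\cdots(L+1)^{1/2}$ in the estimate \eqref{basic-estimate}, which is exactly the Gaussian sector decay of $e^{i\lambda G}\psi$ in disguise. The organization differs, though: the paper proves (i) in one shot, expanding $W(s)W(t)e^{i\lambda G}\psi$ as an absolutely convergent \emph{triple} series (two Dyson indices plus the power-series index of $e^{i\lambda G}$), then resumming via $\sum_{l+m=N}U_l(s+t,t)U_m(t,0)=U_N(s+t,0)$ and invoking closedness of the three factors; you instead import the group law on $D$ from \cite{FutakuchiUsui2013} and extend it to $e^{i\lambda G}\psi$ by sector truncation plus closedness of $W(s)$. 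Both are legitimate; yours isolates a reusable principle (the group law survives on any vector whose $L$-photon components decay like $\rho^L/\sqrt{L!}$), while the paper's is more self-contained. For (ii) the paper simply repeats the series argument (it says ``similar'' and omits it), whereas you reduce (ii) to two applications of the extended group law --- a genuinely different route, and exactly where your argument has a gap.

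The gap: to apply your extended group law to $\Xi=BW(s+t)\phi$ you need $\Xi$ to have Gaussian sector decay, and you justify this by asserting that ``$W(s+t)$ and $B\in\mathcal{C}_0$ move the photon number only by bounded amounts.'' For $B$ this is true; for $W(s+t)$ it is false. Its Dyson expansion $\sum_n e^{-i(s+t)H_0}U_n(s+t,0)$ contains, for every $n$, a term shifting the photon number by up to $nb$, so $W(s+t)$ maps no $V_L$ into any $V_{L'}$ and preserves no bounded-photon-number class. The decay you need is nevertheless true, but it requires one more estimate of the same double-series type as your step 3: writing $P_{L'}W(s+t)\phi=\sum_{L}\sum_{n\ge |L'-L|/b}P_{L'}e^{-i(s+t)H_0}U_n(s+t,0)P_L\phi$, the contributions with $L$ near $L'$ are small because $\norm{P_L\phi}\lesssim\rho^L/\sqrt{L!}$, and the contributions with $L$ far below $L'$ are small because they force $n$ large and carry $1/n!$; combining the two regimes yields $\norm{P_{L'}W(s+t)\phi}\lesssim\tilde{\rho}^{L'}/\sqrt{L'!}$, after which the $N_\bb^{1/2}$-boundedness and bounded sector shift of $B$ transfer the decay to $\Xi$. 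Either carry out this propagation estimate explicitly, or prove (ii) the way the paper implicitly does: expand $W(-t)W(-s)BW(s)W(t)e^{i\lambda G}\psi$ directly into a single absolutely convergent multiple series and resum, exactly as in (i).
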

\begin{proof}
\begin{enumerate}[(i)]
\item First, the estimate for some $C>0,b>0$ and $L_\psi > 0$
\begin{align}
&\norm{\sum_{l,m,n}e^{-isH_0}U_l(s,0)e^{-itH_0}U_m(t,0)\frac{(i\lambda)^n}{n!}G^n\psi }\no\\
\le &\sum_{l,n,m}\frac{|\lambda|^n }{n!} \norm{U_l(s+t,t)U_m(t,0)G^n\psi} \no\\
\le &\sum_{l,m,n}\frac{|\lambda|^n }{n!} \frac{|s|^m}{m!}\frac{|t|^n}{n!}C^{m+n} (L_\psi+b(l+n+m-3)+1)^{1/2}\cdots(L_\psi+1)^{1/2}\norm{\psi}\no\\
\le &\sum_{N=0}^\infty \frac{1}{N!}(|\lambda|+C|s|+C|t|)^N(L_\psi+b(N-3)+1)^{1/2}\cdots(L_\psi+1)^{1/2}\norm{\psi},
\end{align} 
shows that the series 
\begin{align}
\sum_{l,m,n}e^{-isH_0}U_l(s,0)e^{-itH_0}U_m(t,0)\frac{(i\lambda)^n}{n!}G^n\psi
\end{align}
is absolutely convergent and thus defines a vector in $\mathcal{F}_\rm{DM}(V,N)$.
On the other hand, one similarly sees that the series
\begin{align}
&\sum_{n}\frac{(i\lambda)^n }{n!} G^n\psi, \\
&\sum_{m,n}e^{-itH_0}U_m(t,0)\frac{(i\lambda)^n}{n!}G^n\psi 
\end{align}
are also absolutely convergent. Since the operators $e^{i\lambda G}$, $W(s)$ and $W(t)$
are closed operators, this implies that $\psi\in D(W(s)W(t)e^{i\lambda G})$ and 
\begin{align}\label{expansion-of-LHS}
W(s)W(t)e^{i\lambda G}=\sum_{l,m,n}e^{-isH_0}U_l(s,0)e^{-itH_0}U_m(t,0)\frac{(i\lambda)^n}{n!}G^n\psi.
\end{align}
But the right hand side of \eqref{expansion-of-LHS} is equal to
\begin{align}
\sum_{N,n}e^{-i(s+t)H_0}U_N(s+t,0)\frac{(i\lambda)^n}{n!}G^n\psi = W(s+t)e^{i\lambda G\psi},
\end{align}
since the equality
\begin{align}\label{sum-rule}
\sum_{l+m=N}U_l(s+t,t)U_m(t,0)=U_N(s+t,0)
\end{align}
hols on $D$.
\item The proof is so similar to that of (i) that we omit it.
\end{enumerate}
\end{proof}

\begin{Lem}
For all $\psi \in e^{i\lambda G}D$ ($\lambda\in\R$) and $t\in\R$, 
\begin{align}\label{unitarity}
\inprod{W(t)\psi}{W(t)\psi}=\inprod{\psi}{\psi}.
\end{align}
\end{Lem}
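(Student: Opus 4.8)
The plan is to prove the invariance of the physical inner product by differentiating it along the orbit $t\mapsto W(t)\xi$ on the regular core $D'$, and then to propagate the resulting identity out to $e^{i\lambda G}D$ by two continuity arguments. Two structural facts will be used throughout. First, because $\eta$ is bounded, the physical inner product $\inprod{\cdot}{\cdot}=\langle\cdot,\eta\cdot\rangle$ satisfies $|\inprod{a}{b}|\le\norm{a}\,\norm{b}$ and is therefore jointly continuous in the Hilbert norm; moreover $\eta$ commutes with $H_0$, so $e^{-itH_0}$ preserves $\inprod{\cdot}{\cdot}$. Second, since $H_0$ is $\eta$-self-adjoint and $H_\rm{DM}(V,N)=H_0+H_1$ is $\eta$-symmetric by Theorem \ref{ess-s.a.}, the interaction $H_1$ is itself $\eta$-symmetric, and hence $\inprod{H\Phi}{\Psi}=\inprod{\Phi}{H\Psi}$ for all $\Phi,\Psi\in D(H)$.

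For the seed case $\xi\in D'$ I would invoke Proposition \ref{sch-existence}: the map $t\mapsto W(t)\xi$ is strongly differentiable with $\frac{d}{dt}W(t)\xi=-iHW(t)\xi$, and $W(t)\xi\in D(N_\bb^{1/2})\cap D(H_0)$. As $H_1$ is $N_\bb^{1/2}$-bounded this gives $W(t)\xi\in D(H)$, and since $\eta$ commutes with both $N_\bb$ and $H_0$ the vector $\eta W(t)\xi$ lies in the same domain. Then
\begin{align}
\frac{d}{dt}\inprod{W(t)\xi}{W(t)\xi}=i\inprod{HW(t)\xi}{W(t)\xi}-i\inprod{W(t)\xi}{HW(t)\xi}=0
\end{align}
by the $\eta$-symmetry of $H$, so $\inprod{W(t)\xi}{W(t)\xi}=\inprod{\xi}{\xi}$ for every $\xi\in D'$.

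The extension proceeds in two steps. The estimate \eqref{basic-estimate} shows that $W(t)\upharpoonright V_L$ is bounded for each fixed $L$; since $E_{H_0}([-n,n])$ commutes with $E_{N_\bb}$, the vectors $\xi_n:=E_{H_0}([-n,n])\chi$ lie in $V_L\cap D(H_0)\subset D'$ and converge to any given $\chi\in V_L$, so boundedness of $W(t)$ on $V_L$ together with continuity of $\inprod{\cdot}{\cdot}$ upgrades the seed identity to all of $D=\bigcup_L V_L$. For $\psi=e^{i\lambda G}\phi$ with $\phi\in V_L$, I would use the truncations $\phi^{(M)}:=\sum_{m=0}^M\frac{(i\lambda)^m}{m!}G^m\phi$, which lie in $D$ because $G$ changes the photon number by at most one, so $G^m\phi\in V_{L+m}$. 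These converge to $\psi$, and the double-series estimate established in the proof of Lemma \ref{group-prop} guarantees $W(t)\phi^{(M)}\to W(t)\psi$. Applying the identity on $D$ to each $\phi^{(M)}$ and letting $M\to\infty$, continuity of $\inprod{\cdot}{\cdot}$ yields \eqref{unitarity}.

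The hard part will be the bookkeeping at the two ends. In the seed step one must verify carefully that $W(t)\xi$ and $\eta W(t)\xi$ both lie in $D(H)$, so that the $\eta$-symmetry of $H$ may legitimately be applied inside the derivative. In the last step one cannot invoke boundedness of $W(t)$, which fails on $e^{i\lambda G}D$ since the operator bound from \eqref{basic-estimate} on $V_{L+M}$ grows with $M$; the convergence $W(t)\phi^{(M)}\to W(t)\psi$ must instead be read off directly from the absolutely convergent expansion of $W(t)e^{i\lambda G}\phi$ used in Lemma \ref{group-prop}.
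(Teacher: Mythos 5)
Your proof is correct, but it takes a genuinely different route from the paper's. The paper proves \eqref{unitarity} by a single algebraic computation carried out directly on $e^{i\lambda G}D$: it expands $W(t)e^{i\lambda G}\phi$ in the absolutely convergent double series of Lemma \ref{group-prop}, moves every Dyson term from the left slot of $\inprod{\cdot}{\cdot}$ to the right slot using the termwise adjoint identity $U_n(s,t)^\dagger=U_n(t,s)$ (together with the fact that $\eta$ commutes with $e^{-itH_0}$), and then collapses the resulting sum $\sum_{n+m=N}U_n(0,t)U_m(t,0)$ to $U_N(0,0)=\delta_{N0}I$ by the composition rule \eqref{sum-rule}; in other words, unitarity with respect to the indefinite metric is verified order by order in the Dyson expansion. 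You instead derive the identity as a conservation law: on the seed set $D'$ you differentiate $t\mapsto\inprod{W(t)\xi}{W(t)\xi}$ via Proposition \ref{sch-existence} and kill the derivative with the $\eta$-symmetry of $H$, and you then propagate the identity first to $D$ (boundedness of $W(t)$ on each $V_L$ from \eqref{basic-estimate} plus density of $V_L\cap D(H_0)$ in $V_L$, which rests on the strong commutativity of $H_0$ and $N_\bb$) and finally to $e^{i\lambda G}D$ by truncating the exponential series and reading off $W(t)\phi^{(M)}\to W(t)\psi$ from the same absolutely convergent double series. Your version is more conceptual --- it exhibits conservation of the physical metric as a direct consequence of the $\eta$-symmetry of the generator, and it never needs the bra-side identities $U_n(s,t)^\dagger=U_n(t,s)$ and \eqref{sum-rule} --- at the price of two layers of density/continuity bookkeeping and a dependence on Proposition \ref{sch-existence}; the paper's computation is shorter and reaches the target domain in one shot, but buries the symmetry input inside the ``easily checked'' termwise adjoint relation. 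The two danger points you flag (membership $W(t)\xi\in D(H)$ in the seed step, and the failure of uniform boundedness of $W(t)$ on $e^{i\lambda G}D$ in the last step) are exactly the right ones, and your resolutions of both --- $N_\bb^{1/2}$-boundedness of $H_1$ combined with $W(t)\xi\in D(N_\bb^{1/2})\cap D(H_0)$, and tail estimates of the double series rather than an operator bound --- are sound.
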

\begin{proof}
By the easily checked equality
\begin{align}
U_n(s,t)^\dagger=U_n(t,s),\q n=0,1,2,\dots,
\end{align}
on $D$, and \eqref{sum-rule}, we have for $\psi\in e^{i\lambda G} D$ with $\phi\in D$,
\begin{align}
\inprod{W(t)\psi}{W(t)\psi}&=\inprod{W(t)e^{i\lambda G}\phi}{W(t)e^{i\lambda G}\phi}\no\\
&=\sum_{k,l,n,m}\frac{(i\lambda )^k}{k!}\frac{(i\lambda )^l}{l!}\inprod{G^k\phi}{U_n(0,t)U_m(t,0)G^l\phi}\no\\
&=\sum_{k,l}\sum_{N=0}^\infty\frac{(i\lambda )^k}{k!}\frac{(i\lambda )^l}{l!}\inprod{G^k\phi}{U_N(0,0)G^l\phi}\no\\
&=\inprod{e^{i\lambda G}\phi}{U(0,0)e^{i\lambda G}\phi}\no\\
&=\inprod{\psi}{\psi}.
\end{align}
This completes the proof.
\end{proof}
%%%
\if0
Let $S$ be a set
\begin{align}
S:=\{B, e^{i\lambda B'},e^{isH_0}, \overline{U(t,t')}\,|\, \lambda,s,t,t'\in\Real , B,B' \in\mathcal{C}_0, B' \text{ is self-adjoint}\}.
\end{align} 
Then, by using the estimate \eqref{basic-estimate} and the fact that operators in $S$ are 
closed, we find, for all $A_1,A_2,\dots,A_k,G\in S$, $D\subset D(A_1A_2\dots A_k)$ and
\begin{align}
A_1A_2\dots A_k D\subset D(G).
\end{align}
This means if we denote the set of all polynomials of the elements in $S$ by $\mathfrak{A}_S$,
then 
\begin{align}
\mathfrak{A}_S D\subset D(C),\q C\in \mathfrak{A}_S,
\end{align} 
In particular, $\mathfrak{A}_SD\subset D(W(t))$.

One can extend basic properties, such as group property of $W(t)$ 
\begin{align}
W(s)W(t)=W(s+t),
\end{align}
which have been established on $D$, to the subspace $\mathfrak{A}_SD$
by the limiting argument. In particular,
the operator valued function of $t\in\R$, $B(t)=W(-t)BW(t)$ with $B\in\mathcal{C}_0$,
fulfills
\begin{align}
W(-s)B(t)W(s)\Psi=B(s+t)\Psi,
\end{align}
and
\begin{align}
\inprod{W(t)\Psi}{W(t)\Psi}=\inprod{\Psi}{\Psi}.
\end{align}
for all $\Psi\in  e^{-iG}D$.
\fi
%%%
We employ the notations
\begin{align}
V_{\rm{phys},0}&:=e^{-iG}\left(W\F_\rm{TL}\cap D\right)=e^{-iG}\left(\H_\rm{D}(N)\ot \F_\bb(w\H_\rm{TL})\cap D\right), \\
\mathcal{N}_0&:=e^{-iG}\left(\H_\rm{D}(N)\ot\F_{\bb}(wH_\rm{T})\cap D\right).
\end{align}
Clearly, $V_\rm{phys,0}$ and $\mathcal{N}_0$ are dense subspaces of
$V_\rm{phys}$ and $\mathcal{N}$ respectively.
Then, we have

\if0
 let
\begin{align}
D_1:=\{f=(f^\mu)\in wH_\rm{TL}\,|\,  f^\mu\in C_0^\infty(\R^3)\}.
\end{align}
Then, the subspace 
\begin{align}
\mathcal{D}_1:=C_0^\infty(\R^3;\C^4)\hat{\otimes}\F_{\bb,\rm{fin}}(D_1)
\end{align}
satisfies
\begin{align}
\mathcal{D}_1\subset D, \q e^{-iG}\mathcal{D}_1\subset D(H_0)
\end{align}
so that 
\begin{align}
e^{-iG}\mathcal{D}_1\subset \mathfrak{A}_S D\cap D(H_0).
\end{align}
We ramark that $e^{-iG}\mathcal{D}_1$ is dense in $V_\rm{phys}$.
\fi
%%%%%%%%%%%% cut $$$$$$$$$$$$$$$$$$$
\begin{Lem}\label{well-def}
\begin{enumerate}[(i)]
\item $W(t)V_{\rm{phys},0}\subset V_\rm{phys}$.
\item $W(t)\mathcal{N}_0\subset \mathcal{N}$.
\end{enumerate}
\end{Lem}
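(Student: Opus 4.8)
The plan is to deduce (ii) from (i) and to prove (i) through the Gupta--Bleuler characterization of $\Vp$, namely that a vector lies in $\Vp$ precisely when it is annihilated by every $\Omega^+(s,f)$. Concretely, I will take $\Psi = e^{-iG}\Phi \in V_{\rm{phys},0}$ with $\Phi \in W\F_\rm{TL}\cap D$ and show that $\Omega^+(s,f)W(t)\Psi = 0$ for all $s\in\R$ and $f\in D(T_\mathfrak{h}^{-1})$; by Lemma \ref{g-trans} together with \eqref{vac-character} this is equivalent to $We^{iG}W(t)\Psi\in\F_\rm{TL}$, i.e. to $W(t)\Psi\in e^{-iG}W\F_\rm{TL} = \Vp$.

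First I record two inputs. Since $\Psi\in V_{\rm{phys},0}\subset\Vp = e^{-iG}W\F_\rm{TL}$, the ``obvious inclusion'' established in the identification of $\Vp$ (via Lemma \ref{g-trans} and the fact that $c^\mu(h_\mu) = -c^0(h^0)$ annihilates $\F_\rm{TL}$ by \eqref{vac-character}) gives
\[ \Omega^+(r,f)\Psi = 0, \qquad r\in\R,\ f\in D(T_\mathfrak{h}^{-1}). \]
Second, $\Omega^+(0,f)$ is the sum of an annihilation operator and a bounded multiplication operator [see \eqref{omega+}], hence belongs to $\mathcal{C}_0$, and one checks (exactly as in the abstract computation underlying \eqref{positive-fp}) that $\Omega^+(s,f) = W(-s)\Omega^+(0,f)W(s)$, so that $\Omega^+(s,f)$ is precisely the time-evolved operator $[\Omega^+(0,f)](s)$ in the sense of Proposition \ref{w-Hei}.

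The key step is then the covariance relation supplied by Lemma \ref{group-prop}(ii): applying it with $B = \Omega^+(0,f)\in\mathcal{C}_0$, $\lambda = -1$ and $\psi = \Phi\in D$, so that $e^{i\lambda G}\psi = \Psi$, yields
\[ W(-t)\,\Omega^+(s,f)\,W(t)\Psi = \Omega^+(s+t,f)\Psi = 0, \]
the last equality being the first input with $r = s+t$. It remains to strip off $W(-t)$, and this is where Lemma \ref{group-prop}(i) enters: on the domain $e^{i\lambda G}D$ one has $W(t)W(-t) = \mathrm{id}$, so $W(-t)$ is injective there. Verifying that the intermediate vector $\Omega^+(s,f)W(t)\Psi$ lives in this domain (it does, since $\Omega^+(s,f)$ is a $\mathcal{C}_0$-operator and the whole expression belongs to the operator algebra generated by $e^{-iG}$, the $e^{-isH_0}$ and the $U_n$, all sharing the common dense domain on which the series of Proposition \ref{main-thm1} converge absolutely), I conclude $\Omega^+(s,f)W(t)\Psi = 0$ for every $s$ and $f$. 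By the equivalence noted at the outset, $W(t)\Psi\in\Vp$, which proves (i).

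Finally (ii) is immediate from (i): for $\Psi\in\mathcal{N}_0\subset V_{\rm{phys},0}$ we have $\Psi = e^{-iG}\Phi$ with $\Phi\in D$, so \eqref{unitarity} gives $\inprod{W(t)\Psi}{W(t)\Psi} = \inprod{\Psi}{\Psi}$; since $\Psi\in\mathcal{N}$ the right-hand side vanishes, and since $W(t)\Psi\in\Vp$ by (i), Lemma \ref{positive-lemma}(ii) forces $W(t)\Psi\in\mathcal{N}$. The main obstacle throughout is not any single computation but the domain bookkeeping: the vectors of $V_{\rm{phys},0}$ are coherent (infinite-photon) states lying in $e^{-iG}D$ rather than in $D'$, so neither the defining $\Omega^+$-condition nor the group law for $W(t)$ may be used on $D'$ directly; both must be invoked in the enlarged form of Lemma \ref{group-prop}, valid on $e^{i\lambda G}D$, and the delicate point is to confirm that each intermediate vector stays within that domain.
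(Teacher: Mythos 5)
Your proposal is correct and takes essentially the same route as the paper: both treat $\Omega^+(s,f)$ as the time evolution $W(-s)\Omega^+(0,f)W(s)$ of a $\mathcal{C}_0$-class operator, invoke Lemma \ref{group-prop} to intertwine it with $W(t)$, use the Gupta--Bleuler vanishing $\Omega^+(r,f)\Psi=0$ for $\Psi\in V_{\rm{phys},0}$ to conclude $W(t)\Psi\in V_\rm{phys}$, and derive (ii) from \eqref{unitarity} together with Lemma \ref{positive-lemma}(ii). You merely spell out the steps (the $\mathcal{C}_0$ membership of $\Omega^+(0,f)$, the stripping of $W(-t)$, and the domain bookkeeping for the intermediate vectors) that the paper's one-line computation leaves implicit.
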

\begin{proof}
Let $\Psi\in\F_{\bb,0}(w\H_\rm{TL})$. Then, the assertion (i) follows from the computation using Lemma \ref{group-prop}
\begin{align}
\Omega^+(s,f)W(t)e^{-iG}\Psi=W(t)\Omega^+(s+t,f)e^{-iG}\Psi = 0,
\end{align}
for $e^{-iG}\Psi\in V_\rm{phys}$. This means $W(t)e^{-iG}\Psi\in V_\rm{phys}$.
The assertion (ii) is obvious from \eqref{unitarity}.
\end{proof}

From Lemma \ref{well-def}, we can define the densely defined operator $\{\tilde{W}(t)\}_{t\in\R}$ on $\H_\rm{phys}$ as follows.
Let $[V_{\rm{phys},1}]$ be a dense subspace of $\H_\rm{phys}$ spanned by the vectors equivalent to a vector belonging to the subspace
\begin{align}
V_{\rm{phys},1}:=\{W(t)\Psi\,|\, t\in\R, \, \Psi\in V_{\rm{phys},0} \}.
\end{align}
On the subspace $[V_{\rm{phys},1}]$, we define a linear operator $\tilde{W}(s)$ ($s\in\R$) by  
\begin{align}
D(\tilde{W}(s))&:=[V_{\rm{phys},1}]=\{[\Psi]\,|\,\Psi=W(t)\Phi \text{ for some $t\in\R$ and $\Psi\in V_{\rm{phys},0}$}\} ,\\
\tilde{W}(s)[W(t)\Phi]&:=[W(s+t)\Phi].
\end{align}

The next lemma is crucial:
\begin{Lem}\label{unitary-group}
$\tilde{W}(t)$ defines a strongly continuous one-parameter unitary group $\{U(t)\}_t$ on $\H_\rm{phys}$.
\end{Lem}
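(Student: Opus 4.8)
The plan is to reduce the entire statement to a single key identity: for $\Phi=e^{-iG}\psi,\,\Phi'=e^{-iG}\psi'\in V_{\rm{phys},0}$ (so $\psi,\psi'\in D$) and all $s,t,t'\in\R$,
\[
\inprod{W(s+t)\Phi}{W(s+t')\Phi'}=\inprod{W(t)\Phi}{W(t')\Phi'},
\]
that is, the physical inner product of two evolved physical vectors depends only on the \emph{difference} of the two times. First I would prove the $D$-level version
\[
\inprod{W(a)\xi}{W(a')\xi'}=\inprod{\xi}{W(a'-a)\xi'},\qquad \xi,\xi'\in D,
\]
by inserting the absolutely convergent expansions $W(a)\xi=\sum_N e^{-iaH_0}U_N(a,0)\xi$, using that $e^{-iaH_0}$ is $\eta$-unitary (since $H_0$ is self-adjoint and $\eta$-self-adjoint, hence commutes with $\eta$), the relation $U_N(a,0)^\dagger=U_N(0,a)$, the group property of Lemma \ref{group-prop} on $D$ (case $\lambda=0$), and the sum rule \eqref{sum-rule}, which also yields $\overline{U(0,a)}\,\overline{U(a,0)}=I$ on $D$. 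Then I would lift this to $e^{-iG}D$ by expanding $e^{-iG}\psi=\sum_n\frac{(-i)^n}{n!}G^n\psi$ termwise inside $W(a)$ and applying the $D$-level identity to each pair $G^n\psi,G^{n'}\psi'\in D$; since the result depends on the times only through $a'-a$, taking $a=s+t,\,a'=s+t'$ versus $a=t,\,a'=t'$ gives the key identity. Absolute convergence of the double series (guaranteed by Proposition \ref{main-thm1} and the estimates behind Lemma \ref{group-prop}) legitimises all rearrangements. \textbf{This is the step I expect to be the main obstacle}, because the vectors $W(t)e^{-iG}\psi$ lie neither in the finite-photon domain $D$ nor in $e^{-iG}D$, so the isometry relation \eqref{unitarity} does not apply directly and the computation must be redone at the level of the defining series.

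Granting the key identity, well-definedness and isometry of $\tilde{W}(s)$ follow together. Since $\mathcal{N}$ is exactly the set of null vectors of $\inprod{\cdot}{\cdot}$ in $\Vp$ (Lemma \ref{positive-lemma}(ii)) and $W(s+t)\Phi\in\Vp$ by Lemma \ref{well-def}(i), I would show that if $W(t_1)\Phi_1-W(t_2)\Phi_2\in\mathcal{N}$ then the self inner product of $W(s+t_1)\Phi_1-W(s+t_2)\Phi_2$ vanishes, by expanding it into four terms and applying the key identity to each; hence the shifted difference again lies in $\mathcal{N}$, so $\tilde{W}(s)[W(t)\Phi]:=[W(s+t)\Phi]$ is independent of the representative. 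The same four-term expansion gives $\inprod{[W(s+t)\Phi]}{[W(s+t')\Phi']}=\inprod{[W(t)\Phi]}{[W(t')\Phi']}$, i.e. $\tilde{W}(s)$ preserves the $\H_\rm{phys}$-inner product on the dense domain $[V_{\rm{phys},1}]$.

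The group law $\tilde{W}(s_1)\tilde{W}(s_2)=\tilde{W}(s_1+s_2)$ and $\tilde{W}(0)=I$ are immediate from the definition $\tilde{W}(s)[W(t)\Phi]=[W(s+t)\Phi]$ and the observation that $W(s_2+t)\Phi\in V_{\rm{phys},1}$. Consequently $\tilde{W}(s)$ is a densely defined isometry whose range is again $[V_{\rm{phys},1}]$ (any $[W(\tau)\Phi]$ equals $\tilde{W}(s)[W(\tau-s)\Phi]$) and whose inverse is $\tilde{W}(-s)$; extending by continuity therefore produces a surjective isometry $U(s)$ on $\H_\rm{phys}$, that is, a unitary, and the group relations pass to the closure.

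Finally, for strong continuity it suffices, by $\|U(s)\|=1$ and density of $[V_{\rm{phys},1}]$, to check continuity of $s\mapsto U(s)[W(t_0)\Phi]=[W(s+t_0)\Phi]$. Because $\|[\Xi]\|_{\H_\rm{phys}}^2=\inprod{\Xi}{\Xi}\le\|\Xi\|_{\F_\rm{DM}}^2$, this reduces to the $\F_\rm{DM}$-strong continuity of $\tau\mapsto W(\tau)\Phi=\sum_{m,n}e^{-i\tau H_0}U_m(\tau,0)\frac{(-i)^n}{n!}G^n\psi$, which holds since each summand is continuous in $\tau$ and the series converges locally uniformly by the basic estimate \eqref{basic-estimate}. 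This completes the plan.
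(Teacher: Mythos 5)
Your proposal is correct, and while it shares the paper's outer skeleton (an isometry defined on a dense domain with dense range, extended to a unitary; the group law read off from the definition; strong continuity checked on the dense set), it differs at the decisive technical step, and the difference is substantive. The paper's own proof consists essentially of the diagonal computation $\inprod{\tilde{W}(t)[\Psi]}{\tilde{W}(t)[\Psi]}=\inprod{W(t)\Psi}{W(t)\Psi}=\inprod{\Psi}{\Psi}$ for $\Psi=W(t')\Phi\in V_{\rm{phys},1}$, which follows from Lemma \ref{group-prop} together with \eqref{unitarity}; it never addresses the point you single out as the main obstacle, namely that $\tilde{W}(s)[W(t)\Phi]:=[W(s+t)\Phi]$ must be shown to be independent of the representative modulo $\mathcal{N}$ and to extend linearly and consistently to the span $[V_{\rm{phys},1}]$. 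As you observe, this cannot be extracted from the diagonal identity alone, since $W(t)\Phi$ lies outside $e^{i\lambda G}D$, so neither \eqref{unitarity} nor its polarization applies to cross terms involving two different times. Your two-time identity $\inprod{W(a)\xi}{W(a')\xi'}=\inprod{\xi}{W(a'-a)\xi'}$ on $D$, lifted to $e^{-iG}D$, is exactly the needed strengthening, and the toolkit you list suffices to prove it: writing $W(a')\xi'=W(a)W(a'-a)\xi'$ by Lemma \ref{group-prop}, cancelling the now equal-time free propagators by $\eta$-unitarity of $e^{-iaH_0}$, applying $U_m(a,0)^{\dagger}=U_m(0,a)$, and collapsing $\overline{U(0,a)}\,\overline{U(a,0)}=I$ via \eqref{sum-rule} yields the identity, with all rearrangements justified by \eqref{basic-estimate}; the case $a=a'$ recovers \eqref{unitarity}. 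Combined with Lemma \ref{positive-lemma}(ii) and Lemma \ref{well-def}(i), sesquilinearity then gives well-definedness and isometry on the whole span at once, and your continuity argument, via $\inprod{\Xi}{\Xi}\le\norm{\Xi}^2_{\F_\rm{DM}}$ and locally uniform convergence of the defining series, supplies the justification the paper compresses into one sentence. In short, the paper's proof is shorter but leaves the well-definedness of $\tilde{W}(s)$ implicit, whereas your route proves a genuinely stronger lemma (a weak form of $W(a)^{\dagger}W(a')=W(a'-a)$) and is self-contained.
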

\begin{proof}
First, we see $\tilde{W}(t)$ has the unitary extension. In fact, for all $\Psi\in V_{\rm{phys},1}$, it follows that
\begin{align}
\inprod{\tilde{W}(t)[\Psi]}{\tilde{W}(t)[\Psi]}=\inprod{W(t)\Psi}{W(t)\Psi}=\inprod{\Psi}{\Psi}=\inprod{[\Psi]}{[\Psi]}
\end{align}
which shows $\tilde{W}(t)$ is isometric. Furthermore, since $R(\tilde{W}(t))\supset [V_{\rm{phys},1}]$, $\tilde{W}(t)$ has
the unitary extension which we denote by $U(t)$.

To prove that $\{U(t)\}_t$ has the stated property, let $s,t\in\R$. Then,
\begin{align}
\tilde{W}(s)\tilde{W}(t)[\Psi]=\tilde{W}(s+t)[\Psi],\q \Psi\in V_{\rm{phys},1}.
\end{align}
Hence, we have the group property
\begin{align}
U(s)U(t)=U(s+t).
\end{align}
The continuity follows from the fact that $\tilde{W}(\cdot)[\Psi]$ is strongly continuous for $\Psi\in V_{\rm{phys},1}$.
\end{proof}

Lemma \ref{unitary-group} implies that the physical Hamiltonian $H_\rm{phys}$ should be defined as:
\begin{Def}
The physical Hamiltonian $H_\rm{phys}$ is the generator of $\{U(t)\}_t$.
\end{Def}

The following Lemma follows from lengthy but straightforward computation:
\begin{Lem}\label{transf}
On the subspace 
\begin{align}
\mathcal{D}_1:=\left(\hot_{\rm{as}}^N C^\infty_0(\R^3;\C^4)\right)\hat{\otimes}\F_{\bb,\rm{fin}}(C^\infty_0(\R^3;\C^4))
\end{align}
 one finds
\begin{align}\label{H-transf}
e^{iG}H_\rm{DM}(V,N)e^{-iG}&=H_0+q\sum_{a}\alpha^{a\mu}\int^\oplus_{\mathcal{X}^N}d\bvec X\frac{1}{\sqrt{2}}
\left[ c^\nu\left(Q_{\mu\nu}(\xx^a)\right)+c^\dagger_\nu\left(Q_\mu^{\;\;\nu}(\xx^a)\right)\right] \no\\
&\hskip 4mm -\frac{1}{2}q^2 \sum_{a,b}\alpha^{a\mu}\int^\oplus_{\mathcal{X}^N}d\bvec X \ip{k_\mu g^{\xx^a}_\nu}{g^{\xx^b \nu}},
\end{align}
where
\begin{align}
Q_\mu^{\;\;\nu}(\xx):=ik_\mu g^{\xx \nu}+e_{\mu}^{\;\;\nu}\frac{\chi_\rm{ph}^{\xx}}{\sqrt{\omega}}\in\mathcal{H}_\rm{ph},\q \xx\in\Real^3.
\end{align}
\end{Lem}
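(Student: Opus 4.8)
The plan is to exploit the fact that $G$ is, fibrewise over $\mathcal{X}^N$, a self-adjoint operator that is \emph{first order} (linear) in the creation and annihilation operators $c^\nu,c^\dagger_\nu$, so that $e^{iG}$ acts as a Weyl (displacement) operator. On the finite-particle core $\mathcal{D}_1$ every operator in sight is a polynomial of degree at most two in $c^\nu,c^\dagger_\nu$, and since $G$ is linear in these the Baker--Campbell--Hausdorff series
\begin{align}
e^{iG}Xe^{-iG}=\sum_{n=0}^\infty \frac{1}{n!}(\ad\, iG)^n X
\end{align}
\emph{terminates}: a degree-one operator $X$ is shifted by the c-number $[iG,X]$, while a degree-two operator picks up a degree-one term $[iG,X]$ together with the c-number $\tfrac12[iG,[iG,X]]$. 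Working on $\mathcal{D}_1$ (finite particle number and $C^\infty_0$ coefficients) makes all commutators densely defined and the truncation exact, so no domain subtleties arise. I would record this reduction first, since it is the whole conceptual content.

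With this in hand I would conjugate $H_\rm{DM}(V,N)=\sum_a(\alpha^a\cdot\pp^a+\beta^aM^a+V^a)+\mathrm{d}\Gamma_\bb(\omega)+H_1$ term by term. The mass and potential terms $\beta^aM^a$ and $V^a$ act only on the Dirac sector and, as multiplication operators in $\xx^a$, commute with $G$, hence are unchanged. For the Dirac kinetic term one computes $[G,p_k^a]=i\,\partial_k^a G$ (a first-order operator) and $[G,\partial_k^a G]$ (a c-number), giving
\begin{align}
e^{iG}(\alpha^{ak}p_k^a)e^{-iG}=\alpha^{ak}p_k^a-\alpha^{ak}\partial_k^a G-\tfrac{i}{2}\alpha^{ak}[G,\partial_k^a G];
\end{align}
here $\partial_k^a$ differentiates the translated cutoff $\hat{\chi_\rm{ph}^{\xx^a}}$ and brings down a factor of the \emph{spatial} momentum. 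For the photon kinetic term one uses $[\mathrm{d}\Gamma_\bb(\omega),c^\dagger(f)]=c^\dagger(\omega f)$ and $[\mathrm{d}\Gamma_\bb(\omega),c(f)]=-c(\omega f)$ to produce a first-order shift carrying the \emph{energy} factor $\omega=k^0$, plus a c-number from the second commutator. Finally $e^{iG}H_1e^{-iG}=H_1+[iG,H_1]$ with $[iG,H_1]$ a c-number.

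The decisive step is the reassembly. The diagonal pieces $\alpha^a\cdot\pp^a+\beta^aM^a+V^a+\mathrm{d}\Gamma_\bb(\omega)$ recombine into $H_0$. I would then collect every degree-one term: the unshifted $H_1$ reproduces the gauge coupling $e_\mu^{\;\;\nu}\chi_\rm{ph}^\xx/\sqrt\omega$, while the shifts from $\mathrm{d}\Gamma_\bb(\omega)$ and from $\alpha^a\cdot\pp^a$ supply respectively the time component ($k^0=\omega$) and the spatial components of the covariant vector $ik_\mu g^{\xx\nu}$. Their sum is exactly $q\sum_a\alpha^{a\mu}\int^\oplus_{\mathcal{X}^N}d\XX\,\tfrac{1}{\sqrt2}[c^\nu(Q_{\mu\nu}(\xx^a))+c^\dagger_\nu(Q_\mu^{\;\;\nu}(\xx^a))]$ with $Q_\mu^{\;\;\nu}=ik_\mu g^{\xx\nu}+e_\mu^{\;\;\nu}\chi_\rm{ph}^\xx/\sqrt\omega$; the merging of the energy and momentum shifts into the single object $ik_\mu$ is precisely the Lorentz covariance at work. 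The remaining degree-zero terms, arising from the three double commutators $\tfrac12[iG,[iG,\mathrm{d}\Gamma_\bb(\omega)]]$, $-\tfrac{i}{2}\alpha^{ak}[G,\partial_k^a G]$ and $[iG,H_1]$, are each (via the canonical commutation relations) an inner product of the displacement functions; since $G$ carries a factor $q$ and a sum over the particle label, these produce the double sum with coefficient $q^2$ and collapse into the instantaneous Coulomb interaction $-\tfrac12 q^2\sum_{a,b}\alpha^{a\mu}\int^\oplus_{\mathcal{X}^N}d\XX\,\ip{k_\mu g^{\xx^a}_\nu}{g^{\xx^b\nu}}$.

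The main obstacle is purely computational: the careful bookkeeping of the Lorentz and polarization indices, in particular verifying that the $\mu=0$ piece from $\mathrm{d}\Gamma_\bb(\omega)$ and the $\mu=1,2,3$ pieces from $\alpha^a\cdot\pp^a$ assemble into the single covariant combination $ik_\mu g^{\xx\nu}$, and that the several c-number contractions add up with the correct normalizations and signs to the stated kernel $\ip{k_\mu g^{\xx^a}_\nu}{g^{\xx^b\nu}}$. This is the ``lengthy but straightforward'' part flagged in the statement; the conceptual work is entirely absorbed into the finite Weyl-shift expansion of the first paragraph.
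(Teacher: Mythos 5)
Your proposal is correct and takes essentially the same route as the paper's proof: a term-by-term computation of $[iG,H_{\mathrm{D}}(V,N)]$, $[iG,\mathrm{d}\Gamma_{\mathrm{b}}(\omega)]$, $[iG,H_1]$ and the double commutators on $\mathcal{D}_1$, inserted into the adjoint expansion $e^{iG}H_{\mathrm{DM}}(V,N)e^{-iG}=\sum_{n\ge 0}\frac{1}{n!}(\mathrm{ad}\,iG)^{n}H_{\mathrm{DM}}(V,N)$, which terminates because the second commutators are c-numbers (your degree-counting observation), with domain questions settled by a closedness argument. The only bookkeeping difference is that the paper computes $[iG,H_1]=0$ exactly (it is proportional to $\mathrm{Im}\,\ip{g^{\xx^a}_\mu}{e_{\nu}^{\;\;\mu}\chi_{\mathrm{ph}}^{\xx^b}\omega^{-1/2}}$, which vanishes), so the Coulomb kernel arises solely from $\tfrac12[iG,[iG,H_{\mathrm{D}}]]$ and $\tfrac12[iG,[iG,\mathrm{d}\Gamma_{\mathrm{b}}(\omega)]]$ rather than from the three sources you list --- harmless, since a vanishing c-number changes nothing in the reassembly.
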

\begin{proof}
Let us compute the commutators on $\mathcal{D}_1$:
\[[iG,H_\rm{D}(V,N)],\q [iG,\rm{d}\Gamma_\bb (\omega)],\q [iG, H_1].\] 
The first one is 
\begin{align}
[iG,H_\rm{D}(V,N)]&=\left[iG,\sum_a(\bvec \alpha^a\cdot\bvec p^a+\beta^a M^a + V^a)\right] \no\\
&=\left[iG,\sum_a \bvec \alpha^a\cdot\bvec p^a\right] \no\\
&=-\frac{iq}{\sqrt{2}}\sum_{a,b}\left[\int_{\mathcal{X}^N}^\oplus d\bvec X\,\left\{c^\mu(g^{{\bvec x}^a}_\mu)+c^\dagger_\mu(g^{{\bvec x}^a\mu})\right\},\bvec \alpha^b\cdot \bvec p^b\right] \no\\
&=\frac{q}{\sqrt{2}}\sum_{a}\alpha^{aj}\int_{\mathcal{X}^N}^\oplus\left\{c^\mu(ik^j g^{{\bvec x}^a}_\mu)+c^\dagger_\mu(ik^j g^{{\bvec x}^a\mu})\right\}.
\end{align}
The next one is
\begin{align}
[iG,\rm{d}\Gamma_\bb (\omega)]&=-\frac{iq}{\sqrt{2}}\sum_a \int^\oplus_{\mathcal{X}^N} d\bvec X\,\left[c^\mu(g^{{\bvec x}^a}_\mu)+c^\dagger_\mu(g^{{\bvec x}^a\mu}),\rm{d}\Gamma_\bb (\omega)\right]\no\\
&=-\frac{q}{\sqrt{2}}\sum_a \int^\oplus_{\mathcal{X}^N} d\bvec X\,\left\{c^\mu(i\omega g^{{\bvec x}^a}_\mu)+c^\dagger_\mu(i\omega g^{{\bvec x}^a\mu})\right\}.
\end{align}
The last one is
\begin{align}
[iG,H_1]&=-\frac{iq}{\sqrt{2}}\sum_{a} \int^\oplus_{\mathcal{X}^N} d\bvec X\,\left[c^\mu(g^{{\bvec x}^a}_\mu)+c^\dagger_\mu(g^{{\bvec x}^a\mu}),q\sum_b\alpha^{b\mu}A_\mu({\bvec x}^b)\right]\no\\
&=-\frac{iq^2}{2}\sum_{a,b}\alpha^{b\nu} \int^\oplus_{\mathcal{X}^N} d\bvec X\,\left[c^\mu(g^{{\bvec x}^a}_\mu)+c^\dagger_\mu(g^{{\bvec x}^a\mu}),c^\lambda(e_{\nu\lambda}\chi_\rm{ph}^{{\bvec x}^b}\omega^{-1/2})+c^\dagger_\lambda(e_{\nu}^{\;\;\lambda}\chi_\rm{ph}^{{\bvec x}^b}\omega^{-1/2})\right]\no\\
&=-\frac{iq^2}{2}\sum_{a,b}\alpha^{b\nu} \int^\oplus_{\mathcal{X}^N} d\bvec X\,\left\{ \eta^{\mu}_{\;\;\lambda}\ip{g_\mu^{{\bvec x}^a}}{e_{\nu}^{\;\;\lambda}\chi_\rm{ph}^{{\bvec x}^b}\omega^{-1/2}}-\eta^\lambda_{\;\;\mu} \ip{e_{\nu\lambda}\chi_\rm{ph}^{{\bvec x}^b}\omega^{-1/2}}{g^{{\bvec x}^a\mu}} \right\}\no\\
&=q^2\sum_{a,b}\alpha^{b\nu} \int^\oplus_{\mathcal{X}^N} d\bvec X\,\rm{Im}\, \ip{g_\mu^{{\bvec x}^a}}{e_{\nu}^{\;\;\mu}\chi_\rm{ph}^{{\bvec x}^b}\omega^{-1/2}}\no\\
&=0.
\end{align}

Taking the commutator with $iG$ once again, we find
\begin{align}\label{second-comm-hd}
[iG,[iG,H_\rm{D}(V,N)]]&=-\frac{iq^2}{2}\sum_{a,b}\alpha^{bj} \int^\oplus_{\mathcal{X}^N} d\bvec X\,\left[c^\nu(g^{{\bvec x}^a}_\nu)+c^\dagger_\nu(g^{{\bvec x}^a\nu}),c^\mu(ik^j g^{{\bvec x}^b}_\mu)+c^\dagger_\mu(ik^j g^{{\bvec x}^b\mu})\right]\no\\
&=-\frac{iq^2}{2}\sum_{a,b}\alpha^{bj} \int^\oplus_{\mathcal{X}^N} d\bvec X\,\left\{\ip{g^{{\bvec x}^a}_\mu}{ik^j g^{{\bvec x}^b\mu}}-\ip{ik^j g^{{\bvec x}^b}_\mu}{g^{{\bvec x}^a\mu}}\right\}\no\\
&=q^2\sum_{a,b}\alpha^{bj} \int^\oplus_{\mathcal{X}^N} d\bvec X\,\rm{Im}\,\ip{g^{{\bvec x}^a}_\mu}{ik^j g^{{\bvec x}^b\mu}},
\end{align}
and
\begin{align}\label{second-comm-gamma}
[iG,[iG,\rm{d}\Gamma_\bb (\omega)]]&=\frac{iq^2}{2}\sum_{a,b} \int^\oplus_{\mathcal{X}^N} d\bvec X\, \left[c^\nu(g^{{\bvec x}^a}_\nu)+c^\dagger_\nu(g^{{\bvec x}^a\nu}),c^\mu(i\omega g^{{\bvec x}^b}_\mu)+c^\dagger_\mu(i\omega g^{{\bvec x}^b\mu})\right]\no\\
&=\frac{iq^2}{2}\sum_{a,b} \int^\oplus_{\mathcal{X}^N} d\bvec X\, \left\{\ip{g^{{\bvec x}^a}_\nu}{i\omega g^{{\bvec x}^b\mu}}-\ip{i\omega g^{{\bvec x}^b}_\mu}{g^{{\bvec x}^a\nu}}\right\} \no\\
&=-q^2\sum_{a,b} \int^\oplus_{\mathcal{X}^N} d\bvec X\,\rm{Im}\,\ip{g^{{\bvec x}^a}_\mu}{i\omega g^{{\bvec x}^b\mu}}.
\end{align}
Note that the final expressions \eqref{second-comm-hd} and \eqref{second-comm-gamma} are commuting with $iG$.

Summarizing these results in a covariant manner, one has on the subspace $\mathcal{D}_1$
\begin{align}
[iG,H_0]&=\frac{q}{\sqrt{2}}\sum_a \alpha^{a\nu}\int^\oplus_{\mathcal{X}^N} d\bvec X\, \left\{c^\mu(ik_\nu g^{{\bvec x}^a}_\mu)+c^\dagger_\mu(ik_\nu g^{{\bvec x}^a\mu})\right\}, \no\\
[iG,H_1]&=0, \no\\
[iG,[iG,H_0]]&=q^2\sum_{a,b}\alpha^{b\nu} \int^\oplus_{\mathcal{X}^N} d\bvec X\,\rm{Im}\,\ip{g^{{\bvec x}^a}_\mu}{ik_\nu g^{{\bvec x}^b\mu}},
\end{align}
and higher commutators with $iG$ vanish.

From the formula
\begin{align}
e^{i\lambda G} H_\rm{DM}(V,N)e^{-i\lambda}G = \sum_{n=0}^\infty \frac{\lambda^n}{n!}[iG ,[ iG,[\dots [iG, H_\rm{DM}(V,N)]\dots] ],
\end{align}
we have \eqref{transf} at least heuristically. To make this computation a mathematical proof, we have to be careful on the operator domains.
But this can be done by the argument of closedness, which is letft to the reader.
\end{proof}

Regard the above $Q_\mu^{\;\;\nu}(\xx)$ as a $\nu$-th component of four-component vector $Q_\mu(\xx)=(Q_\mu^{\;\;\nu}(\xx))\in\H_\rm{ph}$ ($\mu=0,1,2,3$).
Then, it immediately follows that
\begin{Lem}\label{inv}
The four component vectors $Q_\mu(\xx)$ ($\mu=0,1,2,3$) belong to $w\H_\rm{TL}$. In particular, $H_\rm{DM}(V,N)|_{e^{-iG}\mathcal{D}_1}$ leaves $V_\rm{phys}$ and $\mathcal{N}$ invariant.
\end{Lem}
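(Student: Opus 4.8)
The plan is to establish the membership $Q_\mu(\xx)\in w\H_\rm{TL}$ first, and then read off the invariance of $\Vp$ and $\mathcal{N}$ from Lemma~\ref{transf} together with the algebraic structure of the creation and annihilation operators.

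For the membership, recall $w\H_\rm{TL}=\{F\in\H_\rm{ph}\mid F^0=F^3\}$, so I only need $Q_\mu^{\;\;0}(\xx)=Q_\mu^{\;\;3}(\xx)$. The vector $g^\xx$ has only its third component nonzero, $g^{\xx 3}=i\hat{\chi_\rm{ph}^\xx}/\omega^{3/2}$, so setting $\varphi:=\hat{\chi_\rm{ph}^\xx}/\sqrt{\omega}$ the defining relation $Q_\mu^{\;\;\nu}=ik_\mu g^{\xx\nu}+e_\mu^{\;\;\nu}\varphi$ gives
\[ Q_\mu^{\;\;0}(\xx)=e_\mu^{\;\;0}\,\varphi,\qquad Q_\mu^{\;\;3}(\xx)=\Big(e_\mu^{\;\;3}-\frac{k_\mu}{\omega}\Big)\varphi. \]
Thus $Q_\mu^{\;\;0}=Q_\mu^{\;\;3}$ reduces to the pointwise identity $e_\mu^{\;\;3}(\kk)-e_\mu^{\;\;0}(\kk)=k_\mu/\omega(\kk)$, which I would verify directly from the explicit choice \eqref{pol-2}: for $\mu=0$ both sides equal $-1$ (using $k_0=-\omega$), and for $\mu=j$ ($j=1,2,3$) both sides equal $k^j/\omega$. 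This proves the first assertion.

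For the second assertion I would use Lemma~\ref{transf}, which on $\mathcal{D}_1$ writes $e^{iG}H_\rm{DM}(V,N)e^{-iG}$ as $H_0$ plus the creation term $\tfrac{q}{\sqrt 2}\sum_a\alpha^{a\mu}\int^\oplus_{\mathcal{X}^N}d\bvec X\,c^\dagger_\nu(Q_\mu^{\;\;\nu}(\xx^a))$, the annihilation term $\tfrac{q}{\sqrt 2}\sum_a\alpha^{a\mu}\int^\oplus_{\mathcal{X}^N}d\bvec X\,c^\nu(Q_{\mu\nu}(\xx^a))$, and a scalar multiplication operator. Since $\Vp=e^{-iG}(\H_\rm{D}(N)\ot\F_\bb(w\H_\rm{TL}))$ by \eqref{TL} and $\mathcal{N}=e^{-iG}(\H_\rm{D}(N)\ot\F^+_\bb(w\H_\rm{TL};w\H_\rm{L}))$, and $e^{-iG}$ is a bijection, it suffices to show that the conjugated Hamiltonian leaves $\H_\rm{D}(N)\ot\F_\bb(w\H_\rm{TL})$ and $\H_\rm{D}(N)\ot\F^+_\bb(w\H_\rm{TL};w\H_\rm{L})$ invariant; transporting back through $e^{-iG}$ on the domain $e^{-iG}\mathcal{D}_1$ (where Lemma~\ref{transf} applies) then gives the claim. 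The Dirac factor of $H_0$, the $\alpha^{a\mu}$'s and the fibre integral act only on $\H_\rm{D}(N)$, and the scalar term is a multiplication operator, so all of these respect every tensor decomposition; the photon part $\rm{d}\Gamma_\bb(\omega)$ leaves $w\H_\rm{TL}$, $w\H_\rm{L}$ and $w\H_\rm{T}$ invariant because $\omega$ acts as scalar multiplication, hence it preserves both Fock spaces. Finally, $c^\dagger_\nu(Q_\mu^{\;\;\nu}(\xx^a))=c^\dagger(Q_\mu(\xx^a))$ and $c^\nu(Q_{\mu\nu}(\xx^a))=c(\overline{Q_\mu(\xx^a)})$ create resp.\ annihilate a photon in the state $Q_\mu(\xx^a)\in w\H_\rm{TL}$ (by the first assertion), so both leave $\F_\bb(w\H_\rm{TL})$ invariant.

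The delicate point, which I expect to be the main obstacle, is the invariance of $\F^+_\bb(w\H_\rm{TL};w\H_\rm{L})$ under the annihilation term. Creation never removes a factor, so $c^\dagger(Q_\mu(\xx^a))$ applied to a spanning vector carrying at least one longitudinal factor keeps that factor and maps $\F^+_\bb(w\H_\rm{TL};w\H_\rm{L})$ into itself. For annihilation I would act on a spanning vector $c^\dagger(F_1)\cdots c^\dagger(F_n)\Omega$ with $F_i\in w\H_\rm{TL}$ and some $F_{j_0}\in w\H_\rm{L}$, and use the commutation relation to obtain $\sum_j\inprod{Q_\mu(\xx^a)}{F_j}\,c^\dagger(F_1)\cdots\widehat{c^\dagger(F_j)}\cdots\Omega$. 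The crucial observation is that this contraction is taken in the indefinite metric, and by \eqref{oth-1}--\eqref{oth-2} one has $\inprod{Q_\mu(\xx^a)}{F_{j_0}}=\inprod{(Q_\mu(\xx^a))_T}{F_{j_0,T}}=0$ since the longitudinal vector $F_{j_0}$ has vanishing transverse part; hence the factor $c^\dagger(F_{j_0})$ is never contracted away, every surviving term still carries a longitudinal factor, and the result remains in $\F^+_\bb(w\H_\rm{TL};w\H_\rm{L})$. Collecting these facts yields $H_\rm{DM}(V,N)\big(e^{-iG}\mathcal{D}_1\cap\Vp\big)\subset\Vp$ and $H_\rm{DM}(V,N)\big(e^{-iG}\mathcal{D}_1\cap\mathcal{N}\big)\subset\mathcal{N}$, completing the proof.
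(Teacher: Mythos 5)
Your proof is correct and follows essentially the same route as the paper's: a direct computation with the polarization vectors \eqref{pol-2} to locate the $Q_\mu(\xx)$, followed by Lemma~\ref{transf}, transport through $e^{-iG}$, and the transverse/longitudinal orthogonality to control the creation/annihilation terms. The only difference is one of granularity: the paper computes $Q_\mu(\xx)$ in full and records the finer fact $Q_0(\xx)\in w\H_\rm{L}$, $Q_k(\xx)\in w\H_\rm{T}$ ($k=1,2,3$), so that $\overline{Q}_0$ lands in the scalar sector orthogonal to all of $w\H_\rm{TL}$ while $\overline{Q}_k=Q_k$, whereas you establish only the membership $Q_\mu(\xx)\in w\H_\rm{TL}$ stated in the lemma (via the identity $e_\mu^{\;\;3}-e_\mu^{\;\;0}=k_\mu/\omega$) and compensate with the null-contraction argument based on \eqref{oth-1}--\eqref{oth-2} --- which is precisely the step the paper compresses into ``the assertions obviously follow.''
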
 
\begin{proof}
By a direct computation, we find 
\begin{align}\label{Q-concrete}
Q_0(\xx) = (Q_0^{\;\;\nu}(\xx))_\nu=\begin{pmatrix}
\dfrac{\hat{\chi^\xx}}{\sqrt{\omega}} \\ 0\\0\\ \dfrac{\hat{\chi^\xx}}{\sqrt{\omega}}
\end{pmatrix}\in wH_\rm{L}, \q
Q_k(\xx)=(Q_k^{\;\;\nu}(\xx))_\nu=\begin{pmatrix}
0 \\ e_k^{\;\;1} \dfrac{\hat{\chi^\xx}}{\sqrt{\omega}}\\ e_k^{\;\;2} \dfrac{\hat{\chi^\xx}}{\sqrt{\omega}}\\ 0
\end{pmatrix}\in w H_\rm{T} ,\q k=1,2,3.
\end{align}
Note that the second term in \eqref{H-transf} can be written as
\begin{align}
q\sum_{a}\int^\oplus_{\mathcal{X}^N}d\bvec X\frac{1}{\sqrt{2}}
\left[ c\left(\overline{Q}_{0}(\xx^a)\right)+c^\dagger\left(Q_0(\xx^a)\right)\right] 
+q\sum_k\sum_{a}\alpha^{ak}\int^\oplus_{\mathcal{X}^N}d\bvec X\frac{1}{\sqrt{2}}
\left[ c\left(\overline{Q}_k(\xx^a)\right)+c^\dagger\left(Q_k(\xx^a)\right)\right] 
\end{align}
where
\begin{align}
\overline{Q}_\mu(\xx) = (Q_{\mu\nu}(\xx))_\nu = (\eta_{\nu\lambda} Q_\mu^{\;\;\lambda}(\xx) )_\nu,
\end{align}
and $F\in wH_\rm{L}$ implies $\overline{F}\in wH_\rm{S}$ while $wH_\rm{T}$ is $\eta$-invariant.
 Then, the assertions obviously follow.
\end{proof}

\begin{Thm}\label{ess-sa}
The operator $\tilde{H}$ defined by the relations
\begin{align}
D(\tilde{H})&:=[\left(e^{-iG}\mathcal{D}_1\right)\cap V_\rm{phys}], \\
\tilde{H}[\Psi] &:= [H_\rm{DM}(V,N)\Psi] ,
\end{align}
is essentially self-adjoint and the unique self-adjoint extension is 
equal to $H_\rm{phys}$.
\end{Thm}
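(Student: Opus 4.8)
The plan is to prove the two inclusions $\tilde H\subseteq H_{\rm{phys}}$ and $\overline{\tilde H}\supseteq H_{\rm{phys}}$, the second being the assertion that $D(\tilde H)$ is a core for the generator $H_{\rm{phys}}$, which is self-adjoint by Stone's theorem together with Lemma \ref{unitary-group}. First I would record the elementary facts. Well-definedness of $\tilde H$ is Lemma \ref{inv}: since $H_{\rm{DM}}(V,N)$ leaves both $\Vp$ and $\mathcal{N}$ invariant on $e^{-iG}\mathcal{D}_1$, the assignment $[\Psi]\mapsto[H_{\rm{DM}}(V,N)\Psi]$ is independent of the representative. Symmetry of $\tilde H$ on $\H_{\rm{phys}}$ follows because $H_{\rm{DM}}(V,N)$ is $\eta$-symmetric (indeed essentially $\eta$-self-adjoint, Theorem \ref{ess-s.a.}) and $e^{-iG}\mathcal{D}_1\subset D(H_{\rm{DM}}(V,N))$ by Lemma \ref{transf}: for representatives $\Phi,\Psi\in(e^{-iG}\mathcal{D}_1)\cap\Vp$,
\begin{align}
\inprod{[\Phi]}{\tilde H[\Psi]}=\inprod{\Phi}{H_{\rm{DM}}(V,N)\Psi}=\inprod{H_{\rm{DM}}(V,N)\Phi}{\Psi}=\inprod{\tilde H[\Phi]}{[\Psi]},\no
\end{align}
the induced metric on $\H_{\rm{phys}}$ being positive definite.

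For the inclusion $\tilde H\subseteq H_{\rm{phys}}$ I would differentiate the group at the origin. One first checks $(e^{-iG}\mathcal{D}_1)\cap\Vp\subset V_{\rm{phys},0}$ (if $e^{-iG}\phi\in\Vp=e^{-iG}(\H_{\rm{D}}(N)\ot\F_\bb(w\H_{\rm{TL}}))$ with $\phi\in\mathcal{D}_1\subset D$, injectivity of $e^{-iG}$ forces $\phi\in(\H_{\rm{D}}(N)\ot\F_\bb(w\H_{\rm{TL}}))\cap D$), so by the construction preceding Lemma \ref{unitary-group}, $U(t)[\Psi]=\tilde W(t)[\Psi]=[W(t)\Psi]$. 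The crux is the strong differentiability of $t\mapsto W(t)e^{-iG}\phi$. Although $e^{-iG}\phi\notin D$ (it carries infinitely many photons), so Proposition \ref{sch-existence} does not apply verbatim, one has from the proof of Lemma \ref{group-prop} the absolutely and locally uniformly convergent expansion
\begin{align}
W(t)e^{-iG}\phi=\sum_{m,n\ge0}\frac{(-i)^n}{n!}\,e^{-itH_0}U_m(t,0)G^n\phi.\no
\end{align}
Since $\mathcal{D}_1\subset D\cap D(H_0)=D'$ and, under the cutoff hypotheses ($\hat{\chi_{\rm{ph}}}/\sqrt{\omega}\in L^2$ and $\hat{\chi_{\rm{ph}}}\in D(\omega^{-3/2})$), the operator $G$ preserves $D\cap D(H_0)$, each summand is strongly $C^1$ with $\frac{d}{dt}U_m(t,0)\xi=-iH_1(t)U_{m-1}(t,0)\xi$ and $U_m(t,0)G^n\phi\in D(H_0)$; differentiating termwise and resumming, the differentiated series being dominated by an estimate of the type \eqref{basic-estimate} carrying one extra factor of $H_1(A+1)^{-1/2}$ or of $H_0$ acting on the finite-photon smooth vectors $G^n\phi$, gives $W(t)e^{-iG}\phi\in D(H_{\rm{DM}}(V,N))$ and $\frac{d}{dt}W(t)e^{-iG}\phi=-iH_{\rm{DM}}(V,N)W(t)e^{-iG}\phi$. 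Because $0\le\inprod{\Psi}{\Psi}\le\|\Psi\|^2$ on $\Vp$, the quotient map $\Psi\mapsto[\Psi]$ is norm-contractive, so $t\mapsto U(t)[\Psi]=[W(t)\Psi]$ is differentiable at $0$ with derivative $-i[H_{\rm{DM}}(V,N)\Psi]=-i\tilde H[\Psi]$; hence $[\Psi]\in D(H_{\rm{phys}})$ and $H_{\rm{phys}}[\Psi]=\tilde H[\Psi]$.

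It remains to show $D(\tilde H)$ is a core. The same differentiation, now at arbitrary $t$, shows that the linear span
\begin{align}
\mathcal{E}:=\mathrm{span}\{[W(t)\Psi]\,:\,t\in\R,\ \Psi\in(e^{-iG}\mathcal{D}_1)\cap\Vp\}\no
\end{align}
is contained in $D(H_{\rm{phys}})$ (the difference quotients remain in the closed subspace $\Vp$ by Lemma \ref{well-def}, hence so do their limits) and is $\{U(t)\}$-invariant, since $U(s)[W(t)\Psi]=[W(s+t)\Psi]$ by Lemma \ref{group-prop}. As $\mathcal{E}\supset D(\tilde H)$ is dense, the invariant-domain criterion (Reed--Simon, Theorem VIII.11) shows $\mathcal{E}$ is a core. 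To descend from $\mathcal{E}$ to $D(\tilde H)$ itself I would pass, via the $\eta$-unitary $e^{iG}$ and the unitary $W$, to the transformed operator of Lemma \ref{transf}, whose restriction to the transverse (physical) sector is, on the smooth domain $\mathcal{D}_1$, exactly the Coulomb-gauge Dirac--Maxwell Hamiltonian treated by Arai in \cite{MR1765584}; its essential self-adjointness there identifies $\overline{\tilde H}$ with the unique self-adjoint extension, which by the group property must coincide with $H_{\rm{phys}}$.

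The main obstacle is the strong differentiability of $t\mapsto W(t)e^{-iG}\phi$ in the second paragraph: as $e^{-iG}\phi$ lies outside $D$, Proposition \ref{sch-existence} is unavailable and one must extend it to the larger invariant domain $\mathfrak{A}_SD\supset e^{-iG}D$ through the series argument, controlling the \emph{unbounded} $H_0$ on the vectors $G^n\phi$ uniformly in $n$. The remaining delicate point is reducing the core property from the manifestly invariant $\mathcal{E}$ to the prescribed smooth domain $D(\tilde H)$, for which the cleanest route is the identification with Arai's essentially self-adjoint Coulomb-gauge Hamiltonian under $e^{iG}$ and $W$.
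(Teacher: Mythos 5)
Your first two paragraphs are correct and in fact reproduce the paper's entire published argument for the inclusion $\tilde H\subset H_{\mathrm{phys}}$: the paper's proof consists precisely of the computation
\begin{align}
\frac{d}{dt}U(t)[\Psi]\Big|_{t=0}=[-iH_{\mathrm{DM}}(V,N)\Psi]=-i\tilde H[\Psi],\qquad \Psi\in\left(e^{-iG}\mathcal{D}_1\right)\cap \Vp ,
\end{align}
and your observation that $e^{-iG}\phi\notin D$, so that Proposition \ref{sch-existence} must be supplemented by the series estimates behind Lemma \ref{group-prop}, is a real technical point that the paper itself glosses over.

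The genuine gap is in your third paragraph. The invariant-domain criterion shows only that $\mathcal{E}$ is a core for $H_{\mathrm{phys}}$; since $D(\tilde H)$ is a proper, non-$U(t)$-invariant subspace of $\mathcal{E}$, this gives no information about $D(\tilde H)$, so the whole burden falls on your ``descent'' step, and that step fails: essential self-adjointness of the Coulomb-gauge Dirac--Maxwell Hamiltonian is \emph{not} available from Arai \cite{MR1765584} under Assumption \ref{ass-V}. For arbitrary coupling $q$ and potentials as general as those admitted here, Arai's paper provides only the \emph{existence of a self-adjoint extension} via von Neumann's conjugation theorem (essential self-adjointness is obtained there only under much more restrictive hypotheses, e.g.\ weak coupling); indeed, this existence statement is exactly the input already exploited in Theorem \ref{ess-s.a.}, and if the stronger result were available the machinery of Propositions \ref{a-or-b}--\ref{s.a.} would be pointless. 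Your reasoning also runs backwards relative to the paper: the Coulomb-gauge identification is \emph{deduced from} Theorem \ref{ess-sa} (its proof cites Theorem \ref{ess-sa} to know that $\mathcal{D}_1\cap\left(\H_{\mathrm{D}}(N)\ot\F_\bb(w\H_{\mathrm{T}})\right)$ is a core), not the other way around. The missing idea is Proposition \ref{a-or-b}: your second paragraph already establishes that $\tilde H$ is symmetric, that the orbits $t\mapsto U(t)[\Psi]$ solve the Schr\"odinger initial-value problem for initial data in the dense set $D(\tilde H)$, and that $H_{\mathrm{phys}}$ is a self-adjoint extension of $\tilde H$. The dichotomy of Proposition \ref{a-or-b} then excludes alternative (a), so $\tilde H$ is essentially self-adjoint; and since a self-adjoint operator admits no proper symmetric extension, $\overline{\tilde H}\subset H_{\mathrm{phys}}$ forces $\overline{\tilde H}=H_{\mathrm{phys}}$. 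This is the paper's two-line proof, and it needs no input beyond what you had already proved before your third paragraph.
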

\begin{proof}
 For $\Psi\in (e^{-iG}\mathcal{D}_1)\cap V_\rm{phys}$, 
 \begin{align}
 \frac{d}{dt}U(t)[\Psi]\Bigg|_{t=0}=[-iH_\rm{DM}(V,N)\Psi]=-i\tilde{H}[\Psi].
 \end{align}
 Thus the assertion follows from Proposition \ref{a-or-b}.
\end{proof}
\subsection{Relation to the Coulomb gauge Hamiltonian}
In the Coulomb gauge, the Hilbert space $\mathcal{H}_\rm{Coulomb}$ of state vectors of the quantum system considered is naturally taken to be 
\begin{align}
\H_\rm{Coulomb}:=\wedge_{a=1}^NL^2(\R^3;C^4)\ot \F_\bb(L^2(\R^3;\C^2)).
\end{align}
In the Coulomb gauge, only two of the the photon polarization, that is, the two transverse ones, are
in the model, so that the photon Hilbert space is chosen to be the boson Fock space over $L^2(\R^3;\C^2))$,
the target space $\C^2$ representing the two photon polarization degrees of freedom.
In our present formulation in the Lorentz gauge, the closed subspace 
\begin{align}
\H_\rm{D}(N)\ot \F_\bb(\H_\rm{T})
\end{align} 
is naturally identified with $\H_\rm{Coulomb}$. But $w$ is just an identity matrix when restricted on $\H_\rm{T}$,
it is equal to the closed subspace of $\F_\rm{DM}(V,N)$
\begin{align}
\H_\rm{D}(N)\ot \F_\bb(w\H_\rm{T}).
\end{align}
As we have ever seen, the physical Hilbert space with positive definite metric is naturally identified
with the closed subspace
\begin{align}
\mathcal{V}=e^{-iG}\left(\H_\rm{D}(N)\ot \F_\bb(w\H_\rm{T})\right),
\end{align}
via unitary transformation $U$ given by \eqref{U}.
Hence, the self-adjoint operator 
\begin{align}
U_G^{-1} H_\rm{phys} U_G,\quad U_G:=Ue^{-iG}
\end{align}
is to be considered as a gauge transformed Hamiltonian from the Lorenz gauge into the Coulomb gauge. 
Let $\rm{d}\Gamma^\rm{C}_\bb(\omega)$ be a second quantization operator of $\omega$ when regarded as 
an operator in $\H_\rm{T}$, that is, the multiplication operator of the matrix valued function
\begin{align}
\kk\mapsto \begin{pmatrix}
\omega(\kk) & 0 \\
0 & \omega(\kk)
\end{pmatrix}. 
\end{align}
\begin{Thm}
We have the operator equality 
\begin{align}
U_G^{-1}H_\rm{phys}U_G=\overline{H_D(V,N)+\rm{d}\Gamma^\rm{C}_\bb(\omega)+q\sum_{k=1,2,3}\int^\oplus_{\mathcal{X}^N} d\XX\, \alpha^{ak} A^{\rm{C}}_k(\xx^a)+E^\rm{C}},
\end{align}
where 
\begin{align}\label{equivalent-Coulomb}
A_k^\rm{C}(\xx)&:= \frac{1}{\sqrt{2}}\sum_{r=1,2}\left\{c^{r}\left(e_{rk}\frac{\hat{\chi^{\xx^a}_\rm{ph}}}{\sqrt{\omega}}\right)+
c^\dagger_{r}\left(e^r_{\;\;k}\frac{\hat{\chi^{\xx^a}_\rm{ph}}}{\sqrt{\omega}}\right)\right\},\quad k=1,2,3,\\
E^\rm{C}&:=-\frac{1}{2}q^2 \sum_{a,b}\alpha^{a\mu}\int^\oplus_{\mathcal{X}^N}d\bvec X \ip{k_\mu g^{\xx^a}_\nu}{g^{\xx^b \nu}}.
\end{align}
\end{Thm}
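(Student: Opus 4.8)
The plan is to transport the essential self-adjointness established in Theorem~\ref{ess-sa} across the unitary $U_G=Ue^{-iG}$ and then to match the outcome term by term with the Coulomb-gauge expression by means of Lemmas~\ref{transf} and~\ref{inv}. Write $\mathcal{W}:=\H_\rm{D}(N)\ot\F_\bb(w\H_\rm{T})$, so that $e^{-iG}$ carries $\mathcal{W}$ onto $\V$ and $U_G$ maps $\mathcal{W}$ unitarily onto $\H_\rm{phys}$; since $\eta$ is the identity on $\V$, the operator $K:=U_G^{-1}H_\rm{phys}U_G$ is genuinely self-adjoint on $\mathcal{W}\cong\H_\rm{Coulomb}$. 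Because $e^{iG}$ commutes with $\eta$, the orthogonal projection $P$ onto $\V$ factorizes as $P=e^{-iG}P_0e^{iG}$, where $P_0$ denotes the orthogonal projection onto $\mathcal{W}$. Throughout I abbreviate by $H^\rm{C}$ the operator appearing inside the closure on the right-hand side of the claimed identity.

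First I would compute $K$ on the dense domain $\mathcal{D}_1^\rm{C}:=\mathcal{D}_1\cap\mathcal{W}$. For $\Phi\in\mathcal{D}_1^\rm{C}$ one has $e^{-iG}\Phi\in(e^{-iG}\mathcal{D}_1)\cap\Vp$, so $[e^{-iG}\Phi]\in D(\tilde H)\subset D(H_\rm{phys})$, and Theorem~\ref{ess-sa} gives $H_\rm{phys}[e^{-iG}\Phi]=[H_\rm{DM}(V,N)e^{-iG}\Phi]$ (this is legitimate because $H_\rm{DM}(V,N)$ leaves $\Vp$ invariant on $e^{-iG}\mathcal{D}_1$ by Lemma~\ref{inv}). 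Applying $U_G^{-1}=e^{iG}U^{-1}$ with $U^{-1}[\,\cdot\,]=P(\cdot)$ from \eqref{U}, together with the factorization of $P$, yields $K\Phi=P_0\,e^{iG}H_\rm{DM}(V,N)e^{-iG}\Phi$, which by Lemma~\ref{transf} equals $P_0$ applied to the right-hand side of \eqref{H-transf}.

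The heart of the argument is to evaluate this projection using Lemma~\ref{inv}, i.e.\ $Q_0(\xx)\in w\H_\rm{L}$ and $Q_k(\xx)\in w\H_\rm{T}$, together with the mutual orthogonality of $w\H_\rm{T}$, $w\H_\rm{L}$, $w\H_\rm{S}$ in the positive-definite inner product $\ip{\cdot}{\cdot}$. The free part $H_0$ restricts on $\mathcal{W}$ to $H_\rm{D}(V,N)+\rm{d}\Gamma^\rm{C}_\bb(\omega)$, since $\rm{d}\Gamma_\bb(\omega)$ (with $\omega$ scalar in the polarization index) preserves polarization type. The $\mu=0$ interaction term is annihilated by $P_0$: its creation part $c^\dagger(Q_0)$ produces a longitudinal photon and hence leaves $\F_\bb(w\H_\rm{T})$, while its annihilation part $c(\overline{Q_0})$ kills transverse states because $\inprod{Q_0}{T}=0$ for every $T\in w\H_\rm{T}$. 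The $\mu=k$ terms are transverse, so $P_0$ acts as the identity on them, and substituting the explicit vectors \eqref{Q-concrete} and unwinding the definitions of $c^r$ and $c^\dagger_r$ identifies them with $q\sum_{k}\int^\oplus_{\mathcal{X}^N}d\XX\,\alpha^{ak}A_k^\rm{C}(\xx^a)$. Finally $E^\rm{C}$ is a matrix-valued multiplication operator carrying no photon operators, so it commutes with $P_0$ and survives unchanged. Altogether $K\Phi=H^\rm{C}\Phi$ for all $\Phi\in\mathcal{D}_1^\rm{C}$.

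To finish, I would note that $K$ is self-adjoint, hence closed, and extends $H^\rm{C}\upharpoonright\mathcal{D}_1^\rm{C}$, so $K\supset\overline{H^\rm{C}\upharpoonright\mathcal{D}_1^\rm{C}}$. The operator $H_\rm{D}(V,N)+\rm{d}\Gamma^\rm{C}_\bb(\omega)+q\sum_k\int^\oplus\alpha^{ak}A_k^\rm{C}(\xx^a)$ is precisely the Coulomb-gauge Dirac--Maxwell Hamiltonian, essentially self-adjoint on $\mathcal{D}_1^\rm{C}$ by Arai's theorem \cite{MR1765584}, and $E^\rm{C}$ is a bounded symmetric perturbation under the cutoff hypothesis ensuring $g^{\xx}\in\H_\rm{ph}$; hence $H^\rm{C}$ is essentially self-adjoint on $\mathcal{D}_1^\rm{C}$ and $\overline{H^\rm{C}\upharpoonright\mathcal{D}_1^\rm{C}}=\overline{H^\rm{C}}$ is self-adjoint. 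Since a self-adjoint operator admits no proper symmetric extension, $K=\overline{H^\rm{C}}$, which is the asserted identity. The main obstacle I anticipate is the bookkeeping of the third paragraph---verifying exactly which creation and annihilation modes survive $P_0$ and matching the transverse coefficients in \eqref{Q-concrete} with $A_k^\rm{C}$---together with justifying the essential self-adjointness of the Coulomb Hamiltonian on the specific core $\mathcal{D}_1^\rm{C}$ and the boundedness of $E^\rm{C}$.
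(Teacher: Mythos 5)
Your first three paragraphs are correct and coincide with the paper's own argument: for $\Phi\in\mathcal{D}_1^\rm{C}:=\mathcal{D}_1\cap\left(\H_\rm{D}(N)\ot\F_\bb(w\H_\rm{T})\right)$ you use Theorem \ref{ess-sa}, Lemma \ref{transf} and Lemma \ref{inv} to get $K\Phi=P_0\,e^{iG}H_\rm{DM}(V,N)e^{-iG}\Phi=H^\rm{C}\Phi$, where $K:=U_G^{-1}H_\rm{phys}U_G$; the factorization $P=e^{-iG}P_0e^{iG}$ and the mode bookkeeping (the $Q_0\in w\H_\rm{L}$ term killed by $P_0$ on both its creation and annihilation sides, the $Q_k\in w\H_\rm{T}$ terms surviving and matching $A_k^\rm{C}$, and $E^\rm{C}$ commuting with $P_0$) is exactly the content of the paper's computation following Lemma \ref{transf}.

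The gap is in your closing paragraph. From the computation you only get $K\supset\overline{H^\rm{C}\upharpoonright\mathcal{D}_1^\rm{C}}$, and you upgrade this to equality by asserting that $H^\rm{C}$ is essentially self-adjoint on $\mathcal{D}_1^\rm{C}$ ``by Arai's theorem'' plus a bounded-perturbation argument. That input is not available here: what \cite{MR1765584} secures by the conjugation/von Neumann argument is the \emph{existence of self-adjoint extensions}, and essential self-adjointness of Dirac--Maxwell type Hamiltonians at arbitrary coupling, for every $V$ satisfying only Assumption \ref{ass-V}, and on this particular core, is precisely the hard point that Propositions \ref{a-or-b} and \ref{s.a.} were introduced to circumvent; no Coulomb-gauge counterpart is proved or quoted in this paper, so your step would need a proof of its own. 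The logic is also backwards: in the paper, essential self-adjointness of $H^\rm{C}$ on $\mathcal{D}_1^\rm{C}$ comes out as a \emph{consequence} of this theorem, not as an ingredient. The correct (and simpler) ending reverses the roles. By Theorem \ref{ess-sa}, $[(e^{-iG}\mathcal{D}_1)\cap\Vp]$ is a core of $H_\rm{phys}$; since $U_G$ is unitary and $U_G^{-1}$ carries this domain exactly onto $\mathcal{D}_1^\rm{C}$ (note that $P_0$ preserves $\mathcal{D}_1$, because projecting each photon component onto $w\H_\rm{T}$ keeps it in $C_0^\infty$), the subspace $\mathcal{D}_1^\rm{C}$ is a core of the self-adjoint operator $K$. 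Hence
\begin{align}
K=\overline{K\upharpoonright\mathcal{D}_1^\rm{C}}=\overline{H^\rm{C}\upharpoonright\mathcal{D}_1^\rm{C}}\subset\overline{H^\rm{C}},
\end{align}
and since $\overline{H^\rm{C}}$ is symmetric while $K$ is self-adjoint, $\overline{H^\rm{C}}\subset\left(\overline{H^\rm{C}}\right)^*\subset K^*=K$, so $K=\overline{H^\rm{C}}$. Nothing about the Coulomb-gauge operator is needed beyond its symmetry (your boundedness claim for $E^\rm{C}$, though true under $\hat{\chi_\rm{ph}}\in D(\omega^{-3/2})$, becomes unnecessary), and the essential self-adjointness you tried to import is then obtained for free as a corollary.
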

 \begin{proof}
 Let $\Phi\in \mathcal{D}_1\cap \left(\H_\rm{D}(N)\ot\F_\bb(w\H_\rm{T})\right)$. Then, 
 \begin{align}
 H_\rm{phys} U_G\Phi &= \tilde{H}[e^{-iG}\Phi] \no\\
 &=[H_\rm{DM}(V,N)e^{-iG}\Phi].
 \end{align}
 By Lemma \ref{transf}, this is the equivalence class to which the vector 
 \begin{align}
 e^{-iG}&\Bigg(H_0+q\sum_{a}\alpha^{a\mu}\int^\oplus_{\mathcal{X}^N}d\bvec X\frac{1}{\sqrt{2}}
\left[ c^\nu\left(Q_{\mu\nu}(\xx^a)\right)+c^\dagger_\nu\left(Q_\mu^{\;\;\nu}(\xx^a)\right)\right] \no\\
&\hskip 14mm -\frac{1}{2}q^2 \sum_{a,b}\alpha^{a\mu}\int^\oplus_{\mathcal{X}^N}d\bvec X \ip{k_\mu g^{\xx^a}_\nu}{g^{\xx^b \nu}}\Bigg)\Phi
 \end{align}
 belongs. If we apply $U^{-1}$ given in \eqref{U} and then $e^{-iG}$ to this equivalent class, the result is
  \begin{align}
 &\Bigg(H_0+q\sum_{a}\sum_{k=1,2,3}\alpha^{ak}\sum_{r=1,2}\int^\oplus_{\mathcal{X}^N}d\bvec X\frac{1}{\sqrt{2}}
\left[ c^r\left(Q_{kr}(\xx^a)\right)+c^\dagger_r\left(Q_k^{\;\;r}(\xx^a)\right)\right] \no\\
&\hskip 14mm -\frac{1}{2}q^2 \sum_{a,b}\alpha^{a\mu}\int^\oplus_{\mathcal{X}^N}d\bvec X \ip{k_\mu g^{\xx^a}_\nu}{g^{\xx^b \nu}}\Bigg)\Phi,
\end{align}
since by \eqref{Q-concrete} one sees that in the second term the summation over $\mu$ survives only for $\mu=1,2,3$ and 
summation over $\nu$ only for $\nu=1,2$.
But since the operator inside the bracket coincides with the one on the right hand side of \eqref{equivalent-Coulomb},
we obtain
\begin{align}\label{inc}
U_G^{-1}H_\rm{phys}U_G|_{\mathcal{D}_1\cap\left( \H_\rm{D}(N)\ot\F_\bb(w\H_\rm{T})\right)}\subset 
H_D(V,N)+\rm{d}\Gamma^\rm{C}_\bb(\omega)+q\sum_{k=1,2,3}\int^\oplus_{\mathcal{X}^N} d\XX\, \alpha^{ak} A^{\rm{C}}_k(\xx^a)+E^\rm{C}.
\end{align}
The operator on the right hand side of \eqref{inc} is clearly symmetric and the subspace $\mathcal{D}_1\cap \H_\rm{D}(N)\ot\F_\bb(w\H_\rm{T})$
 is a core of the self-adjoint operator $U_G^{-1}H_\rm{phys}U_G$ by Theorem \ref{ess-sa}. Hence, the assertion follows.
 \end{proof}
 At the last of the present subsection, we see that the ``residual" term $E^\rm{C}$ in \eqref{equivalent-Coulomb} certainly gives the Coulomb interaction energy
 between electrons in the limit where the ultraviolet cutoff is removed. We have to be careful to take the limit because the term $E^\rm{C}$ includes
 not only the Coulomb interaction between two different electrons but also electron's self-interaction energy through gauge field, which will diverge
 in the limit. This diverging self-interaction part should be adequately subtracted.
 
 Choose a special photon cutoff function whose Fourier transform is propotional to 
 $\chi_{\epsilon, \Lambda}$, by which we denote the characteristic function of the set 
 \begin{align}
 \{\kk\in\R^3\,|\, \epsilon\le |\kk| \le \Lambda \}.
 \end{align} 
 The parameter $\epsilon$ and $\Lambda$ represent the infrared and ultraviolet cutoffs, respectively.
 Then,
  \begin{Thm}
 Let $E^\rm{C}(\epsilon,\Lambda)$ be $E^\rm{C}$ when a photon cutoff function is taken so that $\hat{\chi_\rm{ph}}=\chi_{\epsilon,\Lambda}/(2\pi)^{3/2}$. 
 Then,
 \begin{align}\label{Coulomb}
 \lim_{\Lambda\to\infty}\lim_{\epsilon\to 0}\left(E^\rm{C}(\epsilon,\Lambda)+\frac{q^2}{8\pi^2}\Lambda\right)=-\frac{q^2}{4\pi}\sum_{a<b}\int^\oplus_{\mathcal{X}^N}d\XX\,\frac{1}{|\xx^a-\xx^b|},
 \end{align}
 where the limit is understood as the strong resolvent sense with operators on both sides being considered to be
 a self-adjoint operators.
 \begin{proof}
 By Lemma \ref{transf}, $E^\rm{C}(\epsilon,\Lambda)$ is a multiplication operator by a function
 \begin{align}
 \XX\mapsto -\frac{q^2}{2}\sum_{a,b}\alpha^{a\mu}\ip{k_\mu g^{\xx^a}_\nu}{g^{\xx^b\nu}}.
 \end{align}
 A direct computation shows that this function can be written as 
 \begin{align}
 -\frac{q^2}{2}\sum_{a,b}\alpha^{a\mu}\ip{k_\mu g^{\xx^a}_\nu}{g^{\xx^b\nu}}
 &= -\frac{q^2}{2}\cdot2 \sum_{a<b}\int\frac{d^3\kk}{(2\pi)^3}\frac{|\chi_{\epsilon,\Lambda}(\kk)|^2}{|\kk|^2}e^{i(\xx^a-\xx^b)\kk} -\frac{q^2}{2}\int\frac{d^3\kk}{(2\pi)^3}\frac{|\chi_{\epsilon,\Lambda}(\kk)|^2}{|\kk|^2}\no\\
 &=-\frac{q^2}{4\pi} \frac{2}{\pi}\sum_{a<b}\int_{\epsilon}^\Lambda dk\,\frac{\sin(|\xx^a-\xx^b | k)}{|\xx^a-\xx^b|k}-\frac{q^2}{8\pi^2}(\Lambda -\epsilon),
 \end{align}
 which converges uniformly in $\XX\in\mathcal{X}^N$ to 
 \begin{align}
 -\frac{q^2}{4\pi}\frac{2}{\pi} \sum_{a<b}\int_{0}^\Lambda dk\,\frac{\sin(|\xx^a-\xx^b | k)}{|\xx^a-\xx^b|k}-\frac{q^2}{8\pi^2}\Lambda,
 \end{align}
 as $\epsilon$ tends to zero. Hence, what we should show is that the multiplication operator 
 \begin{align}
E^\rm{C}_\rm{ren}(\Lambda):=-\frac{q^2}{4\pi} \frac{2}{\pi}\sum_{a<b} \int_{\mathcal{X}^N}^\oplus d\XX\, \int_0^\Lambda dk\,\frac{\sin(|\xx^a-\xx^b | k)}{|\xx^a-\xx^b|k} 
 \end{align}
 converges to the operator on the right hand side of \eqref{Coulomb}, which we call $E_\rm{ten}^\rm{C}$ hereafter, in the strong resolvent sense.
 
 To this end, take arbitrary $\Psi\in D(E_\rm{ren}^\rm{C})$. Since $D(E_\rm{ren}^\rm{C}) \subset D(E_\rm{ren}^{C}(\Lambda))$
 for all $\Lambda\ge 0$ as can be easily checked, it suffices to prove that 
 \begin{align}
 \norm{E_\rm{ren}^\rm{C}(\Lambda)\Psi - E_\rm{ren}^\rm{C}\Psi } \to 0
 \end{align}
 as $\Lambda\to\infty$. But this follows from the well known integral formula
 \begin{align}
 \lim_{L\to\infty}\int_0^L \frac{\sin y}{y}\,dy =\frac{\pi}{2},
 \end{align}
 as well as the estimate
 \begin{align}
 \norm{E_\rm{ren}^\rm{C}(\Lambda)\Psi - E_\rm{ren}^\rm{C}\Psi }&=\int_{\mathcal{X}^N} d\XX\,\norm{E_\rm{ren}^\rm{C}(\Lambda)(\XX)\Psi - E_\rm{ren}^\rm{C}(\XX)\Psi(\XX) }^2 \no\\
 &\le \frac{q^2}{4\pi}\sum_{a<b} \int_{\mathcal{X}^N} d\XX\,\norm{\left(\frac{2}{\pi}\int_0^{|\xx^a-\xx^b|\Lambda} \frac{\sin p}{p}\,dp - 1\right)\frac{1}{|\xx^a - \xx^b |}\Psi(\XX)}^2 \no\\
 &\to 0,
 \end{align}
as $\Lambda$ tends to infinity, justified by the Lebesgue dominant convergence theorem. 
 \end{proof}
 \end{Thm}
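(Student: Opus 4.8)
The plan is to pass through three reductions: realise $E^\rm{C}(\epsilon,\Lambda)$ as an explicit multiplication operator, remove the infrared cutoff $\epsilon$ uniformly, and finally remove the ultraviolet cutoff $\Lambda$ in the strong resolvent sense after the self-energy has been absorbed into the counterterm $\tfrac{q^2}{8\pi^2}\Lambda$. I would start from Lemma \ref{transf}, which already exhibits $E^\rm{C}$ as a decomposable multiplication operator acting fibrewise over $\mathcal{X}^N$, namely by the scalar $-\tfrac{q^2}{2}\sum_{a,b}\alpha^{a\mu}\ip{k_\mu g^{\xx^a}_\nu}{g^{\xx^b\nu}}$. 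The first genuine computation is to evaluate these inner products. Since $g^{\xx^a}$ carries only the longitudinal ($\mu=3$) polarisation, the contraction over $\nu$ collapses to the single term $\nu=3$; inserting $k_\mu$ together with $\alpha^{a\mu}=(1,\bvec{\alpha})$ splits the result into a timelike piece ($\mu=0$, coefficient $\alpha^{a0}=1$) and spatial pieces ($\mu=1,2,3$). The spatial pieces have an integrand odd in $\kk$ against the even weight $|\chi_{\epsilon,\Lambda}|^2/\omega^3$ and therefore integrate to zero, so only the timelike piece survives and produces the kernel $\int\tfrac{d^3\kk}{(2\pi)^3}\tfrac{|\chi_{\epsilon,\Lambda}(\kk)|^2}{|\kk|^2}e^{i\kk\cdot(\xx^a-\xx^b)}$.

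Next I would separate the double sum into the off-diagonal pairs and the diagonal, the latter being the $\XX$-independent, manifestly divergent self-energy. Carrying out the angular integration reduces each off-diagonal kernel to the radial integral $\tfrac{2}{\pi}\int_\epsilon^\Lambda\tfrac{\sin(|\xx^a-\xx^b|k)}{|\xx^a-\xx^b|k}\,dk$. Because $|\sin(rk)/(rk)|\le 1$, the missing tail $\int_0^\epsilon$ is bounded by $\epsilon$ uniformly in the distance $r=|\xx^a-\xx^b|$, hence uniformly in $\XX$; the self-energy term tends to the constant $-\tfrac{q^2}{8\pi^2}\Lambda$. Uniform convergence of the resulting bounded multiplication function yields operator-norm convergence as $\epsilon\downarrow 0$, and adding the counterterm $\tfrac{q^2}{8\pi^2}\Lambda$ exactly cancels the self-energy, leaving the bounded self-adjoint multiplication operator $E^\rm{C}_\rm{ren}(\Lambda)$ given by $-\tfrac{q^2}{4\pi}\tfrac{2}{\pi}\sum_{a<b}\int^\oplus_{\mathcal{X}^N}d\XX\int_0^\Lambda\tfrac{\sin(|\xx^a-\xx^b|k)}{|\xx^a-\xx^b|k}\,dk$.

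Finally I would let $\Lambda\to\infty$. By the Dirichlet integral $\int_0^\infty\tfrac{\sin y}{y}\,dy=\tfrac{\pi}{2}$, the multiplication function converges pointwise almost everywhere to $-\tfrac{q^2}{4\pi}\sum_{a<b}|\xx^a-\xx^b|^{-1}$, whose associated multiplication operator $E^\rm{C}_\rm{ren}$ is self-adjoint on its natural maximal domain. Since every $E^\rm{C}_\rm{ren}(\Lambda)$ is bounded we trivially have $D(E^\rm{C}_\rm{ren})\subset D(E^\rm{C}_\rm{ren}(\Lambda))$, so by the standard criterion (strong convergence on a common core of self-adjoint operators implies strong resolvent convergence) it suffices to prove $E^\rm{C}_\rm{ren}(\Lambda)\Psi\to E^\rm{C}_\rm{ren}\Psi$ in norm for $\Psi\in D(E^\rm{C}_\rm{ren})$. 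This I would obtain fibrewise from dominated convergence: the deviation factor $\tfrac{2}{\pi}\int_0^{|\xx^a-\xx^b|\Lambda}\tfrac{\sin p}{p}\,dp-1$ is uniformly bounded and tends to $0$ almost everywhere, while $|\xx^a-\xx^b|^{-1}\Psi\in L^2$ supplies the integrable dominating function.

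I expect the main obstacle to be the first reduction: the explicit evaluation of $\ip{k_\mu g^{\xx^a}_\nu}{g^{\xx^b\nu}}$ and, above all, the verification that the $\bvec{\alpha}$-dependent spatial contributions drop out. This is precisely the step where the scalar Coulomb kernel $1/|\kk|^2$ is isolated and the Dirac matrices disappear, so that the residual interaction becomes a genuine multiplication operator; once this is in place, both limits are routine analysis controlled by the elementary bound $|\sin x/x|\le 1$ and the Dirichlet integral, and the only remaining care is the correct identification and cancellation of the divergent self-energy by the counterterm.
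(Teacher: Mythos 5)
Your overall route is exactly the paper's: use Lemma \ref{transf} to realise $E^\rm{C}(\epsilon,\Lambda)$ as a decomposable multiplication operator, reduce the kernel to the radial sine integral, remove the infrared cutoff by a bound uniform in $\XX$ (hence in operator norm), and remove the ultraviolet cutoff in the strong resolvent sense by proving norm convergence on $D(E^\rm{C}_\rm{ren})$ via the Dirichlet integral and dominated convergence. Those two limiting arguments coincide with the paper's proof and are correct as you give them.

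The step that fails as stated is the one you yourself single out as the crux: the elimination of the $\bvec{\alpha}$-dependent spatial pieces. For $a\neq b$ the integrand is \emph{not} odd in $\kk$, because the factor $e^{i\kk\cdot(\xx^a-\xx^b)}$ has no definite parity; what parity actually gives is
\begin{align}
\int d^3\kk\, k^j\,\frac{|\chi_{\epsilon,\Lambda}(\kk)|^2}{\omega(\kk)^3}\,e^{i\kk\cdot(\xx^a-\xx^b)}
= i\int d^3\kk\, k^j\,\frac{|\chi_{\epsilon,\Lambda}(\kk)|^2}{\omega(\kk)^3}\,\sin\bigl(\kk\cdot(\xx^a-\xx^b)\bigr),
\end{align}
and the right-hand side is purely imaginary but generically nonzero (its integrand is even, being a product of two odd factors and an even one). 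The $(a,b)$ and $(b,a)$ terms do not cancel each other either, since they are weighted by the distinct matrices $\alpha^{aj}$ and $\alpha^{bj}$ acting on different tensor factors. So from the formula of Lemma \ref{transf} taken at face value one cannot conclude that $E^\rm{C}(\epsilon,\Lambda)$ is the scalar Coulomb kernel, nor even that it is symmetric. The repair is to return to the derivation of that lemma: the residual term arises as $\tfrac{1}{2}[iG,[iG,H_0]]$, which there appears as $\tfrac{q^2}{2}\sum_{a,b}\alpha^{b\nu}\int^\oplus d\XX\,\rm{Im}\ip{g^{\xx^a}_\mu}{ik_\nu g^{\xx^b\mu}}$, i.e. the \emph{real part} of the kernel you wrote; only after this reduction to the cosine kernel does your parity argument kill the spatial pieces and identify the $\mu=0$ piece with the paper's exponential expression (whose sine part vanishes by the same parity). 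To be fair, the paper shares this imprecision --- the statement of Lemma \ref{transf} drops the $\rm{Im}$, and the proof of the present theorem hides everything behind ``a direct computation shows'' --- but since your argument leans on the face-value formula, this is the point where it genuinely breaks.
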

\subsection{Triviality of the physical subspace}
If $\hat{\chi_\rm{ph}}$ does not belong to the domain of $\omega^{-3/2}$, the definition of $G$ in \eqref{G}
makes no sense. In this case, the physical subspace is trivial: 
\begin{Thm}\label{trivial}
Suppose that $\hat{\chi_\rm{ph}}$ does not belong to $D(\omega^{-3/2})$. Then,
\begin{align}
V_\rm{phys}=\{0\}.
\end{align}
\end{Thm}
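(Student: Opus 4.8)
The plan is to show that already the defining set
\[
S:=\{\Psi\in D'\mid \Omega^+(t,f)\Psi=0\ \text{for all}\ t\in\R,\ f\in D(T_\mathfrak{h}^{-1})\}
\]
consists of the single vector $0$; since $\Vp=\overline{S}$, this immediately yields $\Vp=\{0\}$. The mechanism is an infrared catastrophe: the Gupta--Bleuler condition forces a would-be physical vector to be, in the scalar mode, a coherent state displaced by precisely the vector $g^{\xx^a}=i\hat{\chi_\rm{ph}^{\xx^a}}/\omega^{3/2}$ whose square-integrability was needed to define $e^{iG}$. When $\hat{\chi_\rm{ph}}\notin D(\omega^{-3/2})$ this displacement leaves $\F_\rm{ph}$, so no nonzero vector can satisfy the condition.

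First I would apply the (always-defined) unitary $W$ and set $\Phi:=W\Psi$; note $W$ commutes with $H_0$ and $N_\bb$, hence preserves $D'$, and commutes with the $c$-number multiplication operator. Using the same polarization algebra as in the proof of Lemma \ref{g-trans}, but \emph{without} the ill-defined $e^{iG}$, the condition $\Omega^+(t,f)\Psi=0$ (with $\Omega^+$ as in \eqref{omega+}) becomes the scalar-mode eigenvalue equation $c^0(\phi_{t,f})\Phi=\lambda_{t,f}\Phi$ with $\phi_{t,f}=i\sqrt\omega\,e^{i\omega t}\hat{f^*}$ and $\lambda_{t,f}=\tfrac{iq}{2}\sum_a\int^\oplus d\bvec X\,\ip{\hat{\chi_\rm{ph}^{\xx^a}}/\sqrt\omega}{e^{-it\omega}\hat f/\sqrt\omega}$. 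Reading this fibrewise over $\bvec X$, I would rewrite the eigenvalue as the $L^2$-pairing $\lambda_{t,f}(\bvec X)=\ip{\phi_{t,f}}{\beta_{\bvec X}}$, where a short Fourier computation identifies $\beta_{\bvec X}(\kk)=-\tfrac{q}{2}\sum_a \hat{\chi_\rm{ph}}(\kk)e^{-i\kk\cdot\xx^a}/\omega^{3/2}$ (using that $\chi_\rm{ph}$ is real and even), i.e. exactly the superposition of the $g^{\xx^a}$.

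Next I would verify that $\beta_{\bvec X}\notin L^2(\R^3_\kk)$ for almost every $\bvec X$. Since $\hat{\chi_\rm{ph}}/\sqrt\omega\in L^2$ but $\hat{\chi_\rm{ph}}\notin D(\omega^{-3/2})$, the divergence of $\int |\hat{\chi_\rm{ph}}|^2/\omega^3$ must be infrared; a direct comparison shows $\int_{|\kk|<\delta}|\hat{\chi_\rm{ph}}|^2/\omega^3=\infty$ for every $\delta>0$. Because $\sum_a e^{-i\kk\cdot\xx^a}\to N$ as $\kk\to0$, the weight $|\sum_a e^{-i\kk\cdot\xx^a}|^2$ stays bounded below near the origin for each fixed $\bvec X$, so $\norm{\beta_{\bvec X}}_{L^2}=\infty$. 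The core step is then a single-fibre argument: for a.e. $\bvec X$ the vector $\psi:=\Phi(\bvec X)$ lies in $D(N_0^{1/2})$ (since $\Phi\in D'\subset D(N_\bb^{1/2})$ and $N_0\le \rm{d}\Gamma_\bb(1)$), and $c^0(\phi)\psi=\ip{\phi}{\beta_{\bvec X}}\psi$ holds for all $\phi\in\mathcal{M}:=\{i\sqrt\omega\,\hat{f^*}\mid f\in D(T_\mathfrak{h}^{-1})\}$. As $\mathcal{M}$ is dense in $L^2(\R^3_\kk)$ while $\beta_{\bvec X}\notin L^2$, the functional $\phi\mapsto\ip{\phi}{\beta_{\bvec X}}$ is unbounded on $\mathcal{M}$, so I can choose $\phi^{(n)}\in\mathcal{M}$ with $\norm{\phi^{(n)}}=1$ and $|\ip{\phi^{(n)}}{\beta_{\bvec X}}|\to\infty$. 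The standard annihilation bound then gives $|\ip{\phi^{(n)}}{\beta_{\bvec X}}|\,\norm{\psi}=\norm{c^0(\phi^{(n)})\psi}\le\norm{\phi^{(n)}}\,\norm{N_0^{1/2}\psi}=\norm{N_0^{1/2}\psi}<\infty$, forcing $\norm{\psi}=0$. Hence $\Phi(\bvec X)=0$ a.e., so $\Phi=0$ and $\Psi=W^{-1}\Phi=0$.

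The hard part will be the bookkeeping of the middle two paragraphs: pinning down $\lambda_{t,f}$ as the genuine $L^2$-pairing $\ip{\phi_{t,f}}{\beta_{\bvec X}}$ with the correct conventions for $c^0$, $\hat{f^*}$ and the polarization rotation $W$, together with the two soft analytic facts used fibrewise, namely density of $\mathcal{M}$ in $L^2(\R^3_\kk)$ (which follows since $\mathcal{M}$ contains $i\sqrt\omega\,\hat{f^*}$ for all $\hat f\in C_0^\infty(\R^3\setminus\{0\})$) and the almost-everywhere non-normalizability of $\beta_{\bvec X}$. The unbounded-functional step itself is elementary once $\beta_{\bvec X}\notin L^2$ is established, and it is precisely the rigorous expression of the statement that a coherent state with non-normalizable parameter cannot live in Fock space.
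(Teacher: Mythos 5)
Your core mechanism is the same one that drives the paper's proof: the Gupta--Bleuler condition turns a would-be physical vector into an eigenvector of annihilation operators whose eigenvalue is the pairing against the non-normalizable displacement $\propto\sum_a\hat{\chi_\rm{ph}^{\xx^a}}/\omega^{3/2}$, the divergence is localized in the infrared because $\hat{\chi_\rm{ph}}\in L^2$, and the bound $\norm{c^0(\phi)\psi}\le\norm{\phi}\norm{N^{1/2}\psi}$ then forbids such a vector to be nonzero. The execution, however, is genuinely different. The paper never disintegrates over $\XX$ and never applies $W$: it tests the condition at $t=0$ diagonally against $\Psi$ itself, forming $l(f)=\ip{\Psi}{a_\mu(ik^\mu\hat{f^*}/\sqrt{\omega})\Psi}$, which the Gupta--Bleuler condition converts into the pairing of $\hat{f}$ with the \emph{single} fixed vector $\frac{iq}{\sqrt{2}}\sum_a\int d\XX\,\norm{\Psi(\XX)}^2\,\hat{\chi_\rm{ph}^{\xx^a}}/\omega$ (a Bochner integral); the annihilation estimate bounds $l(f)$ by a constant times $\norm{\sqrt{\omega}\hat{f}}$, the Riesz lemma then forces this averaged vector into $D(\omega^{-1/2})$, and the Riemann--Lebesgue theorem applied to $\kk\mapsto\int d\XX\int d\bvec{Y}\,\norm{\Psi(\XX)}^2\norm{\Psi(\bvec{Y})}^2e^{i\kk(\xx^a-\xx^b)}$ supplies the positivity near $\kk=0$ that contradicts the infrared divergence. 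So the paper does one global Riesz argument with the weight $\norm{\Psi(\XX)}^2\,d\XX$, where you do an unbounded-functional argument in each fibre.

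The one step of your write-up that does not survive as literally stated is the fibrewise quantifier: ``for a.e.\ $\bvec{X}$ the relation $c^0(\phi)\psi=\ip{\phi}{\beta_{\bvec X}}\psi$ holds for all $\phi\in\mathcal{M}$.'' For each fixed $(t,f)$ the identity in $\F_\rm{DM}(N)$ gives the fibre identity only outside a null set that depends on $(t,f)$, and $\mathcal{M}$ is uncountable; worse, your witnessing sequence $\phi^{(n)}$ is chosen \emph{after} fixing $\bvec{X}$, so you cannot first fix the exceptional null set and then pick the sequence. This is repairable precisely because the divergence is infrared: take the fixed countable family $\phi^{(n,m)}\in\mathcal{M}$ proportional to $\hat{\chi_\rm{ph}}\,\mathbf{1}_{\{1/n<|\kk|<1/m\}}/\omega^{3/2}$ (these lie in $\mathcal{M}$ because they are supported away from $\kk=0$). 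The eigenvalue relation then holds for all pairs $(n,m)$ outside a single null set, and for each fixed $\bvec{X}$, choosing $1/m$ so small that $\re\sum_a e^{-i\kk\cdot\xx^a}\ge N/2$ on the support, one gets $|\ip{\phi^{(n,m)}}{\beta_{\bvec X}}|/\norm{\phi^{(n,m)}}\gtrsim\bigl(\int_{1/n<|\kk|<1/m}|\hat{\chi_\rm{ph}}|^2/\omega^3\bigr)^{1/2}\to\infty$ as $n\to\infty$. With this substitution your argument is complete and rigorous. It is worth observing that the paper's averaging device is engineered to avoid exactly this bookkeeping: pairing against $\norm{\Psi(\XX)}^2\,d\XX$ uses each test function only once, globally, and the continuity-at-$\kk=0$ argument is performed once for the averaged phase factor rather than fibre by fibre.
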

\if0
To prove Theorem \ref{trivial}, we note the following general fact:
\begin{Lem}
Let $(X,\mu)$ is a measure space and $\mathcal{K}$ be a separable Hilbert space. Let
\[ F:X\to \mathcal{K} \]
a mesurable $\mathcal{K}$-valued function with
\begin{align}
\int_{X}\norm{F(x)} d\mu(x) <\infty,
\end{align}
for which the strong integral
\begin{align}
\int_X F(x)d\mu(x) \in\mathcal{K}
\end{align}
is defined.
\end{Lem}
\fi
\begin{proof}
Let $\Psi\in D'$ and 
\begin{align}
\Omega^+(t,f)\Psi=0
\end{align}
for all $t\in\R$ and $f\in\mathfrak{h}\cap D((-\Delta)^{-1/4})$. Then, we have by \eqref{omega+} with $t=0$
\begin{align}
a_\mu\left(\frac{ik^\mu \hat{f^*}}{\sqrt{\omega}}\right)\Psi=\frac{iq}{\sqrt{2}}\sum_a\int_{\mathcal{X}^N}^{\oplus}d\bvec{X} \ip{\frac{\hat{\chi_\rm{ph}^{\xx^a}}}{\omega}}{f}\Psi(\XX).
\end{align}
Define $l:\mathfrak{h}\cap D((-\Delta)^{-1/4})\to \C$ by
\begin{align}
l(f)&:=\ip{\Psi}{a_\mu\left(\frac{ik^\mu \hat{f^*}}{\sqrt{\omega}}\right)\Psi} \no\\
&=\frac{iq}{\sqrt{2}}\sum_a\int_{\mathcal{X}^N}d\XX \ip{\Psi(\XX)}{ \ip{\frac{\hat{\chi_\rm{ph}^{\xx^a}}}{\omega}}{\hat{f}}\Psi(\XX)}.
\end{align}
We remark that, since 
\begin{align}
\int d\XX \norm{\Psi(X)}^2 \norm{ \frac{\hat{\chi_\rm{ph}^{\xx^a}}}{\omega} } <\infty,
\end{align}
the strong Bochner integral
\begin{align}
\int d\XX \norm{\Psi(X)}^2  \frac{\hat{\chi_\rm{ph}^{\xx^a}}}{\omega} \in L^2(\R^3)
\end{align}
is defined. Therefore, $l(f)$ can be rewritten as
\begin{align}\label{also}
l(f)= \ip{\frac{iq}{\sqrt{2}}\sum_a\int d\XX \norm{\Psi(\XX)}^2 \frac{\hat{\chi_\rm{ph}^{\xx^a}}}{\omega}}{\hat{f}}.
\end{align}\label{l-bound}
The definition of $l(f)$ and the estimate \eqref{annihilation-bound} implies there is a constant $C>0$ with
\begin{align}
| l(f) |\le C \norm{\Psi}\norm{N_\bb^{1/2}\Psi} \norm{\sqrt{\omega}\hat{f}}.
\end{align}
Take arbitrary $g\in D((-\Delta)^{-1/2})$. Then, $(-\Delta)^{-1/4}g \in  \mathfrak{h}\cap D((-\Delta)^{-1/4})$.
If we define $\tilde{l}: D((-\Delta)^{-1/2}))\to \C$ by
\begin{align}
\tilde{l}(g):=l((-\Delta)^{-1/4}g),
\end{align}
it follows by \eqref{l-bound} that 
\begin{align}
|\tilde{l}(g)| \le C_\Psi \norm {g} 
\end{align}
for some $\Psi$-dependent constant $C_\Psi$. From the Riesz Lemma, there is a $\Theta\in L^2(\R^3)$ such that
\begin{align}
\tilde{l}(g)&=\ip{\Theta}{\hat{g}} \no\\
&= \ip{\frac{iq}{\sqrt{2}}\sum_a\int d\XX \norm{\Psi(\XX)}^2 \frac{\hat{\chi_\rm{ph}^{\xx^a}}}{\omega}}{\frac{\hat{g}}{\sqrt{\omega}}},
\end{align}
where the second equality is obtained form \eqref{also}. This means that 
\begin{align}
\frac{iq}{\sqrt{2}}\sum_a\int d\XX \norm{\Psi(\XX)}^2 \frac{\hat{\chi_\rm{ph}^{\xx^a}}}{\omega}\in D(\omega^{-1/2}),
\end{align}
that is, the function
\begin{align}
\kk\mapsto \frac{1}{{\omega(\kk)}}\cdot \frac{iq}{\sqrt{2}}\sum_a\int d\XX \norm{\Psi(\XX)}^2 \frac{\hat{\chi_\rm{ph}^{\xx^a}}}{\omega}(\kk)
\end{align}
is square integrable. Hence, we find
\begin{align}\label{bounded-condition}
\frac{q^2}{2}\sum_{a,b}\int d^3\kk \frac{|\hat{\chi_\rm{ph}}(\kk)|^2}{\omega(\kk)^3} \phi^{a,b}(\kk) <\infty 
\end{align}
where
\begin{align}
\phi^{a,b}(\kk):=\int d\XX \int d\bvec Y \norm{\Psi(\XX)}^2\norm{\Psi(\bvec Y)}^2
e^{i\kk(\xx^a - \xx^b)}.
\end{align}

Suppose $\Psi\not=0$. Then 
\begin{align}
K:=\phi^{a,b}(\bvec 0) = \int d\XX \int d\bvec Y \norm{\Psi(\XX)}^2\norm{\Psi(\bvec Y)}^2 > 0.
\end{align}
By the Riemann-Lebesgue Theorem, $\phi^{a,b}$ is continuous in $\kk$, and thus there is an open ball $U\subset \R^3$
centered at the origin with
\begin{align}
\phi^{a,b}(\kk)\ge \frac{K}{2}, \q \kk \in U.
\end{align}
Since $\chi_\rm{ph}\in L^2(\R^3)$, one sees that 
\begin{align}
\left|\int_{\R^3\setminus U} d^3\kk \frac{|\hat{\chi_\rm{ph}}(\kk)|^2}{\omega(\kk)^3} \phi^{a,b}(\kk)\right| 
&\le \sup_{\kk\in\R^3\setminus U}  \left|\frac{\phi^{a,b}(\kk)}{\omega(\kk)^3}\right| \int_{\R^3\setminus U} d^3\kk |\hat{\chi_\rm{ph}}(\kk)|^2 <\infty.
\end{align}
Thus, \eqref{bounded-condition} says
\begin{align}\label{diverge}
\frac{q^2}{2}\sum_{a,b}\int_U d^3\kk \frac{|\hat{\chi_\rm{ph}}(\kk)|^2}{\omega(\kk)^3} \phi^{a,b}(\kk) <\infty .
\end{align}
But this is impossible when $\hat{\chi_\rm{ph}}\not\in D(\omega^{-3/2})$. To see this, 
note that
\begin{align}
\int_{\R^3\setminus U} \frac{|\hat{\chi_\rm{ph}}(\kk)|^2}{\omega(\kk)^3} \le 
\dfrac{1}{\inf_{\kk\in \R^3\setminus U}| \omega(\kk)|^3}\int_{\R^3\setminus U} |\hat{\chi_\rm{ph}}(\kk)|^2 <\infty,
\end{align}
since $U$ is centered at the origin and $\hat{\chi_\rm{ph}}$ belongs to $L^2(\R^3)$.
Thus, in the case where $\hat{\chi_\rm{ph}}\not\in D(\omega^{-3/2})$,
\begin{align}
\int_U d^3\kk \frac{|\hat{\chi_\rm{ph}}(\kk)|^2}{\omega(\kk)^3} =\infty.
\end{align}
Hence, the integration on the left hand side of \eqref{diverge} has to diverge in this case:
\begin{align}
\frac{q^2}{2}\sum_{a,b}\int_U d^3\kk \frac{|\hat{\chi_\rm{ph}}(\kk)|^2}{\omega(\kk)^3} \phi^{a,b}(\kk) \ge \frac{q^2}{2}\sum_{a,b}\int_U d^3\kk \frac{|\hat{\chi_\rm{ph}}(\kk)|^2}{\omega(\kk)^3}  \frac{K}{2}= \infty.
\end{align}
Therefore, $\hat{\chi_\rm{ph}}$ has to belong to $D(\omega^{-3/2})$ if $\Psi\not=0$. This comples the proof.
\end{proof}
\section*{Acknowlegdements}
The authors are grateful to Professor Asao Arai for valuable comments and discussions. 
This study is somewhat inspired by the lectures given by 
Assistant Professor Akito Suzuki at Hokkaido University. It is our pleasure to thank him.
\appendix
\if0
\section{Differential operator on a direct fiber integral Hilbert space}
Let $\mathcal{F}$ be a Hilbert space 
\begin{align}
 \mathcal{F}:=L^2(\Real^d;\mathcal{H})=\int_{\Real^d}^\oplus dx\,\mathcal{H}
\end{align}
with $\mathcal{H}$ being a separable complex Hilbert space. The Hilbert space $\mathcal{F}$ is unitary equivalent
to 
\begin{align}
L^2(\Real^d)\otimes\mathcal{H}
\end{align}
via the unitary transformation
\begin{align}
U: \mathcal{F}\to L^2(\Real^d)\otimes\mathcal{H}
\end{align}
which maps $f(\cdot)\Psi$ into $f\otimes\Psi$ for $f\in L^2(\Real^d)$ and $\Psi\in\mathcal{H}$.
We write $f\times \Psi$ to mean the function $x\mapsto f(x)\Psi$:
\[ f(\cdot)\Psi=f\times\Psi=U^{-1}f\otimes\Psi .\]
The generalized differential operator on $L^2(\Real^d)$ with respect to the $j$-th component of $x\in\Real^d$ is 
denoted by $D_j$ for $j=1,2,\dots,d$, and
the $j$-th component of the momentum operator by $p_j:=-iD_j$.
\begin{Prop}
Let 
\[A_j:=U(p_j\otimes 1)U^{-1}.\]
Then the followings hold:
\begin{align}
D(A_j)&=\{F \in\mathcal{F}\,|\, F(\cdot) \text{ is strongly differentiable in $x_j$ and the strong derivative } 
 F'(\cdot) \text{ belongs to } \mathcal{F}\},\label{want-domain-of-A_j}\\
 A_j F&=-iF' ,\quad F\in D(A_j).\label{want-action-of-A_j}
\end{align}
\end{Prop}
\begin{proof}
Define 
\[ V(t) = e^{itA_j} \]
for $t\in\Real$. Then, $\{V(t)\}_t$ forms a one parameter unitary group in $\mathcal{F}$ and 
$A_j$ is its unique generator. By Stone's theorem, we can identify the domain of definition of the generator $A_j$ as
\begin{align}\label{domain-of-A_j}
D(A_j)=\Bigg\{F\in\mathcal{F}\,\Bigg|\, \lim_{h\to 0}\frac{e^{ihA_j}-1}{h}F \text{ exists}.\Bigg\}
\end{align}
and its action
\begin{align}\label{action-of-A_j}
A_jF=-i\lim_{h\to 0}\frac{e^{ihA_j}-1}{h}F.
\end{align}

For $a\in\Real^d$, define $T_a$ as the translation in $\mathcal{F}$ by $a$. That is, for $F\in\Real^d$,
\begin{align}
(T_a F)(x):=F(x-a), \quad \text{a.e. }x\in\Real^d.
\end{align}
The operator $T_a$ is a unitary operator in $\mathcal{F}$. Similarly, we define $\tau_a$ as the translation 
by $a\in\Real^d$ in $L^2(\Real^d)$:
\begin{align}
(\tau_a f)(x):=f(x-a), \quad \text{a.e. }x\in\Real^d, \quad f\in L^2(\Real^d).
\end{align}
It is well known that the generator of the one parameter unitary group $\{\tau_{a_j}\}_{a_j}$ is $-p_j$\footnote{
We use a little bit lazy notation in which $a_j$ denotes both a real parameter and the element 
$(0,\dots,a_j,\dots,0)$ in $\Real^d$.}.

Now, take arbitrary $f\in L^2(\Real^d)$ and $\Psi\in\mathcal{H}$. Then, we obtain for all $h\in\Real$
\begin{align}
V(h)(f\times\Psi )&= Ue^{ih(p_j\otimes 1)} f\otimes\Psi\no\\
&=U(e^{ihp_j}f)\otimes \Psi\no\\
&=(\tau_{-h}f)\times \Psi \no\\
&=T_{-h}(f\times\Psi).
\end{align}
This implies that $V(h)=T_{-h}$ on the dense subspace 
spanned by the vectors of the form $f\times \Psi$. Since the subspace is dense and
both $V(h)$ and $T_{-h}$ are bounded, it follows that the operator equality 
\begin{align}
V(h)=T_{-h},\quad h\in\Real.
\end{align}
holds. This means that for $F\in\mathcal{F}$,
\begin{align}
[V(h)F](x)=[T_{-h}F](x)=F(x+h),\quad \text{a.e. }x\in\Real^d.
\end{align}
Hence, \eqref{domain-of-A_j} exactly coincides with \eqref{want-domain-of-A_j} and \eqref{action-of-A_j}
with \eqref{want-action-of-A_j}.
\end{proof}
\section{Leibniz rule}
Let 
\begin{align}
\mathcal{F}=L^2(\Real^d;\mathcal{F}_\mathrm{b}(\mathcal{H})).
\end{align}
The sigal field operator on $\mathcal{F}_\mathrm{b}(\mathcal{H})$ is denoted by $\phi(\cdot)$:
\begin{align}
\phi(f):=\frac{1}{\sqrt{2}}\left(a(f)+a(f)^*\right),\quad f\in \mathcal{H}.
\end{align}
Let $g$ be a mapping
\begin{align}
g:\Real^d\to \mathcal{H}
\end{align}
which is strongly differentiable and essentially bounded. We introduce a decomposable operator
\begin{align}
G=\int_{\Real^d}^\oplus dx\, \phi(g(x))
\end{align}
and consider the differentiability of the mapping
\[ x\mapsto \phi(g(x))\Psi(x) ,\quad \Psi\in D(G),\]
namely, of $G\Psi$.

\begin{Prop}
Let $N$ be the number operator on $\mathcal{F}_\mathrm{b}(\mathcal{H})$. Suppose that $\Psi\in D(1\otimes
(N+1)^{1/2})$ and 
$1\otimes(N+1)^{1/2}\Psi$ is strongly differentiable. Then, $\Psi\in D(G)$ and $G\Psi$ is also 
strongly differentiable with the derivative
\begin{align}
(G\Psi)'(x)=\phi(g'(x))\Psi(x)+\phi(g(x))\Psi'(x),
\end{align}   
where the prime symbol $'$ denotes the strong derivative.
\end{Prop}
\begin{proof}
First note that $G$ is $1\otimes(N+1)^{1/2}$ bounded because $\Psi\in D(1\otimes(N+1)^{1/2})$ implies 
$\Psi(x)\in D((N+1)^{1/2})$ for almost every $x\in\Real^d$, and then $\Psi(x)\in D(\phi(g(x))$ for
almost every $x\in\Real^d$ and 
\begin{align}\label{diff-esti}
\norm{G\Psi}^2&=\int_{\Real^d}dx\, \norm{\phi(g(x))\Psi(x)}^2 \no\\
&\le\sqrt{2} \int_{\Real^d}dx\,\norm{g(x)}^2\norm{(N+1)^{1/2}\Psi(x)}^2\no\\
&\le \text{Const.}\int_{\Real^d}dx\, \norm{(N+1)^{1/2}\Psi(x)}^2\no\\
&=\text{Const.}\norm{1\otimes(N+1)^{1/2}\Psi}^2.
\end{align}

Second, note also that under the present assumptions $\Psi(\cdot)$ is strongly differentiable
and $\Psi'(x) \in D((N+1)^{1/2})$. Furthermore, the equality
\begin{align}\label{consistency-N}
(N+1)^{1/2}\Psi'(x)=[1\otimes (N+1)^{1/2} \Psi]'(x)
\end{align}
holds.
In fact, for almost every $x\in\Real^d$, one has
\begin{align}
\frac{\Psi(x+h)-\Psi(x)}{h} &= (N+1)^{-1/2}(N+1)^{1/2}\frac{\Psi(x+h)-\Psi(x)}{h}\no\\
&=(N+1)^{-1/2}\frac{(1\otimes(N+1)^{1/2}\Psi)(x+h) - (1\otimes(N+1)^{1/2}\Psi)(x)}{h}\no\\
&\to (N+1)^{-1/2} (1\otimes(N+1)^{1/2}\Psi)'(x)
\end{align}
as $h$ goes to zero. This implies that the strong limit
\[ \Psi'(x) = \lim_{h\to 0}\frac{\Psi(x+h)-\Psi(x)}{h} \]
exists, belongs to $D((N+1)^{1/2})$ and satisfies
\[ (N+1)^{1/2}\Psi'(x) = (1\otimes(N+1)^{1/2}\Psi)'(x), \quad \text{a.e. }x\in\Real^d. \]

Now, what we should compute is
\begin{align}
&\norm{\frac{G\Psi(x+h)-G\Psi(x)}{h}-(\phi(g'(x))\Psi(x)+\phi(g(x))\Psi'(x))}\no\\
&\le\norm{\left\{\frac{\phi(g(x+h))-\phi(g(x))}{h}-\phi(g'(x))\right\}\Psi(x+h)}
+\norm{\phi(g(x))\left\{\frac{\Psi(x+h)-\Psi(x)}{h}-\Psi'(x)\right\}}\no\\
&\quad\quad+\norm{\phi(g'(x))(\Psi(x+h)-\Psi(x))}\no\\
\end{align}
The first term of \eqref{diff-esti} vanishes as $h$ tends to zero since one has
\begin{align}
\norm{\left\{\frac{\phi(g(x+h))-\phi(g(x))}{h}-\phi(g'(x))\right\}\Psi(x+h)}
&=\norm{\phi\left(\frac{g(x+h)-g(x)}{h}-g'(x)\right)\Psi(x+h)}\no\\
&\le\sqrt{2}\norm{\frac{g(x+h)-g(x)}{h}-g'(x)} \norm{(N+1)^{1/2}\Psi(x+h)}\no\\
&\to 0.
\end{align}
The second term of \eqref{diff-esti} also vanishes by the estimate
\begin{align}
\norm{\phi(g(x))\left\{\frac{\Psi(x+h)-\Psi(x)}{h}-\Psi'(x)\right\}}&\le\sqrt{2}\norm{g(x)}\norm{(N+1)^{1/2}\left\{\frac{\Psi(x+h)-\Psi(x)}{h}-\Psi'(x)\right\}}\no\\
&\to 0
\end{align}
as $h$ goes to zero. Here, we have used the relation \eqref{consistency-N}.
The third term of \eqref{diff-esti} can be estimated as follows.
\begin{align}
\norm{\phi(g'(x))(\Psi(x+h)-\Psi(x))}&\le \sqrt{2}\norm{g'(x)}\norm{(N+1)^{1/2}(\Psi(x+h)-\Psi(x))}
\end{align}
The right hand side vanishes in the limit $h\to0$ since $(N+1)^{1/2}\Psi(\cdot)$ is strongly differentiable
and so strongly continuous. This verifies the assertion.
\end{proof}
\fi
\section{Time-evolution of the position operator}\label{time-evolution-position}

In this section, we construct the time evolution of the position operator of the $a$-th Dirac particle, which
we denote by $\xx^a=(x^{a1},x^{a2},x^{a3})$. We will see that $\alpha^{aj}$ is the $j$-th component of the velocity operator 
(that is, the time derivative of the $j$-th component of the position operator) of the $a$-th Dirac particle.

For each $ r >0 $, we set
\begin{align}
\mathcal{M} _r := \{ f \in \H _\rm{D} \; \big| \; \supp f \subset \{ |\xx | \le r \} \} .
\end{align}

\begin{Lem}\label{L-Cone} For all $ r>0 $ and $ t \in \Real $,
\begin{align}
e^{-it H_\mathrm{D}} \mathcal{M} _r \subset \mathcal{M} _{r+ |t|} .
\end{align}
\end{Lem}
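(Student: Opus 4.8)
The plan is to prove this as a \emph{finite propagation speed} (causality) statement for the free Dirac operator $H_\rm{D}=\bvec{\alpha}\cdot\bvec{p}+M\beta=-i\alpha^j\partial_j+M\beta$, using a local energy estimate built on the conserved Dirac current. The core algebraic input is that for any unit vector $\bvec{n}\in\R^3$ the matrix $\bvec{\alpha}\cdot\bvec{n}=\sum_j n^j\alpha^j$ is Hermitian and satisfies $(\bvec{\alpha}\cdot\bvec{n})^2=\sum_{i,j}n^in^j\,\tfrac12\{\alpha^i,\alpha^j\}=|\bvec{n}|^2=1$, so it is a Hermitian involution with operator norm $1$. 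First I would reduce to a dense set of regular data: since $\mathcal{M}_r$ is a closed subspace of $\H_\rm{D}$, $e^{-itH_\rm{D}}$ is unitary, and $\mathcal{M}_{r+|t|}$ is closed, it suffices to establish the inclusion for $\psi_0$ ranging over the dense subspace of functions in $C_0^\infty(\{|\xx|<r\};\C^4)$ (or Schwartz functions supported in the ball), and then pass to the $L^2$-limit. For such $\psi_0$, $\psi(t):=e^{-itH_\rm{D}}\psi_0$ stays smooth and rapidly decreasing in $\xx$ (as $H_\rm{D}$ is a Fourier multiplier) and is $C^1$ in $t$ with $\partial_t\psi=-iH_\rm{D}\psi$, which legitimizes all the differentiations below.

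Next I would set up the backward light cone. Fix $t_0>0$ and any point $\xx_0$ with $|\xx_0|>r+t_0$, and consider the shrinking ball $B(t):=\{\,\xx : |\xx-\xx_0|\le t_0-t\,\}$ for $t\in[0,t_0]$, together with the local energy
\begin{align}
E(t):=\int_{B(t)}|\psi(t,\xx)|^2\,d\xx .
\end{align}
Since $|\xx-\xx_0|\le t_0$ forces $|\xx|\ge|\xx_0|-t_0>r$, the base $B(0)$ is disjoint from $\{|\xx|\le r\}$, so $E(0)=0$. The pointwise continuity equation $\partial_t|\psi|^2=-\partial_j(\psi^\dagger\alpha^j\psi)=-\nabla\cdot\bvec{J}$, with $\bvec{J}=(\psi^\dagger\alpha^j\psi)_j$, follows from $2\,\re\ip{\psi}{-iH_\rm{D}\psi}=2\,\re\ip{\psi}{-\alpha^j\partial_j\psi}+2\,\re\ip{\psi}{-iM\beta\psi}$, where the mass term vanishes because $\beta$ is Hermitian (so $\ip{\psi}{\beta\psi}$ is real), and the kinetic term reorganizes into $-\nabla\cdot\bvec{J}$ using that the $\alpha^j$ are Hermitian. (I note the same computation kills a Hermitian multiplicative potential $V$, so the argument is insensitive to whether $H_\rm{D}$ carries the external potential.)

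Then, applying the Leibniz rule for the moving domain $B(t)$ (whose radius decreases at unit speed) and the divergence theorem with outward unit normal $\bvec{n}$, I get
\begin{align}
\frac{dE}{dt}=\int_{B(t)}\partial_t|\psi|^2\,d\xx-\int_{\partial B(t)}|\psi|^2\,dS
=-\int_{\partial B(t)}\left(\bvec{J}\cdot\bvec{n}+|\psi|^2\right)dS .
\end{align}
On $\partial B(t)$ one has pointwise $\bvec{J}\cdot\bvec{n}=\psi^\dagger(\bvec{\alpha}\cdot\bvec{n})\psi$, and the involution bound $\norm{\bvec{\alpha}\cdot\bvec{n}}=1$ gives $|\bvec{J}\cdot\bvec{n}|\le|\psi|^2$, so the integrand is nonnegative and $dE/dt\le0$. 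Together with $E\ge0$ and $E(0)=0$ this forces $E\equiv0$ on $[0,t_0]$; hence $\psi$ vanishes throughout the cone, in particular in a neighborhood of $\xx_0$ at time $t_0$. As $\xx_0$ was an arbitrary point with $|\xx_0|>r+t_0$, this yields $\supp\psi(t_0)\subset\{|\xx|\le r+t_0\}$. The case $t<0$ is identical after the substitution $s=-t$: the sign of $\partial_s|\psi|^2$ flips to $+\nabla\cdot\bvec{J}$, but the boundary contributions recombine in exactly the same way, again giving a nonpositive derivative of the corresponding energy, so the statement holds for all $t$ with $|t|$ in place of $t_0$.

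The step I expect to be the main obstacle is not the energy inequality itself but the \emph{regularity bookkeeping}: justifying the differentiation under the integral sign and the divergence theorem requires enough smoothness and decay of $\psi(t,\cdot)$, which is why I would first prove everything on the smooth dense core and only afterward invoke closedness of $\mathcal{M}_{r+|t|}$ and unitarity of $e^{-itH_\rm{D}}$ to extend to all of $\mathcal{M}_r$. A secondary care point is the correct handling of the boundary term from the shrinking domain (the unit shrink speed is exactly what matches the light cone), but this is a standard Reynolds/Leibniz computation once the regularity is in hand.
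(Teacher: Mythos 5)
Your proof is correct, and it is essentially the same argument as the paper's: the paper gives no proof of its own here, simply citing \cite[Section 1.5]{MR1219537}, where finite propagation speed for the Dirac operator is established by exactly the local-energy/backward-light-cone estimate you reconstructed (nonincreasing energy on the shrinking ball, via the Dirac current $\bvec{J}=(\psi^\dagger\alpha^j\psi)_j$ and the bound $\norm{\bvec{\alpha}\cdot\bvec{n}}=1$ from $(\bvec{\alpha}\cdot\bvec{n})^2=1$). The points you leave implicit — density of $C_0^\infty(\{|\xx|<r\};\C^4)$ in $\mathcal{M}_r$ (inward scaling plus mollification) and upgrading ``$E\equiv 0$ on the cone'' to vanishing at the tip $(t_0,\xx_0)$ by continuity of the smooth solution — are standard and correctly flagged, so the argument stands as a self-contained replacement for the citation.
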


\begin{proof} See \cite[Section 1.5]{MR1219537}.
\end{proof}

For each $ r >0 $ and $ n \in \N $, we define the subspace $ \F _{r,n_0} \subset \F _\rm{DM} (N) $ as
\begin{align}
\F _{r,n_0} := ( \wg ^N \mathcal{M} _r ) \otimes \left(\op_{n=0}^{n_0} \ot _\rm{s} ^{n} \H _\rm{ph} \right) .
\end{align}

\begin{Thm} The followings hold:
\begin{enumerate}[(i)]
\item For all $ r >0 , \; n_0 \in \N $ and $ t\in \R $, $ \F _{r,n_0} \subset D(W(-t) x^{aj} W(t)) $. 

\item For all $ \Psi \in \cup _{r,n_0} \F _{r,n_0} \cap D(H_0) $, the mapping
\begin{align}
t \mapsto W(-t) x ^{aj} W(t) \Psi
\end{align}
is strongly continuously differentiable. Moreover, 
\begin{align}\label{WaW}
\frac{d}{dt} W(-t) x ^{aj} W(t) \Psi =  W(-t) \alpha ^{aj }W(t) \Psi 
\end{align}
holds.
\end{enumerate}
\end{Thm}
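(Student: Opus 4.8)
The plan is to exploit two facts that together tame the unboundedness of $x^{aj}$: the \emph{finite propagation speed} of the Dirac dynamics, which confines the support of $W(t)\Psi$ in the $\xx^a$-variable and thereby makes $x^{aj}$ act as a \emph{bounded} operator on the relevant subspaces, and the elementary identity $[iH,x^{aj}]=\alpha^{aj}$, whose right-hand side is bounded and lies in $\mathcal{C}_0$. Since $x^{aj}$ is neither $N_\bb^{1/2}$-bounded nor of $\mathcal{C}_0$-class, Theorems \ref{C_1-equation} and \ref{n-th-diff} do not apply verbatim; the role of finite propagation is precisely to supply a \emph{local} substitute.

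First I would prove finite propagation for the interacting evolution. Writing $W(t)\Psi=\sum_{m}e^{-itH_0}U_m(t,0)\Psi$ and expanding each $H_1(\tau)=e^{i\tau H_0}H_1e^{-i\tau H_0}$, the overall factor $e^{-itH_0}$ can be absorbed so that a generic summand becomes $e^{-i(t-\tau_1)H_0}H_1e^{-i(\tau_1-\tau_2)H_0}H_1\cdots H_1e^{-i\tau_m H_0}\Psi$. Each $H_1$ is decomposable in $\XX$ and hence preserves the $\xx^a$-support, while each $e^{-isH_0}=e^{-isH_{\mathrm D}(V,N)}\otimes e^{-isH_\rm{ph}}$ enlarges it by at most $|s|$ by Lemma \ref{L-Cone} (applied in the $a$-th variable). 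The times telescope, $(t-\tau_1)+(\tau_1-\tau_2)+\cdots+\tau_m=t$, so every summand, and therefore $W(t)\Psi$, is supported in $\{|\xx^a|\le r+|t|\}$ for $\Psi\in\F_{r,n_0}$. Consequently $W(t)\Psi\in D(x^{aj})$ with $\norm{x^{aj}W(t)\Psi}\le(r+|t|)\norm{W(t)\Psi}$, which gives the first half of (i). To obtain $x^{aj}W(t)\Psi\in D(W(-t))$, I would expand $B(t)\Psi=W(-t)x^{aj}W(t)\Psi$ as the double series $\sum_{n,m}e^{itH_0}U_n(-t,0)\,x^{aj}\,e^{-itH_0}U_m(t,0)\Psi$ and show it converges absolutely: the inner block lands in $V_{n_0+m}$ and is localized, so $x^{aj}$ contributes only the factor $r+|t|$, and the basic estimate \eqref{basic-estimate} supplies the factorial decay $C^{n+m}|t|^{n+m}/(n!\,m!)$ that dominates the polynomial growth of the square-root norm factors exactly as in the proofs of Theorem \ref{Taylor} and Lemma \ref{group-prop}. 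Closedness of $W(-t)$ then yields $\F_{r,n_0}\subset D(B(t))$, completing (i).

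For (ii) I would first record the weak commutator $[iH,x^{aj}]=\alpha^{aj}$. Indeed $H_\rm{ph}$, $H_1$ (decomposable in $\XX$) and the matrix/multiplication terms $\beta^bM^b+V^b$ all commute weakly with $x^{aj}$, and only the kinetic term contributes through $[i\alpha^{ak}p^a_k,x^{aj}]=i\alpha^{ak}[p^a_k,x^{aj}]=\alpha^{aj}$; the finite support of the test vectors annihilates the boundary terms in this integration by parts, so the identity is valid on the localized vectors $W(-t)^*\eta$ and $W(t)\Psi$. Since $\alpha^{aj}\in\mathcal{C}_0$, I can now run the argument of Theorem \ref{C_1-equation} with $\ad(x^{aj})$ replaced by $\alpha^{aj}$: for $\eta\in D'$ and $\Psi\in\F_{r,n_0}\cap D(H_0)\subset D'$, the pairing $\langle\eta,B(t)\Psi\rangle=\langle W(-t)^*\eta,x^{aj}W(t)\Psi\rangle$ is differentiable (using Proposition \ref{sch-existence}, the boundedness of $x^{aj}$ on the fixed localized subspace for $t$ in a compact interval, and the weak Heisenberg equation \eqref{wHei2} together with the weak commutator relation \eqref{w-comm}, whose proofs adapt to the present locally bounded setting) and satisfies $\frac{d}{dt}\langle\eta,B(t)\Psi\rangle=\langle\eta,W(-t)\alpha^{aj}W(t)\Psi\rangle$. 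Integrating and invoking the strong continuity of $s\mapsto W(-s)\alpha^{aj}W(s)\Psi$ (Lemma \ref{s-conti-c_0}, since $\alpha^{aj}\in\mathcal{C}_0$) gives the integral equation
\begin{align}
B(t)\Psi=x^{aj}\Psi+\int_0^t W(-s)\,\alpha^{aj}\,W(s)\Psi\,ds ,
\end{align}
whose right-hand side is manifestly strongly $C^1$ in $t$ with derivative $W(-t)\alpha^{aj}W(t)\Psi$. This is exactly \eqref{WaW}.

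The main obstacle is the interplay in part (i) between the unbounded insertion $x^{aj}$ and the unbounded photon number generated by the Dyson series: $W(t)\Psi$ has components of arbitrarily high photon number, so $W(-t)$ cannot be applied by the elementary $\mathcal{C}_0$ machinery. The decisive point is that finite propagation bounds the $x^{aj}$-insertion by the finite radius $r+|t|$ uniformly in the photon number, after which the $1/(n!\,m!)$ decay of the double Dyson series overwhelms both the $(L+nb+1)^{1/2}\cdots$ growth in \eqref{basic-estimate} and the linear growth of the localization radius. Establishing this absolute convergence, together with the justification that the finite support of the test vectors legitimizes the weak commutator $[iH,x^{aj}]=\alpha^{aj}$ with no boundary contribution, is the technical heart of the proof; everything else follows the template of Theorem \ref{C_1-equation}.
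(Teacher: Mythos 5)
Your proposal is correct and follows essentially the same route as the paper's proof: part (i) by localizing the Dyson series via finite propagation speed (Lemma \ref{L-Cone}, which the paper extends to $e^{-itH_\mathrm{D}(V)}$ by the Trotter product formula — a step you should make explicit since the lemma itself only covers $H_\mathrm{D}$) and then absolutely summing the double series $\sum_{m,n}e^{itH_0}U_m(-t,0)x^{aj}e^{-itH_0}U_n(t,0)\Psi$ with the $x^{aj}$-insertion bounded by the localization radius, and part (ii) by the weak commutator identity $[iH_\mathrm{DM}(V,N),x^{aj}]=\alpha^{aj}$ on localized vectors together with the boundedness of $\alpha^{aj}$, mimicking Theorem \ref{C_1-equation}. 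The only cosmetic difference is that your telescoping of the free evolutions gives the uniform support bound $r+|t|$, whereas the paper uses the cruder per-insertion bound $r+|t|+2n|t|$ for the $n$-th Dyson term; both suffice because the linear growth is dominated by the $1/(m!\,n!)$ decay.
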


\begin{proof} 
\begin{enumerate}[(i)]
\item Using Lemma \ref{L-Cone} and the Trotter product formula \cite[Theorem VIII.31]{MR751959}, we see that $ e^{-itH_\rm{D} (V)} \mathcal{M} _r \subset \mathcal{M} _{r+ |t|} $, and thus $ e^{-itH_0} \mathcal{F} _{r,n_0} \subset \mathcal{F} _{r+ |t|,n_0}$. On the other hand, it is clear that $ H_1 \F _{r,n_0} \subset \F _{r,n_0+1} $. Hence we have
\begin{align}
& \sum _{m,n=0} ^\infty \norm{ e^{itH_0} U_m (-t, 0) x ^{a j} e^{-itH_0 } U_n (t, 0) \Psi } \no\\
& \le\sum _{m,n=0} ^\infty \frac{(C|t|)^{m+n}}{m!n!}(r+|t|+2n|t|)(n_0+m+n)^{1/2}\dots(n_0+1)^{1/2}\norm{\Psi} \no \\ 
&=\sum _{N=0}^\infty\sum_{n=0}^N \frac{(C|t|)^N}{N!} \begin{pmatrix} N \\ n
\end{pmatrix} (r+|t|+2n|t|) (n_0+N)^{1/2}\dots(n_0+1)^{1/2}\norm{\Psi} \no\\
& < \infty ,
\end{align}
for all $ \Psi \in \F _{r,n_0} $, which implies $ \Psi \in D(W(-t)x^{a j} W(t)) $.

\item By a direct calculation, we have
\begin{align}\label{w-derivative}
\Expect{(iH_\rm{DM}(V,N)) ^* \Psi , x ^{aj}  \Phi } - \Expect{x ^{aj} \Psi , iH_\rm{DM}(V,N) \Phi } = \ip{\Psi}{\alpha^{aj} \Phi } ,
\end{align}
for all $ \Psi , \Phi \in \cup _{r,n_0} \F _{r,n_0} \cap D(H_0) $. Using \eqref{w-derivative} and the fact that $\alpha^{aj}$ is bounded, we 
can mimic the proof of Theorem \ref{C_1-equation} to obtain \eqref{WaW}.
\end{enumerate}
\end{proof}

\if0
\section{Formulae for exponential transformation}
\begin{Def}
Let $\mathcal{D}$ be a dense subspace in $\mathcal{H}$, and $A,B$ are densely defined linear operator.
We say that $A$ and $B$ has a weak commutator on $\mathcal{D}$ if and only if
\begin{enumerate}[(i)]
\item $\mathcal{D}\subset D(A)\cap D(B)\cap D(A^*)\cap D(B^*)$.
\item There exists a linear operator $C$ such that 
$\mathcal{D}\subset D(C)$ and for all $\psi,\phi\in\mathcal{D}$,
\[ \inprod{A^*\psi}{B\phi} - \inprod{B^*\psi}{A\phi}=\inprod{\psi}{C\phi}.\]
\end{enumerate}
The linear operator $C|_\mathcal{D}$ is called the weak commutator of $A$ and $B$ on $\mathcal{D}$.
\end{Def}
The weak commutator of $A$ and $B$ on $\mathcal{D}$ is denoted by $[A,B]_\mathcal{D}$ or simply $[A,B]$
when there is no need to specify the subspace on which the weak commutator is defined. 
If $A$ and $[A,B]$ has a weak commutator on $\mathcal{D}$, we denote
\[ \ad(A)(B):=[A,[A,B]_\mathcal{D}]_\mathcal{D}. \]
Inductively, we define for $n=1,2,\dots$
\[ \ad(A)^n(B):=[A,\ad(A)^{n-1}(B)]=\underbrace{[A,[A,\dots,[A}_{n},B]\dots]], \]
if they exist. For $n=0$, we define
\[ \ad(A)^0(B):=B .\]
\begin{Def}
Let $\mathcal{D}$ be a dense subspace in $\mathcal{H}$. We say that the (not necessarily bounded)
operator valued function
\[  t\mapsto A(t) ,\quad t\in \Real \]
is weakly differentiable on $\mathcal{D}$ if and only if
for each $t\in \Real$ there is a linear operator $B(t)$ with the domain of definition being $\mathcal{D}$
such that
\begin{align}
\frac{d}{dt}\inprod\psi {A(t)\phi} = \inprod \psi {B(t)\phi}
\end{align} 
for all $\psi,\phi\in \mathcal{D}$. $B(\cdot)$ is called the weak derivative of $A(\cdot)$ on $\mathcal{D}$
and denoted by 
\[ B(t)=\mathrm{w}\frac{d}{dt}A(t), \quad t\in\Real. \]
\end{Def}

If $B(\cdot)$ is weakly differentiable on $\mathcal{D}$, its weak derivative is called 
the second weak derivative of $A(\cdot)$ and denoted by
\[ \mathrm{w}\frac{d^2}{dt^2}A(t).\]
The $n$-th weak derivatives are inductively defined for $n=1,2,\dots$.

\begin{Def}\label{W^n(T)-def}
Let $A$ is a densely defined linear operator in $\mathcal{H}$ and $T$ be a self-adjoint operator
in $\mathcal{H}$.
We assume that there exists a dense subspace $D_A$ satisfying
$D_A\subset D(T)$ and, for all $t\in\Real$, $e^{itT}D_A\subset D(A)$.
Put
\begin{align}
 A(t):=e^{itT}Ae^{-itT},\quad D_{A,T}:=\{e^{itT}\psi\,|\, t\in\Real,\,\psi\in D_A\}. 
 \end{align}
We say that $A$ is in $W^n(T)$-class for $n\in\Natural$ if and only if
\begin{enumerate}[(i)]
\item $A(t)$ is $n$ times weakly differentiable on $D_{A,T}$. 
\item The mapping 
\[ t\mapsto \mathrm{w}\dfrac{d^{n-1}}{dt^{n-1}}A(t)\psi \]
is strongly continuous for all $\psi\in D_A$.
\item $D_{A,T}$ is included by the domain of the adjoint operator of $\mathrm{w}\dfrac{d^{n-1}}{dt^{n-1}}A(0)$.
\end{enumerate}
\end{Def}

\begin{Thm}
Let $A$ be a densely defined linear operator and suppose that there exists 
a dense subspace $D_A\subset D(T)$ with 
\[ e^{itT}D_A\subset D(A), \quad t\in \Real. \]
The following (I) and (II) are equivalent:
\begin{enumerate}[(I)]
\item $A$ is in $W^n(T)$-class. 
\item $\ad(iT)^n(A)$ exists on $D_{A,T}$ and for each $\psi\in D_A$ the mapping
\[ t\mapsto e^{itT}\ad(iT)^{n-1} (A) e^{-itT}\psi ,\quad t\in\Real \]
is strongly continyous.
\end{enumerate}
Furthermore, if either (I) or (II) holds (and thus both (I) and (II) hold), then 
\begin{align}
\mathrm{w}\frac{d^n}{dt^n}A(t) = e^{itT}\ad(iT)^n(A)e^{-itT}=\ad(iT)^n(A(t)).
\end{align}
\end{Thm}

\begin{Lem}\label{prod-lemma}
Let $A(\cdot)$ be a mapping from $\Real$ into $\mathcal{B}(\mathcal{H})$ which is strongly continuous in $t\in\Real$ and
\[ \sup_{t\in\Real}\norm{A(t)}\le C,\]
for some $C>0$.
Let $\psi(\cdot)$ be a mapping from $\Real$ into $\mathcal{H}$ which is strongly continuous in $t\in\Real$.
Then, the mapping
\[ t\mapsto A(t)\psi(t) \]
 is strongly continuous in $t\in\Real$. 
\end{Lem}
\begin{proof}
\begin{align}\label{prod-esti}
\norm{A(t+h)\psi(t+h)-A(t)\psi(t)} &\le \norm{A(t+h)(\psi(t+h)-\psi(t))} + \norm{(A(t+h)-A(t))\psi(t)}\no\\
	&\le C\norm{\psi(t+h)-\psi(t)} + \norm{(A(t+h)-A(t))\psi(t)}
\end{align}
The first term of \eqref{prod-esti} tends to zero when $h$ goes to zero since $\psi(\cdot)$ is strongly continuous and
the second also vanishes since $A(\cdot)$ is strongly continuous in the same limit. This completes the proof.
\end{proof}
\begin{proof}
We prove by induction. Let $n=1$.

Suppose that (I) is true, that is, $A$ is in $W^1(T)$-class. By Lemma \ref{prod-lemma},
the mapping 
\[ t\mapsto e^{-itT}A(t)\psi = Ae^{-itT}\psi \]
 is strongly continuous. Take arbitrary $\psi,\phi\in D_{A,T}$.
 For $h>0$, one finds from the assumptions
\begin{align}\label{n=1}
\inprod{\psi}{\frac{A(t+h)-A(t)}{h}\phi} &= \inprod \psi{\frac{e^{i(t+h)T}A e^{-i(t+h)T} - e^{itT}Ae^{-itT}}{h}\phi}\no\\
	&=\inprod{\frac{e^{-i(t+h)T}-e^{-itT}}{h}\psi}{Ae^{-i(t+h)T}\phi} +  \inprod {e^{-itT}\psi}{A\frac{e^{-i(t+h)T}-e^{-itT}}{h}\phi}\no\\
	&=\inprod{\frac{e^{-i(t+h)T}-e^{-itT}}{h}\psi}{Ae^{-i(t+h)T}\phi} +  \inprod {A^*e^{-itT}\psi}{\frac{e^{-i(t+h)T}-e^{-itT}}{h}\phi}\no\\
	&\to \inprod{-iTe^{-itT}\psi}{Ae^{-itT}\phi}+ \inprod{A^*e^{-itT}\psi}{-iTe^{itT}\phi}\no\\
	&=\inprod{(iT)^*e^{-itT}\psi}{Ae^{-itT}\phi}- \inprod{A^*e^{-itT}\psi}{(iT)e^{itT}\phi}.
\end{align}
Then, from \eqref{n=1}, one obtains
\begin{align}
\inprod\psi{\mathrm{w}\frac{d}{dt}A(t)\phi} = \inprod{(iT)^*e^{-itT}\psi}{Ae^{-itT}\phi}- \inprod{A^*e^{-itT\psi}}{(iT)e^{itT}\phi}
\end{align}
for all $\psi,\phi\in D_{A,T}$. In particular, 
\begin{align}
\inprod\psi{\mathrm{w}\frac{d}{dt}A(0)\phi} = \inprod{(iT)^*\psi}{A\phi}- \inprod{A^*\psi}{(iT)\phi}
\end{align}
This means that $[iT,A]$ exists on $D_{A,T}$ and 
\begin{align}
[iT,A]=\mathrm{w}\frac{d}{dt}A(0).
\end{align}
Moreover, since $A$ is $W^1(T)$-class, the function
\[ t\mapsto A(t)\psi, \quad \psi\in D_{A} \]
 is strongly continuous. Thus, (II) holds.

Conversely, suppose that (II) is true. That is, $[iT,A]$ exists on $D_{A,T}$ and the mapping
\[ t\mapsto e^{itT}\ad(iT)^0(A) e^{-itT}\psi=A(t)\psi \]
 is strongly continuous for all $\psi\in D_A$. Then, from Lemma \ref{prod-lemma}, we have that
 \[t\mapsto Ae^{-itT}\psi \]
  is strongly continuous for all $\psi\in D_A$. Thus, the computation \eqref{n=1} is still valid and we have 
\begin{align}
\lim_{h\to 0}\inprod{\psi}{\frac{A(t+h)-A(t)}{h}\phi} &= \inprod{(iT)^*e^{-itT}\psi}{Ae^{-itT}\phi}- \inprod{A^*e^{-itT}\psi}{(iT)e^{itT}\phi}\no\\
	&=\inprod\psi{e^{itT}[iT,A]e^{-itT}\phi}.
\end{align}
This proves that $A(\cdot)$ is weakly differentiable on $D_{A,T}$ and
\begin{align}
\mathrm{w}\frac{d}{dt}A(t)=e^{itT}[iT,A]e^{-itT}.
\end{align}
The condition (ii) of Definition \ref{W^n(T)-def} is obvious.
The condition (iii) of Definition \ref{W^n(T)-def} follows since for $n=1$
\[ \left(\mathrm{w}\frac{d^{n-1}}{dt^{n-1}}A(0)\right)^*=A^* \]
and the assertion that $[iT,A]$ exists on $D_{A,T}$ includes the statement $D_{A,H}\subset D(A^*)$. 
Here, we remark that 
\[ A^*(t):=e^{itT}A^*e^{-itT}=A(t)^* \]
on $D_{A,T}$. In fact, for $\psi,\phi\in D_{A,T}$, we have from the assumptions
\begin{align}
\inprod{\psi}{e^{itT}Ae^{-itT}\phi} = \inprod{e^{itT}A^*e^{-itT}\psi}{\phi}.
\end{align}
This implies that 
$\psi\in D(A(t)^*)$ and 
\begin{align}
A(t)^*\psi = e^{itT}A^*e^{-itT}\psi=A^*(t)\psi,\quad t\in\Real.
\end{align}
Therefore, from \eqref{n=1}, it follows that 
\begin{align}
\lim_{h\to 0}\inprod{\psi}{\frac{A(t+h)-A(t)}{h}\phi} &= \inprod{(iT)^*\psi}{A(t)\phi}- \inprod{A(t)^*\psi}{(iT)\phi}.
\end{align}
Thus, we find
\begin{align}
\mathrm{w}\frac{d}{dt}A(t)=[iT,A(t)],
\end{align}
which completes the proof for $n=1$.

Suppose the assertion is valid for some $n\ge 1$. Then,
\begin{align}
\frac{d}{dt}\inprod\psi{\mathrm{w}\frac{d^n}{dt^n}A(t)\phi}&=\frac{d}{dt}\inprod\psi{e^{itT}\ad(iT)^n(A)e^{-itT}\phi}.
\end{align}

We prove (I) implies (II) for $n+1$. If we assume (I) for $n+1$, the mapping
\[ t\mapsto \mathrm{w}\frac{d^n}{dt^n} A(t)\psi = e^{itT}\ad(iT)^n (A)e^{-itT}\psi \]
is strongly continuous for all $\psi\in D_A$. By Lemma \ref{prod-lemma},
one concludes
\[ t\mapsto\ad(iT)^n (A)e^{-itT}\psi \]
 is strongly continuous. Thus, in the similar way to the computation \eqref{n=1}, we have
 \begin{align}\label{general-n}
\frac{d}{dt}\inprod\psi{\mathrm{w}\frac{d^n}{dt^n}A(t)\phi}&=\frac{d}{dt}\inprod\psi{e^{itT}\ad(iT)^n(A)e^{-itT}\phi}\no\\
	&=\inprod{(iT)^*e^{-itT}\psi}{\ad(iT)^n(A)e^{-itT}\phi}- \inprod{\ad(iT)^n(A)^*e^{-itT}\psi}{(iT)e^{itT}\phi}, 
 \end{align} 
 because 
 \[ \left(\mathrm{w}\frac{d^{n}}{dt^{n}}A(0)\right)^*=\ad(iT)(A)^* \]
 and combining the assumption (I) we have 
 \[ D_{A,T}\subset D(\ad(iT)(A)^*). \]
Since we now assume that $A(\cdot)$ is $n+1$ times weakly differentiable, \eqref{general-n} implies
\begin{align}
\inprod\psi{\mathrm{w}\frac{d^{n+1}}{dt^{n+1}}A(t)\phi}&=\inprod{(iT)^*e^{-itT}\psi}{\ad(iT)^n(A)e^{-itT}\phi}- \inprod{\ad(iT)^n(A)^*e^{-itT}\psi}{(iT)e^{itT}\phi}\no\\
&=\inprod{(iT)^*\psi}{e^{itT}\ad(iT)^n(A)e^{-itT}\phi}- \inprod{e^{itT}\ad(iT)^n(A)^*e^{-itT}\psi}{(iT)\phi}\no\\
&=\inprod{(iT)^*\psi}{\ad(iT)^n(A(t))\phi}- \inprod{\ad(iT)^n(A(t))^*\psi}{(iT)\phi}.
\end{align}
Thus, the weak commutator of $iT$ and $\ad(iT)^n(A(t))$ exists on $D_{A,T}$, and the relation
\begin{align}
\mathrm{w}\frac{d^{n+1}}{dt^{n+1}}A(t)&=[iT,\ad(iT)^n(A(t))]\no\\
&=\ad(iT)^{n+1}(A(t))
\end{align}
holds on $D_{A,T}$. Therefore, (II) for $n+1$ is derived.

Conversely, suppose (II) for $n+1$ is valid. By the assumption of the induction,
we conclude that the mapping
\[ t\mapsto \mathrm{w}\frac{d^n}{dt^n}A(t)\psi = e^{itT}\ad(iT)^n(A) e^{-itT}\psi \]
 is strongly continuous for all $\psi\in D_A$. Further, the assumption of the induction tells us that
 \[ \left(\mathrm{w}\frac{d^{n}}{dt^{n}}A(0)\right)^*=\ad(iT)(A)^* \]
and the assumption that $\ad(iT)^{n+1}(A)$ exists on $D_{A,T}$ implies
 \[ D_{A,T}\subset D(\ad(iT)(A)^*)= D\left(\left(\mathrm{w}\frac{d^{n}}{dt^{n}}A(0)\right)^*\right).\]
 Thus, the computation \eqref{general-n} is
 still valid. By (II) for $n+1$ the right hand side of \eqref{general-n}
 becomes
 \[ \inprod{e^{-itT}\psi}{[iT, \ad(iT)^{n}(A)]e^{-itT}\phi}=\inprod{\psi}{e^{itT}\ad(iT)^{n+1}(A)e^{-itT}\phi}. \]
 This means that $A(\cdot)$ is $n+1$ time weakly differentiable and 
 \begin{align}
\mathrm{w}\frac{d^{n+1}}{dt^{n+1}}A(t)=e^{itT}\ad(iT)^{n+1}(A)e^{-itT} 
 \end{align}
 on $D_{A,T}$. 
Hence, (I) for $n+1$ is derived, and the induction is completed.

\end{proof}
\fi 
\bibliographystyle{junsrt}
\bibliography{myref_master}

\end{document}